\documentclass[letterpaper,11pt]{amsart}
\pdfoutput=1

\usepackage{xcolor}
\definecolor{darkblue}{rgb}{0.,0.,0.4}
\definecolor{darkred}{rgb}{0.5,0.,0.}
\definecolor{darkorange}{rgb}{1.0, 0.55, 0.0}
\usepackage[pdftex,colorlinks=true,linkcolor=blue,citecolor=darkred,urlcolor=darkblue]{hyperref}

\usepackage{amsmath,amsthm,amsfonts,amssymb,braket}

\usepackage{tikz}
\usetikzlibrary{arrows.meta, decorations.markings, patterns}
\usetikzlibrary{decorations.pathreplacing, bending}
\usetikzlibrary{calc,decorations.pathreplacing}
\usepackage{tikz-cd}

\usepackage{fullpage}
\usepackage{mathtools}
\usepackage[numbers]{natbib}
\usepackage[normalem]{ulem}
\usepackage{comment}

\numberwithin{equation}{section}

\newtheorem{lemma}[equation]{Lemma}
\newtheorem{theorem}[equation]{Theorem}

\newtheorem{corollary}[equation]{Corollary}

\newtheorem{proposition}[equation]{Proposition}
\newtheorem{proposition-definition}[equation]{Proposition-Definition}
\theoremstyle{definition}
\newtheorem{definition}[equation]{Definition}

\newtheorem{remark}[equation]{Remark}
\newtheorem{example}[equation]{Example}

\newcommand{\CC}{{\mathbb{C}}}
\newcommand{\RR}{{\mathbb{R}}}
\newcommand{\QQ}{{\mathbb{Q}}}
\newcommand{\ZZ}{{\mathbb{Z}}}
\newcommand{\NN}{{\mathbb{N}}}

\newcommand{\calA}{{\mathcal{A}}}
\newcommand{\calB}{{\mathcal{B}}}
\newcommand{\calC}{{\mathcal{C}}}
\newcommand{\calD}{{\mathcal{D}}}
\newcommand{\calF}{{\mathcal{F}}}
\newcommand{\calG}{{\mathcal{G}}}

\newcommand{\calI}{{\mathcal{I}}}
\newcommand{\calJ}{{\mathcal{J}}}

\newcommand{\calM}{{\mathcal{M}}}

\newcommand{\calX}{{\mathcal{X}}}
\newcommand{\calY}{{\mathcal{Y}}}
\newcommand{\calZ}{{\mathcal{Z}}}

\newcommand{\one}{{\mathbf{1}}}

\newcommand{\SWAP}{{\mathrm{SWAP}}}

\newcommand{\dd}{{\mathsf{d}}} 

\newcommand{\ii}{{\mathrm{i}}}
\newcommand{\id}{{\mathbf{id}}}

\newcommand{\FermionPlus}[2]{{\mathcal F^+}(#1,#2)}
\newcommand{\FermionPlusEven}[2]{{\mathcal F^+}(#1,#2)_0}
\newcommand{\FermionPlusOdd}[2]{{\mathcal F^+}(#1,#2)_1}
\newcommand{\FermionMinus}[1]{{\mathcal F^-}(#1)}
\newcommand{\FermionMinusEven}[1]{{\mathcal F^-}(#1)_0}
\newcommand{\FermionMinusOdd}[1]{{\mathcal F^-}(#1)_1}

\newcommand{\FermiCommutator}[2]{[{#1}, {#2}]_{\mathrm{f}}}
\newcommand{\FermiCommutant}[2]{\mathrm{Comm^f}\left({#1}, {#2}\right)}

\DeclareMathOperator{\tr}{{{tr}}}
\DeclareMathOperator{\Tr}{{Tr}}
\DeclareMathOperator*{\im}{{im}}

\DeclareMathOperator*{\Supp}{{Supp}}

\DeclareMathOperator{\rank}{{rank}}
\DeclareMathOperator{\ann}{{ann}}
\DeclareMathOperator{\spn}{{span}}

\DeclareMathOperator*{\Mat}{{\mathsf{Mat}}}
\DeclareMathOperator*{\fMat}{{\mathsf{fMat}}}

\DeclareMathOperator{\End}{{End}}
\DeclareMathOperator*{\mul}{{mul}}
\DeclareMathOperator{\dist}{{dist}}
\DeclareMathOperator{\rad}{{rad}}

\DeclareMathOperator{\Cent}{{Cent}} 
\DeclareMathOperator{\Comm}{{Comm}} 
\DeclareMathOperator{\Commf}{{Comm^f}} 
\DeclareMathOperator{\FermiCent}{{Cent^f}} 
\DeclareMathOperator{\ind}{{ind}}

\DeclarePairedDelimiter{\abs}{|}{|}

\newcommand{\dl}[1]{\textcolor{red}{DL: {#1}}}
\newcommand{\newtext}[1]{\textcolor{purple}{#1}}

\begin{document}

\title{Fermionic quantum cellular automata in 2d are trivial}
\author{Jeffrey Kwan$^*$}
\author{David M. Long$^*$}
\author{Jeongwan Haah$^{*\dagger}$}
\address{$^*$ Leinweber Institute of Theoretical Physics, Stanford University, Stanford, California, USA}
\address{$^\dagger$ Google Quantum AI, California, USA}

\begin{abstract}
    Fermionic quantum cellular automata (QCA)
    are automorphisms of local fermionic operator algebras ($\mathbb Z_2$-graded superalgebras) that have bounded spread: they map local operators to nearby operators.
    We prove that every 2-dimensional fermionic QCA on a locally finite-dimensional algebra is a composition of local automorphisms and a fermionic shift.
    This is implied by our result that
    every locally finite-dimensional
    fermionic invertible subalgebra in a one-dimensional lattice
    is Brauer trivial,
    \textit{i.e.}, it is stably bounded-spread isomorphic 
    to a tensor product fermionic algebra.
\end{abstract}

\maketitle
\tableofcontents

\section{Introduction}

Quantum dynamics is postulated by the Schr\"odinger or Heisenberg equation
in which time is a continuous parameter,
but it is ultimately the unitary, the integral of the equation,
that governs the dynamics.
Mathematically, unitary dynamics are synonymous to 
$*$-automorphisms of operator algebras, and therefore
it is natural to ask whether every such $*$-automorphism
is generated by some local Hamiltonian.
However, the set of all $*$-automorphisms of
an operator algebra with infinitely many degrees of freedom
is too large, such that this question is trivially answered no.

Schmacher and Werner~\cite{SchumacherWerner2004} 
have proposed a class of $*$-automorphisms of physical relevance,
namely quantum cellular automata (QCAs).
A quantum cellular automaton is by definition 
a $*$-automorphism of a local operator algebra
obeying an extra property that
a local operator must be mapped to a local operator nearby.
This is consistent with usual Hamiltonian time evolution
as implied by Lieb--Robinson bounds~\cite{Hastings2010}.
Interestingly, in $1+1\dd$, not every QCA
is generated by a local Hamiltonian~\cite{GNVW}.
For some time, the only known examples of such a nontrivial QCA,
not generated by any local Hamiltonian,
were the shifts~\cite{fermionGNVW2}
and free fermionic pumps~\cite{Harper2019}.

In addition to the viewpoint that QCAs are a model of abstract discrete time dynamics,
QCAs can be considered
as symmetries~\cite{Rudner2013,ElseNayak,Fidkowski2020,Ma_2026,Tu2026anomalies,Shirley2026onsiteable}.
In this context, QCAs form a representation of some symmetry,
and certain aspects of the anomaly of the symmetry
trace back to the possibility that QCAs
may not admit local generators~\cite{Tu2026anomalies}.
Indeed, Else and Nayak~\cite{ElseNayak} explained
why group cohomology is relevant for symmetry protected topological phases
by \emph{assuming} that induced symmetry action on spatial boundaries
admits local generators (quantum circuits).
The exactly solvable model in~\cite{Fidkowski2020}
shows that the failure of this assumption
is related to the so-called beyond-cohomology
symmetry protected topological phases.

The construction in~\cite{Fidkowski2020} builds upon 
a new kind of QCA in~$3+1\dd$,
which was found in~\cite{nta3}
during the attempt to disentangle the ground state of a $3+1\dd$ Walker--Wang model~\cite{BCFV}.
This ground state is believed to represent a trivial invertible topological state
in the absence of any symmetry.
However, it was shown~\cite{nta3} that
if this particular $3+1\dd$ QCA admits a unitary circuit representation
combined with any shift,
then some $2+1\dd$ unfrustrated commuting local Hamiltonian 
must realize the three-fermion anyon theory,
which must be accompanied by chiral edge modes.
The latter is believed to be impossible~\cite{Kitaev_2005,Levin2013},
and hence the QCA is believed to be nontrivial.
This is an example situation where disentangling the ground state
can be genuinely different from
disentangling the full spectrum of a commuting local Hamiltonian~\cite{Tu2026anomalies,Shirley2026onsiteable,Czajka2025anomalieslatticehomotopyquantum},
where the latter is tightly related to many-body localized phases~\cite{Long_2025}.

The long chain of beliefs towards the nontriviality (against circuits and shifts)
of the $3+1\dd$ QCA in~\cite{nta3} is fully proved in the special context of 
Clifford circuits and commuting Pauli Hamiltonians
over prime dimensional qudits~\cite{nta3,Haah2018,clifQCAclassification},
but remains conjectural in general.
In fact, the chain of beliefs definitely requires some revision
or at least an extra assumption~\cite{Haah2023invertible}.
To our knowledge, all other examples of qudit QCA~\cite{clifqca1,Shirley2022,Chen2021,SPTQCA}
enjoy a similar level of rigor towards their nontriviality
and suffer from similar caveats.
As for mathematical proofs toward the nontriviality of QCA
in higher dimensions without any symmetry,
we are aware of only one result in~\cite{nta3}
that either there exists a nontrivial \emph{fermionic} QCA in $2+1\dd$
or the three-fermion Walker--Wang disentangler QCA in $3+1\dd$ is nontrivial.

In this paper, we prove that every fermionic QCA in $2+1\dd$ is trivial 
in the sense that
it is always a composition of unitary circuit and a ``primitive'' shift.
Our proof does not follow the aforementioned line of beliefs;
instead, we follow the argument by Freedman and Hastings~\cite{FreedmanHastings2019QCA}
to show that every locally finite-dimensional%
\footnote{In~\cite{JonesThorngrenSopenko} it is reported that there exists a nontrivial QCA
in $2+1\dd$ using \emph{infinite}-dimensional local degrees of freedom.}
$1+1\dd$ fermionic boundary algebra (invertible subalgebra~\cite{Haah2023invertible})
is bounded-spread isomorphic to a tensor product fermionic algebra
and that the fermionic $2+1\dd$ QCA is always a 
circuit of local automorphisms and primitive shifts.
In a sense, this is a generalization of the classic argument for~$1+1\dd$ QCA
by Gross, Nesme, Vogt, and Werner~\cite{GNVW}.

According to~\cite{nta3} our triviality theorem for fermionic QCA in $2+1\dd$
would directly imply the nontriviality of the three-fermion Walker--Wang disentangler $3+1\dd$ QCA.
However, there are many technical details for this implication
that we must elaborate on,
so we provide the details in a separate article~\cite{paper2}.
Roughly, the hurdles that we overcome in~\cite{paper2} are as follows.
First, there is a slight difference in the scope of shift QCA.
Our triviality result decomposes a given $2+1\dd$ fermionic QCA
into a circuit of local automorphisms
and a shift that permutes ``primitive'' fermionic algebras.
In contrast, the shift QCA used in~\cite{nta3} are a smaller class.
Second, the argument in~\cite{nta3} appeals to
the distinction of emergent bosons and fermions,
but the detailed definition of the topological spins of emergent particles
was not given.
There exists a mathematical definition of emergent fermions
(or more generally topological spin of point-like topological excitations)~\cite{Ogata2022}
in such a way that it only depends on the ground state (see also \cite{Fidkowski2021});
however, to use such a definition in a proof,
one must verify conditions that the definition requires 
and perform a calculation tailored to a specific Hamiltonian,
which appears not straightforward.

The main text below is organized as follows.
We begin with elementary but thorough treatment of finite-dimensional fermionic ($\ZZ_2$-graded) algebras.
This sets the stage for an important technical ``simple-between-semisimple'' lemma (\ref{thm:FindYinbetweenXZ}).
The statement of the lemma in the ungraded case appears in~\cite{FreedmanHastings2019QCA}
and is used as the main supply of finite-dimensional simple algebras.
We then develop foundational aspects of fermionic invertible subalgebras on infinite lattices,
mirroring the treatment of~\cite{Haah2023invertible}.
The main triviality theorem for $1\dd$ fermionic invertible subalgebras
is~\ref{thm:1DBrauerGroupTrivial},
establishing that every fermionic invertible subalgebra in $1\dd$
tensored (stabilized) with a simple Majorana algebra on every site
is bounded-spread isomorphic to a tensor algebra.
Unlike the ungraded case, we use stabilization in an important way.
This section contains additional results using the torus-trick~\cite{hastings2013classifying},
which may be of independent interest.
Next, we study the tight relationship between invertible subalgebras in $\dd-1$ dimensions
and QCA in $\dd$ dimensions (\ref{thm:QCAToBrauer}).
This mirrors the argument in~\cite{Haah2023invertible} 
with a broader scope and simpler proofs.
From this algebra-QCA correspondence, we prove that every $2+1\dd$ fermionic QCA
on a locally finite-dimensional Brauer trivial algebra is trivial.
To be complete in low dimensions,
we also include the index theory and classification results for $1+1\dd$ fermionic QCA~\cite{GNVW,fermionGNVW2}.
All results up to this point
assume very little about the local algebras except that they are always finite dimensional;
however, whenever we stabilize the algebra by introducing some ancillary algebras (ancillas),
we bring only those with local dimensions uniformly bounded across the whole lattice.
In the last section, we explore consequences (mainly~\ref{thm:QCAWithAHoleIsTrivial})
from nonuniform stabilization,
under which an ancillary algebra is allowed to have arbitrary finite local algebras.
In particular, the dimension of the single-site algebra grows unbounded
as we look farther and farther away from the origin of the lattice. 
With nonuniform ancillas, every shift QCA in $2+1\dd$ or higher becomes a finite depth circuit,
and the equivalence class of a QCA on an infinite lattice is fully determined by a finite patch.

\vspace{1ex}
{\it Acknowledgments.}
JK is supported by a Stanford Q-FARM Shoucheng Zhang
fraduate fellowship and an NSF graduate research fellowship.
DML is supported by a Stanford Q-FARM Bloch postdoctoral fellowship,
a Packard Fellowship in Science and Engineering (PI: Vedika Khemani),
and the US Department of Energy, Office of Science (Award No.\ DE-SC0019380).
In this work,
we used AI (Claude and ChatGPT) to generate TikZ scripts for the figures
and for the literature search.

\subsection{Conventions}

Every algebra will be a $*$-algebra over~$\CC$ with a multiplicative identity~$\one$,
unless specified otherwise.
A $*$-homomorphism~$\phi$ is a $\CC$-linear map between $*$-algebras
such that $\phi(xy) = \phi(x)\phi(y)$ and $\phi(x^\dagger) = \phi(x)^\dagger$ for all~$x,y$.
All homomorphisms will be $*$-homomorphisms.
Always, $\phi(\one)$ is a self-adjoint projector, 
but we do \emph{not} require that \mbox{$\phi(\one) = \one$};
$\phi(\one)$ may even be zero.
If $\phi(\one) = \one$, then $\phi$ is called unital.
All important homomorphisms will be unital,
but nonunital homomorphisms will be considered for intermediate results.
In particular, the multiplicative identity of a subalgebra may or may not 
be the multiplicative identity of the mother algebra;
the inclusion may or may not be unital.
The dimension of an algebra is the dimension as a $\CC$-vector space;
for example, the dimension of the $*$-algebra~$\Mat(n)$ of all endomorphisms%
    \footnote{In~\cite{Haah2023invertible} it was ``$\dim \Mat(n) = n$.''}
of the standard inner product space~$\CC^n$ is~$n^2$.

For a subset~$R$ of a metric space~$X$ and a real number~$\ell > 0$,
we define the $\ell$-neighborhood
$R^{+\ell} = \{x \in X \mid \exists r \in R: \dist(r,x) \le \ell\}$
and $\ell$-interior
$R^{-\ell} = \{s \in X \mid \{s\}^{+\ell}\subseteq R\}$.
We regard the \(\dd\)-dimensional integer lattice \(\ZZ^\dd\)
as a metric space with the \(\ell_\infty\) metric,
\(\dist(s,t) = \max_{i} |s_i-t_i|\)
where \(s_i\), \(t_i\) are components of \(s,t \in \ZZ^\dd\);
however all of our results for QCAs on \(\ZZ^\dd\)
would also hold for other metrics such as $\ell_2$ with slightly different constants.

\section{Fermionic algebras}

Here we give basic definitions and facts.

\begin{definition}
    A $*$-algebra $\calA$ with~$\one$ over complex numbers~$\CC$
    is \emph{spectrally nondegenerate} if
    for every nonzero self-adjoint element~$h \in \calA$
    there exists $\lambda \in \RR^\times$ such that $\one - \frac 1 \lambda h$ is not invertible.
\end{definition}

The condition is automatic if we have a nonempty spectrum of~$h$,
which is always the case for any $C^*$-algebras.
We will later consider approximately finite algebras
which have a natural norm but are not norm complete from the onset.
They are not $C^*$ but will be spectrally nondegenerate.

\begin{lemma}\label{thm:spectral-nondegeneracy-on-subalgebras}
    If $\calA \hookrightarrow \calB$ is a unital injective $*$-homomorphism 
    and $\calB$ is spectrally nondegenerate,
    then $\calA$ is also spectrally nondegenerate.
\end{lemma}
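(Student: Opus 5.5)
The plan is to pull the witness $\lambda$ back from $\calB$ to $\calA$. The key observation is that invertibility is preserved by unital homomorphisms: an element that is invertible in $\calA$ stays invertible after mapping into $\calB$. Contrapositively, non-invertibility in $\calB$ forces non-invertibility in $\calA$.

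Concretely, let $\phi:\calA\hookrightarrow\calB$ be the given map, and take a nonzero self-adjoint $h\in\calA$.
\begin{enumerate}
\item Since $\phi$ is a $*$-homomorphism, $\phi(h)^\dagger=\phi(h^\dagger)=\phi(h)$, so $\phi(h)$ is self-adjoint.
\item Since $\phi$ is injective, $\phi(h)\neq 0$.
\item Spectral nondegeneracy of $\calB$ then gives some $\lambda\in\RR^\times$ such that $\one-\frac1\lambda\phi(h)$ is not invertible in $\calB$.
\item Suppose for contradiction that $\one-\frac1\lambda h$ had an inverse $x\in\calA$. Apply $\phi$ to both identities $x(\one-\frac1\lambda h)=\one=(\one-\frac1\lambda h)x$.
\item By linearity, multiplicativity, and unitality $\phi(\one)=\one$, this gives $\phi(x)\bigl(\one-\frac1\lambda\phi(h)\bigr)=\one=\bigl(\one-\frac1\lambda\phi(h)\bigr)\phi(x)$. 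So $\phi(x)$ would be an inverse in $\calB$, contradicting the choice of $\lambda$.
\end{enumerate}
Hence the same $\lambda$ witnesses spectral nondegeneracy for $h$ in $\calA$.

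There is no genuine obstacle. The only point needing care is where the hypotheses are used, since the paper allows nonunital homomorphisms elsewhere:
\begin{itemize}
\item Unitality is essential: without it, $\phi(\one_\calA)$ could be a proper projector, and $\phi(x)$ would only be an inverse relative to that corner, not in $\calB$.
\item Injectivity is used only to ensure $\phi(h)\neq0$, so that the hypothesis on $\calB$ applies.
\end{itemize}
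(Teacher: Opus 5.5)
Your proof is correct and is the same argument as the paper's: a unital $*$-homomorphism carries an inverse of $\one-\frac1\lambda h$ to an inverse of $\one-\frac1\lambda\phi(h)$, so a witness $\lambda$ for $\phi(h)$ in $\calB$ is also a witness for $h$ in $\calA$ (the paper phrases this contrapositively). You also spell out that injectivity and the $*$-property make $\phi(h)$ nonzero and self-adjoint, which the paper uses without stating.
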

\begin{proof}
    Denote the homomorphism by \(\phi\) and, in contrapositive, suppose that there is a self-adjoint and nonzero element \(h \in \calA\) such that \(a_\lambda :=\one - \tfrac{1}{\lambda} h\) is invertible for all \(\lambda \in \RR^\times\). Then \(\phi(a_\lambda) = \one - \tfrac{1}{\lambda} \phi(h)\) is also invertible for all \(\lambda \in \RR^\times\), its inverse being \(\phi(a^{-1}_\lambda)\) by unitality of \(\phi\), so that \(\calB\) is spectrally degenerate.
\end{proof}

\begin{lemma}\label{thm:PhysicalImpliesZeroJacobsonRadical}
    Every spectrally nondegenerate algebra~$\calA$ has zero Jacobson radical (semiprimitive).
\end{lemma}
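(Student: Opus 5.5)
We argue by contradiction using the characterization of the Jacobson radical by quasi-invertibility. Recall that for a unital algebra, $J = \rad \calA$ is a two-sided ideal with the property that $\one - x$ is invertible for every $x \in J$; indeed $\one - ax$ is invertible for every $a \in \calA$ and $x \in J$. Suppose, toward a contradiction, that $J \neq 0$, and pick a nonzero $x \in J$.

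The first step is to produce a nonzero \emph{self-adjoint} element of~$J$. Since $J$ is a two-sided ideal, $h := x^\dagger x \in J$ and $h$ is self-adjoint. If $h \ne 0$ we are done. Otherwise, use that $J$ is closed under $\dagger$: the Jacobson radical is determined by the ring structure, and $\dagger$ is an anti-automorphism, so $J^\dagger$ is the radical of the opposite ring. The left and right Jacobson radicals coincide, hence $J^\dagger = J$. Then $x + x^\dagger$ and $\ii(x - x^\dagger)$ both lie in $J$ and are self-adjoint, and at least one of them is nonzero because $x \neq 0$. So in all cases $J$ contains a nonzero self-adjoint element~$h$. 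This route avoids having to know that $x^\dagger x \neq 0$; we do not assume positivity of the $*$-structure anywhere.

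The second step finishes the contradiction. For every $\lambda \in \RR^\times$, the element $\frac1\lambda h$ lies in $J$ because $J$ is a $\CC$-subspace. Hence $\one - \frac1\lambda h$ is invertible for all $\lambda \in \RR^\times$. This contradicts spectral nondegeneracy of $\calA$ applied to~$h$. Therefore $J = 0$.

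The only point requiring care is the standard ring-theoretic fact that elements $x$ of the radical have $\one - x$ two-sided invertible, together with $J^\dagger = J$. Both follow from the intrinsic description of $J$ as the set of $x$ such that $\one - axb$ is invertible for all $a,b$. That description is manifestly stable under the involution, since $(\one - axb)^\dagger = \one - b^\dagger x^\dagger a^\dagger$ and inverses are preserved by $\dagger$. We will cite this description rather than reprove it. No finite-dimensionality is needed.
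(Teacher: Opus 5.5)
Your proposal is correct and follows essentially the same route as the paper. Both arguments show that the radical is $\dagger$-closed and that $\one + J$ consists of invertible elements, extract a nonzero self-adjoint element via $x+x^\dagger$ or $\ii(x-x^\dagger)$, and contradict spectral nondegeneracy. The only difference is in justifying the two ring-theoretic facts. The paper derives them directly from the description of $J$ as an intersection of maximal one-sided ideals (via Zorn's lemma), whereas you cite the standard characterization by invertibility of $\one - axb$. Your preliminary $x^\dagger x$ step is unnecessary but harmless.
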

\begin{proof}
    We recall a general fact that the Jacobson radical~$\rad \calA$ of a ring~$\calA$ with~$\one$
    is the intersection of all maximal left ideals,
    which happens to be equal to the intersection of all maximal right ideals~\cite{Lam2001}.
    Since the adjoint of a left ideal is a right ideal and vice versa,
    the Jacobson radical is $*$-closed.
    Observe that for any $a \in \rad \calA$,
    $\one + a$ is invertible;
    if it is not left or right invertible,
    then $\calA(\one + a)$ or $(\one +a )\calA$ is a proper left or right ideal respectively, which must be
    contained in a (proper) maximal left or right ideal~$I$ by Zorn's lemma. Then one has
    $\one = (\one + a) - a \in I + \rad \calA \subseteq I$, in contradiction to \(I\) being proper.
    
    Now, if $a \in \rad \calA$ is nonzero, then $a^\dagger \in \rad \calA$ and
    either $a+a^\dagger \in \rad \calA$ or $\ii(a-a^\dagger) \in \rad \calA$ is nonzero self-adjoint.
    Let $h \in \rad \calA$ be nonzero self-adjoint.
    Choose $\lambda \in \RR^\times$ such that $\one - \frac 1 \lambda h$ is not invertible.
    But this is a contradiction, and we must have $a = 0$.
\end{proof}

We also record a simple consequence that will be used later.

\begin{lemma}\label{thm:OddSquareRootOfIdentityIsOneDim}
    Let $\calA$ be a spectrally nondegenerate $*$-algebra.
    Let $\calC \subseteq \calA$ be a nonzero $*$-linear subspace 
    such that $\calC \cdot \calC \subseteq \CC\one$
    and $\calC \cap \CC \one = 0$.
    Then, there exists $\gamma = \gamma^\dagger \in \calC$ such that $\calC = \CC \gamma$ and $\gamma^2 = \one$.
\end{lemma}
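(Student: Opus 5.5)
Since $\calC\cdot\calC \subseteq \CC\one$, there is a bilinear form $B:\calC\times\calC\to\CC$ with $xy = B(x,y)\one$. The plan is to show that a nonzero self-adjoint element of $\calC$ is a scalar multiple of an element squaring to $\one$, and then that $\calC$ is one-dimensional. Both steps use spectral nondegeneracy, through \ref{thm:PhysicalImpliesZeroJacobsonRadical} or directly.

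\emph{Step 1: self-adjoint elements.} Because $\calC$ is $*$-closed and nonzero, it contains a nonzero self-adjoint $h$ (take $x+x^\dagger$ or $\ii(x-x^\dagger)$ for any nonzero $x\in\calC$). Write $h^2 = c\one$; then $c$ is real, since $(h^2)^\dagger = h^2$.

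If $c\le 0$, then for every $\lambda\in\RR^\times$ we have
\[
(\one - \tfrac1\lambda h)(\one + \tfrac1\lambda h) = (1 - c/\lambda^2)\one,
\]
and $1 - c/\lambda^2 \ne 0$. So $\one - \frac1\lambda h$ is invertible for all $\lambda\in\RR^\times$, which contradicts spectral nondegeneracy. Hence $c>0$, and $\gamma := h/\sqrt c$ satisfies $\gamma=\gamma^\dagger$ and $\gamma^2=\one$.

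\emph{Step 2: $\dim \calC = 1$.} Suppose $\calC$ contained an element not in $\CC\gamma$. Since $\calC$ is $*$-closed, we can then find a self-adjoint $k\in\calC$ that is linearly independent of $\gamma$ (split into self-adjoint and skew parts). Every nonzero real combination $a\gamma + bk$ is self-adjoint and lies in $\calC$. It is nonzero because of linear independence, and $\calC\cap\CC\one = 0$ does not interfere.

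By Step 1, the real quadratic form
\[
q(a,b)\one = (a\gamma+bk)^2 = \bigl(a^2 + ab(B(\gamma,k)+B(k,\gamma)) + b^2 B(k,k)\bigr)\one
\]
is therefore positive definite on $\RR^2$. We may therefore Gram--Schmidt: choose $k' = k - s\gamma$ with $k'$ self-adjoint, anticommuting with $\gamma$, and $k'^2 = \one$. Now consider $\gamma k'\in\calC\cdot\calC\subseteq\CC\one$. Write $\gamma k' = \mu\one$. Multiplying on the left by $\gamma$ gives $k' = \mu\gamma$. This contradicts linear independence, or, if $\mu = 0$, gives $k'=0$, contradicting $k'^2=\one$. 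Hence $\calC=\CC\gamma$.

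The main obstacle is Step 1: ruling out $c\le0$, and in particular nilpotent $h$ with $c=0$. This is exactly where spectral nondegeneracy enters, via the explicit inverse $(1-c/\lambda^2)^{-1}(\one+\frac1\lambda h)$. Step 2 is short once one notices that $\gamma x\in\CC\one$ for every $x\in\calC$ and that $\gamma$ is invertible. In fact this gives $x = \gamma(\gamma x)\in\CC\gamma$ directly, without the Gram--Schmidt step, so the final write-up can be that simple.
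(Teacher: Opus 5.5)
Your proof is correct and follows essentially the paper's argument. Both take a nonzero self-adjoint element and use the factorization $(\one - \frac1\lambda h)(\one + \frac1\lambda h) = (1 - c/\lambda^2)\one$ together with spectral nondegeneracy to force $h^2$ to be a positive multiple of $\one$; you get reality of the scalar directly from $(h^2)^\dagger = h^2$, whereas the paper writes $\beta^2 = c^2\one$, shows $\sigma(\beta)\subseteq\{\pm c\}$, and uses the real spectral point supplied by nondegeneracy. Both then conclude $\calC = \CC\gamma$ via $x = \gamma(\gamma x)$, exactly as in the paper's last line, so your Gram--Schmidt detour can be dropped, as you yourself note.
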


\begin{proof}
    Let $\alpha \in \calC$ be nonzero.
    By considering $\alpha + \alpha^\dagger$ and $\ii(\alpha - \alpha^\dagger)$,
    we find a nonzero self-adjoint element $\beta = \beta^\dagger \in \calC$.
    Let $\sigma(\beta) = \{ b \in \CC \mid \beta - b \one \text{ is noninvertible.}\}$.
    By spectral nondegeneracy, there exists $r \in \RR^\times \cap \sigma(\beta)$.
    If $\beta^2 = 0$,
    then $(\one \mp \frac 1 r \beta)(\one \pm \frac 1 r \beta) = \one - \frac 1 {r^2} \beta^2 = \one$,
    which is a contradiction,  so $\beta^2 \neq 0$.
    Separately, since $\beta^2 \in \CC \one \setminus \{0\}$
    and $\CC$ is closed under taking square roots,
    we have some $c \in \CC^\times$ such that $(\frac 1 c \beta)^2 = \one$ and $0 \notin \sigma(\beta)$.
    From $(\one + \frac 1 c \beta)(\one - \frac 1 c \beta) = 0$ where neither factor is zero by assumption,
    we see that $ \pm c \in \sigma(\beta)$.
    On the other hand, if $\pm c \neq d \in \CC^\times$,
    then $(\one + \frac 1 d \beta)(\one - \frac 1 d \beta) = \one(1 -  c^2 d^{-2})$ is invertible,
    and hence $d \notin \sigma(\beta)$.
    Therefore, $c$ is real, and we set $\gamma = \frac 1 c \beta = \gamma^\dagger$.
    If $\gamma' \in \calC$ is another element,
    then $\gamma' \gamma \in \CC \one$, and hence $\gamma' \in \CC \gamma$.
\end{proof}

\begin{definition}[Fermionic algebras]
    A $*$-algebra~$\calA$ (over $\CC$) is $\ZZ_2$-graded if $\calA = \calA_0 \oplus \calA_1$
    as a vector space and $\calA_i \cdot \calA_j \subseteq \calA_{i+j\bmod 2}$
    and $\calA_i ^\dagger = \calA_i$ for all~$i,j$.
    Any element of either~$\calA_0$ or $\calA_1$ is called \emph{homogeneous};
    of course, $\calA$ contains inhomogeneous elements.
    An algebra homomorphism~$\phi : \calA \to \calB$ 
    between graded algebras is \emph{graded}
    if $\phi(\calA_j) \subseteq \calB_j$.
    A \emph{fermionic algebra} \(\calA\) 
    is a \(\ZZ_2\)-graded spectrally nondegenerate algebra~$\calA = \calA_0 \oplus \calA_1$.
    Call $\calA_0$ \emph{even} and $\calA_1$ \emph{odd}.
    A graded $*$-automorphism $\calA \ni a + b \mapsto a - b \in \calA$ 
    for all $a \in \calA_0$ and $b \in \calA_1$
    is called the \emph{grading involution} or \emph{fermion parity} of~$\calA$.
    An odd element~$\gamma$ is called \emph{Majorana}
    if $\gamma = \gamma^\dagger$ and $\gamma^2 = \one$.
\end{definition}

So, every fermionic algebra comes with two commuting involutions.

\begin{example}
    Let \(\Mat(p+q)\) be the \(*\)-algebra of \((p+q) \times (p+q)\) matrices over \(\CC\). Any \(M \in \Mat(p+q)\) can be written as a block matrix,
    \begin{equation}
        M = 
        \begin{pmatrix}
            A_{p \times p} & B_{p \times q} \\
            C_{q \times p} & D_{q \times q}
        \end{pmatrix}.
    \end{equation}
    We see that \(\Mat(p+q)\) is, as a vector space, 
    the direct sum of block-diagonal matrices $A,D$ and block-off-diagonal matrices $B,C$.
    Define a fermionic algebra \(\FermionPlus{p}{q}\) by equipping \(\Mat(p+q)\) 
    with a grading \(\FermionPlusEven{p}{q} \oplus \FermionPlusOdd{p}{q}\) 
    such that \(\FermionPlusEven{p}{q}\) consists of block-diagonal matrices ($B=0, C=0$)
    and \(\FermionPlusOdd{p}{q}\) consists of block-off-diagonal matrices ($A=0,D=0$). 
    A product of either two block-diagonal matrices or two block-off-diagonal matrices 
    is block diagonal, and the product of block-diagonal and block-off-diagonal matrices is block-off-diagonal,
    so this is indeed a \(\ZZ_2\)-grading. 
    Further, each component is invariant under the involution \(M \mapsto M^\dagger\),
    which acts as conjugate transposition.
    Being a matrix algebra over \(\CC\), 
    \(\FermionPlus{p}{q}\) is spectrally nondegenerate.
\end{example}

\begin{example}
    The algebra \(\FermionMinus{1} = \FermionMinusEven{1} \oplus \FermionMinusOdd{1}
    = \CC \one \oplus \CC \gamma\),
    where \(\gamma^2 = \one\) and \(\gamma^\dagger = \gamma\),
    is a commutative two-dimensional fermionic algebra. 
    This algebra corresponds to a single Majorana fermion.
    It is isomorphic to the group algebra \(\CC[\ZZ_2]\), 
    where the span of the basis group elements are graded by said group element.
    Without the grading, this is isomorphic to \(\CC \oplus \CC\), which is spectrally nondegenerate.
\end{example}

\begin{definition}
    The \emph{fermionic tensor product} of \(\calA\) and \(\calB\) is \(\calA \otimes \calB\) as a vector space,
    equipped with the grading \(\calA \otimes \calB = (\calA_0 \otimes \calB_0 \oplus \calA_1 \otimes \calB_1) \oplus (\calA_0 \otimes \calB_1 \oplus \calA_1 \otimes \calB_0)\). 
    The multiplication of homogeneous elements is given by
    \begin{equation}
        (a_i \otimes b_j) (a'_{i'} \otimes b'_{j'}) = (-1)^{i' j}  a_i a'_{i'} \otimes b_j b'_{j'} \,\, ,
    \end{equation}
    where \(i,i',j,j' \in \{0,1\}\) index the grading.
    Extending by linearity defines multiplication for all other elements.
    The $*$-structure $\dagger$ acts on homogeneous elements
    by $(a_i\otimes b_j)^\dagger =
    (-1)^{ij}(a_i^\dagger\otimes b_j^\dagger)$.
\end{definition}

\begin{example}\label{ex:TableOfTensorProduct}
    Defining \(\FermionMinus{p} = \FermionPlus{p}{0} \otimes \FermionMinus{1} \), we have
    \begin{align}
        \FermionPlus{p}{q} \otimes \FermionPlus{p'}{q'} &\cong \FermionPlus{p p' + q q'}{pq' + qp'}, \nonumber\\
        \FermionMinus{p} \otimes \FermionMinus{q} &\cong \FermionPlus{pq}{pq}, \\
        \FermionPlus{p}{q} \otimes \FermionMinus{p'} &\cong \FermionPlus{pp'}{q p'} \otimes \FermionMinus{1}, \nonumber
    \end{align}
    where all isomorphisms are graded \(*\)-isomorphisms.
\end{example}

\begin{definition}
    A \emph{graded (\(*\)-)subalgebra} \(\calB\) of a fermionic algebra \(\calA\) 
    is a (\(*\)-)subalgebra of the form 
    \(\calB = \calB_0 \oplus \calB_1 = (\calA_0 \cap \calB) \oplus (\calA_1 \cap \calB)\). 
    In particular, if \(b_0 + b_1 \in \calB\), where \(b_i \in \calA_i\) are homogeneous, 
    then \(b_0 \in \calB\) and \(b_1 \in \calB\).
    A graded linear subspace \(\calJ = (\calJ \cap \calA_0) \oplus (\calJ \cap \calA_1) \),
    possibly without the multiplicative identity of~\(\calA\), 
    is a \emph{graded left ideal} if \(\calA \calJ \subseteq \calJ\). 
    Similarly, define graded right and two-sided ideals.
    When unspecified, ideals are two-sided.
    A fermionic subalgebra is a graded $*$-subalgebra of a fermionic algebra.
\end{definition}

\begin{example}
    The center of \(\calA\), 
    defined to be the set of elements which commute with every element of \(\calA\) 
    and written 
    \begin{equation}
        \Cent(\calA) = \{z \in \calA \mid [a,z] = az - za =0 \text{ for all } a\in \calA\},
    \end{equation}
    is a graded \(*\)-subalgebra.
    Indeed, for \(z \in \calA\) to be in the center, 
    it is sufficient that \(z\) commutes with all homogenous elements \(a_g \in \calA\).
    Decompose \(z = z_0 + z_1\) in terms of homogeneous elements.
    Then \([z_0 + z_1, a_g] = [z_0, a_g] + [z_1,a_g] = 0\), 
    but \([z_0,a_g]\) and \([z_1, a_g]\) belong to different components, 
    so they are both zero.
    We conclude that \(z_0, z_1 \in \Cent(\calA)\), so that \(\Cent(\calA)\) is correctly graded.
    That \(\Cent(\calA)\) is compatible with the $*$-involution 
    follows from the same calculations as in the ungraded case.
    The center is typically not an ideal.
\end{example}

\begin{definition}
    A fermionic algebra~\(\calA\) is \emph{simple} if its only graded two-sided ideals are zero and~\(\calA\).
    A fermionic algebra~\(\calA\) is \emph{central} if the even component of the center 
    consists only of scalar multiples of the unit: \(\Cent(\calA)_0 = \CC \one\).
\end{definition}

\begin{definition}
    The \emph{fermionic commutator} \(\FermiCommutator{\cdot}{\cdot}: \calA \times \calA \to \calA\) 
    is the $\CC$-bilinear map on~\(\calA\) defined on homogeneous elements by
    \begin{equation}
        \FermiCommutator{a_0}{a'_0} = a_0 a'_0 - a'_0 a_0, \quad
        \FermiCommutator{a_0}{a_1} = a_0 a_1 - a_1 a_0 = -\FermiCommutator{a_1}{a_0}, \quad
        \FermiCommutator{a_1}{a'_1} = a_1 a'_1 + a'_1 a_1,
    \end{equation}
    where \(a_0, a_0' \in \calA_0\) and \(a_1, a_1' \in \calA_1\).
    For a fermionic subalgebra \(\calB \leq \calA\), the \emph{fermionic commutant}%
    \footnote{The fermionic commutator and fermionic commutant 
    are frequently referred to as supercommutator and supercommutant, respectively.
    We choose to use ``fermionic'' instead of prefix ``super-'' 
    just to avoid confusion with super as in superset.}
    is 
    \begin{equation}
        \FermiCommutant{\calB}{\calA} = \{c \in \calA \mid \FermiCommutator{c}{b} = 0 ~~\forall b \in \calB\}.
    \end{equation}
\end{definition}

\begin{lemma}
    If \(\calB \leq \calA\) is a fermionic subalgebra, 
    then \(\calC = \FermiCommutant{\calB}{\calA}\) is a unital fermionic subalgebra of~$\calA$.
\end{lemma}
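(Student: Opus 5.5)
We must show that $\calC = \FermiCommutant{\calB}{\calA}$ is a linear subspace containing $\one$, that it is graded, that it is closed under multiplication and under $\dagger$, and that it is spectrally nondegenerate. We will argue in that order.

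\emph{Subspace and unit.} Since $\FermiCommutator{\cdot}{\cdot}$ is bilinear, $\calC$ is a linear subspace. The unit $\one$ is even and commutes with everything, so $\one \in \calC$.

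\emph{Grading.} Let $c = c_0 + c_1 \in \calC$ and let $b \in \calB_j$ be homogeneous; since $\calB$ is graded, it suffices to test against such $b$. Bilinearity gives $0 = \FermiCommutator{c}{b} = \FermiCommutator{c_0}{b} + \FermiCommutator{c_1}{b}$. Here $\FermiCommutator{c_i}{b}$ is a commutator or anticommutator of homogeneous elements, so it lies in $\calA_{i+j}$. The two terms therefore sit in different graded components, and each must vanish. Hence $c_0, c_1 \in \calC$.

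\emph{Products.} By the grading step, it suffices to check homogeneous $c \in \calC_i$, $c' \in \calC_{i'}$ against homogeneous $b \in \calB_j$. Membership in $\calC$ reads $cb = (-1)^{ij} bc$, and likewise for $c'$. Then
\[
  cc'b = (-1)^{i'j} c b c' = (-1)^{(i+i')j} b c c' .
\]
Since $cc'$ has degree $i+i'$, this says exactly that $\FermiCommutator{cc'}{b} = 0$.

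\emph{Adjoints.} Take $c \in \calC_i$ and $b \in \calB_j$. Because $\calB$ is $*$-closed and graded, $b^\dagger \in \calB_j$, so $c b^\dagger = (-1)^{ij} b^\dagger c$. Applying $\dagger$ gives $b c^\dagger = (-1)^{ij} c^\dagger b$, and since $c^\dagger \in \calA_i$, we get $c^\dagger \in \calC$.

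\emph{Spectral nondegeneracy.} The inclusion $\calC \hookrightarrow \calA$ is a unital injective $*$-homomorphism, and $\calA$ is spectrally nondegenerate. Lemma~\ref{thm:spectral-nondegeneracy-on-subalgebras} then shows that $\calC$ is spectrally nondegenerate. So $\calC$ is a unital fermionic subalgebra of $\calA$.

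The main point needing care is the sign bookkeeping in the product step. The degree of $cc'$ must match the combined sign $(-1)^{(i+i')j}$, and this matching only works after the grading step has reduced everything to homogeneous elements. For that reason the grading step must come first.
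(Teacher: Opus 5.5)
Your proof is correct and follows the paper's route: the unit, closure under $\dagger$ via $cb^\dagger = (-1)^{ij}b^\dagger c$, closure under products by the same sign bookkeeping that the paper packages as the graded Leibniz rule~\eqref{eq:graded_derivation}, and spectral nondegeneracy from Lemma~\ref{thm:spectral-nondegeneracy-on-subalgebras}. You also check explicitly that $\calC$ is graded, by splitting $\FermiCommutator{c_0}{b}+\FermiCommutator{c_1}{b}$ into distinct graded components; this fills in a step the paper leaves implicit when it reduces to homogeneous elements.
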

\begin{proof}
    The unit \(\one \in \calA\) is even, 
    and commutes with all of \(\calA\), so \(\one \in \calC\), and \(\calC \hookrightarrow \calA\) is unital.
    We also have that \(\FermiCommutator{c}{b}^\dagger = \pm \FermiCommutator{c^\dagger}{b^\dagger}\) 
    for any homogeneous elements \(c,b \in \calA\). 
    It follows that \(c \in \FermiCommutant{\calB}{\calA}\) 
    implies \(c^\dagger \in \FermiCommutant{\calB}{\calA}\).
    Bilinearity of the fermionic commutator implies that \(\calC\) is a vector subspace.

    It remains to check closure under multiplication. 
    It suffices to consider homogeneous elements. 
    We have
    \begin{equation}
        \FermiCommutator{b_i}{c_j c'_k} = \FermiCommutator{b_i}{c_j} c'_k + (-1)^{ij} c_j \FermiCommutator{b_i}{c'_k},\label{eq:graded_derivation}
    \end{equation}
    which vanishes,
    so that \(\calC\) is a graded \(*\)-subalgebra. 
    Spectral nondegeneracy follows by~\ref{thm:spectral-nondegeneracy-on-subalgebras}.
\end{proof}

\begin{definition}
    For a fermionic algebra~$\calA$,
    we define the \emph{fermionic center} of~$\calA$
    analogously to the center, replacing the commutator
    with the fermionic commutator:
    \begin{equation}
        \FermiCent(\calA) = \FermiCommutant{\calA}{\calA} = \{z\in \calA \mid \FermiCommutator{a}{z} = 0 \text{ for all } a\in \calA\}.
    \end{equation}
\end{definition}

\begin{lemma}\label{thm:FermiCenterCentrality}
    A fermionic algebra $\calA$ is central if and only if $\FermiCent(\calA) = \CC\one$.
\end{lemma}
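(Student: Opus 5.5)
The statement compares two subalgebras: $\Cent(\calA)_0$, the even part of the ordinary center, and $\FermiCent(\calA)$, the fermionic center. The plan is to show that their even parts coincide, and then to show that the odd part of $\FermiCent(\calA)$ vanishes whenever $\Cent(\calA)_0 = \CC\one$.

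\textbf{Grading and even parts.} First, $\FermiCent(\calA)$ is graded. This follows from bilinearity, exactly as in the argument for $\Cent(\calA)$: for $z = z_0 + z_1$ and homogeneous $a$, the terms $\FermiCommutator{a}{z_0}$ and $\FermiCommutator{a}{z_1}$ lie in different graded components, so each vanishes separately. Next, for even $z$ the fermionic commutator with any homogeneous $a$ is the ordinary commutator. Hence $\FermiCent(\calA)_0 = \Cent(\calA)_0$.

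\textbf{The direction $\FermiCent(\calA) = \CC\one \Rightarrow$ central.} If $\FermiCent(\calA) = \CC\one$, then $\Cent(\calA)_0 = \FermiCent(\calA)_0 = \CC\one$, so $\calA$ is central.

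\textbf{The converse: killing the odd part.} Assume $\Cent(\calA)_0 = \CC\one$ and let $\calC = \FermiCent(\calA)_1$. An odd $z \in \calC$ commutes with $\calA_0$ and anticommutes with $\calA_1$. In particular $z \in \calA_1$, so $\FermiCommutator{z}{z} = 2z^2 = 0$, that is, $z^2 = 0$. More generally, for $z, z' \in \calC$ we have $zz' \in \FermiCent(\calA)_0 = \CC\one$, so $\calC\cdot\calC \subseteq \CC\one$. Moreover $\calC$ is $*$-closed by the previous lemma, and $\calC \cap \CC\one = 0$ because $\calC$ is odd.

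Suppose for contradiction that $\calC \neq 0$. Then \ref{thm:OddSquareRootOfIdentityIsOneDim} applies, since $\calA$ is spectrally nondegenerate, and gives $\gamma \in \calC$ with $\gamma^2 = \one$. This contradicts $\gamma^2 = 0$, because $\one \neq 0$ (if $\one = 0$ the algebra is zero and the statement is vacuous). So $\calC = 0$ and $\FermiCent(\calA) = \CC\one$.

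\textbf{Main obstacle.} The only real work is the odd-part step. One could instead argue directly with spectral nondegeneracy applied to a self-adjoint combination of $z$ and $z^\dagger$, but \ref{thm:OddSquareRootOfIdentityIsOneDim} packages exactly this argument. The observation $z^2 = 0$, coming from the fermionic self-commutator of an odd element, is what makes the contradiction immediate.
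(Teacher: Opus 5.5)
Your proof is correct and follows the same route as the paper's. You first identify $\FermiCent(\calA)_0$ with $\Cent(\calA)_0$. You then rule out a nonzero odd part by applying~\ref{thm:OddSquareRootOfIdentityIsOneDim} to $\FermiCent(\calA)_1$ to obtain $\gamma$ with $\gamma^2 = \one$, which contradicts $\FermiCommutator{\gamma}{\gamma} = 2\gamma^2 = 0$.
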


\begin{proof}
    (``if'') This is trivial since $\FermiCent(\calA)_0 = \Cent(\calA)_0$.

    (``only if'')
    By definition, $\FermiCent(\calA)_0 = \Cent(\calA)_0 = \CC \one$.
    If $\FermiCent(\calA)_1 \neq 0$,
    then 
    \begin{equation}
        \FermiCent(\calA)_1\cdot \FermiCent(\calA)_1\subseteq \FermiCent(\calA)_0 = \CC \one,
    \end{equation}
    so we have a nonzero element~$\gamma \in \FermiCent(\calA)_1 = \CC \gamma$ with $\gamma^2 = \one$ by~\ref{thm:OddSquareRootOfIdentityIsOneDim}.
    But, $\gamma \in \calA_1$, so $\FermiCommutator{\gamma}{\gamma} = 0$ by definition of~$\FermiCent(\calA)$,
    which is impossible.
\end{proof}

\begin{lemma}\label{thm:CentralTensorCentralIsCentral}
    If fermionic algebras $\calA$ and $\calB$ are both central,
    so is their fermionic tensor product.
\end{lemma}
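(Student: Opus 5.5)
The plan is to reduce everything to the fermionic center and use \ref{thm:FermiCenterCentrality} in both directions. Since $\calA$ and $\calB$ are central, that lemma gives $\FermiCent(\calA) = \CC\one$ and $\FermiCent(\calB) = \CC\one$. Conversely, to conclude that $\calA\otimes\calB$ is central, it is enough to show $\FermiCent(\calA\otimes\calB) = \CC\one$. This is because $\Cent(\calA\otimes\calB)_0 = \FermiCent(\calA\otimes\calB)_0$, which is the trivial ``if'' direction and does not use spectral nondegeneracy of the tensor product.

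First, I reduce to homogeneous elements. For a homogeneous test element $x$, the fermionic commutator $\FermiCommutator{x}{z_0 + z_1}$ splits as $\FermiCommutator{x}{z_0} + \FermiCommutator{x}{z_1}$, and the two terms lie in different graded components. So, exactly as in the example about $\Cent(\calA)$, $\FermiCent$ is graded, and I may take $z \in \FermiCent(\calA\otimes\calB)$ homogeneous of degree $d$. Fix a basis of $\calB$ consisting of homogeneous elements $\{b_k\}$, and write $z = \sum_k a_k \otimes b_k$ uniquely. Since $z$ is homogeneous, each $a_k$ is homogeneous of degree $d - |b_k|$.

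Next, I test against $a\otimes\one$ for homogeneous $a\in\calA$. The multiplication rule gives $(a\otimes\one)(a_k\otimes b_k) = a a_k\otimes b_k$ and $(a_k\otimes b_k)(a\otimes \one) = (-1)^{|a||b_k|} a_k a\otimes b_k$. The fermionic commutator with $z$ carries the sign $(-1)^{|a|(|a_k|+|b_k|)}$, and the $b_k$-part of that sign cancels the one above. This yields
\begin{equation}
\FermiCommutator{a\otimes\one}{z} = \sum_k \FermiCommutator{a}{a_k}\otimes b_k .
\end{equation}
By linear independence of the $b_k$, each $\FermiCommutator{a}{a_k}$ vanishes for all homogeneous $a$, hence for all $a$ by bilinearity. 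Therefore $a_k\in\FermiCent(\calA) = \CC\one$, and so $z = \one\otimes b$ for some homogeneous $b\in\calB$.

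Finally, I test against $\one\otimes b'$. Here the left factor is even, so no signs arise, and $\FermiCommutator{\one\otimes b'}{\one\otimes b} = \one\otimes\FermiCommutator{b'}{b}$. This forces $b\in\FermiCent(\calB) = \CC\one$, so $z\in\CC\one$, as required.

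The only real obstacle is the sign bookkeeping in the middle step. The point to check is that the Koszul sign from moving $a$ past $b_k$ exactly cancels the corresponding part of the fermionic-commutator sign, so that the expression reduces cleanly to the fermionic commutator within $\calA$.
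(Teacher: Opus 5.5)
Your proof is correct and follows essentially the same route as the paper. Test a fermionic-central element against $a\otimes\one$, use the sign cancellation to reduce to terms $\FermiCommutator{a}{a_k}\otimes b_k$, then conclude via $\FermiCent(\calA)=\CC\one$ and $\FermiCent(\calB)=\CC\one$. The differences are cosmetic: the paper writes the element loosely as $a_0\otimes b_0 + a_1\otimes b_1$, effectively treating only the even part (which is all centrality needs), and handles $\calB$ by a symmetric computation; your homogeneous-basis expansion makes the linear-independence step precise, and your final test against $\one\otimes b'$ finishes the argument directly.
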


\begin{proof}
    If $a_0 \otimes b_0 + a_1 \otimes b_1 \in \FermiCent(\calA \otimes \calB)$,
    then for any homogeneous~$a'_i \otimes \one \in \calA \otimes \one$
    \begin{equation}
        \FermiCommutator{a_0 \otimes b_0 + a_1 \otimes b_1}{a'_i \otimes \one} = 
        \FermiCommutator{a_0}{a'_i} \otimes b_0 + (-1)^i \FermiCommutator{a_1}{a'_i} \otimes b_1 = 0,
    \end{equation}
    which implies that $a_0, a_1 \in \FermiCent(\calA) = \CC\one$.
    A symmetric calculation shows $b_0,b_1 \in \FermiCent(\calB)$.
\end{proof}

\begin{remark}
    A unitary in a fermionic algebra does not have to be homogeneous;
    for example, $e^{\ii \theta \gamma + \ii \theta'} = \one e^{\ii \theta'} \cos \theta + e^{\ii \theta'} \ii \gamma \sin \theta \in \FermionMinus{1} = \CC\one \oplus \CC \gamma$
    is a unitary for any~$\theta, \theta' \in \RR$,
    where $\gamma$ is odd self-adjoint and squares to~$\one$.
    In fact, every unitary of~$\FermionMinus{1}$ has that form.
\end{remark}

\begin{lemma}\label{thm:AutomorphismByUnitaryConjugationCanAlwaysBeHomogeneous}
    Let $\alpha: \calF \ni x \mapsto u x u^\dagger \in \calF$ be a graded 
    automorphism by some unitary~$u$ of a fermionic central algebra~$\calF$.
    Then, there exists a homogeneous unitary~$w \in \calF$ 
    such that $\alpha(x) = w x w^\dagger$ for all~$x \in \calF$.
\end{lemma}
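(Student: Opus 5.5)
Write $u = u_0 + u_1$ with $u_i \in \calF_i$. The plan is to show that one of $u_0, u_1$ is, up to a positive scalar, a unitary implementing~$\alpha$; that element is then the homogeneous unitary~$w$.

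First I use that $\alpha$ is graded. Let $P$ be the grading involution, so that $\alpha$ commutes with~$P$. For every $x$,
\begin{equation}
    P(u)\, x\, P(u)^\dagger = P(u P(x) u^\dagger) = P(\alpha(P(x))) = \alpha(x) = u x u^\dagger .
\end{equation}
Set $v = u^\dagger P(u)$. Then $v$ is a unitary with $v x v^\dagger = x$ for all~$x$, so $v \in \Cent(\calF)$. Note that $P(v) = P(u)^\dagger u = v^\dagger$.

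Next I pin down~$v$ using centrality. Split $v = v_0 + v_1$ with $v_i \in \Cent(\calF)_i$. Centrality gives $v_0 = c\one$ for some $c \in \CC$. The identity $P(v) = v^\dagger$ gives $c\one - v_1 = \bar c\one + v_1^\dagger$. Comparing even parts, $c$ is real; comparing odd parts, $v_1^\dagger = -v_1$.

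The remaining step is to show that $v$ is homogeneous, and this is where I expect the main work to lie. The odd center $\Cent(\calF)_1$ satisfies $\Cent(\calF)_1 \cdot \Cent(\calF)_1 \subseteq \Cent(\calF)_0 = \CC\one$. So either it is zero, or by \ref{thm:OddSquareRootOfIdentityIsOneDim} it equals $\CC\gamma$ for a central Majorana~$\gamma$. In the second case the anti-self-adjointness of $v_1$ gives $v_1 = \ii r\gamma$ with $r$ real.

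Now impose unitarity: $v^\dagger v = (c - \ii r\gamma)(c + \ii r\gamma) = (c^2 + r^2)\one = \one$. This alone does not force $r = 0$, so I return to the definition of~$v$. Expanding $v = u^\dagger P(u)$ in components gives
\begin{align}
    v_0 &= u_0^\dagger u_0 - u_1^\dagger u_1 = c\one, \\
    v_1 &= u_0^\dagger u_1 - u_1^\dagger u_0 .
\end{align}
Unitarity of~$u$ supplies the companion identity $u_0^\dagger u_0 + u_1^\dagger u_1 = \one$. Adding and subtracting, $u_0^\dagger u_0 = \tfrac{1+c}{2}\one$ and $u_1^\dagger u_1 = \tfrac{1-c}{2}\one$.

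Since $u_0 + u_1$ is unitary, $u_0 + u_1$ commutes with its inverse, and taking parts yields analogous equalities for $u_0 u_0^\dagger$ and $u_1 u_1^\dagger$. Consequently, whenever $\tfrac{1+c}{2} \neq 0$, the element $w = \big(\tfrac{1+c}{2}\big)^{-1/2} u_0$ is a unitary, because $w^\dagger w = \one$ and $w w^\dagger = \one$ in a finite or spectrally nondegenerate setting. Otherwise $u_1^\dagger u_1 = \one$ and I rescale $u_1$ instead.

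It remains to check that $w$ implements~$\alpha$, and I expect this to be the delicate point. From $P(u) = uv$ one gets $u_0 = \tfrac{1}{2}(u + P(u)) = \tfrac{1}{2} u (\one + v)$. Hence
\begin{equation}
    u_0\, x\, u_0^\dagger = \tfrac{1}{4}\, u (\one + v)\, x\, (\one + v)^\dagger u^\dagger = \tfrac{1}{4}\, u\, x\, (\one + v)(\one + v)^\dagger u^\dagger ,
\end{equation}
where the second equality holds because $v$ is central. The factor $(\one + v)(\one + v)^\dagger = (2 + 2c)\one + \ii r\gamma - \ii r\gamma = (2 + 2c)\one$ is a nonzero scalar when $c \neq -1$. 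Therefore $u_0\, x\, u_0^\dagger = \tfrac{1+c}{2}\, u x u^\dagger$, so $w x w^\dagger = \alpha(x)$ with $w$ the rescaled $u_0$.

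When $c = -1$ the same computation applies to $u_1 = \tfrac{1}{2} u(\one - v)$, with the factor $2 - 2c = 4$. In either case a homogeneous unitary implements~$\alpha$, which is the claim of the lemma.
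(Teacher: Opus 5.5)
Your proof is correct, but it takes a somewhat different route from the paper's. Both arguments begin with the central unitary $v = u^\dagger P(u)$ satisfying $P(v) = v^\dagger$.

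The paper then parametrizes central unitaries as $v = e^{\ii(\theta\gamma + \theta')}$ and deduces $e^{\ii\theta'} = \pm 1$. It corrects $u$ by the central unitary $s = e^{\ii\theta\gamma/2 + \ii\theta'}$, so $w = us$.

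You instead show directly that the normalized even part of $u$ works (or the odd part, when $c = -1$). Your tools are $u_0 = \tfrac12 u(\one + v)$ and $v + v^\dagger = 2c\one$. Your $w$ is itself of the form $us'$, where $s' = (\one + v)/\sqrt{2+2c}$ is a central unitary. This $s'$ is a ``half-angle'' square root of $v$. So the underlying idea is the same: fix $u$ by a central phase. Your version avoids exponentials and logarithms of central elements, and gives an explicit, canonical $w$ in terms of the homogeneous components of $u$. The paper's version is shorter once the form of central unitaries is granted.

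A few small points should be repaired.

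\emph{The announced goal.} Your stated plan to show that $v$ is homogeneous is false in general. For $\calF = \FermionMinus{1}$ and $u = e^{\ii\theta\gamma}$, one gets $v = e^{-2\ii\theta\gamma}$. You never actually use this claim, so drop it.

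\emph{A sign.} The odd part of $v$ is $u_1^\dagger u_0 - u_0^\dagger u_1$, not its negative. This is harmless.

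\emph{The unitarity of $w$.} The relation $uu^\dagger = \one$ only gives $u_0u_0^\dagger + u_1u_1^\dagger = \one$; ``taking parts'' does not split this. The appeal to a ``finite or spectrally nondegenerate setting'' to get $ww^\dagger = \one$ from $w^\dagger w = \one$ is also unjustified for a general fermionic algebra. The clean fix uses your own final identity:
\begin{itemize}
\item Set $x = \one$ there to get $u_0 u_0^\dagger = \tfrac{1+c}{2}\one$.
\item Directly, $u_0^\dagger u_0 = \tfrac14(\one+v)^\dagger(\one+v) = \tfrac{1+c}{2}\one$.
\end{itemize}
This makes the whole component-expansion paragraph unnecessary.

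\emph{The sign of $c$.} State explicitly that $c^2 + r^2 = 1$ with $c, r$ real forces $c \in [-1,1]$. This follows from your analysis of $\Cent(\calF)_1$, and it is the only place that analysis is needed. It guarantees that $\tfrac{1+c}{2} > 0$ whenever it is nonzero, which you need for the real normalization $(\tfrac{1+c}{2})^{-1/2}$ to produce a unitary.
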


\begin{proof}
    Let $\phi$ denote the grading involution.
    Observe that $\phi(u\phi(x) u^\dagger) = u x u^\dagger$ for any homogeneous~$x \in \calF$.
    Rearranging, we have $u^\dagger \phi(u) x = x u^\dagger \phi(u)$,
    implying that 
    $v = u^\dagger \phi(u)$ is in the center, 
    which is $\CC\one$ or $\CC\one + \CC \gamma$.
    (To see this,
    apply~\ref{thm:OddSquareRootOfIdentityIsOneDim} to the odd part of the center, not the fermionic center).
    From $v = u^\dagger \phi(u) = \phi(u) u^\dagger$,
    we see that $\phi(v) = \phi(u)^\dagger u = v^\dagger$.
    Since $v^\dagger v = \phi(u)^\dagger u u^\dagger \phi(u) = \one$,
    we have $v = e^{\ii (\theta \gamma + \theta')}$ for some~$\theta,\theta' \in \RR / 2\pi \ZZ$.
    The condition~$\phi(v) = v^\dagger$ implies that $e^{\ii (-\theta \gamma + \theta')} = e^{\ii(-\theta \gamma - \theta')}$
    so $e^{\ii \theta'} = \pm 1$.
    Let $s = e^{\ii(\theta \gamma /2) + \ii \theta'}$.
    Since $s \in \Cent(\calF)$, the conjugation by~$w=us$ is equal to~$\alpha$,
    and $\phi(u s) = uv \phi(s) = u e^{\ii \theta \gamma /2} e^{2 \ii \theta'} = \pm u s$.
\end{proof}

\begin{lemma}
    \label{lem:TensorProdUniversalProperty}
    The fermionic tensor product \(\calA \otimes \calB\) obeys the universal property that, 
    given any graded homomorphisms \(\phi: \calA \to \calC\) and \(\psi: \calB \to \calC\) 
    such that \(\FermiCommutator{\phi(\calA)}{\psi(\calB)} = 0\), 
    there is a graded homomorphism \(\chi: \calA \otimes \calB \to \calC\) 
    such that \(\phi = \chi \circ \iota_\calA\) and \(\psi = \chi \circ \iota_\calB\), 
    where \(\iota_\calA : \calA \ni a \mapsto a \otimes \one_\calB \in \calA \otimes \calB\) 
    and \(\iota_\calB: \calB \ni b \mapsto \one_\calA \otimes b \in \calA \otimes \calB\).
\end{lemma}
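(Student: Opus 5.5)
The plan is to define $\chi$ on simple tensors by $\chi(a\otimes b) = \phi(a)\psi(b)$, extended linearly, and then verify in turn that it is well defined, graded, multiplicative, $*$-preserving, and compatible with the two inclusions. Well-definedness is immediate: $(a,b)\mapsto \phi(a)\psi(b)$ is $\CC$-bilinear, so the universal property of the vector-space tensor product gives a unique linear map $\chi:\calA\otimes\calB\to\calC$. For the grading, take homogeneous $a_i\in\calA_i$ and $b_j\in\calB_j$. Then $\phi(a_i)\in\calC_i$ and $\psi(b_j)\in\calC_j$, so $\chi(a_i\otimes b_j)\in\calC_{i+j}$. This matches the grading declared on $\calA\otimes\calB$.

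Multiplicativity is the main step, and it uses the hypothesis $\FermiCommutator{\phi(\calA)}{\psi(\calB)}=0$. Since both sides are bilinear, it suffices to check homogeneous simple tensors. For homogeneous $b_j$ and $a'_{i'}$ the hypothesis says $\psi(b_j)\phi(a'_{i'}) = (-1)^{i'j}\phi(a'_{i'})\psi(b_j)$; the sign is $-1$ exactly when both are odd. This gives
\begin{equation}
\chi(a_i\otimes b_j)\chi(a'_{i'}\otimes b'_{j'}) = \phi(a_i)\psi(b_j)\phi(a'_{i'})\psi(b'_{j'}) = (-1)^{i'j}\phi(a_ia'_{i'})\psi(b_jb'_{j'}),
\end{equation}
which equals $\chi\big((a_i\otimes b_j)(a'_{i'}\otimes b'_{j'})\big)$ by the definition of the fermionic product.

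For the $*$-structure, compute $\chi(a_i\otimes b_j)^\dagger = \psi(b_j)^\dagger\phi(a_i)^\dagger = \psi(b_j^\dagger)\phi(a_i^\dagger)$. The elements $a_i^\dagger$ and $b_j^\dagger$ are still homogeneous of degrees $i$ and $j$, because the gradings are $*$-stable. Applying the same commutation rule gives $(-1)^{ij}\phi(a_i^\dagger)\psi(b_j^\dagger)$, which is $\chi\big((a_i\otimes b_j)^\dagger\big)$ by the definition of $\dagger$ on $\calA\otimes\calB$.

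Finally, we check the factorizations. We have $\chi(a\otimes\one_\calB) = \phi(a)\psi(\one_\calB)$ and $\chi(\one_\calA\otimes b) = \phi(\one_\calA)\psi(b)$. The only subtlety is unitality, since homomorphisms in this paper need not be unital. If $\phi$ and $\psi$ are unital, both expressions reduce at once to $\phi(a)$ and $\psi(b)$. In general we need $\phi(a)\psi(\one)=\phi(a)$ and $\phi(\one)\psi(b)=\psi(b)$. I would therefore read the lemma as concerning unital $\phi$ and $\psi$, which is the intended use, and note this assumption explicitly. Without it, $\chi$ is still a graded $*$-homomorphism, but the factorizations hold only after multiplying by the projections $\psi(\one)$ and $\phi(\one)$.
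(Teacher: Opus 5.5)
Your proof is correct and follows the paper's route: define $\chi(a\otimes b)=\phi(a)\psi(b)$ and check multiplicativity on homogeneous simple tensors using the sign rule $\psi(b_j)\phi(a'_{i'})=(-1)^{i'j}\phi(a'_{i'})\psi(b_j)$. Your extra checks are also correct and fill in details the paper's proof omits: the $*$-compatibility of $\chi$, and the unit condition behind $\phi=\chi\circ\iota_\calA$ and $\psi=\chi\circ\iota_\calB$, which the paper uses tacitly (the precise necessary and sufficient condition is $\phi(\one_\calA)=\psi(\one_\calB)$).
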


\begin{proof}
    Define \(\chi\) on the basis of simple tensors in \(\calA \otimes \calB\) by
    \begin{equation}
        \chi(a \otimes b) = \phi(a) \psi(b),
    \end{equation}
    and extend by linearity. 
    We have that \(\phi = \chi \circ \iota_\calA\) and \(\psi = \chi \circ \iota_\calB\), 
    and must check that \(\chi\) is a graded homomorphism.

    The map \(\chi\) is graded because \(\calA \otimes \calB\), \(\phi\), and \(\psi\) are graded.
    To be a homomorphism, it suffices to check \(\chi([a\otimes b][a'\otimes b']) = \chi(a\otimes b) \chi(a'\otimes b')\) for homogeneous elements \(a,a',b,b'\). 
    Let the grading of \(a'\) be \(i\) and that of \(b\) be \(j\). Then
    \begin{align}
        \chi([a\otimes b][a'\otimes b']) &= (-1)^{ij}\chi(a a'\otimes b b') \nonumber\\
        &= (-1)^{ij}\phi(a)\phi(a')\psi(b)\psi(b') \nonumber\\
        &= (-1)^{ij}\phi(a)\left( (-1)^{ij} \psi(b) \phi(a') +  \FermiCommutator{\phi(a')}{\psi(b)} \right) \psi(b') \\
        &= \chi(a\otimes b)\chi(a'\otimes b').\nonumber \qedhere
    \end{align}
\end{proof}

\section{Finite-dimensional fermionic algebras}

Since every fermionic algebra is semiprimitive (\ref{thm:PhysicalImpliesZeroJacobsonRadical}),
if we assume that a fermionic algebra~$\calA$ is finite dimensional,
then a right module~$\calA$ over~$\calA$ is a direct sum of right simple $\calA$-modules,
{\it i.e.}, $\calA$ is semisimple.
Since the algebra~$\calA$ with grading forgotten is a semisimple algebra over~$\CC$,
we will routinely use the Wedderburn--Artin theorem
and refine the structure using grading.

\begin{lemma}[Wedderburn--Artin]
    Every finite-dimensional semisimple algebra over~$\CC$ is isomorphic to a finite direct sum of matrix algebras.
\end{lemma}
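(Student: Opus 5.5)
The plan is to run the classical argument, working only over~$\CC$. Let $\calA$ be a finite-dimensional semisimple algebra over~$\CC$. Then $\calA$, viewed as a right module over itself, is a finite direct sum of minimal right ideals. Group these summands by isomorphism type:
\[
\calA \cong \bigoplus_{k=1}^m S_k^{\oplus n_k},
\]
where the $S_k$ are pairwise nonisomorphic simple right modules. By Schur's lemma, $\End_\calA(S_k)$ is a division algebra, and it is finite dimensional over~$\CC$. Since $\CC$ is algebraically closed, this division algebra is just~$\CC$: any element $d$ generates a commutative finite field extension $\CC[d]$ of~$\CC$, so $d \in \CC$. Schur's lemma also gives $\mathrm{Hom}_\calA(S_k,S_l)=0$ for $k\neq l$.

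Next use the identification $\calA \cong \End_\calA(\calA_\calA)$, where $a$ acts by left multiplication; with right modules there is no opposite algebra to worry about. Combining this with the decomposition above,
\[
\calA \cong \prod_k \End_\calA\bigl(S_k^{\oplus n_k}\bigr) \cong \prod_k \Mat(n_k)\bigl(\End_\calA(S_k)\bigr) \cong \bigoplus_k \Mat(n_k).
\]
The second isomorphism comes from writing an endomorphism of a direct sum as a matrix of component maps. Finally, check that this is an algebra isomorphism and not merely a linear one: composition of endomorphisms corresponds to matrix multiplication, and the block-diagonal form comes from the vanishing of the cross Hom spaces.

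The main obstacle is the first step, namely the existence of a finite decomposition into simple right ideals. This is where semisimplicity is used. If semisimplicity is taken as the definition (every right submodule is a direct summand), I would argue by induction on dimension: pick a minimal nonzero right ideal, split it off as a complement, and repeat. Finite dimensionality guarantees that minimal right ideals exist and that the process terminates.

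If instead the input is zero Jacobson radical, as in the paper's setup via~\ref{thm:PhysicalImpliesZeroJacobsonRadical}, I would first show that every minimal right ideal $I$ is generated by an idempotent. Since $I\not\subseteq\rad\calA=0$, there is an $x\in I$ with $I=x\calA$ and $I^2\neq0$; the standard Brauer argument then produces an idempotent $e$ with $I=e\calA$. This gives the splitting $\calA=e\calA\oplus(1-e)\calA$, and the induction proceeds as before.

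For $*$-algebras one could add that the isomorphism can be chosen to be a $*$-isomorphism. This is not needed for the statement as given.
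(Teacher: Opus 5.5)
The paper does not prove this lemma. It quotes Wedderburn--Artin as a classical theorem, and its own contribution is only the input that makes it applicable: spectral nondegeneracy implies semiprimitivity, so a finite-dimensional fermionic algebra is semisimple. Your argument is the standard textbook proof and it is correct. You also handle the points that need care:
\begin{itemize}
\item using right modules, so that $a\mapsto(x\mapsto ax)$ identifies $\calA$ with $\End_\calA(\calA_\calA)$ rather than with its opposite;
\item in the radical route, getting $I^2\neq 0$ because a nilpotent right ideal lies in $\rad\calA=0$;
\item when iterating, using the modular law to see that a direct summand of $\calA$ lying inside the complement is a direct summand of the complement.
\end{itemize}

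The one thing to correct is your closing aside. For an arbitrary $*$-structure, the isomorphism cannot always be chosen to be a $*$-isomorphism onto matrix algebras with conjugate transpose. For example, take $\Mat(2)$ with $x^*=Jx^\dagger J$ and $J=\mathrm{diag}(1,-1)$; it contains $x\neq 0$ with $x^*x=0$. A positivity input is needed, and in the paper's setting spectral nondegeneracy supplies it. An involution that is the adjoint for an indefinite Hermitian form, or one that swaps two simple summands, yields a nonzero self-adjoint element with spectrum in $\{0,\pm\ii\}$, which spectral nondegeneracy forbids. The paper's later arguments tacitly rely on this $*$-version, for instance when it applies its Skolem--Noether lemma for the standard $*$-structure on $\Mat(n)$. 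None of this affects the statement as given.
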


\begin{corollary}
    Let $\calA$ be a finite-dimensional spectrally nondegenerate $*$-algebra.
    Then, $\calA$ is central if and only if it is simple.
    In addition, any subalgebra of~$\calA$ with a multiplicative identity,
    even if the inclusion homomorphism is not unital,
    is always spectrally nondegenerate.
\end{corollary}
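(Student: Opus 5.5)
The plan is to reduce everything to the Wedderburn--Artin decomposition together with the action of the grading involution on the blocks. I treat the second claim first, since it is the easier one.

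\emph{Subalgebras with their own identity.} Let $\calB \subseteq \calA$ be a $*$-subalgebra with multiplicative identity $p$. Since $p^\dagger$ is also an identity of $\calB$, we have $p = p^\dagger$, and $p$ is a projector. If $p = 0$, then $\calB = 0$ and there is nothing to prove. If $p = \one$, we apply~\ref{thm:spectral-nondegeneracy-on-subalgebras} directly.

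Otherwise, consider $\calB' = \calB + \CC(\one - p) \subseteq \calA$. This is a $*$-subalgebra isomorphic to $\calB \oplus \CC$, because $b(\one - p) = 0 = (\one - p)b$ for all $b \in \calB$. Its inclusion into $\calA$ is unital, so $\calB'$ is spectrally nondegenerate by~\ref{thm:spectral-nondegeneracy-on-subalgebras}.

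Now take a nonzero self-adjoint $h \in \calB$ and choose $\lambda \in \RR^\times$ such that $\one - \frac1\lambda h = (p - \frac1\lambda h) + (\one - p)$ is not invertible in $\calB'$. Inverses in $\calB \oplus \CC$ are computed componentwise, and the $\CC$-component here is $1$. Hence $p - \frac1\lambda h$ is not invertible in $\calB$, which is exactly spectral nondegeneracy of $\calB$.

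\emph{Central iff simple.} By~\ref{thm:PhysicalImpliesZeroJacobsonRadical}, $\calA$ is semisimple, so Wedderburn--Artin gives $\calA \cong \bigoplus_i \Mat(n_i)$ with minimal central idempotents $e_i$. Consequently $\Cent(\calA) = \bigoplus_i \CC e_i$, and the ungraded two-sided ideals are exactly the sums $\bigoplus_{i\in S} \calA e_i$.

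The grading involution $\phi$ is an algebra automorphism. It therefore maps the center to itself and permutes the minimal central idempotents, with orbits of size $1$ or $2$.

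An ideal $\calJ$ is graded if and only if $\phi(\calJ) = \calJ$. Indeed, the homogeneous components of $x$ are $(x \pm \phi(x))/2$, so a $\phi$-invariant ideal contains the components of each of its elements, and conversely a graded ideal is clearly $\phi$-invariant. Hence the graded ideals correspond to $\phi$-invariant index sets $S$, so $\calA$ is simple iff the $e_i$ form a single $\phi$-orbit.

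On the other hand, $\Cent(\calA)_0$ is the $\phi$-fixed subspace of $\bigoplus_i \CC e_i$. This subspace is spanned by the orbit sums $\sum_{i \in O} e_i$, one for each $\phi$-orbit $O$. So $\Cent(\calA)_0 = \CC\one$ iff there is exactly one orbit, and the two conditions coincide.

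\emph{Main subtlety.} The main point needing care is the identification of graded ideals with $\phi$-stable ideals. It is routine but is where the grading genuinely enters. Everything else is standard bookkeeping from Wedderburn--Artin.
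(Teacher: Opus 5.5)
Your proof is correct. The paper gives no written proof of this corollary; it records it as an immediate consequence of Wedderburn--Artin together with \ref{thm:PhysicalImpliesZeroJacobsonRadical}, and your block count is that reasoning.

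One point of framing: the statement concerns an \emph{ungraded} $*$-algebra, so there is no grading involution to use. Take $\phi = \id$, the trivial grading. Then every orbit is a singleton, and your argument reduces to: $\Cent(\calA) = \CC\one$ iff there is a single Wedderburn block iff $\calA$ has no nontrivial two-sided ideal. Contrary to your closing remark, the grading does not enter at all here.

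What your $\phi$-orbit argument additionally proves is the graded equivalence in \ref{thm:FiniteDimensionalCentralIsSimple}. The paper proves that separately and by a different route: it uses the multiplicative identity of a graded ideal for one direction, and the ideal generated by an even central element for the other. Your version handles both directions by one count: graded ideals are the $\phi$-stable sums of blocks, while $\Cent(\calA)_0$ is spanned by the $\phi$-orbit sums of the minimal central idempotents.

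For the second claim, your unitization $\calB + \CC(\one - p) \cong \calB \oplus \CC$ is sound. Combined with \ref{thm:spectral-nondegeneracy-on-subalgebras} and componentwise invertibility, it gives the result. It never uses finite dimensionality, so it holds for any spectrally nondegenerate $\calA$.
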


\begin{proposition-definition}\label{thm:FiniteDimensionalCentralIsSimple}
    Let $\calA$ be a finite-dimensional fermionic algebra.
    Then, $\calA$ is central if and only if $\calA$ is simple.
    If $\calA$ is central, then either $\Cent(\calA) = \CC \one$, 
    in which case $\calA$ is called \emph{even},
    or $\Cent(\calA) = \CC \one \oplus \CC \gamma$ for some odd element $\gamma$ such that $\gamma^\dagger = \gamma$ and $\gamma^2 = \one$, 
    in which case $\calA$ is called \emph{odd}.
    The odd part of an odd fermionic algebra~$\calA_0 \oplus \calA_1$
    is~$\calA_1 = \gamma \calA_0$,
    and $\calA_0$ as an ungraded algebra is central simple.
\end{proposition-definition}

The ``only if'' implication was noted in Lemma~A.3 of~\cite{nta3}.

\begin{proof}
    (``if'')
    Let $\pi \in \calA$ be a nonzero even central element.
    Then, $\pi \calA = \calA \pi$ is a nonzero two-sided graded ideal of~$\calA$,
    so it must be the whole: $\calA \pi = \calA$.
    Invoking the Wedderburn--Artin theorem for ungraded~$\calA$,
    we see that this can happen only when $\pi$ is a scalar multiple of~$\one$.

    (``only if'')
    Let $\calJ$ be a nonzero two-sided graded ideal of~$\calA$.
    Grading forgotten, it is a nonzero two-sided ideal of ungraded~$\calA$,
    which is, by the Wedderburn--Artin theorem, a direct sum of ungraded simple subalgebras,
    which has its own multiplicative identity~$\pi \in \calJ$, that lies in the center of~$\calA$.
    This unit~$\pi$ must preserve the grading of~$\calJ$, implying that $\pi$ is even.
    Since $\calA$ is assumed graded central, $\pi$ can only be a scalar multiple of~$\one \in \calA$,
    and $\calJ = \calA$.

    (Center)
    Suppose $\Cent(\calA)$ has a nonzero odd component.
    Then, the odd component satisfies 
    the assumptions of~\ref{thm:OddSquareRootOfIdentityIsOneDim}
    and we find~$\gamma$.
    Any odd element times $\gamma$ is even, and therefore $\calA_1 = \gamma \calA_0$.
    The claim that $\calA_0$ is central is now clear.
\end{proof}

\begin{theorem}[Wall~\cite{Wall1964gradedbrauer}]
    \label{thm:tensor_prod_simple}
    If $\calA$ and $\calB$ are finite-dimensional central fermionic algebras,
    then the fermionic tensor product $\calA\otimes \calB$ is central simple.
\end{theorem}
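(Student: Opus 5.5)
The plan is to reduce the statement to three results already available: \ref{thm:CentralTensorCentralIsCentral}, \ref{thm:FermiCenterCentrality}, and \ref{thm:FiniteDimensionalCentralIsSimple}. The tensor product $\calA\otimes\calB$ is finite dimensional. Since \ref{thm:FiniteDimensionalCentralIsSimple} shows that finite-dimensional central fermionic algebras are exactly the simple ones, it suffices to show that $\calA\otimes\calB$ is a fermionic algebra and is central. Centrality is precisely the content of \ref{thm:CentralTensorCentralIsCentral}. There one computes the fermionic commutator of an element $a_0\otimes b_0+a_1\otimes b_1$ of the fermionic center with $a'_i\otimes\one$ and with $\one\otimes b'_j$. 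This forces each tensor factor into $\FermiCent(\calA)=\CC\one$ and $\FermiCent(\calB)=\CC\one$ respectively, using \ref{thm:FermiCenterCentrality}. Hence $\FermiCent(\calA\otimes\calB)=\CC\one$, so the even part of the center is $\CC\one$.

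I would make one point in that argument precise, since an element of the fermionic center is a general sum $\sum_k x_k\otimes y_k$ rather than a single tensor. Choose the $y_k$ to be linearly independent and homogeneous. Commuting with $a'\otimes\one$ then gives $\sum_k \pm\FermiCommutator{x_k}{a'}\otimes y_k=0$, so each $\FermiCommutator{x_k}{a'}=0$. It follows that the element lies in $\CC\one\otimes\calB$, and the symmetric argument finishes.

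The step I expect to be the main obstacle is verifying that $\calA\otimes\calB$ is genuinely a fermionic algebra, that is, spectrally nondegenerate. The $*$-structure with the sign $(-1)^{ij}$ is an involution compatible with the product, and the grading is clear; the real issue is spectral nondegeneracy. My first approach is to argue concretely through Wedderburn--Artin, refined by \ref{thm:FiniteDimensionalCentralIsSimple}. Each central finite-dimensional fermionic algebra is either even, so that its underlying algebra is a matrix algebra with grading given by an even self-adjoint involution (hence isomorphic to some $\FermionPlus{p}{q}$), or odd, with $\calA_0$ central simple and $\calA_1=\gamma\calA_0$ (hence isomorphic to some $\FermionMinus{p}$). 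I also need the identification to respect the $*$-structure. For this I would realize the grading involution as conjugation by a self-adjoint unitary, which \ref{thm:AutomorphismByUnitaryConjugationCanAlwaysBeHomogeneous}-type reasoning supports. With that in hand, the table in \ref{ex:TableOfTensorProduct} identifies $\calA\otimes\calB$ with $\FermionPlus{\cdot}{\cdot}$ or $\FermionPlus{\cdot}{\cdot}\otimes\FermionMinus{1}$. Both are matrix algebras or direct sums of two matrix algebras with the usual adjoint, hence spectrally nondegenerate.

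The payoff of this classification route is that the table already exhibits the result as central simple, so it proves the statement directly and the centrality computation above becomes a cross-check. As a fallback for spectral nondegeneracy, I would represent $\calA$ and $\calB$ faithfully on graded Hilbert spaces $V,W$ and represent $\calA\otimes\calB$ on $V\otimes W$ via $a\otimes b\mapsto a\,P^{|b|}\otimes b$, with $P$ the parity of $V$. One checks that this is a faithful $*$-representation; nondegeneracy then follows from \ref{thm:spectral-nondegeneracy-on-subalgebras}.
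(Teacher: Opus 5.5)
Your proof is correct and takes the same route as the paper, whose proof is exactly \ref{thm:CentralTensorCentralIsCentral} for centrality followed by \ref{thm:FiniteDimensionalCentralIsSimple} for simplicity. Your extra care fills two points the paper leaves implicit: you reduce a general element $\sum_k x_k\otimes y_k$ of the fermionic center using linearly independent homogeneous $y_k$, and you check spectral nondegeneracy of $\calA\otimes\calB$, which the paper takes for granted but which the central-implies-simple step genuinely needs (it supplies the semisimplicity used there); either of your two arguments for nondegeneracy (the classification table, or the faithful unital $*$-representation $a\otimes b\mapsto aP^{|b|}\otimes b$) suffices.
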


\begin{proof}
    By~\ref{thm:CentralTensorCentralIsCentral}, the fermionic algebra $\calA \otimes \calB$ is central.
    The claim follows from~\ref{thm:FiniteDimensionalCentralIsSimple}.
\end{proof}

\begin{proposition-definition}\label{thm:Wedderburn-Fermionic}
    Every finite-dimensional fermionic algebra~$\calA$ is a direct sum of simple fermionic algebras.
    The fermionic center is $\CC$-span of the multiplicative identities~$\pi_\mu$ of the direct summands,
    called \emph{canonical projectors} of~$\calA$.
    \begin{equation}
        \calA \cong \bigoplus_\mu (\pi_\mu \calA)
    \end{equation}
\end{proposition-definition}
\begin{proof}
    If \(\calA\) is simple, we are done.
    Otherwise, let \(\calJ\) be a proper fermionic ideal of \(A\). 
    It is also an ungraded ideal, 
    so the Wedderburn--Artin theorem implies that \(\calA = \calJ \oplus \ann \calJ\), 
    where \(\ann \calJ = \{ a \in \calA \,\mid a \calJ = \calJ a = 0\}\) is the annihilator. 
    Indeed, it is necessary and sufficient that \(a \pi = 0\) 
    where \(\pi\) is the multiplicative identity in \(\calJ\), which is even.
    That there is $\pi \in \calJ$ follows from the Wedderburn--Artin theorem.
    Decomposing $a \in \ann \calJ$ into even and odd parts, \(a = a_0 + a_1\), 
    we have that \(a_0 \pi + a_1 \pi = 0\). 
    We conclude that they are each in the annihilator.
    It follows that \(\ann \calJ\) is fermionic.
    By induction in the dimension of~$\calA$,
    we see that $\calA$ is the direct sum of minimal ideals, 
    each of which is a fermionic subalgebra.
    
    The multiplicative identities~$\pi_\mu$ are certainly in the fermionic center
    and sum to~$\one_\calA$.
    Since any smaller projector in the fermionic center would result in a smaller fermionic subalgebra,
    we have exhausted all~$\FermiCent(\calA)$.
\end{proof}

\subsection{Faithful representations}

\begin{lemma}[Skolem--Noether]\label{thm:AutoIsInnerInMat}
    Every automorphism of a $*$-algebra~$\Mat(n)$ is a conjugation by some unitary~$u \in \Mat(n)$.
\end{lemma}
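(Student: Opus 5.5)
The plan is the classical argument: build a representation of $\Mat(n)$ twisted by the automorphism, show it is equivalent to the defining one, and read off the intertwiner. Let $\alpha$ be a $*$-automorphism of $\Mat(n)$. Regard $V=\CC^n$ as a left module over $\Mat(n)$ in two ways. The first is the standard action $x\cdot v = xv$. The second is the twisted action $x \cdot_\alpha v = \alpha(x)v$. Since $\alpha$ is a unital isomorphism (an automorphism sends $\one$ to $\one$, because $\alpha(\one)$ is an identity for the image, which is everything), both are unital $n$-dimensional modules.

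The main step is that $\Mat(n)$ has a unique simple module up to isomorphism, namely $\CC^n$. Every finite-dimensional module is semisimple, hence a direct sum of copies of $\CC^n$, and so is determined by its dimension. One way to see this concretely uses matrix units. Put $f_{ij}=\alpha(e_{ij})$; these satisfy the matrix-unit relations and $\sum_i f_{ii}=\one$. The $f_{ii}$ are nonzero orthogonal projectors, since $\alpha$ is injective and $*$-preserving. They sum to $\one$ on an $n$-dimensional space, so each has rank one.

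Next construct the intertwiner. Pick a unit vector $w_1$ in the range of $f_{11}$ and set $w_i = f_{i1} w_1$. Using $f_{i1}^\dagger f_{j1} = f_{1i}f_{j1}=\delta_{ij}f_{11}$, these vectors are orthonormal, so they form an orthonormal basis. Define the unitary $u$ by $u e_i = w_i$. Then
\[
u e_{ij} u^\dagger w_k = \delta_{jk} w_i = f_{ij} w_k,
\]
because $f_{ij}w_k = f_{ij}f_{k1}w_1 = \delta_{jk}f_{i1}w_1$. Hence $\alpha(e_{ij}) = u e_{ij}u^\dagger$ on the matrix units, and by linearity $\alpha(x)=uxu^\dagger$ for every $x$.

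The only delicate point is that the $f_{ii}$ have rank exactly one, which is where injectivity and the dimension count are used. Unitarity of $u$, rather than mere invertibility, comes from $\alpha$ preserving $\dagger$: this is what makes the $f_{ij}$ partial isometries and the $w_i$ orthonormal.
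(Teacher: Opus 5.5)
Your proof is correct, but it takes a different route from the paper's.

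\textbf{The paper's route.} The paper cites the general Skolem--Noether theorem as a black box to obtain an invertible $y$ with $\alpha(x) = y x y^{-1}$. It then uses $*$-preservation: from $y x y^{-1} = y^{-\dagger} x y^\dagger$ it follows that $y^\dagger y$ is central, hence a positive multiple of $\one$. So in the polar decomposition $y = up$ the positive factor $p$ is a scalar, and conjugation by $y$ equals conjugation by the unitary $u$.

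\textbf{Your route.} You prove the matrix-algebra case of Skolem--Noether directly. The images $f_{ij} = \alpha(e_{ij})$ form a system of matrix units. Their diagonal members are nonzero orthogonal projectors summing to $\one$ on $\CC^n$, so each has rank one. The vectors $w_i = f_{i1} w_1$ are orthonormal precisely because $\alpha$ commutes with $\dagger$. So your intertwiner is unitary from the start, and no polar-decomposition correction is needed.

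\textbf{Comparison.} Your argument is self-contained and constructive: it exhibits $u$ explicitly and never invokes the general theory of central simple algebras. The price is that it is tailored to full matrix algebras, though that is all the lemma requires. The paper's argument is shorter given the classical theorem. Its second half is also a reusable observation: once a $*$-automorphism of $\Mat(n)$ is known to be inner, centrality of $y^\dagger y$ upgrades the implementing element to a unitary.

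\textbf{A minor point.} Your opening framing in terms of twisted modules and the uniqueness of the simple $\Mat(n)$-module is never actually used. The matrix-unit computation stands on its own, so you could drop that paragraph.
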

\begin{proof}
    By Skolem--Noether theorem, we find an invertible element~$y \in \Mat(n)$ such that
    the automorphism is $x \mapsto y x y^{-1}$.
    Since we have a $*$-automorphism,
    $y x y^{-1} = ( y x^\dagger y^{-1} )^\dagger = y^{-\dagger} x y^\dagger$,
    implying that
    $y^\dagger y x = x y^\dagger y$ for any~$x$.
    Let $y = u p$ be the polar decomposition where $p \succeq 0$.
    It follows that $p^2 = y^\dagger y$ is in the center, which must be a nonzero scalar multiple of~$\one$.
    Therefore, the unique positive square root~$p$ of~$p^2$ 
    is a nonzero scalar multiple of~$\one$,
    and $y x y^{-1} = u x u^\dagger$.
\end{proof}

If~$\calA$ is even central,
then $\calA$ as an ungraded algebra is central simple,
and hence is isomorphic to a matrix algebra~$\Mat(n)$.
By~\ref{thm:AutoIsInnerInMat}, 
there exists a unitary~$\Gamma \in \Mat(n)$
that implements the grading automorphism by conjugation.
Being an involution, $\Gamma$ must square to a scalar multiple of~$\one$,
which we can choose to be~$1$ since a square root always exists in~$\CC$.
Working in the basis where $\Gamma$ is diagonal,
we see that the isomorphism $\phi : \calA \to \Mat(n)$ can be chosen such that
the grading automorphism is
\begin{align}
    \begin{pmatrix}
        A & B \\ C & D    
    \end{pmatrix}
    \mapsto
    \begin{pmatrix}
        A & -B \\ -C & D
    \end{pmatrix} \, . \label{eq:evenrep}
\end{align}
The dimensions of~$A$ and~$D$ may be different.
This representation of~$\calA$ is unique up to a block-diagonal unitary.
Any other finite-dimensional representation of~$\calA$
(a unital embedding of~$\calA$ into some matrix algebra)
is a finite direct sum of copies of this,
which we may denote as $\Mat(n) \otimes \one_k$ where $k \ge 1$ is an integer.

If $\calA$ is odd central,
then the even part $\calA_0$ is central simple and hence is isomorphic to a matrix algebra~$\Mat(n)$.
Let the center of~$\calA$ be generated by~$\gamma^\dagger = \gamma \in \calA_1$ by~\ref{thm:OddSquareRootOfIdentityIsOneDim}.
The grading involution sends~$\gamma$ to~$-\gamma$ and fixes all elements of~$\calA_0$.
If $\rho : \calA \to \Mat(m)$ is faithful with $\rho(\one) = \one$,
the image~$\rho(\gamma)$ must be a self-adjoint unitary of order~$2$, 
which we may assume to be diagonal.
The representation space~$\CC^m$ then contains 
two mutually orthogonal $\calA$-invariant proper subspaces $\frac 1 2(1 \pm \rho(\gamma)) \CC^m$.
In each subrepresentation~$\rho_\pm$ of~$\calA$,
$\rho_\pm(\gamma)$ is $\pm \one$,
so the image $\rho_\pm(\calA)$ is exactly $\rho_\pm(\calA_0) \cong \Mat(n)$.
It follows that $\rho_\pm(\calA) = \one_{k_\pm} \otimes \Mat(n)$ for some integer~$k_\pm \ge 1$.
Note that $k_+$ does not have to be equal to~$k_-$.
In terms of concrete matrices, the representation is
\begin{align}
    \rho : \calA \ni a + \gamma b \mapsto 
    \begin{pmatrix}
         (\phi(a) + \phi(b)) \otimes \one_{k_+} & 0 \\ 
         0 & (\phi(a) - \phi(b)) \otimes \one_{k_-}
    \end{pmatrix} \in \Mat(n(k_+ + k_-)) \label{eq:generalOddRep}
\end{align}
where $a,b \in \calA_0$ and $\phi : \calA_0 \to \Mat(n)$ is an isomorphism.

Note that the grading involution can be implemented 
by a unitary conjugation in~$\Mat(n(k_+ + k_-))$
if and only if $k_+ = k_-$.
Indeed, if $k_+ = k_- = k$,
then
\begin{align}
    \begin{pmatrix}
        (\phi(a) + \phi(b)) \otimes \one_k & 0 \\ 
        0 & (\phi(a) - \phi(b)) \otimes \one_k
    \end{pmatrix}
    =
    (\one_2 \otimes \phi(a) + \sigma^z \otimes \phi(b)) \otimes \one_k
\end{align}
and the conjugation by~$\sigma^x \otimes \one_n \otimes \one_k$ implements the grading involution.
If $k_+ \neq k_-$, then $\Tr(\rho(\gamma)) \neq 0$, but any inner automorphism
such that $\rho(\gamma) \mapsto -\rho(\gamma)$ forces $\Tr(\rho(\gamma)) = 0$,
so the grading cannot be inner.
If $k_+ = k_- = k$, the representation~\eqref{eq:generalOddRep} may be taken as
\begin{align}
    \rho : \calA \ni a + \gamma b \mapsto 
    \begin{pmatrix}
        \phi(a) & \phi(b) \\ 
        \phi(b)  & \phi(a)
    \end{pmatrix} \otimes \one_k \in \Mat(2nk) \, . \label{eq:oddrep}
\end{align}

In summary, we have
\begin{proposition}\label{thm:faithfulreps}
    Let $\calA$ be a finite-dimensional central fermionic algebra.
    If $\calA$ is even,
    then any nonzero unital representation is faithful and is of form~\eqref{eq:evenrep}.
    If $\calA$ is odd and is unitally faithfully represented on a space 
    where the grading involution is conjugation by some unitary,
    then the representation is of form~\eqref{eq:oddrep},
    which is always the case for a minimal faithful representation.
\end{proposition}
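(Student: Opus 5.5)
The plan is to split into the even and odd cases given by \ref{thm:FiniteDimensionalCentralIsSimple}, and in each case to reduce to the ungraded representation theory of a full matrix algebra. The grading is then recovered from a unitary implementing the grading involution.

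\emph{Even case.} Here $\calA$, with its grading forgotten, is central simple, so $\calA\cong\Mat(n)$ by Wedderburn--Artin. A nonzero unital representation $\rho:\calA\to\Mat(m)$ has kernel a two-sided ideal. That kernel is proper because $\rho(\one)=\one\neq0$, so it vanishes by simplicity (ungraded simplicity of $\Mat(n)$ suffices), and $\rho$ is faithful. The ungraded representation theory of $\Mat(n)$ then says $\CC^m\cong\CC^n\otimes\CC^k$ and $\rho(x)=\phi(x)\otimes\one_k$ up to unitary change of basis. For the grading, I choose $\Gamma$ via \ref{thm:AutoIsInnerInMat} as in the discussion preceding the proposition: it is self-adjoint, $\Gamma^2=\one$, and implements the grading involution. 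Diagonalizing $\Gamma$ and ordering its $\pm1$ eigenspaces puts $\phi$ in the block form~\eqref{eq:evenrep}, so $\rho$ is a direct sum of copies of this form, as claimed.

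\emph{Odd case.} Let $\Cent(\calA)=\CC\one\oplus\CC\gamma$ with $\calA_1=\gamma\calA_0$ and $\calA_0\cong\Mat(n)$. Then $\rho(\gamma)$ is a self-adjoint involution, and its eigenspaces $\frac12(\one\pm\rho(\gamma))\CC^m$ are invariant under $\rho(\calA)$, because $\gamma$ is central. On each eigenspace $\rho$ restricts to a unital representation of $\calA_0\cong\Mat(n)$, so the even case gives~\eqref{eq:generalOddRep} with multiplicities $k_\pm$. Faithfulness forces $k_+,k_-\ge1$; otherwise $\rho(\gamma)=\pm\one$ and $\gamma\mp\one\neq0$ lies in the kernel. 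Now suppose the grading involution is conjugation by a unitary $U$. Then $U\rho(\gamma)U^\dagger=-\rho(\gamma)$, so $\Tr\rho(\gamma)=0$, i.e.\ $k_+=k_-=:k$. It remains to conjugate by a fixed unitary, $\frac1{\sqrt2}\begin{pmatrix}\one&\one\\ \one&-\one\end{pmatrix}\otimes\one_k$, which turns $\sigma^z$ into $\sigma^x$ in the tensor expression following~\eqref{eq:generalOddRep}. This yields~\eqref{eq:oddrep}.

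\emph{Minimal faithful representation.} Any faithful representation has $k_\pm\ge1$, so its dimension is at least $2n$, and $k_+=k_-=1$ attains this. Hence the minimal faithful representation has $k_+=k_-$ and the grading is inner, as computed above. The main obstacle is not any single step but keeping track of the bookkeeping. One must check that the unitary conjugations used preserve the required block forms. In the even case one must also be careful that the $k$ copies carry compatible gradings: a unitary intertwiner may be chosen block-diagonal with respect to $\Gamma$, and this uses the uniqueness statement for~\eqref{eq:evenrep} up to block-diagonal unitaries.
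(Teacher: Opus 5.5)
Your proposal is correct and follows essentially the same route as the paper's discussion preceding the proposition. In the even case you use Skolem--Noether to obtain a diagonalizable grading unitary $\Gamma$; in the odd case you split along the eigenspaces of $\rho(\gamma)$ and use the trace argument to force $k_+=k_-$, and your explicit Hadamard conjugation and the count $n(k_++k_-)\ge 2n$ for the minimal case spell out what the paper leaves implicit. Your closing worry about compatible gradings among the $k$ copies is unnecessary: all copies use the same fixed isomorphism $\phi$, so the grading of $\phi\otimes\one_k$ is automatically conjugation by $\Gamma\otimes\one_k$.
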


\begin{definition}\label{def:rank}
    The \emph{envelope rank} of a finite-dimensional central fermionic algebra~$\calA$
    is the minimum positive integer~$m$ such that there exists an injective
    ungraded homomorphism from~$\calA$ into~$\Mat(m)$.
\end{definition}

\noindent
The envelope rank of an odd central fermionic algebra is always even 
and fully determines the graded isomorphism class of the algebra.
The envelope rank of an even central fermionic algebra can be any positive integer
and does not determine the graded isomorphism class of the algebra.

The following is now obvious:

\begin{theorem}[Wall~\cite{Wall1964gradedbrauer}]\label{thm:fermionicCSAs}
    Any finite-dimensional central fermionic algebra 
    is graded-isomorphic to either \(\FermionPlus{p}{q}\) for some integer $p,q \ge 0$ (even)
    or \(\FermionMinus{s}\) for some integer $s \ge 1$ (odd).
\end{theorem}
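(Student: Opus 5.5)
The proof splits according to the dichotomy in \ref{thm:FiniteDimensionalCentralIsSimple}: a finite-dimensional central fermionic algebra~$\calA$ is either even, with $\Cent(\calA) = \CC\one$, or odd, with $\Cent(\calA) = \CC\one \oplus \CC\gamma$. In each case we use the concrete matrix forms recorded before \ref{thm:faithfulreps}, and check that the explicit matrix identification is a graded $*$-isomorphism onto the model algebra.

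\emph{Even case.} Forgetting the grading, $\calA$ is central simple, so by Wedderburn--Artin there is a $*$-isomorphism $\phi : \calA \to \Mat(n)$. We make sure this is a $*$-isomorphism and not merely an algebra isomorphism. One way is to take the left regular representation on a minimal left ideal with the inner product coming from a faithful trace. Another is to note that any two $*$-structures on $\Mat(n)$ that are positive differ by an inner automorphism, and spectral nondegeneracy rules out indefinite ones. By \ref{thm:AutoIsInnerInMat}, the grading involution transported through~$\phi$ is $x \mapsto \Gamma x \Gamma^\dagger$ for a unitary~$\Gamma$. Since the grading is an involution, $\Gamma^2$ is central, so we rescale to get $\Gamma^2 = \one$ with $\Gamma$ self-adjoint. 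Let $p$ and $q$ be the dimensions of the $\pm 1$ eigenspaces of~$\Gamma$. Conjugating $\phi$ by a unitary that diagonalizes $\Gamma$ to $\mathrm{diag}(\one_p, -\one_q)$ puts the grading in the form~\eqref{eq:evenrep}. The even elements are then exactly the block-diagonal matrices and the odd ones the block-off-diagonal matrices, so $\phi$ is a graded $*$-isomorphism $\calA \cong \FermionPlus{p}{q}$. Here $p, q \ge 0$, and one of them may vanish if the grading is trivial.

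\emph{Odd case.} By \ref{thm:FiniteDimensionalCentralIsSimple}, $\calA_1 = \gamma\calA_0$ and $\calA_0$ is central simple, so there is a $*$-isomorphism $\phi : \calA_0 \to \Mat(s)$. Define
\begin{equation}
    \Phi : \calA \ni a + \gamma b \mapsto \phi(a) \otimes \one + \phi(b) \otimes \gamma' \in \FermionPlus{s}{0} \otimes \FermionMinus{1} = \FermionMinus{s},
\end{equation}
where $\gamma'$ is the Majorana generator of~$\FermionMinus{1}$. This map is bijective, because the decomposition $a + \gamma b$ is unique and dimensions match. It is graded by construction. Multiplicativity follows because $\gamma$ is central with $\gamma^2 = \one$, while $\phi(\calA_0)$ is even and hence commutes with $\one \otimes \gamma'$ in the fermionic tensor product, and $(\one \otimes \gamma')^2 = \one$. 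The $*$-compatibility follows from $\gamma = \gamma^\dagger$ together with the sign convention for~$\dagger$, which is trivial here because the first factor is purely even. This gives $\calA \cong \FermionMinus{s}$ with $s \ge 1$.

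The only step that needs care is upgrading Wedderburn--Artin to a $*$-isomorphism, i.e.\ showing the given involution on $\Mat(n)$ is the standard conjugate transpose up to unitary change of basis. Everything else is routine bookkeeping.
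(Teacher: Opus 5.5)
Your proof is correct and is essentially the paper's argument. The paper reads the theorem off from the discussion preceding \ref{thm:faithfulreps}: the even case is brought to the form \eqref{eq:evenrep} via \ref{thm:AutoIsInnerInMat}, and the odd case to the form \eqref{eq:oddrep}, which is exactly your map $a+\gamma b \mapsto \phi(a)\otimes\one + \phi(b)\otimes\gamma'$ with $\gamma'$ realized as $\sigma^x$.

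The paper tacitly treats the Wedderburn--Artin isomorphism as a $*$-isomorphism. Of your two proposed fixes for this, only the spectral-nondegeneracy one is self-contained. In a basis where the involution is $x\mapsto Hx^\dagger H$ with $H=\mathrm{diag}(\one_p,-\one_q)$ and $p,q\ge 1$, the element $E_{1,p+1}-E_{p+1,1}$ is a nonzero self-adjoint element with spectrum in $\{0,\pm\ii\}$, which violates spectral nondegeneracy. The faithful-trace route, by contrast, presupposes exactly the positivity it is meant to supply.
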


\subsection{Rank tuples}

\begin{definition}
    Define an \emph{unordered} tuple, called the \emph{rank tuple}, 
    for each isomorphism class of finite-dimensional central fermionic algebras
    by
    \begin{equation}
        \begin{cases}
            \FermionPlus{a}{b} &\mapsto (a,b) \\
            \FermionMinus{c} &\mapsto (c / \sqrt{2}, c / \sqrt{2}) 
        \end{cases} \, .
    \end{equation}
    Define a binary operation on rank tuples by
    \begin{equation}
        (x,y) \cdot (x',y') \mapsto (x x' + y y' , x y' + x' y ) \, . \label{eq:RankTupleOperation}
    \end{equation}
\end{definition}

The binary operation can be thought of as 
the multiplication of matrices of form $\begin{pmatrix} x & y \\ y & x \end{pmatrix}$.

\begin{proposition}\label{thm:RankTupleMonoid}
    With the binary operation~\eqref{eq:RankTupleOperation},
    the set of all rank tuples is a monoid with $(1,0)$ the identity,
    which is isomorphic to the monoid of all isomorphism classes of 
    finite-dimensional central fermionic algebras under graded tensor product operation
    with $\FermionPlus{1}{0}$ the identity.
\end{proposition}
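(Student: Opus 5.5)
The plan is to show that the rank-tuple assignment is a well-defined injection from graded isomorphism classes of finite-dimensional central fermionic algebras onto the set of rank tuples, and that it carries the graded tensor product to the operation~\eqref{eq:RankTupleOperation}. The monoid axioms for rank tuples are then inherited from the tensor product, or can be checked directly.

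\emph{Well-definedness and injectivity.} By~\ref{thm:fermionicCSAs}, every class is represented by some $\FermionPlus{p}{q}$ or some $\FermionMinus{s}$, so only coincidences among these need to be settled.
\begin{itemize}
\item Conjugation by the block permutation matrix swapping $\CC^p$ and $\CC^q$ preserves block-diagonality and block-off-diagonality. Hence $\FermionPlus{p}{q}\cong\FermionPlus{q}{p}$ as graded algebras, which is why the tuple must be unordered.
\item Conversely, the even part of $\FermionPlus{p}{q}$ is $\Mat(p)\oplus\Mat(q)$, so the unordered pair $\{p,q\}$ is a graded isomorphism invariant.
\item Even and odd algebras are never isomorphic, because their centers differ (\ref{thm:FiniteDimensionalCentralIsSimple}).
\item The even part of $\FermionMinus{s}$ is $\Mat(s)$, so $s$ is an invariant.
\item Finally, the entries $c/\sqrt2$ are irrational for integer $c\ge1$, so odd tuples never coincide with even (integer) tuples.
\end{itemize}
Thus the map is injective, and its image is by definition the set of rank tuples.

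\emph{Multiplicativity.} I will check the three cases of~\ref{ex:TableOfTensorProduct}.
\begin{itemize}
\item Even times even: $\FermionPlus{p}{q}\otimes\FermionPlus{p'}{q'}\cong\FermionPlus{pp'+qq'}{pq'+qp'}$, which is exactly~\eqref{eq:RankTupleOperation}.
\item Odd times odd: $(p/\sqrt2,p/\sqrt2)\cdot(q/\sqrt2,q/\sqrt2)=(pq,pq)$, which matches $\FermionMinus{p}\otimes\FermionMinus{q}\cong\FermionPlus{pq}{pq}$.
\item Mixed case: this is the only slightly nontrivial step. We have $\FermionPlus{p}{q}\otimes\FermionMinus{p'}\cong\FermionPlus{pp'}{qp'}\otimes\FermionMinus{1}$. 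I claim $\FermionPlus{a}{b}\otimes\FermionMinus{1}\cong\FermionMinus{a+b}$. By~\ref{thm:tensor_prod_simple} the product is central simple. It is odd, since $\one\otimes\gamma$ is an odd central element. Its even part is $\FermionPlus{a}{b}_0\otimes\one\oplus\FermionPlus{a}{b}_1\otimes\gamma$, of dimension $a^2+b^2+2ab=(a+b)^2$. By~\ref{thm:FiniteDimensionalCentralIsSimple} this even part is a matrix algebra, so it is $\Mat(a+b)$, and hence the product is $\FermionMinus{a+b}$. The resulting tuple is $((p+q)p'/\sqrt2,(p+q)p'/\sqrt2)$, which equals $(p,q)\cdot(p'/\sqrt2,p'/\sqrt2)$. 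The mirrored order follows the same way, or from $\calA\otimes\calB\cong\calB\otimes\calA$ via the Koszul-sign swap.
\end{itemize}

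\emph{Monoid structure.} The operation~\eqref{eq:RankTupleOperation} is well defined on unordered tuples. Swapping the entries of one factor swaps the entries of the output, and the result is again a rank tuple because it is the tuple of a tensor product. Associativity follows either from associativity of the graded tensor product, or directly from the fact that the operation is multiplication of commuting matrices $\begin{pmatrix}x&y\\y&x\end{pmatrix}$. The identity is $(1,0)$, the tuple of $\FermionPlus{1}{0}=\CC$. Hence the bijection is a monoid isomorphism.

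The main obstacle I anticipate is the identification in the mixed case, specifically pinning down the even part. The dimension count combined with simplicity should settle it.
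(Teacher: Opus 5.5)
Your overall plan is sound. The injectivity checks and the even$\times$even and odd$\times$odd cases are fine. The mixed case, however, contains a false step: $\one\otimes\gamma$ is \emph{not} central in $\FermionPlus{a}{b}\otimes\FermionMinus{1}$ once $a,b\ge 1$. By the sign rule of the graded product, for odd $x\in\FermionPlusOdd{a}{b}$ one gets
\[
(\one\otimes\gamma)(x\otimes\one)=-\,x\otimes\gamma=-(x\otimes\one)(\one\otimes\gamma).
\]
So $\one\otimes\gamma$ fermionically commutes with the first factor but anticommutes with its odd part. Your next step uses \ref{thm:FiniteDimensionalCentralIsSimple} to conclude that the even part is a full matrix algebra, and that requires the product to be odd. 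As written, the identification with $\FermionMinus{a+b}$ therefore rests on an incorrect claim.

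There are two one-line repairs.
\begin{itemize}
\item Let $\Gamma=\mathrm{diag}(\one_a,-\one_b)$ be the even grading unitary of $\FermionPlus{a}{b}$. Then $\Gamma\otimes\gamma$ is odd, self-adjoint and squares to $\one$. It commutes with $x\otimes\one$, because the sign $(-1)^{|x|}$ from $\Gamma x=(-1)^{|x|}x\Gamma$ cancels the sign from the graded product. It also commutes with $\one\otimes\gamma$. Hence it is an odd central element, as in the construction of \ref{lem:EvenOddXZFactorThrough}.
\item Alternatively, oddness follows from dimensions. Both graded pieces have dimension $(a+b)^2$. An even $\FermionPlus{x}{y}$ with $x^2+y^2=2xy$ would need $x=y$ and $2x^2=(a+b)^2$, which has no integer solution.
\end{itemize}

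That last observation is essentially the paper's entire proof. In both families a rank tuple $(x,y)$ records $\dim\calA_0=x^2+y^2$ and $\dim\calA_1=2xy$, and this pair of dimensions determines the unordered tuple. Under the graded tensor product the pair $(d_0,d_1)$ transforms as $(d_0d_0'+d_1d_1',\,d_0d_1'+d_1d_0')$. One checks directly that $(x,y)\mapsto(x^2+y^2,2xy)$ carries \eqref{eq:RankTupleOperation} to this rule. Combined with \ref{thm:tensor_prod_simple}, which says the product is again central, this gives multiplicativity in all cases at once, with no need to exhibit central elements or identify even parts. Your case-by-case route through \ref{ex:TableOfTensorProduct} is more explicit and spells out the well-definedness and injectivity that the paper leaves implicit. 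It is longer, though, and the mixed case is exactly where the slip occurred.
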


\begin{proof}
    Since $(x,y) \cdot (y',x') = (xy' + y x', x x' + y y' )$,
    the operation is well defined for unordered tuples.
    The assertion is easily checked by counting dimensions of even and odd part;
    see~\ref{ex:TableOfTensorProduct}.
\end{proof}

\begin{remark}\label{rem:EvenOddUnderTensorProduct}
    The rank tuple of a finite-dimensional central fermionic algebra
    is an unordered tuple of integers if it is even or integers divided by~$\sqrt 2$ if it is odd.
    Hence, the parity of finite-dimensional central ferminonic algebra
    gives a monoid homomorphism into the additive group~$\ZZ_2$.
\end{remark}

As shown in~\ref{ex:TableOfTensorProduct} above,
we have $(a,b)\cdot(1,1) / \sqrt{2} = (a+b,a+b)/\sqrt{2} = (a+b, 0) \cdot (1,1)/\sqrt{2}$.
This means that we may have $\calA \otimes \calB \cong \calA \otimes \calC$ 
even if $\calB$ and $\calC$ are not isomorphic.
However, this failure of cancellation property is remedied if the isomorphism 
$\calA \otimes \calB \cong \calA \otimes \calC$ 
restricts to the identity map on~$\calA$:
\begin{lemma}\label{thm:CancellationPropertyForIsomorphisms}
    Let $\calA,\calB,\calC$ be fermionic algebras, possibly infinite dimensional,
    where $\calA$ is central.
    If $\phi : \calA \otimes \calB \to \calA \otimes \calC$ is an isomorphism of fermionic algebras 
    such that
    \begin{equation}
        \begin{tikzcd}
            \calA \otimes \calB \arrow[rr, "\phi"] & &  \calA \otimes \calC \\
             &  \calA \arrow[ur, hook]  \arrow[ul, hook]&
        \end{tikzcd}
    \end{equation}
    commutes where the slanted maps are $a \mapsto a \otimes \one$,
    then $\phi|_{\one \otimes \calB}: \one \otimes \calB \to \one \otimes \calC$ is an isomorphism.
\end{lemma}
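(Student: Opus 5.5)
The plan is to identify $\one \otimes \calB$ and $\one \otimes \calC$ as the fermionic commutants of $\calA \otimes \one$ in $\calA \otimes \calB$ and $\calA \otimes \calC$, respectively. Once this is done, the result follows from the fact that an isomorphism preserves fermionic commutants. Write $\calB' = \one\otimes\calB$ and $\calA' = \calA \otimes \one$.

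\textbf{Claim.} $\FermiCommutant{\calA'}{\calA\otimes\calB} = \calB'$.

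The inclusion ``$\supseteq$'' is immediate from the sign rule in the fermionic tensor product. For homogeneous $a$ and $b$,
\begin{equation}
(a\otimes\one)(\one\otimes b) = a\otimes b = (-1)^{|a||b|}(\one\otimes b)(a\otimes \one),
\end{equation}
so $\FermiCommutator{a\otimes \one}{\one\otimes b}=0$.

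For ``$\subseteq$'', let $x$ be homogeneous in the fermionic commutant. Choose a basis $\{b_k\}$ of $\calB$ consisting of homogeneous elements, and write $x=\sum_k a_k\otimes b_k$ with each $a_k$ homogeneous of degree $|x|+|b_k|$. The $a_k$ are unique because $\{b_k\}$ is a basis and the tensor product is algebraic. For homogeneous $a'\in\calA$, repeat the computation from the proof of \ref{thm:CentralTensorCentralIsCentral}:
\begin{equation}
\FermiCommutator{x}{a'\otimes\one}=\sum_k \pm\FermiCommutator{a_k}{a'}\otimes b_k .
\end{equation}
The signs are uniform enough that vanishing of the sum forces $\FermiCommutator{a_k}{a'}=0$ for every $k$. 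Hence each $a_k\in\FermiCent(\calA)=\CC\one$, using \ref{thm:FermiCenterCentrality}, which requires $\calA$ to be central. Therefore $x\in\calB'$. The step that needs the most care is this sign bookkeeping. Since $a_k$ and $b_k$ have correlated parities, one should check that each term gets the sign $(-1)^{|a'||b_k|}$, which is independent of $k$ up to a nonzero factor. This is the case, and linear independence of the $b_k$ then finishes the argument. Infinite dimensionality causes no trouble, since all sums are finite.

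The same argument gives $\FermiCommutant{\calA\otimes\one}{\calA\otimes\calC}=\one\otimes\calC$.

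\textbf{Conclusion.} Because $\phi$ is a graded isomorphism, it preserves fermionic commutators:
\begin{equation}
\phi(\FermiCommutator{x}{y})=\FermiCommutator{\phi(x)}{\phi(y)}.
\end{equation}
Consequently $\phi$ maps $\FermiCommutant{\calS}{\calA\otimes\calB}$ bijectively onto $\FermiCommutant{\phi(\calS)}{\calA\otimes\calC}$ for any subalgebra $\calS$. Commutativity of the diagram gives $\phi(\calA\otimes\one)=\calA\otimes\one$. Taking $\calS=\calA\otimes\one$, we get $\phi(\one\otimes\calB)=\one\otimes\calC$. The restriction $\phi|_{\one\otimes\calB}$ is then a bijective graded $*$-homomorphism onto $\one\otimes\calC$, that is, an isomorphism.
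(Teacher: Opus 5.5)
Your proof is correct and follows the paper's route. The paper argues that $\phi(\one\otimes\calB)$ lands in $\one\otimes\calC$ ``because $\calA$ is central'', which is exactly your identification of $\one\otimes\calC$ as the fermionic commutant of $\calA\otimes\one$; it then gets surjectivity by applying the same reasoning to $\phi^{-1}$, as your bijection-of-commutants step does. Your explicit check that each term carries the sign $(-1)^{|a'||b_k|}$, together with the use of $\FermiCent(\calA)=\CC\one$ from \ref{thm:FermiCenterCentrality}, supplies the detail the paper leaves implicit.
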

\begin{proof}
    First, $\phi|_{\one \otimes \calB}$ is injective because $\phi$ is injective, 
    and lands in $\one \otimes \calC$ because $\calA$ is central.
    Since $\phi$ is invertible,
    we also know that $\phi^{-1}(\one \otimes \calC)$ lands in $\one \otimes \calB$.
    This shows that $\phi|_{\one \otimes \calB}$ is surjective.
\end{proof}

\begin{definition}
    An \emph{irreducible} rank tuple is one that is not divisible by any other rank tuple.
    The following irreducible rank tuples are called \emph{primitive}:
    \begin{equation}
        \begin{cases}
            (p,0) \text{ with an odd prime }p\\
            (1,1)/\sqrt 2
        \end{cases}
    \end{equation}
    Accordingly, a \emph{primitive fermionic algebra} is 
    a finite-dimensional central fermionic algebra with a primitive rank tuple.
\end{definition}

\begin{remark}
    One may also define that
    a rank tuple~$p$ is \emph{prime} if $p | a b$ implies $p|a$ or $p|b$.
    An irreducible rank tuple may not be prime;
    for example, $(3,0)$ is irreducible and $(3,0) ~|~ (\frac{1}{\sqrt 2},\frac{1}{\sqrt 2}) \cdot (1,2)$
    but does not divide any of the factors.
    An example prime rank tuple is $p = (1,1)/\sqrt 2$.
    (\emph{Proof}: 
    Suppose $p | ab$. 
    If $a$ or $b$ is odd, then $p$ divides the odd.
    If both $a=(x,y)$ and $b = (x',y')$ are even, 
    then the product is $(xx'+yy', xy'+x'y)$ that must be a product of~$p$ and some other odd tuple.
    The latter has two equal entries,
    so $xx'+yy' = xy'+x'y$, which rearranges to $x(x'-y')=y(x'-y')$,
    which implies that either~$a$ or~$b$ is divisible by~$p$.)
\end{remark}

\begin{lemma}\label{thm:StabilizedThenDecomposed}
    For any finite-dimensional central fermionic algebra~$\calA$,
    the tensor product $\calA \otimes \FermionMinus{1}$ 
    is a unique tensor product of primitive fermionic algebras
    determined by $\dim_\CC \calA$,
    up to automorphisms.
    If $\calA$ is a tensor product of primitive algebras,
    the constituent primitive algebras are unique up to automorphisms.
\end{lemma}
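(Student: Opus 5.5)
The plan is to work entirely at the level of rank tuples, using \ref{thm:RankTupleMonoid} and \ref{thm:fermionicCSAs}. Graded isomorphism classes of finite-dimensional central fermionic algebras correspond bijectively to unordered rank tuples, and tensor products correspond to the operation \eqref{eq:RankTupleOperation}. So ``unique up to automorphisms'' reduces to uniqueness of a factorization of rank tuples into primitive ones, with multiplicity.

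\emph{Existence and determination by dimension.} By \ref{thm:fermionicCSAs}, $\calA$ is either $\FermionPlus{a}{b}$ or $\FermionMinus{c}$. The table in \ref{ex:TableOfTensorProduct} gives $\calA \otimes \FermionMinus{1} \cong \FermionMinus{a+b}$ in the first case and $\FermionPlus{c}{c} \cong \FermionMinus{2c}$ in the second. Here the last isomorphism is $(c,c) = (2c/\sqrt2)\cdot(1,1)/\sqrt2$, viewed as an odd tuple with equal entries $2c/\sqrt 2$. In either case we get $\FermionMinus{N}$ with $N^2 = \dim_\CC \calA$; indeed $\dim \FermionMinus{N} = 2N^2$ and $\dim\FermionMinus{1}=2$. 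Thus $N$ is determined by $\dim_\CC\calA$.

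Now factor $N = 2^k \prod_i p_i$ into an odd part and a power of two. In rank tuples we have $(x,0)\cdot(y,0) = (xy,0)$ and $\big((1,1)/\sqrt2\big)^2 = (1,1)$. Hence
\[
\big((1,1)/\sqrt2\big)^{2k+1}\cdot\prod_i (p_i,0) = (2^k,2^k)\cdot(1,1)/\sqrt 2 \cdot (\textstyle\prod p_i, 0) = (N,N)/\sqrt2 ,
\]
which is the tuple of $\FermionMinus{N}$. This exhibits $\calA\otimes\FermionMinus{1}$ as a tensor product of primitive algebras.

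\emph{Uniqueness.} I need to show that any product of primitive tuples determines its multiset of factors. Suppose the product has $r$ factors equal to $(1,1)/\sqrt2$ and odd primes $q_j$. By \ref{rem:EvenOddUnderTensorProduct}, the parity of $r$ is fixed by the parity of the algebra. The squared envelope rank, or equivalently the total dimension, is multiplicative: it is $\prod q_j^2 \cdot 2^{r}$ in suitable normalization, since $\dim$ of the odd primitive is $2$ and of $(p,0)$ is $p^2$. Unique factorization in $\ZZ$ therefore recovers $r$ and the multiset $\{q_j\}$. The main subtlety, and where I expect the care to be needed, is that the product tuple itself must then be fixed. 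For $r \ge 1$ the product is $(M,M)/\sqrt2$ or $(M,M)$ with $M$ determined by the dimension, which is consistent. For $r = 0$ the tuple is $(\prod q_j,0)$, also determined. So the isomorphism class of the product is determined by $r$ and the $q_j$, and conversely.

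Finally, since the isomorphism class of each factor is determined by its tuple, the constituents are unique up to graded isomorphism, which here means up to automorphisms. For the second sentence of the statement, I apply the same counting directly to $\calA$ when $\calA$ is given as a tensor product of primitives: $\dim\calA$ together with the parity of $\calA$ recovers $r$ and the $q_j$.
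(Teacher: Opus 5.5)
Your uniqueness half is fine and is essentially the paper's argument. Dimension is multiplicative, and a product of $r$ copies of $(1,1)/\sqrt2$ and odd primes $q_j$ has dimension $2^r\prod_j q_j^2$. So $r$ and the multiset $\{q_j\}$ are read off from the dimension alone; the parity input is redundant.

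\textbf{The gap is in the existence step when $\calA$ is odd.} You assert $\FermionMinus{c}\otimes\FermionMinus{1}\cong\FermionPlus{c}{c}\cong\FermionMinus{2c}$, but the second isomorphism is false:
\begin{itemize}
\item $(c,c)$ has integer entries, so $\FermionPlus{c}{c}$ is even.
\item $\FermionMinus{2c}$ has tuple $(\sqrt2 c,\sqrt2 c)$, so it is odd.
\item Their dimensions are $4c^2$ and $8c^2$ respectively.
\end{itemize}
The manipulation $(c,c)=(2c/\sqrt2)\cdot(1,1)/\sqrt2$ is a scalar multiple of a vector, not a product in the rank-tuple monoid; there is no rank tuple $(\sqrt2 c,0)$. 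Reinterpreting it as ``the odd tuple with entries $2c/\sqrt2$'' silently changes the tuple. Your own consistency check exposes this: with $N=2c$ you get $N^2=4c^2\neq 2c^2=\dim_\CC\calA$. Consequently, for odd $\calA$ your factorization with an odd number $2k+1$ of Majorana-type factors describes an algebra not isomorphic to $\calA\otimes\FermionMinus{1}$. Existence is therefore unproved in that case.

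\textbf{The repair is short.} For $\calA=\FermionMinus{c}$ the stabilized algebra is the even algebra $\FermionPlus{c}{c}$. Write $c=2^k\prod_i p_i$ and use $((1,1)/\sqrt2)^2=(1,1)$ and $(1,1)^{k+1}=(2^k,2^k)$ to get
\[
(c,c)=\big((1,1)/\sqrt2\big)^{2k+2}\prod_i(p_i,0),
\]
which has an even number of Majorana-type factors. The two cases are separated by $\dim_\CC\calA$: it equals $n^2$ for even $\calA$ and $2n^2$ for odd $\calA$, and never both since $\sqrt2$ is irrational. That is what ``determined by $\dim_\CC\calA$'' amounts to, and it is exactly the paper's two-case computation.

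\textbf{A smaller slip of the same kind.} Since $(1,1)^k=(2^{k-1},2^{k-1})$ for $k\ge1$, not $(2^k,2^k)$, your intermediate expression $(2^k,2^k)\cdot(1,1)/\sqrt2$ actually equals $(2N,2N)/\sqrt2$. The correct identity is $((1,1)/\sqrt2)^{2k+1}=(2^k,2^k)/\sqrt2$, which does give your endpoint $(N,N)/\sqrt2$.
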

\begin{proof}
    The tensor product $\calA \otimes \FermionMinus{1}$
    is isomorphic to either $\FermionMinus{n}$ or $\FermionPlus{n}{n}$
    for some integer~$n > 0$.
    Let $n = 2^{m_2} 3^{m_3} 5^{m_5} \cdots$ be the prime factorization.
    We see that $(n,n)/\sqrt{2} = ((1,1)/\sqrt{2})^{2m_2 + 1} (3,0)^{m_3} (5,0)^{m_5} \cdots$
    and $(n,n) = ((1,1)/\sqrt{2})^{2 m_2 + 2} (3,0)^{m_3} (5,0)^{m_5} \cdots$.
    The exponents here are determined by the $\CC$-dimension of~$\calA$.
\end{proof}

\subsection{Subalgebras}

Let $\calX$ and $\calZ$ be finite-dimensional central fermionic algebras
with a graded injective homomorphism from~$\calX$ into~$\calZ$,
possibly nonunital.
We identify $\calZ$ with the image of its minimal faithful representation,
and aim to understand the image of~$\calX$ in~$\calZ$.
Since $\calZ$ is minimally represented, we have a self-adjoint unitary~$\Gamma_\calZ$
that squares to~$\one$ such that the grading involution of~$\calZ$ is represented 
by conjugation by~$\Gamma_\calZ$.
The unitary~$\Gamma_\calZ$ belongs to~$\calZ$ if and only if $\calZ$ is even.
By~\ref{thm:faithfulreps}, we know that a minimal representation is always of form
\begin{align}
    \left(\begin{array}{c|c}
        A & B \\ \hline C & D
    \end{array}\right)
\end{align}
where the off-diagonal blocks are odd and the diagonal blocks are even.
If the algebra is even, then $A,B,C,D$ are independently arbitrary.
If the algebra is odd, then $A = D$ and $B = C$ but otherwise arbitrary.
The latter case of an odd algebra can be understood by considering the commutant (not fermionic)
of an odd central element that is represented as~$A = D = 0$ and $B = C = \one_n$ for some~$n$.

The following lemma will simplify our investigation.
\begin{lemma}\label{lem:EvenOddXZFactorThrough}
    Let $\calX$ and $\calZ$ be finite-dimensional central fermionic algebras
    with a graded injection~$\calX \hookrightarrow \calZ$, not necessarily unital.
    Suppose that $\calX$ is even and $\calZ$ is odd, or that $\calX$ is odd and $\calZ$ is even.
    In both cases, 
    the embedding~$\calX \hookrightarrow \calZ$ factors through
    the minimal extension~$\calY = \calX \otimes \FermionMinus{1}$ of~$\calX$:
    \begin{equation}
        \begin{tikzcd}
        \calX \arrow[dr, hook]\arrow[rr, hook] & &  \calZ \\
         &  \calX \otimes \FermionMinus{1} \arrow[ur, hook]  &
        \end{tikzcd}
        \label{eq:evenXOddZFactorThrough}
    \end{equation}
    where the injection $\calX \ni x \mapsto x \otimes \one \in \calX \otimes \FermionMinus{1}$ is unital.
\end{lemma}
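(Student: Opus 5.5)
The plan is to construct an odd self-adjoint element $\gamma' \in \calZ$ with $\gamma'^2 = \pi$, where $\pi \in \calZ$ is the image of $\one_\calX$. We want $\gamma'$ to fermionically commute with the image of $\calX$, in the sense $\FermiCommutator{\gamma'}{x} = 0$ for homogeneous $x$. Given such a $\gamma'$, \ref{lem:TensorProdUniversalProperty} applies to the algebra $\pi\calZ\pi$, whose unit is $\pi$. The two inputs are the embedding of $\calX$ and the map $\FermionMinus{1} \ni \gamma \mapsto \gamma'$, which is a graded homomorphism since $\gamma'$ is odd, self-adjoint, and squares to the unit. This produces a graded homomorphism $\calX \otimes \FermionMinus{1} \to \pi\calZ\pi \subseteq \calZ$ that extends the given embedding. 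Injectivity is automatic: $\calX \otimes \FermionMinus{1}$ is central by \ref{thm:CentralTensorCentralIsCentral}, hence simple by \ref{thm:FiniteDimensionalCentralIsSimple}, and the map is nonzero. So everything reduces to finding $\gamma'$ in the fermionic commutant of $\calX$ inside $\pi\calZ\pi$.

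\emph{Case $\calX$ odd, $\calZ$ even.} Let $\gamma_\calX$ be the odd central Majorana of $\calX$. We cannot use $\gamma_\calX$ itself, since it commutes rather than anticommutes with odd elements of $\calX$. Instead, take $\Gamma_\calZ \in \calZ$, the even self-adjoint unitary implementing the grading, and set $\Gamma' = \pi\Gamma_\calZ\pi$. Since $\pi$ is even, it commutes with $\Gamma_\calZ$, so $\Gamma'$ implements the grading on $\pi\calZ\pi$ and satisfies $\Gamma'^2 = \pi$. Now put $\gamma' = \gamma_\calX \Gamma'$ (up to a factor of $\ii$ if needed for self-adjointness). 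This element is odd, and its square is $\pm\pi$, so the phase can be chosen to make it square to $\pi$. For even $x \in \calX$, $\gamma'$ commutes with $x$, because both $\gamma_\calX$ and $\Gamma'$ do. For odd $x$, $\gamma_\calX$ commutes with $x$ while $\Gamma'$ anticommutes, so $\gamma'$ anticommutes with $x$, as required.

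\emph{Case $\calX$ even, $\calZ$ odd.} Let $\gamma_\calZ$ be the central odd Majorana of $\calZ$. Because $\calX$ is even central, it is $\Mat(n)$ ungraded, and by \eqref{eq:evenrep} its grading is implemented by an even self-adjoint unitary $\Gamma_\calX \in \calX_0$ with $\Gamma_\calX^2 = \one_\calX$. Set $\gamma' = \gamma_\calZ \Gamma_\calX$, where $\Gamma_\calX$ is taken via its image, so that $\Gamma_\calX^2 = \pi$. Since $\gamma_\calZ$ is central it commutes with $\pi$, so $\gamma' \in \pi\calZ\pi$ and $\gamma'^2 = \gamma_\calZ^2\Gamma_\calX^2 = \pi$. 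This element is odd and self-adjoint, being a product of commuting self-adjoints. For homogeneous $x \in \calX$, $\gamma_\calZ$ commutes with $x$, while $\Gamma_\calX x \Gamma_\calX = \pm x$ according to parity. Hence $\gamma'$ commutes with even $x$ and anticommutes with odd $x$, as needed.

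The main point to check is that the required grading operators exist and are in the right places. In the first case, we need $\Gamma_\calZ$ to lie in $\calZ$ (true since $\calZ$ is even) and to commute with $\pi$. In the second case, we need the grading of $\calX$ to be inner via an even element $\Gamma_\calX$. Once these are in hand, the rest is the universal property together with simplicity.
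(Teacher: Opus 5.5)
Your proposal is correct and matches the paper's proof: in both cases you use the same odd Majorana that fermionically commutes with $\calX$, namely $\gamma_\calZ\Gamma_\calX$ or $\ii\gamma_\calX\Gamma_\calZ$, and let it generate a copy of $\calX\otimes\FermionMinus{1}$. Working in the corner $\pi\calZ\pi$ to handle the non-unital embedding, and invoking simplicity of $\calX\otimes\FermionMinus{1}$ for injectivity, only makes explicit details that the paper leaves implicit.
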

\begin{proof}
    ($\calX$ is even and $\calZ$ is odd)
    Since $\calX$ is even, there is a self-adjoint unitary even element~$\Gamma_\calX \in \calX$,
    the conjugation by which is the grading automorphism of~$\calX$.
    Since $\calZ$ is odd, there is a self-adjoint unitary odd central (not in the fermionic center)
    element~$\gamma_\calZ \in \calZ$ by~\ref{thm:FiniteDimensionalCentralIsSimple}.
    The product~$\gamma_\calZ \Gamma_\calX$ is odd,
    and fermionically commutes with all of~$\calX$.
    The subalgebra of~$\calZ$ generated by~$\calX$ and $\gamma_\calZ \Gamma_\calX$
    is isomorphic to~$\calX \otimes \FermionMinus{1}$.

    ($\calX$ is odd and $\calZ$ is even)
    Let $\gamma_\calX \in \calX$ be a self-adjoint unitary odd central element,
    and $\Gamma_\calZ \in \calZ$ be a self-adjoint unitary even element,
    the conjugation by which is the grading automorphism of~$\calZ$.
    Then, $\ii \gamma_\calX \Gamma_\calZ \in \calZ$ is a self-adjoint unitary odd element,
    which fermionically commutes with all of~$\calX$.
    The subalgebra of~$\calZ$ generated by~$\calX$ and $\ii \gamma_\calX \Gamma_\calZ$
    is isomorphic to~$\calX \otimes \FermionMinus{1}$.    
\end{proof}

\subsubsection{Odd \texorpdfstring{$\calX$}{X}, Odd \texorpdfstring{$\calZ$}{Z}}

A minimal faithful representation of~$\calZ$
has the same representation dimension 
as the minimal even extension~$\calZ \otimes \FermionMinus{1}$.
So, we examine the embedding~$\calX \hookrightarrow \calZ \otimes \FermionMinus{1}$
where the extra Majorana algebra~$\FermionMinus{1}$ is represented by
$\one_m \otimes \sigma^x$,
which is to say that $\calZ \otimes \FermionMinus{1}$
is identified with $2m \times 2m$ matrix algebra,
with the grading involution implemented by the conjugation by~$\one_m \otimes \sigma^z$.
The odd algebra~$\calZ$ is then identified with the commutant of~$\one_m \otimes \sigma^x$.

The even part~$\calX_0$ of~$\calX$ is ungraded simple 
and hence is isomorphic to a matrix algebra~$\Mat(n)$.
Using an even unitary of~$\calZ$,
we may bring the embedding of~$\calX_0$ to a block-diagonal form
\begin{equation}
    \calX_0 = \left\{ \left(\begin{array}{cc|cc}
        A \otimes \one_k & 0 &  & \\
        0 & 0 &  & \\
        \hline 
         &  & A \otimes \one_k & 0 \\
         &  & 0 & 0
    \end{array}\right)\,\, \middle|\,\,
    A \in \Mat(n)
    \right\}.
\end{equation}
The odd center generator~$\gamma \in \calX$ can be chosen to be self-adjoint 
and of square~$\one_\calX$.
Since $\gamma$ commutes with~$\calX_0$,
we must have
\begin{equation}
    \gamma = \left(\begin{array}{cc|cc}
         &  & \one_n \otimes X & 0\\
         &  & 0 & 0\\
        \hline 
        \one_n \otimes X & 0 &  &  \\
        0 & 0 &  & 
    \end{array}\right)\, , \quad X^2 = \one_k \, ,\,  X^\dagger = X \, . \label{eq:gammaRep}
\end{equation}
We may further conjugate~$\gamma$ by some 
even unitary\footnote{An odd unitary may be used but it is no more powerful.} 
of~$\calZ$ to diagonalize~$X$.
It follows that the injection~$\calX \to \calZ$ is unitarily equivalent to
\begin{align}
    \calX \ni
    \left(\begin{array}{c|c}
        A_{n \times n} & B_{n \times n} \\ 
        \hline 
        B_{n \times n} & A_{n \times n}
    \end{array}\right)
    \mapsto
    \left(\begin{array}{ccc|ccc}
        A \otimes \one_{k_+} &  &  & B \otimes \one_{k_+} &  & \\
         & A \otimes \one_{k_-} & &  & -B \otimes \one_{k_-} & \\
         &  & 0 &  & & 0 \\
        \hline
        B \otimes \one_{k_+} &  &   & A \otimes \one_{k_+} &  &  \\
         & -B \otimes \one_{k_-} &  &  & A \otimes \one_{k_-} &  \\
         &  & 0 &  &  & 0
    \end{array}\right)
    \in \calZ \label{eq:oddXoddZ}
\end{align}
for some $k_+,k_- \ge 0$ with $k_+ + k_- > 0$.

\subsubsection{Odd \texorpdfstring{$\calX$}{X}, Even \texorpdfstring{$\calZ$}{Z}}
Since the even part~$\calX_0$ of~$\calX$ is simple
as an ungraded algebra,
it is isomorphic to~$\Mat(n)$ for some~$n$.
Using an even unitary of~$\calZ$,
we may bring the embedding of~$\calX_0$ to a block diagonal form
\begin{equation}
    \calX_0 = \left\{ \left(\begin{array}{cc|cc}
        A \otimes \one_{k} & 0 &  & \\
        0 & 0 &  & \\
        \hline 
         &  & A \otimes \one_{k'} & 0 \\
         &  & 0 & 0
    \end{array}\right)\,\, \middle|\,\,
    A \in \Mat(n)
    \right\} \, .
\end{equation}
The odd center generator~$\gamma \in \calX$ can be chosen to be self-adjoint 
and of square~$\one_\calX$.
This forces $k = k'$.
The rest of the analysis is identical to that of the
odd-$\calX$-odd-$\calZ$ case,
and the embedding $\calX \hookrightarrow \calZ$
is of form~\eqref{eq:oddXoddZ} up to unitaries of~$\calZ$.
But the minus sign in \eqref{eq:oddXoddZ}
can be removed by an even unitary of~$\calZ$
since the upper and lower blocks of the unitary 
do not have to be synchronized.
Hence, only the sum $k_+ + k_-$ is determined by the embedding,
and we obtain a simpler characterization:
\begin{align}
    \calX \ni
    \left(\begin{array}{c|c}
        A_{n \times n} & B_{n \times n} \\ 
        \hline 
        B_{n \times n} & A_{n \times n}
    \end{array}\right)
    \mapsto
    \left(\begin{array}{cc|cc}
        A \otimes \one_{k} & 0 & B \otimes \one_{k} & 0 \\
        0 & 0 & 0 & 0 \\
        \hline
        B \otimes \one_{k} & 0  & A \otimes \one_{k} & 0 \\
        0 & 0 & 0 & 0
    \end{array}\right)
    \in \calZ 
\end{align}

\subsubsection{Even \texorpdfstring{$\calX$}{X}, Odd \texorpdfstring{$\calZ$}{Z}}

Using~\ref{lem:EvenOddXZFactorThrough},
we may first consider the embedding~$\calX \otimes \FermionMinus{1} \hookrightarrow \calZ$
and use the analysis of the previous case.
The subalgebra~$\calX$ is identified with the 
fermionic commutant of an odd self-adjoint unitary element~$\gamma \in \calX \otimes \FermionMinus{1}$
by~\ref{thm:CancellationPropertyForIsomorphisms}.
In general we have~$\gamma$ represented in the minimal faithful representation 
of~$\calX \otimes \FermionMinus{1}$ 
as~\eqref{eq:gammaRep} with~$n=1$
but without any zero columns or rows.
Using an even unitary of~$\calX \otimes \FermionMinus{1}$,
we diagonalize~$X$ that is displayed in~\eqref{eq:gammaRep}.
The even part of~$\calX$ must commute with~$X$ while the odd part must anticommute.
Solving these commutativity equations,
one checks that the multiplicities of the eigenvalues $\pm 1$ of~$X$
must be equal to the two integers of the rank tuple of~$\calX$.
It follows that $\calX \hookrightarrow \calZ$ is unitarily equivalent to
\begin{align}
    &\calX \ni \left(\begin{array}{c|c}
        A_{p \times p} & B_{p \times q} \\ 
        \hline 
        C_{q \times p} & D_{q \times q}
    \end{array}\right)
    \mapsto \nonumber \\ 
     &\left(\begin{array}{ccccc|ccccc}
         A^{\oplus k_{+}} & 0 &  &  & & 0 & B^{\oplus k_{+}} & & & \\
         0 & D^{\oplus k_{+}} &  & & & C^{\oplus k_{+}} & 0 & & & \\
         & & A^{\oplus k_{-}} & 0 &  &  & & 0 & -B^{\oplus k_{-}} & \\
         & & 0 & D^{\oplus k_{-}} &  & & &-C^{\oplus k_{-}} & 0  \\
          &  &  &  & 0 & &  & & &  0 \\
          \hline
        0 & B^{\oplus k_+} &  & & & A^{\oplus k_+} & 0 &  & & \\
        C^{\oplus k_+} & 0 & &  &  & 0 & D^{\oplus k_+} &  & &\\
        & & 0 & -B^{\oplus k_-} &  & & & A^{\oplus k_-} & 0 &  \\
        & &-C^{\oplus k_-} & 0 & &  &  & 0 & D^{\oplus k_-} & \\
        & &  &  & 0 &  & & & & 0
     \end{array}\right) \in \calZ \, . \label{eq:evenX-OddZ}
\end{align}
for some integers $k_+, k_- \ge 0$ with $k_+ + k_- > 0$.
Here, the notation~$M^{\oplus n}$ for a matrix~$M$ and an integer~$n$ means $M \otimes \one_n$.

\subsubsection{Even \texorpdfstring{$\calX$}{X}, Even \texorpdfstring{$\calZ$}{Z}}

An even fermionic algebra~$\calX$ is, as an ungraded algebra,
isomorphic to a full matrix algebra~$\Mat(n)$,
and the grading involution is an inner automorphism by
a self-adjoint unitary~$\Gamma_\calX$.
Hence, any representation of~$\calX$ can be determined
by a set of mutually orthogonal projectors~$\pi_j$ of the ambient representation space~$\CC^m$
onto each copy of~$\Mat(n) \cong \End(\pi_j \CC^m)$ 
and the represented unitary of~$\Gamma_\calX$.
If $\calX = \FermionPlus{n_+}{n_-}$, 
then $\abs{\Tr(\Gamma_\calX \pi_j)} = \abs{n_+ - n_-}$ for all~$j$.

In~$\calZ$,
the grading unitary~$\Gamma_\calX$ 
is an even element,
so $\Gamma_\calX$ commutes with a grading unitary~$\Gamma_\calZ$ of~$\calZ$.
We may thus assume that both~$\Gamma_\calX$ and $\Gamma_\calZ$ are diagonal.
Note that $\Gamma_\calX^2$ is the multiplicative identity~$\one_\calX$ for~$\calX$
but may not be equal to~$\one_\calZ$ since $\calX \hookrightarrow \calZ$ may not be unital.
We know that $\sum_j \pi_j = \Gamma_\calX^2$.
Since $\pi_j \Gamma_\calX \Gamma_\calZ $ commutes with all elements of~$\calX$,
we must have $\pi_j \Gamma_\calX = \pm \pi_j \Gamma_\calZ$.
Let $k_\pm = n_\pm^{-1} \Tr ( (\one_\calZ + \Gamma_\calZ)(\one_\calX \pm \Gamma_\calX) / 4) $.
We see that the injection~$\calX \to \calZ$ is unitarily equivalent to 
\begin{align}
    \calX \ni
    \left(\begin{array}{c|c}
        A_{n_+ \times n_+} & B_{n_+ \times n_-} \\ 
        \hline 
        C_{n_- \times n_+} & D_{n_- \times n_-}
    \end{array}\right)
    \mapsto
    \left(\begin{array}{ccc|ccc}
        A \otimes \one_{k_+} &  &  & B \otimes \one_{k_+} &  &  \\
         & D \otimes \one_{k_-} & &  & C \otimes \one_{k_-} & \\
         &  & 0 &  &  & 0 \\
        \hline
        C \otimes \one_{k_+} &  &   & D \otimes \one_{k_+} &  &  \\
         & B \otimes \one_{k_-} &  &  & A \otimes \one_{k_-} &  \\
         &  & 0 &  & & 0
    \end{array}\right)
    \in \calZ  
\end{align}
for some integers~$k_+, k_- \ge 0$ with $k_+ + k_- > 0$.

\subsubsection{Unital inclusions}

We are primarily interested in unital inclusions and their fermionic commutants.
\begin{lemma}[Lemma~A.5 of~\cite{nta3}]\label{thm:Commutant-of-central-in-central}
    Let \(\calA \hookrightarrow \calB\) be a unital inclusion of 
    finite-dimensional central fermionic algebras.
    Then, \(\calC = \FermiCommutant{\calA}{\calB}\) is also central and
    we have a graded isomorphism \(\calB \cong \calA \otimes \calC\).
\end{lemma}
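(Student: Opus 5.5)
The plan is to build the isomorphism with the universal property of \ref{lem:TensorProdUniversalProperty}, show it is injective using centrality, and conclude surjectivity by counting dimensions with the explicit embeddings classified in the previous subsections.

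First, $\calC$ is a unital fermionic subalgebra of~$\calB$ by the commutant lemma. By definition, $\calA$ and $\calC$ fermionically commute. \ref{lem:TensorProdUniversalProperty} therefore gives a graded homomorphism $\chi : \calA \otimes \calC \to \calB$, $a \otimes c \mapsto ac$.

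Second, $\chi$ is injective. Its kernel is a graded two-sided ideal. If $\calC$ is central, then $\calA \otimes \calC$ is simple by \ref{thm:tensor_prod_simple}, so the kernel is zero, since $\chi(\one) = \one \neq 0$. If $\calC$ is not yet known to be central, I would argue directly: write a kernel element as $\sum_i a_i \otimes c_i$ with the $c_i$ homogeneous and linearly independent. Since $\calA$ is simple, the two-sided graded ideal generated by any nonzero homogeneous $a_i$ is all of~$\calA$. Multiplying on both sides by elements of $\calA \otimes \one$ and taking suitable linear combinations, we can arrange $a_1 = \one$. Then a minimal-length argument (fermionic-commuting with $\calA \otimes \one$) forces each $a_i$ into $\FermiCent(\calA) = \CC\one$ by \ref{thm:FermiCenterCentrality}. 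This contradicts the linear independence of the $c_i$, since $\chi$ is injective on $\one \otimes \calC$.

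Third, and this is the main step, I will show that $\chi$ is surjective and that $\calC$ is central. I would use the normal forms of the unital embedding $\calA \hookrightarrow \calB$ derived above, specialised to the unital case so that there are no zero blocks. In each of the four parity cases, compute the ordinary commutant of $\calA_0$ together with the grading data, and then identify the fermionic commutant explicitly.
\begin{itemize}
\item Even $\calA$, even $\calB$: $\calC$ is the $\FermionPlus{k_+}{k_-}$ acting on the multiplicity spaces.
\item Odd $\calA$, even $\calB$: $\calC$ is an odd algebra $\FermionMinus{k}$, generated by $\Gamma_\calB$ times the odd central element of~$\calA$ together with $\Mat(k)$ on the multiplicity space.
\item Even $\calA$, odd $\calB$: $\calC$ is odd, with its odd central generator built from $\gamma_\calB\Gamma_\calA$ as in \ref{lem:EvenOddXZFactorThrough}.
\item Odd $\calA$, odd $\calB$: here $k_+ = k_-$ is forced by the existence of~$\gamma_\calB$, and $\calC$ is even.
\end{itemize}
In each case $\calC$ is visibly a central fermionic algebra, or one checks $\Cent(\calC)_0 = \CC\one$ directly from the block form. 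The dimension count $\dim\calA \cdot \dim\calC = \dim\calB$ then follows from the rank tuples (\ref{thm:RankTupleMonoid}), so the injective map $\chi$ is an isomorphism.

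The obstacle I expect is bookkeeping in the odd--odd case. There the naive commutant count seems to give $\dim\calA\dim\calC = 2\dim\calB$, and one must use the identification $\FermionMinus{s}\otimes\FermionMinus{t}\cong\FermionPlus{st}{st}$ to match dimensions correctly. To reduce casework I would first try to handle the mixed-parity cases via \ref{lem:EvenOddXZFactorThrough}, which reduces them to the unital inclusion $\calA\otimes\FermionMinus{1}\hookrightarrow\calB$ of equal parity, and then cancel the $\FermionMinus{1}$ factor using \ref{thm:CancellationPropertyForIsomorphisms}.
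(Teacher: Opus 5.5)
Your route is a legitimate alternative to the paper's, but the odd--odd case is stated wrongly. That case matters twice: you need it directly, and, via \ref{lem:EvenOddXZFactorThrough}, for the even-$\calA$/odd-$\calB$ case.

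\textbf{The claim that $k_+=k_-$ is forced is false.}
\begin{itemize}
\item For the identity inclusion $\calA=\calB=\FermionMinus{1}$ one has $n=m=1$, $k_+=1$, $k_-=0$, and $\calC=\CC\one$.
\item The unital inclusion $\FermionMinus{1}\hookrightarrow\FermionMinus{2}$, $a\mapsto a\otimes\one$, occurs both inside $\FermionMinus{1}\otimes\FermionPlus{2}{0}$, with $(k_+,k_-)=(2,0)$, and inside $\FermionMinus{1}\otimes\FermionPlus{1}{1}$, with $(k_+,k_-)=(1,1)$.
\end{itemize}
What $\gamma_\calB$ actually forces is only the shape $\left(\begin{smallmatrix}P&Q\\Q&P\end{smallmatrix}\right)$ of $\calB$. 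In other words, $\calA_0$ occurs with the same multiplicity $k=k_++k_-$ in both diagonal blocks. The eigenvalue multiplicities $k_\pm$ of $X$ in \eqref{eq:gammaRep} remain free.

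\textbf{The correct computation.} An even element of $\calC$ has $P=\one_n\otimes M$ with $MX=XM$. An odd element has $Q=\one_n\otimes N$ with $NX=-XN$. Hence $\calC\cong\FermionPlus{k_+}{k_-}$, and
\[
\dim\calA\cdot\dim\calC=2n^2(k_++k_-)^2=2m^2=\dim\calB
\]
holds exactly.
\begin{itemize}
\item The factor $2$ you anticipated appears only if you take the commutant of $\calA_0$, leaving $N$ unconstrained, instead of imposing anticommutation with $\gamma_\calA$.
\item The identity $\FermionMinus{s}\otimes\FermionMinus{t}\cong\FermionPlus{st}{st}$ plays no role in the odd--odd case. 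It is what makes your odd-$\calA$/even-$\calB$ answer $\calC\cong\FermionMinus{k}$ consistent.
\item In your mixed-parity reduction the extra Majorana ($\gamma_\calB\Gamma_\calA$ or $\ii\gamma_\calA\Gamma_\calB$) is not cancelled. It is absorbed into $\calC$ as its odd central generator. Cancelling it would break the dimension count.
\end{itemize}

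\textbf{A smaller repair to injectivity.} First pass to a homogeneous kernel element, which you may since the kernel is graded. Right multiplication by an odd $y\otimes\one$ introduces the term-dependent sign $(-1)^{|c_i|}$. Only after this reduction does it act on every $\calA$-slot by the single map $a\mapsto\pm\phi(a)y$, with $\phi$ the grading involution. Only then is your normalization $a_1=\one$ legitimate.

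\textbf{Comparison with the paper.} With these repairs your route works. The paper avoids the normal forms and the parity of $\calB$ altogether.
\begin{itemize}
\item For even $\calA$ it observes $\calC=\calD_0\oplus\Gamma_\calA\calD_1$, where $\calD=\Comm(\calA,\calB)$ is the ordinary commutant.
\item The ungraded centralizer theorem $\calB\cong\calA\otimes\calD$ holds for either parity of $\calB$. It gives $\dim\calC=\dim\calB/\dim\calA$ and surjectivity at once, via $a\,d_1=(a\Gamma_\calA)(\Gamma_\calA d_1)$.
\item Odd $\calA$ is reduced to this case by tensoring both $\calA$ and $\calB$ with $\FermionMinus{1}$ and applying \ref{thm:CancellationPropertyForIsomorphisms}.
\item No separate injectivity argument is needed, and centrality of $\calC$ follows immediately from $\calB=\calA\calC$.
\end{itemize}
Your computation yields more, namely the explicit multiplicity data of $\calC$. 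The cost is four-case bookkeeping, which is exactly where the error above entered.
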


\begin{proof}
    There is an obvious $*$-map $\phi : \calA \otimes \calC \ni a \otimes c \mapsto ac \in \calB$,
    which is a graded homomorphism because $\FermiCommutator{\calA}{\calC} = 0$.
    We will show that $\phi$ is bijective.
    
    (Case i: $\calA$ is even)
    The nonfermionic, usual commutant $\Comm(\calA,\calB) = \calD$ is graded.
    Let $\Gamma_\calA \in \calA$ be an even unitary of order 2, 
    which implements the grading involution of~$\calA$ by conjugation.
    We claim $\calC = \calD_0 \oplus \Gamma_\calA \calD_1$.
    The even part $\calC_0 = \calD_0$ commutes with all~$\calA$.
    If $c_1 \in \calC_1$, then odd element~$\Gamma_\calA c_1$ commutes with all~$\calA_0$ 
    and also with~$\calA_1$ because $\Gamma_\calA$ anticommutes with~$\calA_1$,
    so $\Gamma_\calA c_1 \in \calD_1$, implying $\Gamma_\calA \calC_1 \subseteq \calD_1$.
    A similar argument shows that $\Gamma_\calA \calD_1 \subseteq \calC_1$.
    
    Since $\calB \cong \calA \otimes \calD$ as ungraded central $*$-algebras
    (in both cases where $\calB$ is even or odd),
    we know that $\dim \calD = \dim \calB / \dim \calA$.
    Since $\dim \calD = \dim \calC$, we have $(\dim \calA)(\dim \calC) = \dim \calB$.
    Hence, to show $\phi$ is bijective, it suffices to show that $\phi$ is surjective.
    Since $\calB \cong \calA \otimes \calD$ as ungraded algebras,
    we know every element of~$\calB$ 
    is of form~$\sum_i a_i d_{0,i} + \sum_j a_j d_{1,j}$
    where we have displayed the grading of~$\calD$.
    We can equally write it as
    $\sum_i a_i d_{0,i} + \sum_j (a_j \Gamma_\calA) (\Gamma_\calA d_{1,j})$.
    We see that $\phi$ is surjective.

    (Case ii: $\calA$ is odd)
    Consider a unital inclusion $\varphi: \FermionMinus{1}\otimes \calA \hookrightarrow \FermionMinus{1} \otimes \calB$ 
    of fermionic algebras.
    The fermionic commutant of the image of~$\varphi$ is~$\calC$.
    By~\ref{rem:EvenOddUnderTensorProduct},
    the domain of~$\varphi$ is even central,
    and (Case i) implies that 
    $\FermionMinus{1} \otimes \calA \otimes \calC \cong \FermionMinus{1} \otimes \calB$ as fermionic algebras.
    Here, the isomorphism takes $\FermionMinus{1}$ on the left-hand side to that of the right-hand side,
    and hence by~\ref{thm:CancellationPropertyForIsomorphisms} we conclude the proof.
\end{proof}

\begin{corollary}\label{thm:Bicommutant-of-Central}
    If $\calA \hookrightarrow \calB$ is a unital inclusion of
    a central fermionic algebra~$\calA$ into a finite-dimensional, possibly noncentral,
    fermionic algebra~$\calB$,
    then $\calA$ is a tensor factor of~$\calB$ 
    and the fermionic bicommutant of~$\calA$ is itself.
\end{corollary}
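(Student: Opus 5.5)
The plan is to reduce to the central case already settled in \ref{thm:Commutant-of-central-in-central} by decomposing $\calB$ along its canonical projectors. By \ref{thm:Wedderburn-Fermionic}, write $\calB = \bigoplus_\mu \pi_\mu \calB$, where each $\pi_\mu\calB$ is a finite-dimensional simple, hence central by \ref{thm:FiniteDimensionalCentralIsSimple}, fermionic algebra, and the $\pi_\mu$ span $\FermiCent(\calB)$. The canonical projectors are even and lie in the ordinary center of~$\calB$; in particular they commute with every element of~$\calA$. For each $\mu$, the map $\calA \ni a \mapsto \pi_\mu a \in \pi_\mu\calB$ is a graded homomorphism and is unital onto $\pi_\mu \calB$, since $\one_\calA = \one_\calB$. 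Its kernel is a graded ideal of~$\calA$. We need $\calA$ to be finite dimensional to call it simple, but this is automatic because $\calA\subseteq\calB$. So $\calA$ is simple by \ref{thm:FiniteDimensionalCentralIsSimple}, and the kernel is either $0$ or $\calA$. It cannot be $\calA$, because $\pi_\mu\one = \pi_\mu \ne 0$. Hence each restriction $\calA \to \pi_\mu\calB$ is a unital injection of central algebras.

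Next, apply \ref{thm:Commutant-of-central-in-central} to each block. This gives central $\calC_\mu = \FermiCommutant{\pi_\mu\calA}{\pi_\mu\calB}$ together with the isomorphism $\pi_\mu\calA\otimes\calC_\mu \to \pi_\mu\calB$, $a\otimes c\mapsto ac$. Put $\calC = \bigoplus_\mu \calC_\mu \subseteq \calB$. I would check directly that $\calC = \FermiCommutant{\calA}{\calB}$. An element $c = \sum_\mu \pi_\mu c$ fermionically commutes with all of~$\calA$ if and only if each $\pi_\mu c$ fermionically commutes with $\pi_\mu\calA$, because the $\pi_\mu$ are even and central and $\pi_\mu a \cdot \pi_\nu c = \delta_{\mu\nu}\pi_\mu a c$.

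Then show that the multiplication map $\calA\otimes\calC \to \calB$ is a graded isomorphism, which is what ``tensor factor'' means. It is a graded homomorphism by \ref{lem:TensorProdUniversalProperty}. It is surjective because every $\pi_\mu b$ equals $\sum a_i c_i$ with $c_i\in\calC_\mu\subseteq\calC$ and $a_i\in\calA$, using $\pi_\mu a_i c_i = a_i c_i$. For injectivity, decompose $\calA\otimes\calC = \bigoplus_\mu \calA\otimes\calC_\mu$. On the summand $\calA\otimes\calC_\mu$ the map factors as $\calA \otimes\calC_\mu \cong \pi_\mu\calA\otimes\calC_\mu \cong \pi_\mu\calB$, where the first isomorphism uses the injectivity of $a\mapsto\pi_\mu a$. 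The images of different summands land in different blocks.

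For the bicommutant, compute $\FermiCommutant{\calC}{\calB}$ blockwise in the same way: it equals $\bigoplus_\mu \FermiCommutant{\calC_\mu}{\pi_\mu\calB}$. Since $\pi_\mu\calB\cong \pi_\mu\calA\otimes\calC_\mu$ with both factors central, \ref{thm:Commutant-of-central-in-central}, applied to the central $\calC_\mu$, shows this commutant is central with the right dimension. An element $x = \sum_i a_i \otimes c_i$ with the $c_i$ linearly independent and homogeneous fermionically commutes with $\calC_\mu$ only if $\sum_i a_i\otimes \FermiCommutator{c_i}{c'}=0$ for all $c'$, up to signs. This forces the relevant part to lie in the fermionic center of $\calC_\mu$, which is $\CC\one$ by \ref{thm:FermiCenterCentrality}, so the commutant is $\pi_\mu\calA$. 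This is the same sign calculation as in \ref{thm:CentralTensorCentralIsCentral}. Hence $\FermiCommutant{\calC}{\calB} = \bigoplus_\mu \pi_\mu\calA$. The last, and main, subtle step is that $\bigoplus_\mu\pi_\mu\calA$ is generally larger than the diagonal copy $\calA$ when there are several blocks. So the bicommutant statement as literally stated needs care. I would check whether $\FermiCent(\calB)$ is contained in~$\calA$, and otherwise interpret ``itself'' as holding when $\calB$ is replaced by the algebra generated by $\calA$ and its commutant. This is the step I expect to be the obstacle, and I would first test it on $\calB = \calA\oplus\calA$.
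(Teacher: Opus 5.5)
Your proof of the tensor-factor half is correct and follows the same route as the paper. You decompose $\calB=\bigoplus_\mu\pi_\mu\calB$ along its canonical projectors, note that $a\mapsto\pi_\mu a$ is a unital injection of the simple algebra $\calA$ into the central block $\pi_\mu\calB$, and apply \ref{thm:Commutant-of-central-in-central} in each block. Your explicit identification $\FermiCommutant{\calA}{\calB}=\bigoplus_\mu\calC_\mu$ and your blockwise injectivity check for $\calA\otimes\calC\to\calB$ are, if anything, more careful than the paper's one-line version.

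For the bicommutant, your computation $\FermiCommutant{\calC}{\calB}=\bigoplus_\mu\pi_\mu\calA=\calA\,\FermiCent(\calB)$ is also correct. The obstacle you anticipate is real: the second clause of the statement is false whenever $\calB$ is not central.
\begin{itemize}
\item Since $a\mapsto\pi_\mu a$ is injective, the bicommutant has dimension $n\dim\calA$, where $n$ is the number of canonical projectors of $\calB$. So it equals $\calA$ exactly when $n=1$, i.e.\ when $\calB$ is central, and that case is already covered by \ref{thm:Commutant-of-central-in-central}.
\item The smallest counterexample is $\CC\one\subset\CC\oplus\CC$ with trivial grading. There the commutant is all of $\calB$ and the bicommutant is $\FermiCent(\calB)=\calB$.
\item Your own test case, the diagonal $\calA\subset\calA\oplus\calA$, has commutant $\CC\pi_1\oplus\CC\pi_2$ and bicommutant $\calA\oplus\calA$.
\end{itemize}
The paper's own proof makes exactly this slip in its last line, where it writes $\bigoplus_\mu\pi_\mu\otimes\calA=\calA$. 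The damage is contained, because the corollary is invoked (in \ref{thm:LocalGenerators}) only for its tensor-factor half.

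So rather than hedging, you should conclude with the counterexample and record the correct statement: the fermionic bicommutant of $\calA$ in $\calB$ is $\calA\,\FermiCent(\calB)$, which equals $\calA$ if and only if $\calB$ is central.
\begin{itemize}
\item The containment $\FermiCent(\calB)\subseteq\calA$ you propose to check fails as soon as $\calB$ is noncentral. This is because $\FermiCent(\calB)\cap\calA\subseteq\FermiCent(\calA)=\CC\one$ by \ref{thm:FermiCenterCentrality}.
\item Your fallback reinterpretation does not help. By the tensor-factor half you just proved, $\calA$ and $\FermiCommutant{\calA}{\calB}$ already generate all of $\calB$, so replacing $\calB$ by the algebra they generate changes nothing.
\end{itemize}
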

\begin{proof}
    By~\ref{thm:Wedderburn-Fermionic}, we have $\calB = \bigoplus_\mu \pi_\mu \calB$ 
    where $\pi_\mu$ are canonical projectors.
    The composition~$\calA \to \calB \to \pi_\mu \calB$ is unital and hence nonzero;
    since $\calA$ is simple by~\ref{thm:FiniteDimensionalCentralIsSimple},
    it must be injective.
    By~\ref{thm:Commutant-of-central-in-central}, we have $\pi_\mu \calB \cong \calC_\mu \otimes \calA$,
    where $\calC_\mu = \FermiCommutant{\pi_\mu\calA}{\pi_\mu\calB}$,
    so that $\calB = \bigoplus_\mu \calC_\mu \otimes \calA$.
    It follows that $\FermiCommutant{\calA}{\calB} \cong \bigoplus_\mu \calC_\mu \otimes \one$.
    Then the fermionic bicommutant of $\calA$ is $\bigoplus_\mu \pi_\mu \otimes \calA = \calA$.
\end{proof}

\subsection{Trace and Hilbert--Schmidt inner product}

\begin{definition}
    A \emph{tracial state} on a $*$-algebra~$\calA$ 
    is a $*$-linear functional
    \begin{equation}
        \tr : \calA \to \CC
    \end{equation}
    such that
    $\tr(\one) = 1$,
    $\tr(ab) = \tr(ba)$ for all~$a,b \in \calA$,
    $\tr(a^\dagger a) \ge 0$ for all~$a \in \calA$, and
    $\tr(a^\dagger a) = 0$ only if~$a = 0$.
    A tracial state on a fermionic algebra~$\calA = \calA_0 \oplus \calA_1$ is a tracial state on the ungraded $*$-algebra~$\calA$
    such that $\tr(a) = 0$ for all odd~$a \in \calA_1$.
\end{definition}

\begin{proposition}
    A $*$-algebra~$\calA$ equipped with a tracial state~$\tr$,
    is an inner product space under
    \begin{equation}
        \calA \times \calA \ni (x,y) \mapsto \tr(x^\dagger y) \in \CC .
    \end{equation}
\end{proposition}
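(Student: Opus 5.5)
We check the inner product axioms one at a time for $\langle x, y\rangle := \tr(x^\dagger y)$, using only the defining properties of a tracial state.

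\emph{Sesquilinearity.} Since $\tr$ is $\CC$-linear and $y \mapsto x^\dagger y$ is linear, the map $(x,y)\mapsto \tr(x^\dagger y)$ is linear in $y$. In the first slot, $(\lambda x + x')^\dagger = \bar\lambda x^\dagger + x'^\dagger$, so it is conjugate-linear in $x$.

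\emph{Hermitian symmetry.} Because $\tr$ is a $*$-linear functional, $\tr(a^\dagger) = \overline{\tr(a)}$. Hence
\[
\overline{\tr(x^\dagger y)} = \tr\bigl((x^\dagger y)^\dagger\bigr) = \tr(y^\dagger x).
\]
Cyclicity of the trace is not needed for this step.

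\emph{Positivity and definiteness.} These follow directly from the axioms $\tr(a^\dagger a) \ge 0$ and $\tr(a^\dagger a) = 0 \Rightarrow a = 0$.

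None of these steps is hard. The only point that needs care is interpreting ``$*$-linear'' as $\tr(a^\dagger) = \overline{\tr(a)}$, which is the convention in use. The graded condition $\tr(\calA_1) = 0$ is not used here. What it adds is that $\calA_0 \perp \calA_1$: for $x \in \calA_0$ and $y \in \calA_1$, the product $x^\dagger y$ is odd, so $\tr(x^\dagger y) = 0$.
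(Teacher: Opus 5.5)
Your proof is correct and matches the paper's, which just says the claim is ``automatic by the definition of $\tr$''; your axiom-by-axiom check (sesquilinearity from linearity, Hermitian symmetry from $\tr(a^\dagger)=\overline{\tr(a)}$, positivity and definiteness from the last two axioms) spells out that one line. As a minor aside, Hermitian symmetry holds even without reading ``$*$-linear'' as $\tr(a^\dagger)=\overline{\tr(a)}$, because a positive semidefinite sesquilinear form on a complex vector space is automatically Hermitian by polarization.
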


\begin{proof}
    Automatic by the definition of~$\tr$.
\end{proof}

\begin{proposition}\label{thm:trIsUnique}
    A tracial state on a finite-dimensional central fermionic algebra~$\calA$ 
    is the usual matrix trace~$\Tr$
    on the minimal faithful representation 
    divided by the envelope rank of~$\calA$.
    In particular, there exists a unique $\tr$ on~$\calA$.
\end{proposition}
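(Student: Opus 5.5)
We need to show that any tracial state on a finite-dimensional central fermionic algebra~$\calA$ equals $\Tr/m$, where $\Tr$ is the matrix trace in the minimal faithful representation and $m$ is the envelope rank; uniqueness then follows. The plan is to split into the even and odd cases of~\ref{thm:FiniteDimensionalCentralIsSimple}. In each case we first show that $\Tr/m$ is a fermionic tracial state. We then show that any tracial state must coincide with it, using the classical fact that a trace on $\Mat(n)$ is determined up to scale: commutators $[x,y]$ span the traceless matrices.

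\emph{Even case.} Here $\calA \cong \Mat(m)$ as an ungraded algebra, with the minimal faithful representation of form~\eqref{eq:evenrep}. The functional $\Tr/m$ is tracial and positive definite, since $\Tr(a^\dagger a)=\|a\|_{HS}^2$, and it is normalized. It also vanishes on block-off-diagonal matrices, so it is a fermionic tracial state. Conversely, any tracial state $\tr$ vanishes on $[\calA,\calA]$, which is the space of all traceless matrices. Hence $\tr = c\,\Tr$, and $\tr(\one)=1$ forces $c=1/m$. The oddness condition holds automatically here.

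\emph{Odd case.} Write $\calA = \calA_0 \oplus \gamma\calA_0$ with $\calA_0\cong\Mat(n)$. By~\ref{thm:faithfulreps} the minimal faithful representation is~\eqref{eq:oddrep} with $k=1$, so $m=2n$. In that form
\[
\Tr\begin{pmatrix}\phi(a)&\phi(b)\\ \phi(b)&\phi(a)\end{pmatrix}=2\Tr\phi(a).
\]
This vanishes on odd elements ($a=0$), is tracial as a restriction of the matrix trace, and is positive definite on the faithfully represented algebra; dividing by $m=2n$ normalizes it.

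For uniqueness, let $\tr$ be any fermionic tracial state. Its restriction to $\calA_0\cong\Mat(n)$ is a trace, so it equals $c\Tr\phi(\cdot)$ by the even-case argument. Oddness gives $\tr(\gamma b)=0$ for all $b\in\calA_0$. Therefore $\tr(a+\gamma b)=c\Tr\phi(a)$, and normalization gives $c=1/n$. This matches $\Tr/(2n)$ computed above.

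The main obstacle is the odd case. As an ungraded algebra $\calA\cong\Mat(n)\oplus\Mat(n)$, which carries a one-parameter family of tracial states. It is exactly the requirement $\tr(\calA_1)=0$ that picks out the equal-weight trace. One must also check that this trace agrees with the normalized trace of the minimal representation~\eqref{eq:oddrep}, rather than of some representation with $k_+\ne k_-$.
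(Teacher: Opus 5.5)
Your proposal is correct and follows essentially the same route as the paper. In the even case both arguments reduce to uniqueness of the normalized trace on a full matrix algebra: you use that commutators span the traceless matrices, while the paper uses unitary invariance on minimal projectors plus vanishing on zero-diagonal matrices. In the odd case both restrict to $\calA_0\cong\Mat(n)$ and use $\tr(\calA_1)=0$ to pin down $\tr$ completely. Your explicit existence check for $\Tr/m$ (positivity, vanishing on odd elements, and normalization with $m=2n$ from the $k=1$ form of \eqref{eq:oddrep}) just spells out what the paper leaves to \ref{thm:faithfulreps}.
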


\begin{proof}
    If $\calA$ is even,
    it suffices to show that the ungraded trace functional is unique.
    If $\sum_{i=1}^n \pi_i = \one$ 
    is a decomposition with mutually orthogonal minimal orthogonal projectors,
    then the unitary invariance of~$\tr$, which follows from the cyclic property,
    implies that $\tr(\pi_i)$
    is independent of~$i$.
    Since $\tr(\one) = 1$, we must have $\tr(\pi_i) = 1/n$.
    The cyclic property forces $\tr(M) = 0$ 
    for any matrix~$M$ with zero diagonal.
    This determines~$\tr$ on~$\Mat(n)$.

    If $\calA$ is odd,
    then $\tr$ on $\calA$ restricts to the ungraded trace on~$\calA_0$,
    which is unique.
    Because $\tr$ is zero on~$\calA_1$,
    this determines~$\tr$.
    By~\ref{thm:faithfulreps}, we see that $\tr$ coincides
    with the normalized matrix trace.
\end{proof}
\noindent
Note that the positivity ($\tr(a^\dagger a) \ge 0$) and 
the nondegeneracy ($\tr(a^\dagger a) = 0 \Longrightarrow a = 0$) conditions on the trace
are redundant for finite-dimensional central simple algebras.
It follows that for any finite-dimensional fermionic algebra, not necessarily simple,
there exists a tracial state, which may not be unique.
However, the positivity and nondegeneracy conditions
imply that any tracial state must assign a positive real number to
every canonical central projector.

\subsection{Simple between semisimple}

\begin{proposition}\label{thm:FindYinbetweenXZ}
    Let $\phi : \calX \hookrightarrow \calZ$ be a unital injective homomorphism 
    of a fermionic algebra~$\calX$ 
    into a finite-dimensional fermionic algebra~$\calZ$.
    Let $\pi_\mu \in \calX$ and $\tau_\nu \in \calZ$ 
    be canonical central projectors of~$\calX$ and $\calZ$, respectively.
    Endow a tracial state~$\tr$ on~$\calZ$
    and suppose that 
    \begin{equation}
        \tr(\phi(\pi_\mu) \tau_\nu) \tr(\one_\calZ) = \tr(\phi(\pi_\mu)) \tr(\tau_\nu) \label{eq:tracecondition}
    \end{equation}    
    for all~$\mu,\nu$.
    Then, there exists a \emph{central} fermionic algebra~$\calY$
    with a commuting diagram of unital injective graded homomorphisms
    \begin{equation}
        \begin{tikzcd}
             & \calY \ar[rr, hook] & & \calZ \otimes \FermionMinus{1}\\
            \calX \ar[ru, hook ] \ar[rr, "\phi", hook] & & \calZ \ar[ru, "\otimes \one_{\FermionMinus{1}}"', hook] &
        \end{tikzcd} .
    \end{equation}
\end{proposition}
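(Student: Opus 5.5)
The plan is to reduce to the ungraded Freedman--Hastings construction on representation spaces. The graded bookkeeping is handled by the extra Majorana factor, which makes every relevant multiplicity space "balanced." Throughout, write $\calZ' = \calZ \otimes \FermionMinus{1}$. By~\ref{rem:EvenOddUnderTensorProduct} and~\ref{thm:Wedderburn-Fermionic}, each canonical summand $\tau_\nu \calZ'$ of $\calZ'$ is even central. So by~\ref{thm:faithfulreps} it is $\End(V_\nu)$ for a $\ZZ_2$-graded space $V_\nu = V_\nu^0 \oplus V_\nu^1$, with the grading implemented by a unitary $\Gamma_\nu$. Moreover $\dim V_\nu^0 = \dim V_\nu^1$ when $\tau_\nu\calZ$ is even, since tensoring with $\FermionMinus{1}$ gives rank tuple $(b,b)/\sqrt2\cdot(1,1)/\sqrt2$. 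The plan is to produce one graded space $W$ with $\calX \subseteq \End(W)$ and graded isomorphisms $V_\nu \cong W \otimes L_\nu$ of $\calX$-modules, with $L_\nu$ a graded multiplicity space. Then $\calY := \End(W)\otimes \one$ is even central, contains $\calX$, and sits unitally and gradedly in every block, hence diagonally in $\calZ'$.

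\emph{Step 1 (ungraded counting).} Write the $\calX$-summands $\pi_\mu\calX$ and record how each acts on $V_\nu$. Using the classification of embeddings of central fermionic algebras in the preceding subsections, $V_\nu \cong \bigoplus_\mu U_\mu \otimes M_{\mu\nu}$ as graded $\calX$-modules. Here $U_\mu$ is a fixed minimal graded module of $\pi_\mu\calX$, or of $\pi_\mu\calX\otimes\FermionMinus{1}$ when that summand is odd, using~\ref{lem:EvenOddXZFactorThrough}, and $M_{\mu\nu}$ is a graded multiplicity space. By~\ref{thm:trIsUnique}, $\tr$ restricted to each block is a weighted normalized matrix trace. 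Condition~\eqref{eq:tracecondition} then says exactly that
\[
m_{\mu\nu}\dim U_\mu/\dim V_\nu = c_\mu ,
\]
with $c_\mu$ independent of $\nu$, where $m_{\mu\nu} = \dim M_{\mu\nu}$. Let $d = \gcd_\nu \dim V_\nu$. Since $d$ is an integer combination of the $\dim V_\nu$ and each $c_\mu \dim V_\nu/\dim U_\mu = m_{\mu\nu}$ is an integer, $k_\mu := c_\mu d/\dim U_\mu$ is an integer. Then $m_{\mu\nu} = k_\mu l_\nu$ with $l_\nu = \dim V_\nu / d$, and $\sum_\mu k_\mu \dim U_\mu = d$. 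This is the ungraded simple-between-semisimple argument.

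\emph{Step 2 (grading).} Next I need graded spaces $K_\mu$ of dimension $k_\mu$ and $L_\nu$ of dimension $l_\nu$ with $M_{\mu\nu} \cong K_\mu \otimes L_\nu$ as graded spaces. I also need $W = \bigoplus_\mu U_\mu\otimes K_\mu$, and the resulting identification $V_\nu \cong W\otimes L_\nu$ must respect both $\Gamma_\nu$ and the grading of $\calX$. The Majorana stabilization is meant to make this possible. I will argue that each $M_{\mu\nu}$ can be taken balanced, meaning equal even and odd dimensions; this may require absorbing the sign freedom $k_\pm$ from~\eqref{eq:evenX-OddZ} into a choice of $U_\mu$, or using the $\FermionMinus{1}$ factor to exchange the even and odd multiplicities within each $V_\nu$. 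Granting balance, I take every $L_\nu$ balanced ($l_\nu$ must then be even, which I expect to secure by replacing $d$ with $d/2$ if necessary, using the doubled dimensions coming from $\FermionMinus{1}$). I also take $K_\mu$ purely even. Then $K_\mu\otimes L_\nu$ is balanced of the right dimension and matches $M_{\mu\nu}$ up to an even unitary.

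\emph{Step 3 (conclusion).} Assembling these even unitaries gives graded unitaries $V_\nu \to W\otimes L_\nu$ intertwining $\calX$. Define $\calY$ as the preimage of $\End(W)\otimes\one_{L_\nu}$, taken diagonally over $\nu$, which requires the identifications to be chosen compatibly across $\nu$. This $\calY$ is a graded unital subalgebra of $\calZ'$ containing $\phi(\calX)\otimes\one$. It is even central, being $\End(W)$ with the grading of $W$, which is exactly what the proposition requires.

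The main obstacle is Step 2. The parity bookkeeping must be checked case by case (odd or even $\pi_\mu\calX$, odd or even $\tau_\nu\calZ$), and a parity obstruction to balancing $M_{\mu\nu}$ while keeping $d$ a common divisor could appear there. I would first try to show that after stabilization $\Gamma_\nu$ can always be conjugated, within the commutant of $\calX$, into a form $\Gamma_W\otimes\Gamma_{L_\nu}$ with $\Gamma_{L_\nu}$ traceless. That would resolve all the cases uniformly.
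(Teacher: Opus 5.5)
\textbf{There is a genuine gap, and it starts in the setup.}

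Parity is additive under the graded tensor product (\ref{rem:EvenOddUnderTensorProduct}), and $\FermionMinus{1}$ is odd. So $\tau_\nu\calZ\otimes\FermionMinus{1}$ is even only when $\tau_\nu\calZ$ is \emph{odd}. The rank tuple you wrote, $(b,b)/\sqrt2\cdot(1,1)/\sqrt2=(b,b)$, is that of an odd $\tau_\nu\calZ=\FermionMinus{b}$. When $\tau_\nu\calZ=\FermionPlus{a}{b}$ is even, the block has rank tuple $(a+b,a+b)/\sqrt2$ and is odd; already $\FermionPlus{1}{0}\otimes\FermionMinus{1}=\FermionMinus{1}$. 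On such blocks three parts of your plan break:
\begin{enumerate}
\item There is no graded $V_\nu$ with $\tau_\nu\calZ'=\End(V_\nu)$.
\item There is no global even grading unitary $\Gamma\in\calZ'$, so you cannot extend the odd summands by $\ii\gamma_\mu\Gamma$ uniformly in $\nu$.
\item An even $\calY=\End(W)$ can sit in an odd block only through $\calY\otimes\FermionMinus{1}$ (\ref{lem:EvenOddXZFactorThrough}), with an \emph{odd} relative commutant. So the Step~3 ansatz $\End(W)\otimes\one_{L_\nu}$ does not make sense there.
\end{enumerate}
Mixed even and odd blocks are the generic situation. Your reduction therefore covers only the case where every $\tau_\nu\calZ$ is odd.

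\textbf{Step~2 fails even in that case.} It is the only genuinely graded part of the argument, and it is asserted rather than proved; the recipe is false as stated.
\begin{itemize}
\item Balance of $M_{\mu\nu}$ can fail outright. For $\calX=\calZ=\FermionMinus{1}$ you get $\calZ'\cong\FermionPlus{1}{1}$ and $U=V=\CC^{1|1}$, so $M$ is one-dimensional.
\item Replacing $d$ by $d/2$ halves $k_\mu$; in that example $k_\mu$ becomes $1/2$.
\end{itemize}
What is missing is parity information extracted from the trace condition. Write $\phi(\pi_\mu)\tau_\nu\calZ\phi(\pi_\mu)\cong\pi_\mu\calX\otimes\calC_{\mu\nu}$ (\ref{thm:Commutant-of-central-in-central}). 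Then \eqref{eq:tracecondition} amounts to $\sqrt{\dim\calC_{\mu\nu}}$ being a rank-one matrix in $(\mu,\nu)$. The quantity $\sqrt{\dim}$ of an odd central algebra carries a factor $\sqrt2$. So set $\sqrt{\dim\calC_{\mu\nu}}=2^{\epsilon_{\mu\nu}/2}n_{\mu\nu}$, with $n_{\mu\nu}\in\ZZ$ and $\epsilon_{\mu\nu}=1$ exactly when $\calC_{\mu\nu}$ is odd. Whenever $\pi_\mu\calX$ and $\tau_\nu\calZ$ are even while $\pi_{\mu'}\calX$ and $\tau_{\nu'}\calZ$ are odd, this gives
\[
n_{\mu\nu}n_{\mu'\nu'}=2\,n_{\mu\nu'}n_{\mu'\nu}.
\]
A relation of this kind is what tells you which of $K_\mu$ and $L_\nu$ can be balanced, and what parity $\calY$ should have. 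Your plan has no such step.

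\textbf{Comparison with the paper.} The paper normalizes the other way. It tensors $\FermionMinus{1}$ only onto the even summands of $\calX$ and $\calZ$, using \ref{lem:EvenOddXZFactorThrough} for even-into-odd pairs, so that every summand is odd. It checks that the trace condition survives. It then factors out the Majorana and invokes the ungraded \ref{thm:FindYinbetweenXZBoson}.

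That last factoring tacitly needs the odd central generators to correspond under the embedding. The signs $\pm B$ in \eqref{eq:oddXoddZ} show this is not automatic. Take $\calX=\FermionMinus{1}^{\oplus 2}\hookrightarrow\calZ=\FermionMinus{2}^{\oplus 2}$, with equal weights $\tr(\tau_\nu)=\tfrac12$. Send the two odd central generators of $\calX$ to $(e_{11}\gamma_1,e_{11}\gamma_2)$ and $(e_{22}\gamma_1,-e_{22}\gamma_2)$. Here $\gamma_\nu$ is the odd central generator of the $\nu$-th summand and $e_{ii}$ are matrix units of its even part $\Mat(2)$. The trace condition holds. No central algebra lies between $\calX$ and $\calZ$ itself, although one does inside $\calZ\otimes\FermionMinus{1}$, namely $\FermionPlus{2}{2}$.

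So the parity bookkeeping you flagged is the real crux on either route, and it is exactly what your proposal leaves open.
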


\begin{proof}
    Write $\calF$ for $\FermionMinus{1}$.
    We reduce the proof to the case
    where all simple components~$\pi_\mu \calX$ and~$\tau_\nu \calZ$ are odd.
    In this reduction will we use the stabilization~$\otimes \calF$.
    Define
    \begin{align}
        \bar \calX &= \bigoplus_\mu \bar \calX_\mu & \bar \calZ &= \bigoplus_\nu \bar \calZ_\nu \nonumber\\
        \bar \calX_\mu &= 
        \begin{cases}
            \pi_\mu \calX \otimes \calF & \text{if } \pi_\mu \calX \text{ is even},\\
            \pi_\mu \calX & \text{otherwise.}
        \end{cases}
        &
        \bar \calZ_\nu &= 
        \begin{cases}
            \tau_\nu \calZ \otimes \calF & \text{if } \tau_\nu \calZ \text{ is even},\\
            \tau_\nu \calZ & \text{otherwise.}
        \end{cases}
    \end{align}
    We claim that there is a unital monomorphism 
    $\bar \phi : \bar \calX \hookrightarrow \bar \calZ$
    such that the following diagram of unital monomorphisms commutes:
    \begin{equation}
        \begin{tikzcd}
        \bar \calX\ar[r, "\bar \phi", hook] & \bar \calZ \ar[r,hook] & \calZ \otimes \calF \\
        \calX \ar[u,hook]\ar[r,hook, "\phi"]  &  \calZ \ar[ru,hook,"{\otimes \one_\calF}"']   &
        \end{tikzcd}
    \end{equation}
    where for the unannotated arrows we tensor $\one_\calF$ as needed.
    The only case that is less obvious for $\bar \phi$ is 
    when $\pi_\mu\calX$ is even and $\tau_\nu\calZ$ is odd,
    so we have to define a monomorphism 
    $\pi_\mu \calX \otimes \calF \hookrightarrow \tau_\nu \calZ$.
    In that case we can use~\ref{lem:EvenOddXZFactorThrough}.
    We will check the injectivity of~$\bar \phi$ shortly.    
    
    The canonical central projectors of~$\bar \calZ$ are obviously 
    either $\bar \tau_\nu = \tau_\nu$ if $\tau_\nu \calZ$ is odd
    or $\bar \tau_\nu = \tau_\nu \otimes \one_\calF$ otherwise.
    The canonical central projectors $\bar \pi_\mu \in \bar \calX$ are analogous.
    Since we have constructed~$\bar \phi$ on each pair~$(\mu,\nu)$,
    the images~$\bar \phi(\bar \pi_\mu)$ deserve attention.
    Clearly, $\bar \phi(\bar \pi_\mu) = \sum_\nu  \bar \phi(\bar \pi_\mu) \bar \tau_\nu$,
    but each summand here needs to be examined case by case:
    \begin{equation}
        \bar \phi(\bar \pi_\mu) \bar \tau_\nu =
        \begin{cases}
            \phi(\pi_\mu) \tau_\nu & \text{if $\pi_\mu \calX$ is odd and $\tau_\nu \calZ$ is odd},\\
            \phi(\pi_\mu) \tau_\nu & \text{if $\pi_\mu \calX$ is even and $\tau_\nu \calZ$ is odd},\\
            (\phi(\pi_\mu) \otimes \one_\calF) (\tau_\nu \otimes \one_\calF) & \text{if $\pi_\mu \calX$ is odd and $\tau_\nu\calZ$ is even},\\
            (\phi(\pi_\mu) \otimes \one_\calF) (\tau_\nu \otimes \one_\calF) & \text{if $\pi_\mu \calX$ is even and $\tau_\nu\calZ$ is even}.
        \end{cases}
    \end{equation}
    In the second case, we used the factor-through property from~\ref{lem:EvenOddXZFactorThrough};
    the other cases are only easier.
    It is now clear that $\bar \phi(\bar \pi_\mu) \neq 0$, and hence $\bar \phi$ is injective.
    
    We have to check the trace condition~\eqref{eq:tracecondition} to complete the reduction.
    We first have to find a tracial state~$\bar \tr$ on~$\bar \calZ$ 
    that restricts to~$\tr$ on~$\calZ$.
    By~\ref{thm:trIsUnique},
    every simple algebra has a unique tracial state,
    and hence the only allowed variation of a tracial state on a semisimple algebra
    is its value on the canonical central projectors.
    Since we require that $\bar \tr$ restricts to~$\tr$,
    we have $\bar \tr(\bar \tau_\nu) = \tr(\tau_\nu)$,
    and therefore $\bar \tr$ is uniquely determined by~$\tr$.
    Equipped with~$\bar \tr$, we easily have the trace condition~\ref{eq:tracecondition} 
    for~$\bar \calX \hookrightarrow \bar \calZ$:
    \begin{align}
        \frac{\tr(\tau_\nu)}{\tr(\one_\calZ)} = \frac{\tr(\tau_\nu \otimes \one_\calF)}{\tr(\one_\calZ \otimes \one_\calF)}, \qquad
        \frac{\tr(\phi(\pi_\mu)\tau_\nu)}{\tr(\tau_\nu)} = \frac{\tr((\phi(\pi_\mu)\otimes \one_\calF)(\tau_\nu \otimes \one_\calF))}{\tr(\tau_\nu \otimes \one_\calF)} \, .
    \end{align}
    Since $\calX \hookrightarrow \bar \calX$ and $\bar \calZ \hookrightarrow \calZ \otimes \calF$ are unital,
    we aim to find~$\calY$ between $\bar \calX$ and $\bar \calZ$
    where all components are odd.
    This completes the reduction.

    When all simple components are odd,
    we recall the fact that 
    every odd central fermionic algebra of envelope rank~$2n$
    is a graded tensor product~$\FermionPlus{n}{0} \otimes \calF$.
    Pulling out the common tensor factor~$\calF$,
    the proof further reduces to the situation 
    where all simple components are even with the trivial grading involution.
    Hence, a desired odd central algebra~$\calY$ is found by
    appealing to \cite[Lemma~3.8]{FreedmanHastings2019QCA},
    which we restate and reprove in~\ref{thm:FindYinbetweenXZBoson} below.
\end{proof}

\begin{remark}\label{rem:rankIgnorantTrace}
    Observe that the claim of~\ref{thm:FindYinbetweenXZ} 
    allows $\tr(\tau_\nu)$ to be unrelated to the envelope rank of~$\tau_\nu \calZ$;
    even if the envelope ranks of $\tau_1 \calZ$ and $\tau_2 \calZ$ are the same, 
    $\tr(\tau_1)$ and $\tr(\tau_2)$ may still be different.
    In terms of concrete matrices,
    this is rooted in the fact that the claim has nothing to do with
    any representation of~$\calZ$ 
    in which different simple subalgebras of~$\calZ$
    may appear with arbitrarily multiplicities.
    This remark also applies to~\ref{thm:FindYinbetweenXZBoson} below.
\end{remark}

\begin{remark}\label{rem:CounterexampleSimpleBetweenSemisimple}
    The stabilization in~\ref{thm:FindYinbetweenXZ} cannot be omitted.
    Here we give an example to this end.
    Consider fermionic algebras 
    \begin{align}
        \pi_1 \calX &= \FermionMinus{1} = \left\{ 
            \left(\begin{array}{c|c}
                x & y \\ \hline y & x
            \end{array}\right) ~\middle|~ x,y \in \CC
        \right\}\, ,\\
        \pi_2 \calX &= \FermionPlus{1}{1} = \left\{
            \left(\begin{array}{c|c}
                 a & b  \\ \hline c & d 
            \end{array}\right)
            ~\middle|~
            a,b,c,d \in \CC
            \right\} \, ,\nonumber
    \end{align}
    unitally embedded in~$\calZ = \tau_1 \calZ \oplus \tau_2 \calZ$ where $\tau_1 \calZ = \FermionMinus{3}$ 
    and $\tau_2 \calZ = \FermionPlus{3}{3}$ as
    \begin{equation}
        \left(\begin{array}{ccc|ccc}
            x &   &   & y &   &  \\ 
             & a & 0 &  & 0 & b \\
             & 0 & d &  & c & 0 \\
            \hline
            y &  &  & x &  &  \\
             & 0 & b &  & a & 0 \\
             & c & 0 &  & 0 & d 
        \end{array}\right) \in \tau_1 \calZ \, ,
        \qquad
        \left(\begin{array}{ccc|ccc}
            x &   &   & y &   &  \\ 
             & a & 0 &  & b & 0 \\
             & 0 & a &  & 0 & b \\
            \hline
            y &  &  & x &  &  \\
             & c & 0 &  & d & 0 \\
             & 0 & c &  & 0 & d 
        \end{array}\right) \in \tau_2 \calZ \, . \label{eq:repEOEO}
    \end{equation}
    Defining $\tr(\tau_1) = \tr(\tau_2) = \frac 1 2$, 
    we see that $\tr(\pi_1) = \frac 1 3$, $\tr(\pi_2) = \frac 2 3$,
    and the trace condition of~\ref{thm:FindYinbetweenXZ} is satisfied:
    $\tr(\pi_1 \tau_1) / \tr(\tau_1) = \tr(\pi_1 \tau_2) / \tr(\tau_2) = \tr(\pi_1) / \tr(\one)$.
    However, \emph{there is no central fermionic algebra~$\calY$ with unital 
    inclusions such that $(\calX \hookrightarrow \calY \hookrightarrow \calZ) = (\calX \hookrightarrow \calZ)$.}
    \begin{proof}
        By~\ref{thm:Commutant-of-central-in-central} the hypothetical~$\calY$ is a tensor factor 
        of~$\tau_1 \calZ$ and of~$\tau_2 \calZ$.
        By~\ref{thm:RankTupleMonoid}, the rank tuple of~$\calY$ must divide~$(3,3)/\sqrt{2}$.
        The divisors of~$(3,3)/\sqrt{2}$ are 
        $(1,0)$, $(3,0)$, $(2,1)$, $(1,1)/\sqrt{2}$, and $(3,3)/\sqrt{2}$.
        On the other hand, $\calY$ must have the dimension of the even part $\ge 3$ and the odd part $\ge 3$
        by inspection of the embedding.
        This rules out $(1,0)$, $(3,0)$, and $(1,1)/\sqrt{2}$.
        
        If $\calY \cong \FermionPlus{2}{1}$,
        then we must have an isomorphism $\calY \otimes \FermionMinus{1} \cong \tau_1 \calZ$.
        Hence, we must have an odd element~$\gamma = \gamma^\dagger \in \tau_1 \calZ$ with~$\gamma^2 = \one$
        such that $\gamma$ anticommutes with every odd element of~$\calX$.
        Such an element~$\gamma$ is represented as a $3$-by-$3$ off-diagonal block~$U$ such that $U^2 = \one_3$.
        Setting $y = 1$ and $c = b = 0$ in~\eqref{eq:repEOEO},
        we have an equation
        \begin{equation}
            U \begin{pmatrix} 1 & & \\  & 0 & \\ & & 0 \end{pmatrix} U = \begin{pmatrix} -1 & & \\  & 0 & \\ & & 0 \end{pmatrix}\, ,
        \end{equation}
        which contradicts~$U^2  = \one_3$.
        
        If $\calY \cong \FermionMinus{3}$,
        then we must have an isomorphism $\calY \otimes \FermionMinus{1} \cong \tau_2 \calZ$.
        The same calculation with an odd element~$\gamma \in \tau_2 \calZ$ shows that this is impossible.
    \end{proof}
\end{remark}

\begin{example}
    Continuing the example in~\ref{rem:CounterexampleSimpleBetweenSemisimple},
    we examine how the stabilization~$\calZ \to \calZ \otimes \FermionMinus{1}$ 
    reveals a sandwiched central algebra,
    as promised by~\ref{thm:FindYinbetweenXZ}.
    According to~\eqref{eq:evenX-OddZ},
    the embedding of~$\calX$ in the stabilized component~$\bar \calZ_2 = \tau_2 \calZ \otimes \FermionMinus{1}$
    can be written as
    \begin{equation}
        \left(\begin{array}{c|c}
            A & B  \\ \hline B & A
        \end{array}\right) \in \bar \calZ_2,
        \quad
        A = 
        \begin{pmatrix}
            x&0&0& & & \\
            0&a&0& & & \\
            0&0&a& & & \\
             & & &x&0&0\\
             & & &0&d&0\\
             & & &0&0&d
        \end{pmatrix},
        \quad
        B = 
        \begin{pmatrix}
             & & &y&0&0\\
             & & &0&b&0\\
             & & &0&0&b\\
            y&0&0& & & \\
            0&c&0& & & \\
            0&0&c& & &
        \end{pmatrix}\, .
    \end{equation}
    To recognize a central algebra, we conjugate it by a permutation unitary
    \begin{equation}
        \left(\begin{array}{c|c}
            S & 0  \\ \hline 0 & S
        \end{array}\right) \in \bar \calZ_2,
        \qquad
        S = 
        \begin{pmatrix}
            1&0& & & & \\
            0&1& & & & \\
            & &0&0&0&1\\
             & &0&1&0&0\\
             & &1&0&0&0\\
             & &0&0&1&0
        \end{pmatrix}
    \end{equation}
    to obtain
    \begin{equation}
        \left(\begin{array}{c|c}
            A' & B'  \\ \hline B' & A'
        \end{array}\right) \in \bar\calZ_2,
        \,\,
        A' = 
        \begin{pmatrix}
            x&0&0& & & \\
            0&a&0& & & \\
            0&0&d& & & \\
             & & &x&0&0\\
             & & &0&a&0\\
             & & &0&0&d
        \end{pmatrix},\,\,
        B' = 
        \begin{pmatrix}
             & & &y&0&0\\
             & & &0&0&b\\
             & & &0&c&0\\
            y&0&0& & & \\
            0&0&b& & & \\
            0&c&0& & &
        \end{pmatrix}\, .
    \end{equation}
    We see that there are two copies of~$\FermionMinus{3}$ in~$\bar \calZ_2$,
    in which the embedding of~$\calX$ is the same as that in~$\tau_1 \calZ$.
    Hence, $\calY \cong \FermionMinus{3}$.
\end{example}

\begin{proposition}[Lemma 3.8 of \cite{FreedmanHastings2019QCA}]\label{thm:FindYinbetweenXZBoson}
    Let $\calX \subseteq \calZ$ be a unital inclusion 
    of an ungraded $*$-algebra~$\calX$ 
    into a finite-dimensional ungraded $*$-algebra~$\calZ$.
    Let $\pi_\mu \in \calX$ and $\tau_\nu \in \calZ$ 
    be canonical central projectors of~$\calX$ and $\calZ$, respectively,
    and suppose that $\tr(\pi_\mu \tau_\nu) \tr(\one_\calZ) = \tr(\pi_\mu) \tr(\tau_\nu)$ for all~$\mu,\nu$.
    Then, there exists a \emph{central} ungraded $*$-algebra~$\calY$
    with unital inclusions $\calX \subseteq \calY \subseteq \calZ$.
\end{proposition}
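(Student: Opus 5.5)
The plan is to reduce everything to the combinatorics of Bratteli multiplicity matrices. Then I would build $\calY$ as a single full matrix algebra whose size is the greatest common divisor of the block sizes of $\calZ$.

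First, by Wedderburn--Artin and the $*$-version of Skolem--Noether (\ref{thm:AutoIsInnerInMat}), I would write $\calX = \bigoplus_\mu \pi_\mu\calX \cong \bigoplus_\mu \Mat(m_\mu)$ and $\calZ = \bigoplus_\nu \tau_\nu \calZ \cong \bigoplus_\nu \Mat(n_\nu)$. The unital inclusion is then determined, up to conjugation by a unitary of $\calZ$, by nonnegative integer multiplicities $A_{\nu\mu}$. Here $A_{\nu\mu}$ is the number of copies of $\Mat(m_\mu)$ appearing block-diagonally in $\Mat(n_\nu)$, and unitality gives $n_\nu = \sum_\mu A_{\nu\mu} m_\mu$. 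By~\ref{thm:trIsUnique} applied blockwise, the tracial state on $\calZ$ is $\tr(z) = \sum_\nu t_\nu\, n_\nu^{-1}\Tr(\tau_\nu z)$ with $t_\nu = \tr(\tau_\nu) > 0$. Hence $\tr(\pi_\mu\tau_\nu) = t_\nu A_{\nu\mu} m_\mu / n_\nu$. The hypothesis then says exactly that $A_{\nu\mu} m_\mu / n_\nu = p_\mu := \tr(\pi_\mu)/\tr(\one_\calZ)$ is independent of~$\nu$. Equivalently, $r_\mu := A_{\nu\mu}/n_\nu$ is a rational number independent of~$\nu$, and $\sum_\mu r_\mu m_\mu = 1$. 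As \ref{rem:rankIgnorantTrace} notes, the weights $t_\nu$ then drop out entirely.

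Second, I would set $g = \gcd_\nu n_\nu$ and $b_\mu = r_\mu g$. The key step, and the only real obstacle, is integrality of $b_\mu$. By Bezout, $g = \sum_\nu c_\nu n_\nu$ with $c_\nu \in \ZZ$, so $b_\mu = \sum_\nu c_\nu\, r_\mu n_\nu = \sum_\nu c_\nu A_{\nu\mu} \in \ZZ$. We also have $b_\mu \ge 0$ since $r_\mu \ge 0$, and $\sum_\mu b_\mu m_\mu = g\sum_\mu r_\mu m_\mu = g$. Moreover $A_{\nu\mu} = (n_\nu/g)\, b_\mu$ with $n_\nu/g \in \ZZ$.

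Third, I would construct the model $\calY' = \Mat(g)$. It contains $\calX$ unitally with multiplicities $b_\mu$. It embeds unitally into $\calZ$ via $y \mapsto \bigoplus_\nu y \otimes \one_{n_\nu/g}$, which is injective because each component is. The composite $\calX \hookrightarrow \calY' \hookrightarrow \calZ$ has multiplicities $(n_\nu/g) b_\mu = A_{\nu\mu}$. It therefore agrees with the given inclusion up to conjugation by a unitary $u \in \calZ$, since representations of a semisimple $*$-algebra with equal multiplicities are unitarily equivalent blockwise. Setting $\calY = u\calY' u^\dagger$ gives the central algebra $\calY \cong \Mat(g)$ with unital inclusions $\calX \subseteq \calY \subseteq \calZ$. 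The remaining checks are routine: that unitary equivalence of inclusions is governed solely by multiplicities, and that centrality of $\Mat(g)$ holds.
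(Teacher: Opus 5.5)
Your proposal is correct, and it takes a different route from the paper.

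\textbf{The paper's route.} The paper inducts on the numbers $m,n$ of simple summands of $\calX$ and $\calZ$.
\begin{itemize}
\item It shows that $C(2,2)$ and $C(2,n-1)$ give $C(2,n)$. It sandwiches a central $\calY'$ between $(\tau_1+\tau_2)\calX$ and $(\tau_1+\tau_2)\calZ$, which merges two summands of $\calZ$.
\item It shows that $C(2,n)$ and $C(m-1,n)$ give $C(m,n)$. It works in the corner $(\pi_1+\pi_2)\calZ(\pi_1+\pi_2)$, which merges two summands of $\calX$.
\item At each reduction it checks that the trace condition is inherited.
\item Only the base case $C(2,2)$ involves computation. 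The trace condition gives $a/b=a'/b'$ for the multiplicities. With $p/q$ in lowest terms it takes $\calY=\End(A^{\oplus p}\oplus B^{\oplus q})$. Integrality comes from $k=a/p=b/q$ being an integer.
\end{itemize}

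\textbf{Your route.} You handle the whole Bratteli matrix at once. The trace condition says exactly that $A_{\nu\mu}=r_\mu n_\nu$ with $r_\mu$ independent of $\nu$. B\'ezout on $g=\gcd_\nu n_\nu$ then makes $b_\mu=r_\mu g$ integral, and you get $\calY\cong\Mat(g)$ in one step.

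\textbf{What each buys.} Your argument avoids the induction and the repeated checks of the trace condition on corners and merged algebras. It also pins down a canonical size for $\calY$, namely the largest possible. Any $\calY\cong\Mat(y)$ sitting unitally in $\calZ$ sits unitally in each block $\Mat(n_\nu)$, so $y$ divides $g$. The paper's base case instead picks the smallest candidate. The paper's approach has the advantage that every step is a local two-block calculation.

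\textbf{One small point to state.} For $\calX\to\Mat(g)$ to be injective you need $b_\mu\ge 1$, not just $b_\mu\ge 0$. This holds because injectivity of $\calX\subseteq\calZ$ forces $A_{\nu\mu}>0$ for some $\nu$. Hence $r_\mu>0$, and $b_\mu$ is a positive integer.
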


We present a proof that is the same in essence as that of~\cite{FreedmanHastings2019QCA},
but we explicitly use mathematical induction
in which only the base case needs some calculation.

\begin{proof}
    Let $C(m,n)$ be the statement of~\ref{thm:FindYinbetweenXZBoson} 
    with the number of projectors specified,
    i.e., the canonical central projectors are $\pi_1, \ldots, \pi_m \in \calX$
    and $\tau_1,\ldots,\tau_n \in \calZ$ and no more.

    We show that $C(2,n)$ is true if $C(2,2)$ and $C(2,n-1)$ are true.
    Let $\calZ_{1,2} = (\tau_1 + \tau_2)\calZ$
    and $\calX_{1,2} = (\tau_1 + \tau_2)\calX$.
    Clearly, $\calX_{1,2} \subseteq \calZ_{1,2}$.
    Furthermore, this is unital since the multiplicative identity
    is $(\tau_1 + \tau_2)\one_\calX = (\tau_1 + \tau_2)\one_\calZ$.
    The canonical projectors are $\pi_\mu (\tau_1 + \tau_2) \in \calX_{1,2}$
    with $\mu = 1,2$
    and $\tau_\nu \in \calZ_{1,2}$ with $\nu = 1,2$.
    The trace condition is checked as
    $\tr((\pi_\mu (\tau_1 + \tau_2)) \tau_\nu) \tr(\tau_1 + \tau_2) 
    = \tr(\pi_\mu \tau_\nu) \tr(\tau_1 + \tau_2) = \tr(\pi_\mu) \tr(\tau_\nu) \tr(\tau_1 + \tau_2) = \tr(\pi_\mu(\tau_1 + \tau_2)) \tr(\tau_\nu)$.
    By $C(2,2)$, we find a central~$\calY'$ with unital inclusions 
    $\calX_{1,2} \subseteq \calY' \subseteq \calZ_{1,2}$.
    Now, consider unital inclusions 
    \begin{equation}
        \calX \subseteq \calZ' = \left(\calY' \oplus \bigoplus_{j = 3}^n \tau_j \calZ \right) \subseteq \calZ.
    \end{equation}
    The new algebra~$\calZ'$ has canonical projectors
    $\tau_1 + \tau_2, \tau_3, \tau_4, \ldots, \tau_n$,
    the number of which is $n-1$.
    The trace condition for $\calX \subseteq \calZ'$ is clearly true.
    By~$C(2,n-1)$ applied to $\calX \subseteq \calZ'$,
    we find a central $\calY$ 
    with unital inclusions $\calX \subseteq \calY \subseteq \calZ' \subseteq \calZ$ as desired.
    Since $C(2,1)$ is trivially true,
    if $C(2,2)$ is true, then $C(2,n)$ is true for all~$n$.

    Next, we show that $C(m,n)$ is true if $C(2,n)$ and $C(m-1,n)$ is true.
    Let $\calZ^{1,2} = (\pi_1 + \pi_2) \calZ (\pi_1 + \pi_2)$ and $\calX^{1,2} = (\pi_1 + \pi_2)\calX$.
    The inclusion $\calX^{1,2} \subseteq \calZ^{1,2}$ is unital with the common multiplicative identity~$\pi_1 + \pi_2$.
    The canonical projectors are $\pi_\mu \in \calX^{1,2}$ with $\mu = 1,2$
    and $(\pi_1 + \pi_2)\tau_\nu \in \calZ^{1,2}$ with $\nu = 1,2,\ldots,n$.
    The trace condition $\tr(\pi_\mu ((\pi_1+\pi_2)\tau_\nu)) \tr(\pi_1 + \pi_2) = \tr(\pi_\mu) \tr((\pi_1+\pi_2)\tau_\nu)$ 
    is easily checked as before.
    So, by~$C(2,n)$ applied to~$\calX^{1,2} \subseteq \calZ^{1,2}$
    we have a central~$\calY''$, 
    whose multiplicative identity is~$\pi_1 + \pi_2$,
    with unital inclusions~$\calX^{1,2} \subseteq \calY'' \subseteq \calZ^{1,2}$.
    This implies that
    \begin{equation}
        \calX' = \left(\calY'' \oplus \bigoplus_{j=3}^m \pi_j \calX \right) \subseteq \calZ
    \end{equation}
    is a unital inclusion.
    The canonical projectors are
    $\pi_1 + \pi_2, \pi_3, \ldots, \pi_m \in \calX'$,
    the number of which is $m-1$,
    and $\tau_1, \ldots, \tau_n \in \calZ$.
    The trace condition for $\calX' \subseteq \calZ$ is obvious.
    By~$C(m-1,n)$ applied to~$\calX' \subseteq \calZ$, 
    we find a sandwiched~$\calY$ that is desired for~$C(m,n)$.
    
    Since $C(1,1),C(1,2),C(2,1)$ are trivially true, 
    we conclude that $C(2,2)$ implies $C(m,n)$
    for all~$m,n$.

    To show~$C(2,2)$,
    we write $\calX = \calA \oplus \calB$ and $\calZ = \calC \oplus \calD$ 
    where $\calA,\calB,\calC,\calD$ are simple algebras.
    If we identify $\calC$ with $\End(C)$,
    then $C$ becomes a representation space of~$\calA \oplus \calB$.
    Since $\calX \to \calC$ is unital,
    the space~$C$ is an orthogonal sum of an $\calA$-module~$C_\calA$ and a $\calB$-module~$C_\calB$,
    where $\calA \to \End(C_\calA)$ and $\calB \to \End(C_\calB)$ are both unital.
    Since $\calA$ is simple, $C_\calA$ is an orthogonal sum of some number, say~$a$, of copies of
    the unique simple $\calA$-module~$A$; 
    that is, $C_\calA \cong A^{\oplus a}$ where $\dim A = \rank \calA$.
    Likewise, $C_\calB \cong B^{\oplus b}$ for some positive integer~$b$.
    Completely analogously, with $\calD = \End(D_\calA \oplus D_\calB)$,
    we have $D_\calA \cong A^{\oplus a'}$ and $D_\calB \cong B^{\oplus b'}$
    for some positive integers $a',b'$.
    The trace condition implies that
    \begin{equation}
        \frac{a \rank \calA}{a \rank \calA + b \rank \calB} 
        = \frac{a' \rank \calA}{a' \rank \calA + b' \rank \calB} 
        \quad \Longleftrightarrow \quad
        \frac{b}{a} = \frac{b'}{a'} \, .
    \end{equation}
    Let $p,q$ be coprime positive integers such that $p/q = a/b = a'/b'$ where $p \le a$.
    Define $Y = A^{\oplus p} \oplus B^{\oplus q}$ and $\calY = \End(Y)$.
    Since $\rank \calC = k \rank \calY$ where $k = a/p = b/q$ 
    and $\rank \calD = k' \rank \calY$ where $k' = a'/p = b'/q$,
    we see that there exists a commuting diagram of unital monomorphisms:
    \begin{equation}
    \begin{tikzcd}
    \calX \arrow[dr, hook]\arrow[rr, hook] & &  \calZ \\
     &  \calY \arrow[ur, hook]  &
    \end{tikzcd} \qedhere
\end{equation}
\end{proof}

\section{Fermionic invertible subalgebras}

Consider a $\dd$-dimensional lattice $\ZZ^\dd$
for an integer $\dd\ge 0$.
We construct the ambient operator algebra as follows.
On each site $s$, we choose a
finite-dimensional central simple fermionic algebra
$\fMat(\{s\}, p(s))$, where $p(s)$ represents
the data of isomorphism class of the algebra (the rank tuple).
We will sometimes call $p(s)$ a \emph{local assignment}.
Local assignments form a multiplicative monoid.
For any finite subset $S\subset \ZZ^\dd$,
we define $\fMat(S,p) = \bigotimes_{s\in S}\fMat(\{s\}, p(s))$
to be the fermionic tensor product of
the local operator algebras.
The full operator algebra is the direct limit
\begin{equation}
    \fMat(\ZZ^\dd, p) = \bigcup_{S} \fMat(S, p).
\end{equation}
In what follows, we will suppress the local assignment $p$ 
where it is not important.

\begin{definition}
    An \emph{approximately finite} fermionic algebra
    is the inductive limit of a directed system
    of unital inclusions of finite-dimensional fermionic algebras.%
    \footnote{In contrast to the typical usage of the term ``approximately finite''
    for $C^*$-algebras,
    we do not consider any norm topology on algebras
    and hence we do not consider topological closure,
    though there is a $C^*$-norm in all our examples.}
\end{definition}

Clearly, $\fMat(\ZZ^\dd)$ is an approximately finite fermionic algebra.

\begin{proposition}
    The fermionic tensor product of approximately finite fermionic algebras is fermionic.
\end{proposition}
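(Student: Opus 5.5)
We must show that the fermionic tensor product of approximately finite fermionic algebras is a $\ZZ_2$-graded, spectrally nondegenerate $*$-algebra. The grading and the $*$-structure on $\calA \otimes \calB$ come directly from the definition of the fermionic tensor product. So the only real content is spectral nondegeneracy.

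The plan is to write $\calA \otimes \calB$ itself as an approximately finite algebra. Write $\calA = \bigcup_i \calA_i$ and $\calB = \bigcup_j \calB_j$ as directed unions of finite-dimensional fermionic algebras with unital inclusions. Since tensor products commute with directed unions as vector spaces, $\calA \otimes \calB = \bigcup_{i,j} \calA_i \otimes \calB_j$. Each inclusion $\calA_i \otimes \calB_j \hookrightarrow \calA_{i'} \otimes \calB_{j'}$ is a unital graded $*$-homomorphism. It is injective because it is a tensor product of injective linear maps over a field.

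The key step is to show that each $\calA_i \otimes \calB_j$ is spectrally nondegenerate.

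\begin{itemize}
\item By \ref{thm:Wedderburn-Fermionic}, we have $\calA_i = \bigoplus_\mu \calA_{i,\mu}$ and $\calB_j = \bigoplus_\nu \calB_{j,\nu}$ with central simple summands.
\item The fermionic tensor product distributes over these direct sums, since each canonical projector is even and lies in the fermionic center. This gives $\calA_i\otimes\calB_j \cong \bigoplus_{\mu,\nu} \calA_{i,\mu}\otimes\calB_{j,\nu}$ as graded $*$-algebras.
\item By Wall's \ref{thm:tensor_prod_simple} and \ref{thm:fermionicCSAs}, each summand is graded-isomorphic to some $\FermionPlus{p}{q}$ or $\FermionMinus{s}$.
\item Each such summand is, as a $*$-algebra, a matrix algebra or a sum of two matrix algebras. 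The subtle point is that the $*$-structure involves the sign $(-1)^{ij}$. I would check that the isomorphisms in \ref{ex:TableOfTensorProduct} are $*$-isomorphisms onto the standard matrix adjoint, which the paper asserts.
\item Hence the whole algebra $\calA_i\otimes\calB_j$ is a finite direct sum of matrix algebras with conjugate-transpose involution.
\item A nonzero self-adjoint element then has a nonzero real eigenvalue $\lambda$ in some block, so $\one-\lambda^{-1}h$ is not invertible.
\end{itemize}

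It remains to pass to the union. Let $h \in \calA\otimes\calB$ be nonzero and self-adjoint. It lies in some finite stage $\calF = \calA_i\otimes\calB_j$. Choose $\lambda$ so that $\one-\lambda^{-1}h$ is not invertible in $\calF$. Suppose it had an inverse $g$ in the union. Then $g$ lies in a larger stage $\calF'$, which is finite-dimensional with $\calF\subseteq\calF'$ unital. In $\calF'$, an element of $\calF$ that is invertible in $\calF'$ is invertible in $\calF$: the inverse is a polynomial in the element by Cayley–Hamilton, viewing left multiplication on the finite-dimensional $\calF$. This is a contradiction.

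The main obstacle I expect is the middle step: verifying carefully that the $*$-operation with signs makes each finite stage a genuine $C^*$-type algebra. An alternative, if the sign bookkeeping is awkward, is to construct a faithful tracial positive functional $\tr_\calA\otimes\tr_\calB$ on each stage, with traces vanishing on odd parts. One would check positivity $\tr((x)^\dagger x)>0$ directly, using that odd-odd cross terms have vanishing trace. This gives a Hilbert–Schmidt $*$-representation, so self-adjoint elements are diagonalizable with real spectrum.
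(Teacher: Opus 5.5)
Your proposal is correct and follows the paper's proof. You decompose each finite stage via \ref{thm:Wedderburn-Fermionic} into central simple summands, identify their tensor products with standard matrix forms, and note that every element of $\calA\otimes\calB$ lies in some finite stage. You also add two details the paper's two-line argument leaves implicit: the check that an inverse taken in a larger stage already lies in the smaller one, and the reliance on the explicit $*$-isomorphisms of \ref{ex:TableOfTensorProduct} rather than on \ref{thm:tensor_prod_simple}, whose proof presupposes spectral nondegeneracy of the product.
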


\begin{proof}
    By~\ref{thm:Wedderburn-Fermionic} every finite-dimensional fermionic algebra
    is a finite direct sum of central simple fermionic algebras,
    whose tensor product is again fermionic; that is, the spectral nondegeneracy is obeyed.
    Every element in the tensor product of two approximately finite fermionic algebras
    is in some finite-dimensional tensor product,
    where the spectral nondegeneracy is clear.
\end{proof}

We note that the construction of~$\fMat$ is a
strict generalization of~\cite{Haah2023invertible}.
If the odd part of $\fMat(\{s\})$ is zero,
then we can forget the grading structure
and the single-site operator algebra is bosonic.
The only central simple finite-dimensional
ungraded algebras over $\CC$
are full matrix algebras,
which is exactly the case considered in~\cite{Haah2023invertible}.

For an operator $x\in \fMat(\ZZ^\dd)$,
the smallest $S\subset \ZZ^\dd$ such that
$x\in \fMat(S)$ is the \emph{support} of $x$,
which we denote by~$\Supp(x)$.
Scalar multiples of $\one$ have empty support.
For any set of sites $S\subseteq \ZZ^\dd$
and $\ell > 0$, we will write $S^{+\ell}$ to denote the
set of sites whose $\ell_\infty$-distance
from~$S$ is at most~$\ell$.
A subalgebra of $\fMat(\ZZ^\dd)$ is
$\ell$-\emph{locally generated}
if there exists a generating set for the subalgebra
such that every generator is supported on a region of
diameter at most $\ell$.

\begin{lemma}
    \label{lem:supp_by_commutation}
    For any $x\in \fMat(\ZZ^\dd)$
    and any subset $S\subseteq \ZZ^\dd$,
    the support~$\Supp(x)$ of~$x$ is contained in~$S$ if and only if
    $\FermiCommutator{x}{y}=0$ for all
    $y\in \fMat(\ZZ^\dd\setminus S)$.
\end{lemma}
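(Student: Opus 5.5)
The plan is to reduce to a finite-dimensional statement. Any $x$ and $y$ involved live in some $\fMat(T)$ with $T$ finite, and then Corollary~\ref{thm:Bicommutant-of-Central} does the work.

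(``only if'') Suppose $\Supp(x)\subseteq S$. Decompose $x = x_0 + x_1$ into homogeneous parts. Both parts lie in $\fMat(S\cap T)$ for a finite $T$, since the tensor-product grading makes each $\fMat(R)$ a graded subalgebra. Take any homogeneous $y \in \fMat(T')$ with $T' \subseteq \ZZ^\dd\setminus S$ finite. The definition of fermionic multiplication gives, for homogeneous simple tensors $a\otimes \one$ and $\one \otimes b$ in $\fMat(S\cap T)\otimes \fMat(T')$,
\[
(a\otimes\one)(\one\otimes b)=a\otimes b=(-1)^{|a||b|}(\one\otimes b)(a\otimes \one).
\]
Hence $\FermiCommutator{x}{y}=0$. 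Bilinearity then extends this to all $x$ and $y$.

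(``if'') Suppose $\FermiCommutator{x}{y}=0$ for all $y \in \fMat(\ZZ^\dd\setminus S)$. Choose a finite $T\supseteq \Supp(x)$, and write
\[
\fMat(T)=\fMat(T\cap S)\otimes \fMat(T\setminus S).
\]
Both factors are central by Lemma~\ref{thm:CentralTensorCentralIsCentral}, since each single-site algebra is central. The hypothesis says that $x$ lies in $\FermiCommutant{\fMat(T\setminus S)}{\fMat(T)}$.

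I then compute this commutant directly. Write
\[
x=\sum_i a_i\otimes b_i,
\]
where the $b_i$ form a homogeneous basis of $\fMat(T\setminus S)$ and the $a_i$ are homogeneous. Fermionically commuting with $\one\otimes b'$ for every homogeneous $b'$ yields, exactly as in the proof of Lemma~\ref{thm:CentralTensorCentralIsCentral}, that
\[
\sum_i a_i \otimes c_i\in \FermiCent\bigl(\fMat(T\setminus S)\bigr)=\CC\one,
\]
after separating the $a$'s by parity and inserting a sign $(-1)^{|a_i||b'|}$ into the commutator. The equality with $\CC\one$ is Lemma~\ref{thm:FermiCenterCentrality}. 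Therefore
\[
x\in \fMat(T\cap S)\otimes \one,
\]
which gives $\Supp(x)\subseteq S$.

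Alternatively, one can invoke Lemma~\ref{thm:Commutant-of-central-in-central} together with a dimension count. The commutant contains $\fMat(T\cap S)$, and its dimension equals $\dim\fMat(T)/\dim\fMat(T\setminus S)$, so the two coincide.

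The main subtlety is the sign bookkeeping when $x$ has an odd part. One cannot use the ordinary commutant here, because odd elements anticommute across sites. So the argument must be carried out separately on the even and odd components of $x$, with the odd component picking up the sign $(-1)^{|b'|}$. Once $x$ is split into $x_0$ and $x_1$, each component reduces to the central-algebra computation above. A minor point is that "support" means the smallest such set, which is well defined because $\fMat(R)\cap\fMat(R')=\fMat(R\cap R')$. This intersection property follows from the same tensor-basis argument.
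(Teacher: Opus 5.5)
\textbf{Comparison with the paper's proof.} Your ``only if'' direction and your intended ``if'' computation are the paper's argument. You expand $x$ into homogeneous simple tensors, commute with $\one\otimes b'$, use linear independence, and conclude via $\FermiCent = \CC\one$. The paper does this directly in $\fMat(S)\otimes\fMat(\ZZ^\dd\setminus S)$, without first passing to a finite $T$.

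\textbf{A slip in the direct computation.} As written, you have the factors swapped. If the $b_i$ form a basis of the right factor, commuting with $\one\otimes b'$ gives
\[
\textstyle\sum_i a_i\otimes\FermiCommutator{b_i}{b'}=0 .
\]
The elements $\FermiCommutator{b_i}{b'}$ are not linearly independent, so nothing follows term by term. The displayed claim that $\sum_i a_i\otimes c_i$ lies in $\FermiCent(\fMat(T\setminus S))$ is also not meaningful: the $c_i$ are undefined, and such an element of $\fMat(T)$ is not in $\fMat(T\setminus S)$.

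The fix is to make the left components linearly independent instead. Take the $a_i$ to be a homogeneous basis of $\fMat(T\cap S)$. Then $\FermiCommutator{a_i\otimes b_i}{\one\otimes b'}=a_i\otimes\FermiCommutator{b_i}{b'}$ for homogeneous $a_i,b'$ and arbitrary $b_i$, with no sign. Linear independence of the $a_i$ forces each $\FermiCommutator{b_i}{b'}=0$, so $b_i\in\CC\one$. With this ordering you do not need to split $x$ into parity components first.

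\textbf{Your alternative argument.} It is correct and genuinely different. Apply Lemma~\ref{thm:Commutant-of-central-in-central} to the unital inclusion $\fMat(T\setminus S)\hookrightarrow\fMat(T)$. The commutant $\calC$ then has
\[
\dim\calC=\dim\fMat(T)/\dim\fMat(T\setminus S)=\dim\fMat(T\cap S).
\]
It contains $\fMat(T\cap S)$ by the ``only if'' direction, so $x\in\calC=\fMat(T\cap S)$.

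This route avoids element-level bookkeeping by relying on the finite-dimensional structure theory, ultimately Wedderburn--Artin through that lemma. It also genuinely needs your reduction to a finite $T$. The paper's route is more elementary: it uses only linear independence, the graded Leibniz rule, and triviality of the fermionic center. It also works verbatim on the infinite algebra.
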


\begin{proof}
    We have $\fMat(\ZZ^\dd) = \fMat(S)\otimes \fMat(\ZZ^\dd\setminus S)$,
    and by the definition of the fermionic tensor product,
    $\FermiCommutator{a \otimes \one}{\one \otimes b} = 0$
    for any operators $a$, $b$ acting on different tensor factors.
    This implies that $\FermiCommutator{x}{y} = 0$
    if $\Supp(x)\subseteq S$ and $y\in \fMat(\ZZ^\dd\setminus S)$.
    
    For the converse, assume that $\FermiCommutator{x}{y} = 0$
    for all $y\in \fMat(\ZZ^\dd\setminus S)$.
    Decompose $x = \sum_i a_i\otimes b_i$,
    where $a_i\in \fMat(S)$ and $b_i\in \fMat(\ZZ^\dd\setminus S)$,
    the $a_i$ are linearly independent,
    and all $a_i$'s and $b_i$'s are homogeneous.
    Then for homogeneous $y$,
    applying the formula~\eqref{eq:graded_derivation} gives
    \begin{equation}
        \FermiCommutator{x}{y}
        = \sum_i\FermiCommutator{a_i\otimes b_i}{\one\otimes y}
        = \sum_i a_i \otimes \FermiCommutator{b_i}{y} = 0.
    \end{equation}
    Since the $a_i$ are linearly independent,
    this forces $\FermiCommutator{b_i}{y} = 0$
    for all $y\in \fMat(\ZZ^\dd\setminus S)$.
    Now, as $\fMat(\ZZ^\dd\setminus S)$ is central,
    we must have $b_i\in \CC\one$ for all $i$.
    Thus $x$ is supported on $S$.
\end{proof}

\begin{definition}
    Let $\calA$ be a fermionic subalgebra of $\fMat(\ZZ^\dd)$
    whose fermionic commutant within $\fMat(\ZZ^\dd)$ is $\calB$.
    We say that $\calA$ is \emph{invertible} if there exists a constant
    spread $\ell > 0$ such that every $x\in \fMat(\ZZ^d)$
    can be written as a finite sum
    \begin{equation}
        x = \sum_i a_i b_i,
    \end{equation}
    for some $a_i\in \calA$, $b_i \in \calB$,
    and $\Supp(a_i), \Supp(b_i)\subseteq \Supp(x)^{+\ell}$.
\end{definition}

\begin{definition}
    A fermionic subalgebra $\calA$ of $\fMat(\ZZ^\dd)$
    is \emph{visibly central} if there exists a range $\ell > 0$
    such that for any $a\in \calA \setminus \CC\one$
    and any site~$s$ in the support of~$a$,
    there exists $w\in \calA \cap \fMat(\{s\}^{+\ell})$
    such that $\FermiCommutator{a}{w} \neq 0$.
\end{definition}

Clearly, every visibly central algebra is central.
This notion is closely related to
the Haag duality in algebraic approaches to quantum field theory 
because of~\ref{lem:VS_support} below,
and has appeared in~\cite{FreedmanHastings2019QCA}
under the name of ``visibly simple'' algebras.

\begin{lemma}
    \label{lem:invertible_VS}
    Let $\calA \subseteq \fMat(\ZZ^\dd)$ be an invertible
    subalgebra with spread~$\ell$, and let~$\calB$
    be its fermionic commutant within $\fMat(\ZZ^\dd)$.
    Then $\calA$ and $\calB$ are both visibly central of range $\ell$.
\end{lemma}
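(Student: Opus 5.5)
The plan is to argue for $\calA$ and then note that the statement is symmetric. Invertibility of $\calA$ with spread $\ell$ says every local operator factors locally into $\calA$ and $\calB$ parts. By definition of $\calB$ we have $\FermiCommutator{\calA}{\calB}=0$, and the same decomposition also shows that $\calB$ is invertible with commutant containing $\calA$. So it suffices to prove that $\calA$ is visibly central of range $\ell$.

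First I need a one-site detection fact. Fix a nonscalar $a \in \calA \setminus \CC\one$ and a site $s \in \Supp(a)$. By minimality of the support, $a \notin \fMat(\Supp(a)\setminus\{s\})$. Applying \ref{lem:supp_by_commutation} with $S = \Supp(a)\setminus\{s\}$ gives some $y \in \fMat(\ZZ^\dd \setminus S)$ with $\FermiCommutator{a}{y}\neq 0$. I want $y$ supported on $\{s\}$ itself. To get this, write $y$ in a product basis as a sum of simple tensors of homogeneous elements supported on $\{s\}$ and on the complement of $\Supp(a)$. The factors supported away from $\Supp(a)$ fermionically commute with $a$, up to the graded-derivation sign bookkeeping from \eqref{eq:graded_derivation}. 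Hence some homogeneous $y \in \fMat(\{s\})$ already satisfies $\FermiCommutator{a}{y}\neq 0$. The cleaner route is to apply the converse direction of \ref{lem:supp_by_commutation} to the pair $S=\ZZ^\dd\setminus\{s\}$: if $a$ fermionically commuted with all of $\fMat(\{s\})$, then $\Supp(a)\subseteq \ZZ^\dd\setminus\{s\}$, a contradiction. This gives homogeneous $y \in \fMat(\{s\})$ with $\FermiCommutator{a}{y}\ne 0$.

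Next I use invertibility to decompose $y = \sum_i a_i b_i$ with $a_i \in \calA$, $b_i \in \calB$, all supported in $\{s\}^{+\ell}$. Taking even and odd parts, I may assume each $a_i, b_i$ is homogeneous with total parity equal to that of $y$. Since $\calA$, $\calB$ and the support condition are all graded, projecting onto homogeneous components preserves everything. By the graded Leibniz rule \eqref{eq:graded_derivation},
$\FermiCommutator{a}{a_i b_i} = \FermiCommutator{a}{a_i} b_i \pm a_i \FermiCommutator{a}{b_i} = \FermiCommutator{a}{a_i} b_i$,
because $\FermiCommutator{a}{b_i}=0$. If every $\FermiCommutator{a}{a_i}$ vanished, then $\FermiCommutator{a}{y}=0$, which is a contradiction. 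So some $w = a_i \in \calA \cap \fMat(\{s\}^{+\ell})$ has $\FermiCommutator{a}{w}\neq 0$, and this is exactly visible centrality with range $\ell$.

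For $\calB$ the argument runs identically with the roles swapped, using $\FermiCommutator{\calA}{\calB}=0$ and the same decomposition $y=\sum a_i b_i$.

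The main obstacle is sign and parity bookkeeping. I need the decomposition to be taken with homogeneous pieces, and the graded Leibniz rule must hold for mixed parities. I also need to make sure that nothing requires $\calB$ to equal the full commutant of $\calA$ in the swapped case. Since only $\FermiCommutator{\calA}{\calB}=0$ is used, this is not an issue.
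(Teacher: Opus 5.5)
Your proof is correct and is essentially the paper's argument. You pick a single-site $y$ at $s$ with $\FermiCommutator{a}{y}\neq 0$, decompose it via invertibility into homogeneous local pieces $\sum_i a_i b_i$, use the graded Leibniz rule with $\FermiCommutator{a}{b_i}=0$ to force $\FermiCommutator{a}{a_i}\neq 0$ for some $a_i$, and swap roles for $\calB$. Your justification of the single-site step via the converse of \ref{lem:supp_by_commutation} with $S=\ZZ^\dd\setminus\{s\}$ is, if anything, more direct than the paper's brief appeal to \ref{thm:FermiCenterCentrality}.
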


\begin{proof}
    Take any $a\in \calA\setminus \CC\one$ and $s\in \Supp(a)$.
    Since $\fMat(\{s\})$ is central simple,
    by~\ref{thm:FermiCenterCentrality} we can find $x\in \fMat(\{s\})$
    with $\FermiCommutator{a}{x}\neq 0$.
    Using the invertible property, we can write
    $x = \sum_i a_i b_i$, where $\Supp(a_i) \subseteq \{s\}^{+\ell}$
    and $b_i\in \calB$. Since $\calA$ and $\calB$ are graded,
    we can choose $a_i$ and $b_i$ to be homogeneous.
    Applying the formula~\eqref{eq:graded_derivation} to
    $\FermiCommutator{a}{\sum_i a_i b_i}$, we see that
    $\FermiCommutator{a}{a_i} \neq 0$ for some $a_i$.
    This $a_i$ satisfies the visible centrality.

    We follow the same argument to see that $\calB$ is visibly central.
    Take $b\in \calB \setminus \CC \one$
    and $t\in \Supp(b)$. Then there exists some
    $y\in \fMat(\{t\})$ with $\FermiCommutator{b}{y}\neq 0$.
    Writing $y = \sum_j a_j' b_j'$ with
    $\Supp(b_j') \subseteq \{t\}^{+\ell}$
    and $a_j'$, $b_j'$ homogeneous,
    we see that $\FermiCommutator{b}{b_j'}\neq 0$
    for some $b_j'$.
\end{proof}

\begin{lemma}\label{lem:VS_support}
    Let $\calA$ be a visibly central fermionic subalgebra of $\fMat(\ZZ^\dd)$
    with range $\ell$, and let $S\subseteq \ZZ^\dd$
    be any set of sites.
    Then any element $z\in \calA$ with 
    $\FermiCommutator{z}{a} = 0$ for all $a\in \calA\cap \fMat(S)$
    is supported on $(\ZZ^\dd \setminus S)^{+2\ell}$.
\end{lemma}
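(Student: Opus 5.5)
Argue by contrapositive on a single site. Suppose $t \in \Supp(z)$ but $t \notin (\ZZ^\dd\setminus S)^{+2\ell}$. Then $\{t\}^{+2\ell} \subseteq S$, i.e.\ every site within distance $2\ell$ of $t$ lies in $S$. The goal is to build some $a \in \calA \cap \fMat(S)$ with $\FermiCommutator{z}{a}\neq 0$, contradicting the hypothesis. If $z \in \CC\one$ its support is empty and there is nothing to prove. We may also assume $z$ is homogeneous: the hypothesis is linear and compatible with the grading, since $\calA\cap\fMat(S)$ is graded and the fermionic commutator of homogeneous pieces lands in distinct graded components. Supports also behave well, because $\Supp(z)=\Supp(z_0)\cup\Supp(z_1)$ by uniqueness of the tensor decomposition.

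The naive step is to apply visible centrality directly to $z$ at the site $t$. This gives $w \in \calA\cap\fMat(\{t\}^{+\ell})$ with $\FermiCommutator{z}{w}\neq 0$. Since $\{t\}^{+\ell}\subseteq \{t\}^{+2\ell}\subseteq S$, this $w$ lies in $\calA\cap \fMat(S)$, which is already the contradiction. So range $\ell$ would even give the sharper conclusion $(\ZZ^\dd\setminus S)^{+\ell}$.

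The worry is whether the definition of visibly central requires $a\in\calA$, and here it does: $z\in\calA$ is given. So the argument above is complete as it stands, and the factor $2\ell$ is slack. Presumably it is kept for uniformity with a later variant in which $z$ is only assumed to be in $\fMat(\ZZ^\dd)$ rather than in $\calA$. In that variant one would first use invertibility, or the fermionic commutant $\calB$, to move from $z$ to an element of $\calA$. That step would cost an extra $\ell$, which accounts for the $2\ell$.

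If I wanted to be robust to that reading, I would write $z$ locally near $t$ using \ref{lem:supp_by_commutation}. Since $t\in\Supp(z)$, some $x\in\fMat(\{t\})$ has $\FermiCommutator{z}{x}\neq0$. Expanding $x=\sum_i a_ib_i$ with $a_i\in\calA$ homogeneous and supported in $\{t\}^{+\ell}\subseteq S$, the graded Leibniz rule \eqref{eq:graded_derivation} forces $\FermiCommutator{z}{a_i}\neq0$ or $\FermiCommutator{z}{b_i}\neq0$ for some $i$. The former is again a contradiction. Ruling out the latter is where membership of $z$ in $\calA$ (so $z$ fermionically commutes with $\calB$) is used, and this is the only step needing care.
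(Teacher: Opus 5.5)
Your main argument is correct and is the same as the paper's proof. A site $t\in\Supp(z)$ with $\{t\}^{+2\ell}\subseteq S$ yields, by visible centrality, some $w\in\calA\cap\fMat(\{t\}^{+\ell})\subseteq\calA\cap\fMat(S)$ with $\FermiCommutator{z}{w}\neq 0$, a contradiction; you are right that only the $\ell$-neighborhood is used, so the $2\ell$ is slack. The homogeneity reduction and the invertibility-based variant in your last paragraphs are unnecessary here, and that variant would also need an invertibility hypothesis that this lemma does not assume.
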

That is, a fermionic central element of a subalgebra of~$\calA$ 
can appear only at the boundary.

\begin{proof}
    If $\Supp(z)$ intersects $S\setminus (\ZZ^\dd\setminus S)^{+2\ell}$
    at a site $s\in S$, the visible centrality states that
    there is some $w\in \calA$ supported on $\{s\}^{+\ell}$
    such that $\FermiCommutator{z}{w}\neq 0$.
    However, the $\ell$-neighborhood $\{s\}^{+\ell}$
    is still contained in~$S$, so $w\in \calA \cap \fMat(S)$,
    contradiction.
\end{proof}

\begin{lemma}\label{lem:mul_injective}
    Let $\calA$ and $\calB$ be fermionic subalgebras of $\fMat(\ZZ^\dd)$,
    with $\FermiCommutator{\calA}{\calB} = 0$.
    If $\calA$ is visibly central, then the multiplication map
    \begin{equation}
        \mul: \calA \otimes \calB \ni \sum_{i} a_i\otimes b_i \mapsto
        \sum_i a_i b_i\in \calA \calB
    \end{equation}
    is an isomorphism of fermionic algebras,
    where $\otimes$ is the fermionic tensor product.
\end{lemma}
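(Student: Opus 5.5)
The plan is to show that $\mul$ is a well-defined graded surjective homomorphism onto $\calA\calB$, and then to prove injectivity using visible centrality of~$\calA$.

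\textbf{Homomorphism and surjectivity.} Apply \ref{lem:TensorProdUniversalProperty} to the inclusions $\calA \hookrightarrow \fMat(\ZZ^\dd)$ and $\calB \hookrightarrow \fMat(\ZZ^\dd)$. These fermionically commute, so $\mul$ is a graded homomorphism. The sign conventions in the $*$-structure of the fermionic tensor product match $(ab)^\dagger = b^\dagger a^\dagger = (-1)^{ij} a^\dagger b^\dagger$ for homogeneous $a\in\calA_i$, $b \in \calB_j$, so $\mul$ is a $*$-map. Its image is the span of products $ab$, which is $\calA\calB$ (a subalgebra, since $\calA$ and $\calB$ fermionically commute). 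Surjectivity onto $\calA\calB$ is therefore tautological.

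\textbf{Injectivity.} Suppose $\sum_i a_i b_i = 0$ with $\sum_i a_i\otimes b_i \neq 0$. Choose the $b_i$ homogeneous and linearly independent, and the $a_i$ homogeneous and nonzero, and minimize the number of terms. The idea is to fermionically commute the relation with elements $w \in \calA$. Since $b_i$ fermionically commutes with $w$, the graded derivation rule \eqref{eq:graded_derivation} gives
\[
0 = \FermiCommutator{w}{\textstyle\sum_i a_i b_i} = \textstyle\sum_i \FermiCommutator{w}{a_i}\, b_i
\]
for homogeneous $w$, so $\sum_i \FermiCommutator{w}{a_i}\otimes b_i$ also lies in the kernel. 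Its first tensor factors are supported within the finite set $\bigcup_i\Supp(a_i)\cup\Supp(w)$.

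The main obstacle is to turn this into a descent. Minimality of the number of terms does not directly force $\FermiCommutator{w}{a_i} = 0$. I would instead run the argument in a finite region. Choose a finite set $S$ large enough that all $a_i, b_i \in \fMat(S)$, and pass to a linear-algebra statement there. Let $\calA_S = \calA\cap\fMat(S^{+\ell})$. The kernel $K$ of $\mul$ restricted to $(\calA\cap\fMat(S))\otimes\calB$ is invariant under $x\otimes b \mapsto \FermiCommutator{w}{x}\otimes b$ for $w\in\calA_S$. Take a nonzero element of $K$ whose $\calA$-part has minimal total support, or apply the derivations repeatedly. By visible centrality, each $a_i\notin\CC\one$ has a site $s\in\Supp(a_i)$ and a $w\in\calA\cap\fMat(\{s\}^{+\ell})$ with $\FermiCommutator{a_i}{w}\neq 0$. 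The cleaner route is probably to show that the joint fermionic commutant within $\mathrm{span}\{a_i\}$ of all such $w$ is $\CC\one$, and then argue as follows. The family of maps $\FermiCommutator{w}{\cdot}$ acts on the finite-dimensional space $V=\mathrm{span}\{a_i\}$ extended to the algebra they generate. Taking the finite-dimensional algebra $\calA'$ generated by the $a_i$ and the relevant $w$'s, visible centrality implies that $\FermiCent$ of a suitable finite-dimensional subalgebra of $\calA$ restricted to the bulk is trivial.

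Then I would use \ref{thm:Bicommutant-of-Central}-style reasoning. Write the kernel element as $\sum_k c_k\otimes b_k$ with the $b_k$ independent. Iterated fermionic commutators with elements of $\calA'$ generate, from any nonzero $c_k$, the element $\one$. This is because the graded two-sided ideal generated under left/right multiplication and commutators is everything, using that $\calA'$ can be taken central simple via visible centrality, so left and right multiplications by $\calA'$ span $\End(\calA')$. This yields a kernel element of the form $\sum_k \lambda_k\one\otimes b_k$ with some $\lambda_k\neq 0$. But then $\sum_k\lambda_k b_k=0$, contradicting the linear independence of the $b_k$. The delicate point, which I expect to be the real work, is producing a finite-dimensional central simple $\calA'\subseteq\calA$ containing all the $a_i$. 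Here I would use \ref{lem:VS_support} on a large box to show that the fermionic center of $\calA\cap\fMat(S)$ is supported near the boundary, and then enlarge $S$ so that the $a_i$ avoid that boundary region.
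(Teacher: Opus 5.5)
Your homomorphism and surjectivity steps are fine, and the shape of your injectivity argument is reasonable: the kernel $K$ of $\mul$ is a graded two-sided ideal, and simplicity on the $\calA$-side should reduce a nonzero kernel element to one of the form $\pi\otimes b$ with $b\neq 0$. The gap is exactly at the step you flag as delicate, and your proposed remedy does not close it.

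Lemma~\ref{lem:VS_support} only confines $\FermiCent(\calA\cap\fMat(S))$ to a shell near the boundary of $S$; it does not make it trivial. In general $\calA\cap\fMat(S)=\bigoplus_\mu\pi_\mu(\calA\cap\fMat(S))$ has several canonical projectors $\pi_\mu$ living in that shell. Moving the $a_i$ into the interior does not put them inside any central simple subalgebra of $\calA$. You give no other way to produce a finite-dimensional central simple $\calA'\subseteq\calA$ containing the $a_i$. The paper's only device of that kind, Proposition~\ref{thm:FindYinbetweenXZ}, needs a trace condition and a Majorana stabilization, and neither is available for a merely visibly central $\calA$. Moreover, even after passing to a simple piece with unit $\pi$, your final contradiction silently needs $\pi b\neq 0$, which is not automatic when $\pi\neq\one$.

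The missing idea is to work summand by summand and to impose the support condition on the $\calB$-side, not on the $a_i$. Take a kernel element $c=\sum_k c_k\otimes b_k$ with the $b_k$ linearly independent. Choose the box $S$ so large that all $c_k\in\fMat(S)$ and every $b_k$ is supported in a smaller box disjoint from the boundary shell of $S$; the paper takes $S(2L)$ against $S(L)$. Each $(\pi_\mu\otimes\one)c$ is again in $K$, and $c=\sum_\mu(\pi_\mu\otimes\one)c$. On the central simple corner $\pi_\mu(\calA\cap\fMat(S))$, your ideal argument then yields $\pi_\mu\otimes b\in K$ with $b\neq 0$, i.e.\ $\pi_\mu b=0$. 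This is impossible, because $\pi_\mu$ and $b$ are nonzero with disjoint supports. This disjoint-support nondegeneracy is the real use of visible centrality.

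The paper does precisely this, but also decomposes $\calB_L$ by its projectors $\tau_\nu$. Each $\pi_\mu\calA_{2L}\otimes\tau_\nu\calB_L$ is simple by Theorem~\ref{thm:tensor_prod_simple}, so $\mul$ is injective on it once $\pi_\mu\tau_\nu\neq0$, which again holds by disjoint supports. Your version, once repaired, avoids decomposing $\calB$. The cost is proving the graded fact that ideals of $\calA'\otimes\calB$ with $\calA'$ central simple come from ideals of $\calB$.

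One correction inside that ideal argument: for odd $\calA'$ the ordinary left and right multiplications do not span $\End(\calA')$. For $\calA'=\FermionMinus{1}$ they span only a two-dimensional algebra. What $K$ is stable under, acting on the first tensor factors of a homogeneous kernel element and up to an overall sign, is $x\mapsto wx$ together with the sign-twisted $x\mapsto(-1)^{|w||x|}xw$. It is these operators that generate $\End(\calA')$.
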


\begin{proof}
    First we check that $\mul$ is a
    fermionic algebra homomorphism.
    The multiplication map is by definition linear.
    Let $a_i,a_{i'}'\in \calA$, and $b_j, b_{j'}'\in\calB$
    be homogeneous elements, where $i,i',j,j'\in\{0,1\}$
    denote the grading.
    Then by the definition of fermionic tensor product,
    we have 
    $\mul((a_i\otimes b_j)^\dagger)=\mul((-1)^{ij}(a_i^\dagger\otimes b_j^\dagger))=(-1)^{ij}a_i^\dagger b_j^\dagger = b_j^\dagger a_i^\dagger = \mul(a_i\otimes b_j)^\dagger$
    and $(a_i\otimes b_j)(a_{i'}'\otimes b_{j'}')
    = (-1)^{i'j}(a_ia_{i'}'\otimes b_jb_{j'}')$,
    so 
    \begin{equation}
        \mul((a_i\otimes b_j)(a_{i'}'\otimes b_{j'}'))
        = (-1)^{i'j} a_ia_{i'}'b_jb_{j'}'
        = a_ib_j a_{i'}'b_{j'}'
        = \mul(a_i\otimes b_j)\mul(a_{i'}'\otimes b_{j'}').
    \end{equation}
    Hence $\mul$ is a homomorphism.
    
    Since $\mul$ is surjective onto $\calA\calB$
    by definition, we only need to show injectivity.
    Consider finite boxes $S(L) = [-L,L]^\dd \cap \ZZ^\dd$,
    and define $\calA_{2L} = \calA\cap \fMat(S(2L))$
    and $\calB_L = \calB\cap \fMat(S(L))$.
    These finite-dimensional fermionic algebras
    decompose into the direct sum of central
    (and hence simple by~\ref{thm:FiniteDimensionalCentralIsSimple})
    fermionic subalgebras by their canonical projectors:
    if $\FermiCent(\calA_{2L})_0 = \spn_\CC \{\pi_\mu\}$
    and $\FermiCent(\calB_L)_0 = \spn_\CC \{\tau_\nu\}$
    with projectors $\{\pi_\mu\}$ and $\{\tau_\nu\}$,
    then $\calA_{2L} \cong \bigoplus_{\mu} (\pi_\mu \calA_{2L})$
    and $\calB_L \cong \bigoplus_{\nu} (\tau_\nu \calB_L)$.
    Then
    \begin{equation}
        \calA_{2L}\otimes \calB_L \cong \bigoplus_{\mu, \nu}
        (\pi_\mu \calA_{2L}\otimes \tau_\nu \calB_L),
    \end{equation}
    where each $(\pi_\mu \calA_{2L}\otimes \tau_\nu \calB_L)$
    is central simple by \ref{thm:tensor_prod_simple}.
    Then $\mul$ is the sum of maps $\mul_{\mu, \nu,L}$,
    the restrictions of $\mul$ to
    $(\pi_\mu \calA_{2L}\otimes \tau_\nu \calB_L)$,
    and each $\mul_{\mu,\nu,L}$ is injective
    whenever it is nonzero because its domain is simple.
    Since $\calA$ is visibly central,
    $\pi_\mu$ is supported near the boundary of
    $S(2L)$ by \ref{lem:VS_support},
    while $\tau_\nu$ is supported in $S(L)$.
    For $L$ sufficiently large, $\pi_\mu$ and $\tau_\nu$
    have disjoint support, so
    $\mul_{\mu,\nu,L}(\pi_\mu \one
    \otimes \tau_\nu \one) = \pi_\mu\tau_\nu$
    is nonzero.
    Hence $\mul_{\mu,\nu,L}$ is injective for all
    sufficiently large~$L$.

    Now we show that $\mul$ is injective.
    Suppose $c = \sum_i a_i\otimes b_i$
    is in the kernel of $\mul$;
    then $c$ is in the kernel of $\mul$
    restricted to $\calA_{2L}\otimes \calB_L$
    for sufficiently large $L$, so we can write
    \begin{equation}
        \mul(c) = \sum_{\mu, \nu, i}\pi_\mu a_i \tau_\nu b_i = 0.
    \end{equation}
    Since $\pi_\mu \pi_{\mu'} = \delta_{\mu, \mu'} \pi_{\mu'}$
    and $\tau_\nu \tau_{\nu'} = \delta_{\nu, \nu'} \tau_{\nu'}$,
    multiplying by $\pi_{\mu'}\tau_{\nu'}$ yields
    $\sum_i \pi_{\mu'}a_i \tau_{\nu'}b_i = 0$,
    which is equivalent to $\mul_{\mu', \nu'}(c) = 0$.
    As $\mul_{\mu',\nu'}$ is injective,
    this forces $c = 0$.
\end{proof}

\begin{definition}\label{def:BoundedSpread}
    A graded homomorphism~$\phi$ from 
    a fermionic subalgebra~$\calA$ of~$\fMat(\ZZ^\dd, p)$ to a subalgebra of~$\fMat(\ZZ^\dd, p')$
    has \emph{spread~$\ell$}
    if $\Supp (\phi(a)) \subseteq \Supp(a)^{+\ell}$ for all~$a \in \calA$.
    We will say $\phi$ is \emph{bounded-spread} 
    if $\phi$ has spread~$\ell$ for some~$\ell \in [0,\infty)$.
\end{definition}

\begin{corollary}
    \label{cor:invertible_isomorphism}
    Let $\calA$ be a fermionic subalgebra of $\fMat(\ZZ^\dd)$
    with fermionic commutant~$\calB$.
    Then $\calA$ is invertible if and only if the
    multiplication map
    \begin{equation}
        \mul: \calA\otimes \calB\to \fMat(\ZZ^\dd)
    \end{equation}
    has a bounded-spread inverse.
\end{corollary}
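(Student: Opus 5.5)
Both directions reduce to the multiplication map $\mul$ of \ref{lem:mul_injective}.

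For ``only if'', assume $\calA$ is invertible with spread~$\ell$. Then \ref{lem:invertible_VS} makes $\calA$ visibly central, so \ref{lem:mul_injective} shows that $\mul : \calA \otimes \calB \to \calA\calB$ is an isomorphism of fermionic algebras onto its image. The invertibility decomposition $x = \sum_i a_i b_i$ exists for every $x \in \fMat(\ZZ^\dd)$, so $\calA\calB = \fMat(\ZZ^\dd)$ and $\mul$ is bijective. Its inverse sends $x$ to $\sum_i a_i \otimes b_i$, which is well defined by injectivity, so any decomposition may be used.

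We still have to say what ``spread'' means for a map into $\calA \otimes \calB$. The natural reading is support inside the ambient $\fMat(\ZZ^\dd) \otimes \fMat(\ZZ^\dd)$, which is a doubled-layer lattice algebra, i.e.\ $\fMat(\ZZ^\dd, p \cdot p)$ with local assignment the product $p\cdot p$. Since $\Supp(a_i), \Supp(b_i) \subseteq \Supp(x)^{+\ell}$, we get $\Supp(\mul^{-1}(x)) \subseteq \Supp(x)^{+\ell}$, so $\mul^{-1}$ has spread~$\ell$. The one check needed here is that the support of $\sum_i a_i \otimes b_i$ in the doubled lattice lies in the union of the supports of the factors. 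This holds because each term lies in $\fMat(R) \otimes \fMat(R)$ with $R = \Supp(x)^{+\ell}$.

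For ``if'', assume $\mul$ has an inverse of spread~$\ell$. Then $\mul$ is surjective, so every $x$ equals $\mul(\mul^{-1}(x))$. Write $\mul^{-1}(x) = \sum_i a_i \otimes b_i$ with the $a_i$ linearly independent and the $b_i$ linearly independent, using the finite-dimensional spaces $\calA \cap \fMat(R)$ and $\calB \cap \fMat(R)$ for $R$ large.

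The step I expect to be the real obstacle is showing that the support of $\mul^{-1}(x)$, a subset of $R = \Supp(x)^{+\ell}$, forces each $a_i$ and $b_i$ to be supported in $R$. For a tensor in $\fMat(\ZZ^\dd) \otimes \fMat(\ZZ^\dd)$ lying in $\fMat(R) \otimes \fMat(R)$, I would use a reduced (minimal-length) decomposition. Its left factors span the image of the slice map $\fMat(\ZZ^\dd)^* \to \fMat(\ZZ^\dd)$, and that image lies in $\fMat(R)$; likewise for the right factors. This is the standard fact that a tensor in $U\otimes V$ with $U \subseteq U'$, $V \subseteq V'$ has its reduced factors in $U, V$. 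Applied to the chosen basis decomposition, it gives $\Supp(a_i), \Supp(b_i) \subseteq R$. Hence $x = \sum_i a_i b_i$ with the required supports, and $\calA$ is invertible with spread~$\ell$.
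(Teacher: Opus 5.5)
Your proposal is correct and follows the paper's proof. For ``only if'' you use visible centrality from \ref{lem:invertible_VS}, injectivity from \ref{lem:mul_injective}, and surjectivity plus the spread bound from the invertibility decomposition; for ``if'' you read the decomposition off $\mul^{-1}(x)$. Your reading of spread in the doubled lattice and your reduced-decomposition (slice map) argument just spell out the step the paper states in one line, that $\mul^{-1}$ ``exactly gives the decomposition for invertibility.''
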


\begin{proof}
    By \ref{lem:invertible_VS}, $\calA$ invertible
    implies $\calA$ is visibly central; then \ref{lem:mul_injective}
    gives $\mul$ injective.
    The definition of invertibility is equivalent
    to $\mul$ being surjective.
    On the other hand,
    if $\mul$ has a bounded-spread inverse,
    $\mul^{-1}$ exactly gives the decomposition
    for invertibility.
\end{proof}

\begin{proposition}\label{thm:InvertibleSubalgebraIsLocallyGenerated}
    Any invertible subalgebra $\calA$ of $\fMat(\ZZ^\dd)$
    of spread $\ell$ is central simple and
    $2\ell$-locally generated.
    In fact, any $a \in \calA$ is a sum of products of $2\ell$-local generators of~$\calA$
    that are each supported on~$\Supp(a)^{+\ell}$.
\end{proposition}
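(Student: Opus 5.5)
Central simplicity comes first. By \ref{lem:invertible_VS}, $\calA$ is visibly central, hence central: a nonscalar even central element $z$ would have some site $s \in \Supp(z)$ and some $w \in \calA$ near $s$ with $\FermiCommutator{z}{w}\neq 0$, which is impossible. The same argument applies to $\FermiCent(\calA)$. Simplicity does not follow immediately, since $\calA$ is infinite dimensional and \ref{thm:FiniteDimensionalCentralIsSimple} only covers finite dimensions. Instead, I would take a nonzero graded two-sided ideal $\calJ\subseteq\calA$ and a nonzero $j\in\calJ$ supported in a box $S(L)$. I would then consider the finite-dimensional fermionic algebra $\calA_{L'}=\calA\cap\fMat(S(L'))$ for $L'\gg L$. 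By \ref{thm:Wedderburn-Fermionic}, it decomposes through canonical projectors $\pi_\mu$, which lie in $\FermiCent(\calA_{L'})$ and are therefore supported near the boundary of $S(L')$ by \ref{lem:VS_support}. Since $\calA_{L'}j\calA_{L'}$ is a graded ideal of $\calA_{L'}$, it contains $\sum_{\mu:\pi_\mu j\neq 0}\pi_\mu$. Each $\pi_\mu$ is a nonzero element supported away from $\Supp(j)$, so $\pi_\mu j\neq 0$ for every $\mu$ (by the tensor structure of $\fMat$). Hence $\one=\sum_\mu\pi_\mu\in\calJ$ and $\calJ=\calA$.

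For local generation, I would use the invertibility decomposition. Take $a\in\calA$ and apply the defining property to $x=a$ itself, which gives $a=\sum_i a_ib_i$ with $\Supp(a_i),\Supp(b_i)\subseteq \Supp(a)^{+\ell}$. That alone does not make $a$ a product of local elements, so the plan is to first decompose $\fMat$'s single-site generators. For each site $s$ and each homogeneous $x\in\fMat(\{s\})$, write $x=\sum_k a^{(s,x)}_k b^{(s,x)}_k$ with $a^{(s,x)}_k$ supported on $\{s\}^{+\ell}$, a set of diameter $2\ell$. These $a^{(s,x)}_k$ are the candidate generators. Next, I would project onto $\calA$. By \ref{cor:invertible_isomorphism}, $\mul^{-1}$ identifies $\fMat\cong\calA\otimes\calB$, and $\calB$ is central (\ref{lem:invertible_VS}). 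One can therefore choose a linear functional $\omega$ on $\calB$ with $\omega(\one)=1$, for example the tracial state on the finite-dimensional algebra $\calB\cap\fMat(R)$ for large enough $R$, and define the slice map $E=(\id\otimes\omega)\circ\mul^{-1}:\fMat\to\calA$. It is linear, restricts to the identity on $\calA$, and has bounded spread. However, $E$ is not multiplicative, so this step needs care.

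The cleaner route is to write $a$ as a polynomial in single-site operators $x^{(s)}$ for $s\in\Supp(a)$ and substitute $x^{(s)}=\sum_k a^{(s)}_kb^{(s)}_k$. Reordering with signs (all factors homogeneous and $\FermiCommutator{\calA}{\calB}=0$), each monomial becomes $\pm(\prod a)(\prod b)$, so $a=\sum_m \alpha_m\beta_m$ where each $\alpha_m$ is a product of the $2\ell$-local generators supported in $\Supp(a)^{+\ell}$ and each $\beta_m\in\calB$. By injectivity of $\mul$ (\ref{lem:mul_injective}) the element $\sum_m\alpha_m\otimes\beta_m$ equals $a\otimes\one$. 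Applying $\id\otimes\omega$, where $\omega$ is a tracial state of a finite-dimensional algebra containing all the $\beta_m$, gives $a=\sum_m\omega(\beta_m)\alpha_m$. This is a sum of products of generators supported on $\Supp(a)^{+\ell}$, as claimed.

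The main obstacle I expect is justifying the slice functional $\omega$ on the infinite-dimensional $\calB$ and making sure $\id\otimes\omega$ is well defined on $\calA\otimes\calB$. Since only finitely many $\beta_m$ occur, I would restrict to $\calB\cap\fMat(S(R))$, which is a finite-dimensional fermionic algebra, and use one of its tracial states, whose existence is noted after \ref{thm:trIsUnique}. The one thing to confirm is that this subalgebra contains $\one$, which it does.
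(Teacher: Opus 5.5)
Your proof is correct. The local-generation half is essentially the paper's argument: you substitute the single-site $\calA\calB$ decompositions, expand, and use injectivity of $\mul$. The paper takes the $\calB$-factors linearly independent and reads off $a\in\spn\{\alpha_m\}$; your slice $\id\otimes\omega$ is an equivalent way to do this. The obstacle you anticipate is not real. Any linear functional on $\calB$ with $\omega(\one)=1$ defines $\id\otimes\omega$ on the algebraic tensor product, with no trace or positivity needed, and the detour through $E$ can be dropped.

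The simplicity half takes a genuinely different route. The paper argues globally. It first shows $\fMat(\ZZ^\dd)$ is simple, then observes that $\calJ\calB$ is a nonzero graded ideal of $\fMat(\ZZ^\dd)$, hence all of it. It then compares $\mul$ on $\calJ\otimes\calB$ and on $\calA\otimes\calB$ via the cancellation lemma \ref{thm:CancellationPropertyForIsomorphisms} to get $\calJ=\calA$.

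You instead work inside a single finite box. The canonical projectors of $\calA\cap\fMat(S(L'))$ are confined to within $2\ell$ of the boundary by \ref{lem:VS_support}. Each such even projector therefore multiplies the deep element $j$ nontrivially, and graded simplicity of each summand puts $\one=\sum_\mu\pi_\mu$ into the ideal. This uses only visible centrality and never touches the commutant, so it in fact shows that every visibly central subalgebra is simple, which is more general. The paper's version instead reuses the $\calA\otimes\calB\cong\fMat(\ZZ^\dd)$ machinery that drives the rest of that section.

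One small fix is needed: choose $j$ homogeneous, which is possible because $\calJ$ is graded. Otherwise $\calA_{L'}j\calA_{L'}$ need not be a graded ideal. Its unit would then be built from ungraded central idempotents such as $\pi_\mu(\one\pm\gamma_\mu)/2$ on odd summands. These involve odd central elements $\gamma_\mu$ that are not fermionically central, so \ref{lem:VS_support} does not localize them.
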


\begin{proof}
    First we observe that $\fMat(\ZZ^\dd)$ is central simple.
    If $x\in \FermiCent(\fMat(\ZZ^\dd))$,
    then certainly $x \in \FermiCent(\fMat(\Supp(x)))$, 
    so $x \in \CC\one$.
    If $\calI$ is a nonzero graded two-sided ideal
    of~$\fMat(\ZZ^\dd)$, we take any nonzero 
    homogeneous element $x\in \calI$.
    Choose any finite~$S$ containing $\Supp(x)$,
    and consider the two-sided ideal~$\calX$ of $\fMat(S)$
    generated by~$x$, which is graded as~$x$ is homogeneous.
    Since $\fMat(S)$ is simple,
    we must have $\calX = \fMat(S) \subseteq \calI$.
    This forces $\calI = \fMat(\ZZ^\dd)$,
    as $S$ can be arbitrarily large.

    If $\calA$ is an invertible subalgebra of $\fMat(\ZZ^\dd)$,
    from \ref{lem:invertible_VS} we know that
    $\calA$ is visibly central and hence central.
    Now suppose $\calJ$ is a nonzero graded two-sided
    ideal of~$\calA$.
    Let $\calB$ be the fermionic commutant of $\calA$
    within $\fMat(\ZZ^\dd)$; we claim that
    $\calJ \calB$ is a nonzero graded
    two-sided ideal of $\fMat(\ZZ^\dd)$.
    It is nonzero because it contains $\calJ \one = \calJ$,
    and it is graded since $\calJ$ and $\calB$ are individually graded.
    To see that it is a two-sided ideal,
    we need to prove that for homogeneous
    $y_j\in \calJ$, $z_k\in \calB$, and $x\in \fMat(\ZZ^\dd)$,
    $x y_j z_k$ and $y_j z_k x$ are both in $\calJ \calB$.
    By the invertibility, we can write
    $x = \sum_i a_i b_i$ with each $a_i$ and $b_i$ homogeneous.
    We have
    \begin{align}
        a_i b_i y_j z_k &= \pm (a_i y_j) (b_i z_k) \in \calJ\calB, \\
        y_j z_k a_i b_i &= \pm (y_j a_i) (z_k b_i) \in \calJ\calB, \nonumber
    \end{align}
    where the sign depends on the grading of the elements.
    Thus $\calJ\calB$ is a nonzero graded two-sided ideal of
    $\fMat(\ZZ^\dd)$, which forces $\calJ\calB = \fMat(\ZZ^\dd)$.
    Now since $\calB$ is visibly central by \ref{lem:invertible_VS},
    the result of~\ref{lem:mul_injective} implies that
    $\mul: \calJ \otimes \calB\to \fMat(\ZZ^\dd)$ is an isomorphism.
    But we already have (by \ref{cor:invertible_isomorphism})
    the isomorphism $\mul: \calA\otimes \calB\to \fMat(\ZZ^\dd)$,
    so $\calJ = \calA$ by~\ref{thm:CancellationPropertyForIsomorphisms}.
    Hence $\calA$ is simple.
    
    Now we show that $\calA$ is $2\ell$-locally generated.
    Let $A_s$ be the set of all $a_j$'s appearing in
    the $\calA\calB$-decomposition $x = \sum_j a_jb_j$
    for any single-site operator $x\in \fMat(\ZZ^\dd)$.
    By the definition of invertibility,
    each $a_j$ is supported on the $\ell$-neighborhood
    of a single site, and hence are $2\ell$-local.
    We will show that $\calA$ is generated by~$A_s$.
    Any $a\in \calA$ is a finite sum of finite products
    of single-site operators, and hence can be written
    \begin{equation}
        a = \sum_i \prod_j x_{ij}
    \end{equation}
    for single-site $x_{ij}$. 
    Clearly, these single-site $x_{ij}$ may be chosen to be supported on~$\Supp(a)$.
    Inputting the decomposition
    $x_{ij} = \sum_k a_{ijk} b_{ijk}$ for each
    single-site operator~$x_{ij}$ and expanding,
    we obtain a decomposition $a = \sum_\ell a_\ell'b_\ell'$,
    where $a_\ell'\in \braket{A_s}$ are supported on~$\Supp(a)^{+\ell}$.
    Given this decomposition, we can furthermore
    assume that the $b_\ell'$'s are linearly independent
    while retaining $a_\ell'\in \braket{A_s}$.
    Hence we have $\mul(a\otimes \one) = a =
    \mul(\sum_\ell a_\ell'\otimes b_\ell')$,
    so pulling back by~\ref{cor:invertible_isomorphism}
    gives $a\in \spn_{\CC}\{a_\ell'\} \subset \braket{A_s}$,
    as needed.
\end{proof}

\begin{proposition}\label{thm:BicommutantOfInvertible}
    If $\calA$ is an invertible subalgebra of $\fMat(\ZZ^\dd)$
    with fermionic commutant~$\calB$, then the fermionic
    commutant of~$\calB$ is~$\calA$.
    Therefore, a fermionic subalgebra of $\fMat(\ZZ^\dd)$
    is invertible if and only if its fermionic commutant
    is invertible.
\end{proposition}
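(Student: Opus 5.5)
The plan is to use the multiplication isomorphism of \ref{cor:invertible_isomorphism} to show that the fermionic commutant of $\calB$ is exactly $\calA$, and then to read the equivalence off that identity and the symmetry of the defining decomposition.

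\textbf{Step 1: visible centrality gives $\FermiCent(\calB)=\CC\one$.} Let $\calA$ be invertible of spread $\ell$ with fermionic commutant $\calB$. By \ref{lem:invertible_VS}, $\calB$ is visibly central. Taking $w$ from the definition of visible centrality, every $b\in\calB\setminus\CC\one$ fails to fermionically commute with some $w\in\calB$. Hence $\FermiCent(\calB)=\CC\one$, including its odd part. I use the definition directly here, because \ref{thm:FermiCenterCentrality} was only needed for spectrally nondegenerate algebras and visible centrality gives the full statement.

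\textbf{Step 2: $\FermiCommutant{\calB}{\fMat(\ZZ^\dd)}=\calA$.} The inclusion $\calA\subseteq\FermiCommutant{\calB}{\fMat(\ZZ^\dd)}$ is trivial. For the reverse, let $x$ fermionically commute with all of $\calB$; splitting into even and odd parts, we may take $x$ homogeneous. By \ref{cor:invertible_isomorphism}, $x=\mul(t)$ for a unique $t=\sum_i a_i\otimes b_i\in\calA\otimes\calB$. We may take the $a_i$ homogeneous and linearly independent, and the $b_i$ homogeneous. For homogeneous $b\in\calB$, the bracket $\FermiCommutator{x}{b}$ is the image under $\mul$ of $\sum_i \pm\, a_i\otimes\FermiCommutator{b_i}{b}$. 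This follows from \eqref{eq:graded_derivation} and $\FermiCommutator{\calA}{\calB}=0$; the sign depends only on the parities of $a_i$ and $b$. Injectivity of $\mul$ together with linear independence of the $a_i$ then forces $\FermiCommutator{b_i}{b}=0$ for all $i$ and all $b$. So each $b_i\in\FermiCent(\calB)=\CC\one$ by Step 1, and therefore $x\in\calA$. This is the step I expect to need the most care, namely the sign bookkeeping in passing $\one\otimes b$ through $a_i\otimes b_i$; it is routine once everything is homogeneous.

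\textbf{Step 3: $\calB$ is invertible.} Given $x\in\fMat(\ZZ^\dd)$, write $x=\sum_i a_ib_i$ with homogeneous factors supported in $\Supp(x)^{+\ell}$. Since $\FermiCommutator{a_i}{b_i}=0$, we have $a_ib_i=\pm b_ia_i$. This gives the required decomposition with respect to the pair $(\calB,\FermiCommutant{\calB}{\fMat(\ZZ^\dd)})=(\calB,\calA)$, with the same spread $\ell$.

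\textbf{Step 4: the equivalence.} The forward direction is Step 3. Conversely, if the commutant $\calB$ of a subalgebra $\calA$ is invertible, Steps 1--3 applied to $\calB$ show that its commutant is invertible, with the same spread. The decomposition $x=\sum_i b_ia_i$ obtained there is a decomposition with respect to $\calA$ and its commutant $\calB$. Identifying the commutant of $\calB$ with $\calA$ uses $\calA=\FermiCommutant{\calB}{\fMat(\ZZ^\dd)}$, which holds for a commutant pair; this is exactly the bicommutant identity of Step 2.

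Caveat: this identification is where the unconditional converse can fail. If $\calA$ is a proper subalgebra of its bicommutant, $\calA$ need not be invertible even when $\calB$ is. For example, if the even part of every single-site algebra is noncommutative, take $\calA$ to be the subalgebra of all even operators. Its fermionic commutant is $\CC\one$, which is invertible since its own commutant is all of $\fMat(\ZZ^\dd)$, yet $\calA$ itself is not invertible. So I would state the converse for subalgebras equal to their fermionic bicommutant, which is automatic once either member of the pair is invertible.
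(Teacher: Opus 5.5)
Your Steps 1--3 follow the paper's argument. You expand $x$ through the inverse of $\mul$, apply the graded Leibniz rule, and use injectivity of $\mul$ together with linear independence of the $a_i$ to get $\FermiCommutator{b}{b_i}=0$, hence $b_i\in\FermiCent(\calB)=\CC\one$. Reading $\FermiCent(\calB)=\CC\one$ directly off visible centrality is equivalent to the paper's appeal to centrality of $\calB$; that lemma does apply, since $\calB$ is a unital fermionic subalgebra and hence spectrally nondegenerate. Your Step~3 correctly supplies the forward direction of the ``therefore'' sentence, which the paper does not spell out.

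Your caveat on the converse is right in substance. As literally worded, ``invertible iff its commutant is invertible'' fails for a subalgebra strictly smaller than its fermionic bicommutant. The sentence should be read for commutant pairs, and only the forward direction is used later in the paper. However, the hypothesis you attach to your counterexample is wrong: noncommutativity of the single-site even parts is irrelevant. What you need is a nonzero odd part on infinitely many sites.

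\emph{Your hypothesis is not sufficient.} If every site carries $\FermionPlus{2}{0}$, then the grading is trivial and $\fMat(\ZZ^\dd)_0=\fMat(\ZZ^\dd)$ is invertible. So your example fails under your own hypothesis.

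\emph{Your hypothesis is not necessary.} Put $\FermionPlus{1}{1}$ on every site; its even part $\CC\one\oplus\CC\Gamma_s$ is commutative, yet the example works. Any $y$ that commutes with all even operators and is supported in a finite $S$ lies in $\spn\{\one,\Gamma_{S'}\}$ for every finite $S'\supseteq S$. Moreover, $\Gamma_{S'}$ anticommutes with even bilinears straddling the boundary of $S'$. Hence $\FermiCommutant{\fMat(\ZZ^\dd)_0}{\fMat(\ZZ^\dd)}=\CC\one$, which is invertible, while $\fMat(\ZZ^\dd)_0$ is not.

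Your closing clause also needs correcting. Invertibility of $\FermiCommutant{\calA}{\fMat(\ZZ^\dd)}$ makes \emph{that} algebra equal to its bicommutant, not $\calA$, as your own example shows. The correct converse is: if $\calA$ equals its fermionic bicommutant, then $\calA$ is invertible iff its fermionic commutant is.
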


\begin{proof}
    Let $\calB'$ be the fermionic commutant of~$\calB$.
    It is clear that $\calA \subseteq \calB'$.
    To show the reverse inclusion, take any $x\in \calB'$.
    By the invertibility, we can write
    $x = \mul(\sum_j a_j\otimes b_j)$,
    with each~$a_j$ and~$b_j$ homogeneous.
    In addition, we assume $a_j$ are linearly independent.

    Then, by the Leibniz rule~\eqref{eq:graded_derivation},
    for any homogeneous $b \in \calB$,
    we have $0 = \FermiCommutator{b}{x} = \sum_j \pm a_j \FermiCommutator{b}{b_j}
    = \mul(\sum_j \pm a_j \otimes \FermiCommutator{b}{b_j})$
    where the sign depends on the grading of the elements.
    By the injectivity of~$\mul$ and the linear independence of~$a_j$,
    we must have $\FermiCommutator{b}{b_j} = 0$ for all~$j$.
    Since $b\in \calB$ is arbitrary and $\calB$ is central,
    this forces $b_j\in \CC\one$,
    so $x = \sum_j a_j b_j \in \calA$.
\end{proof}

\begin{lemma}
    \label{lem:full_mat_tensor_factor}
    Let $\psi: \fMat(\ZZ^\dd, p)\to \fMat(\ZZ^\dd, q)$
    be a bounded-spread embedding of one fermionic
    algebra into another. If a subalgebra
    $\calM\subseteq \fMat(\ZZ^\dd, q)$ contains
    $\psi(\fMat(T,p))$ for some $T\subseteq \ZZ^\dd$,
    then the multiplication map
    \begin{equation}
        \mul: \psi(\fMat(T,p))\otimes
        \Commf(\psi(\fMat(T,p)),\calM)\to \calM
    \end{equation}
    is an isomorphism of fermionic algebras.
\end{lemma}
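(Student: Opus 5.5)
The plan is to show that $\mul$ is a graded homomorphism, that it is injective by reduction to \ref{lem:mul_injective}, and that it is surjective by a finite-dimensional decomposition made compatible across growing regions. Write $\calA = \psi(\fMat(T,p))$ and $\calC = \Commf(\calA,\calM)$, and let $\ell$ be the spread of $\psi$. I assume, as the word ``embedding'' suggests, that $\psi$ is a unital injective graded homomorphism.

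\emph{Homomorphism.} Since $\FermiCommutator{\calA}{\calC}=0$, the computation in the first paragraph of the proof of \ref{lem:mul_injective} applies verbatim. So $\mul$ is a graded $*$-homomorphism into $\calM$.

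\emph{Injectivity.} I will show that $\calA$ is visibly central of range $2\ell$ and then invoke \ref{lem:mul_injective}, with $\calC$ playing the role of $\calB$ inside $\fMat(\ZZ^\dd,q)$.
\begin{itemize}
\item Take $a=\psi(x)\notin\CC\one$ and a site $s\in\Supp(a)$. Since $\Supp(a)\subseteq\Supp(x)^{+\ell}$, there is $t\in\Supp(x)\subseteq T$ with $\dist(s,t)\le\ell$.
\item By centrality of $\fMat(\{t\},p)$ and the argument of \ref{lem:supp_by_commutation}, there is a single-site $y\in\fMat(\{t\},p)$ with $\FermiCommutator{x}{y}\neq0$.
\item Injectivity of $\psi$ then gives $\FermiCommutator{\psi(x)}{\psi(y)}\neq0$.
\item Finally, $\psi(y)\in\calA\cap\fMat(\{t\}^{+\ell})\subseteq\fMat(\{s\}^{+2\ell})$, which is the required witness.
\end{itemize}

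\emph{Surjectivity.} This is the main obstacle. Fix $m\in\calM$ and put $F=T\cap\Supp(m)^{+2\ell}$, which is finite. Let $G\supseteq F$ be any finite subset of $T$, and write $\calA_G=\psi(\fMat(G,p))$. This algebra is finite-dimensional and central, and $\calA_G=\calA_F\otimes\calA_{G\setminus F}$.
\begin{itemize}
\item Choose a finite $S$ containing $\Supp(m)$ and the supports of $\calA_G$. Then $\calA_G\subseteq\calM_S:=\calM\cap\fMat(S,q)$, and $\calM_S$ is a finite-dimensional fermionic algebra.
\item By \ref{thm:Bicommutant-of-Central} (and its proof via \ref{thm:Commutant-of-central-in-central}), $\calA_G$ is a tensor factor of $\calM_S$. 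Hence $m=\sum_i a_i c_i$ with $a_i$ running over a fixed homogeneous basis of $\calA_G$ and $c_i\in\Commf(\calA_G,\calM_S)$. The $c_i$ are unique, because the multiplication map $\calA_G\otimes\Commf(\calA_G,\calM_S)\to\calM_S$ is an isomorphism.
\item Every element of $\calA_{G\setminus F}$ is supported at distance $>\ell$ from $\Supp(m)$, so it fermionically commutes with $m$.
\item Take the basis of $\calA_G$ to be a product of a basis of $\calA_F$ with a basis of $\calA_{G\setminus F}$ containing $\one$. Apply the Leibniz rule~\eqref{eq:graded_derivation} and uniqueness of the decomposition, exactly as in \ref{thm:BicommutantOfInvertible}. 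Using centrality of $\calA_{G\setminus F}$, this forces the coefficients of all basis elements outside $\calA_F\otimes\one$ to vanish.
\item Hence $m=\sum_j f_j c_j^{(G)}$, where $\{f_j\}$ is a fixed basis of $\calA_F$ and $c_j^{(G)}\in\Commf(\calA_G,\calM)$.
\end{itemize}

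It remains to remove the dependence on $G$.
\begin{itemize}
\item For $G\subseteq G'$, the decomposition obtained from $G'$ is also a valid decomposition of $m$ with respect to $\calA_F$. By injectivity of multiplication on $\calA_F\otimes\Commf(\calA_F,\calM_{S'})$ (again \ref{thm:Bicommutant-of-Central}), it follows that $c_j^{(G)}=c_j^{(G')}$.
\item Therefore each $c_j:=c_j^{(G)}$ is independent of $G$. Since every finite subset of $T$ lies in some admissible $G$, each $c_j$ fermionically commutes with all of $\calA$, that is, $c_j\in\calC$.
\end{itemize}
This shows $m=\mul(\sum_j f_j\otimes c_j)$, so $\mul$ is surjective. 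The delicate point is this compatibility argument: the finite-dimensional tensor-factor decomposition must be stable as $G$ grows, and it is made stable by the choice of $F$ together with uniqueness of the decomposition.
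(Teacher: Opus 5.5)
Your proof is correct. The injectivity half matches the paper, which also invokes Lemma~\ref{lem:mul_injective}. Your four-step check that $\psi(\fMat(T,p))$ is visibly central of range $2\ell$ fills in what the paper only asserts. Your unitality assumption is in fact automatic: $\Supp(\psi(\one))\subseteq\emptyset^{+\ell}=\emptyset$ forces the projector $\psi(\one)$ to be a scalar, hence $\one$ since $\psi\neq 0$.

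For surjectivity the paper takes a shorter path. It lets $R\subseteq T$ be the finite set of sites $s$ for which some single-site $y$ at $s$ has $\FermiCommutator{\psi(y)}{x}\neq 0$, and decomposes $x=\sum_i a_i b_i$ once, relative to $\psi(\fMat(R,p))$ inside $\calM\cap\fMat(S,q)$. Then, for any homogeneous $z\in\fMat(T\setminus R,p)$, $\psi(z)$ fermionically commutes with $x$ and with every $a_i$. The Leibniz rule and linear independence of the $a_i$ then give $\FermiCommutator{b_i}{\psi(z)}=0$ directly. This avoids enlarging to $\calA_G$, using centrality of $\calA_{G\setminus F}$, and gluing over $G$.

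Your route is longer, but every uniqueness claim you use lives inside a single finite-dimensional algebra $\calM_S$ or $\calM_{S'}$. The paper's one-step argument tacitly needs uniqueness of the decomposition in $\psi(\fMat(R,p))\otimes\Commf(\psi(\fMat(R,p)),\fMat(\ZZ^\dd,q))$, because $\psi(z)$ may be supported outside $S$. That uniqueness is supplied by Lemma~\ref{lem:mul_injective}. Your compatibility step over growing $G$ is essentially the finite-dimensional substitute for that injectivity.
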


\begin{proof}
    Injectivity follows from \ref{lem:mul_injective}
    since $\psi(\fMat(T,p))$ is visibly central.
    To prove surjectivity, take any $x\in \calM$.
    Let $R\subseteq T$ be the set of sites
    $s$ such that there exists a single-site operator
    $y\in \fMat(\{s\},p(s))$ with
    $\FermiCommutator{\psi(y)}{x}\neq 0$.
    Let $S = \Supp(x)\cup \Supp(\psi(\fMat(R,p(s))))$;
    since $\psi$ is bounded-spread, both $R$ and $S$ are finite.

    The fermionic algebra $\calM' \coloneq \calM \cap \fMat(S,q)$
    is finite-dimensional semisimple, so it can be written
    as a direct sum $\calM' = \bigoplus_i \pi_i \calM'$,
    where the $\{\pi_i\}$ are its canonical central projectors.
    Since the inclusion $\psi(\fMat(R,p))\hookrightarrow \calM'$
    is unital, $\pi_i(\psi(\fMat(R,p)))\hookrightarrow \pi_i\calM'$
    is a unital inclusion of central fermionic algebras,
    so we can apply \ref{thm:Commutant-of-central-in-central}
    to get 
    \begin{equation}
        \pi_i\calM' \cong \pi_i(\psi(\fMat(R,p)))\otimes
        \FermiCommutant{\pi_i(\psi(\fMat(R,p)))}{\pi_i\calM'}
    \end{equation}
    for each $i$.
    Now since $\pi_i \one_{\psi(\fMat(R,p))}\neq 0$
    and $\psi(\fMat(R,p))$ is simple,
    multiplication by $\pi$ is injective on
    $\psi(\fMat(R,p))$: in other words,
    $\pi_i(\psi(\fMat(R,p)))\cong \psi(\fMat(R,p))$.
    Hence taking the direct sum over $i$ yields
    \begin{equation}
        \begin{aligned}
        \calM'&\cong \psi(\fMat(R,p))\otimes
        \bigoplus_i \FermiCommutant{\pi_i(\psi(\fMat(R,p)))}{\pi_i\calM'} \\
        &\cong \psi(\fMat(R,p))\otimes \FermiCommutant{\psi(\fMat(R,p))}{\calM'}.
        \end{aligned}
    \end{equation}
    By definition, $x\in \calM'$, so we have a decomposition 
    $x = \sum_i a_ib_i$ where $a_i\in \psi(\fMat(R,p))$ and
    $b_i\in \Commf(\psi(\fMat(R,p)),\calM')$
    are homogeneous and the $\{a_i\}$ are linearly independent.

    We claim that this is the desired decomposition.
    We already know that each $b_i$ commutes with
    $\psi(\fMat(R,p))$; we need to show that
    $\FermiCommutator{b_i}{\psi(\fMat(T\setminus R,p))} = 0$.
    To that end, take any homogeneous
    $z\in \fMat(T\setminus R,p)$;
    by the definition of $R$, $\FermiCommutator{x}{z} = 0$.
    Then using the linear independence of the $a_i$'s, 
    expanding each $\FermiCommutator{a_ib_i}{z}$
    by the Leibniz rule \ref{eq:graded_derivation} 
    implies that $\FermiCommutator{b_i}{z} = 0$ for all $i$.
    So each $b_i$ is in
    $\FermiCommutant{\psi(\fMat(T,p))}{\calM}$,
    as needed.
\end{proof}

\begin{corollary}
    \label{cor:ISAInverseSameSupport}
    Let $\calA$ be an invertible subalgebra of $\fMat(\ZZ^\dd)$
    supported only on a region $R\subseteq \ZZ^\dd$.
    Then $\calA$ has an inverse on $\Mat(R)$:
    there exists a subalgebra
    $\calB_R\subseteq \fMat(R)$ such that
    $\calA\otimes \calB_R \cong \fMat(R)$.
\end{corollary}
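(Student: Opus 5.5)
The natural candidate is $\calB_R = \FermiCommutant{\calA}{\fMat(R)} = \calB \cap \fMat(R)$, where $\calB$ is the fermionic commutant of~$\calA$ in $\fMat(\ZZ^\dd)$. We then apply \ref{lem:full_mat_tensor_factor} with the roles arranged so that $\calA$ plays the part of the full-matrix-like tensor factor. The difficulty is that \ref{lem:full_mat_tensor_factor} is stated for images of $\fMat(T,p)$ under a bounded-spread embedding, whereas $\calA$ is merely invertible. So we reproduce its proof directly for~$\calA$, using only the structural facts already established: $\calA$ is visibly central (\ref{lem:invertible_VS}), central simple and locally generated (\ref{thm:InvertibleSubalgebraIsLocallyGenerated}), and the multiplication map $\calA\otimes\calB\to\fMat(\ZZ^\dd)$ is an isomorphism with bounded-spread inverse (\ref{cor:invertible_isomorphism}).

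Injectivity of $\mul:\calA\otimes\calB_R\to\fMat(R)$ follows at once from \ref{lem:mul_injective}, since $\calA$ is visibly central and $\FermiCommutator{\calA}{\calB_R}=0$. The image lands in $\fMat(R)$ because $\calA\subseteq\fMat(R)$ by hypothesis.

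For surjectivity, take $x\in\fMat(R)$ and use invertibility to write $x=\sum_i a_ib_i$ with $a_i\in\calA$, $b_i\in\calB$ homogeneous and the $a_i$ linearly independent. We must show that each $b_i$ can be taken in $\fMat(R)$. By \ref{lem:supp_by_commutation}, it suffices to show that $\FermiCommutator{b_i}{y}=0$ for every homogeneous $y\in\fMat(\ZZ^\dd\setminus R)$. Fix such a~$y$. Then $\FermiCommutator{x}{y}=0$ since the supports are disjoint, and $\FermiCommutator{a_i}{y}=0$ since $\calA\subseteq\fMat(R)$. Expanding with the Leibniz rule~\eqref{eq:graded_derivation} gives
\begin{equation}
0=\FermiCommutator{y}{x}=\sum_i \pm a_i\FermiCommutator{y}{b_i}.
\end{equation}
The elements $\FermiCommutator{y}{b_i}$ need not lie in~$\calB$, so injectivity of $\mul$ on $\calA\otimes\calB$ cannot be invoked directly; this is the main obstacle. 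To get around it, observe that $y$ fermionically commutes with all of~$\calA$, so $y\in\calB$ by definition of the commutant. Then $\FermiCommutator{y}{b_i}\in\calB$ because $\calB$ is a subalgebra. Now injectivity of $\mul:\calA\otimes\calB\to\fMat(\ZZ^\dd)$ together with linear independence of the $a_i$ forces $\FermiCommutator{y}{b_i}=0$ for all~$i$. Hence $\Supp(b_i)\subseteq R$, so $b_i\in\calB_R$, and $x\in\calA\calB_R$.

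Combining the two parts, $\mul:\calA\otimes\calB_R\to\fMat(R)$ is a graded isomorphism, which is the claim. (In effect, $\fMat(\ZZ^\dd\setminus R)\subseteq\calB$, so $\calB_R$ is the fermionic commutant of $\fMat(\ZZ^\dd\setminus R)$ inside~$\calB$.)
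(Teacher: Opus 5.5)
Your proof is correct, but it takes a more direct route than the paper.

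The obstacle you describe in your first paragraph does not actually arise. The paper applies Lemma~\ref{lem:full_mat_tensor_factor} verbatim, with $\psi=\id$, $T=\ZZ^\dd\setminus R$ and $\calM=\calB$. The full-matrix factor is therefore $\fMat(\ZZ^\dd\setminus R)\subseteq\calB$, not $\calA$. This gives $\calB\cong\fMat(\ZZ^\dd\setminus R)\otimes(\calB\cap\fMat(R))$, and hence $\fMat(\ZZ^\dd)\cong\calA\otimes\calB_R\otimes\fMat(\ZZ^\dd\setminus R)$. Taking the fermionic commutant of $\fMat(\ZZ^\dd\setminus R)$ on both sides then yields $\fMat(R)\cong\calA\otimes\calB_R$. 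Your closing parenthetical remark is exactly the observation the paper exploits.

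Your argument instead bypasses Lemma~\ref{lem:full_mat_tensor_factor}, the finite-dimensional commutant machinery behind it, and the final commutant-cancellation step. It uses only three ingredients:
\begin{itemize}
\item the global spread-$\ell$ decomposition;
\item injectivity of $\mul$ on $\calA\otimes\calB$ (Corollary~\ref{cor:invertible_isomorphism});
\item Lemma~\ref{lem:supp_by_commutation}.
\end{itemize}
Your key trick is sound: $y\in\fMat(\ZZ^\dd\setminus R)$ lies in $\calB$, so $\FermiCommutator{y}{b_i}\in\calB$, and injectivity applies.

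Your route also buys something concrete. It shows that the spread-$\ell$ decomposition of any $x\in\fMat(R)$ inherited from global invertibility already has its $\calB$-components in $\calB_R$. So $\calA$ is invertible inside $\fMat(R)$ with the same spread $\ell$. That is what the later application in Theorem~\ref{thm:TranslationInvariantISAHasBoundedAncillas} needs when it calls $\calC$ ``a genuine invertible subalgebra.'' In the paper's isomorphism-level argument this only follows after an extra observation: by injectivity of $\mul$ on $\calA\otimes\calB$, the preimage in $\calA\otimes\calB_R$ must coincide with the global one.
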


\begin{proof}
    Let $\calB = \FermiCommutant{\calA}{\fMat(\ZZ^\dd)}$;
    then by \ref{cor:invertible_isomorphism},
    $\calA\otimes \calB \cong \fMat(\ZZ^\dd)$.
    Since $\calA$ is supported on $R$,
    $\calB$ contains $\fMat(\ZZ^\dd\setminus R)$,
    so \ref{lem:full_mat_tensor_factor} gives
    \begin{equation}
        \fMat(\ZZ^\dd) \cong \calA\otimes \calB
        \cong \calA \otimes (\calB\cap \fMat(R))
        \otimes \fMat(\ZZ^\dd\setminus R).
    \end{equation}
    Taking the fermionic commutant of
    $\fMat(\ZZ^\dd\setminus R)$ on both sides,
    we see that we can take $\calB_R = \calB\cap\fMat(R)$.
\end{proof}

The following will be used importantly later,
but we state it here since it holds for all~$\dd$.

\begin{lemma}\label{thm:LocallyFactorizable}
    Let $\calA$ be a fermionic invertible subalgebra of~$\fMat(\ZZ^\dd)$ with spread~$\ell$.
    If $x \in \calA$ is supported on $L \cup R$ where $L$ and $R$ are separated by distance
    greater than~$2\ell$,
    then for any decomposition $x = \sum_i x^L_i x^R_i$
    where $\{x^L_i\}_i$ and $\{x^R_i\}_i$ are each linearly independent
    and $\Supp(x^L_i) \subseteq L$ and $\Supp(x^R_i) \subseteq R$
    we have $x^L_i, x^R_i \in \calA$ for all~$i$.
\end{lemma}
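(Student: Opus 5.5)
The strategy is to use the fermionic commutant $\calB$ of $\calA$ together with the bicommutant property (\ref{thm:BicommutantOfInvertible}): an element lies in $\calA$ if and only if it fermionically commutes with all of $\calB$. So it suffices to show that each $x^L_i$ (and symmetrically each $x^R_i$) fermionically commutes with every homogeneous $b\in\calB$. First I reduce to homogeneous pieces. Since $\calA$ is graded, both the even and odd parts of $x$ lie in $\calA$. I would like the decomposition to respect the grading, so I split each $x^L_i, x^R_i$ into homogeneous components. The cleanest way to recover the statement for an arbitrary linearly independent decomposition is to note that the span $V_L=\spn\{x^L_i\}$ is intrinsic. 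It equals the image of the ``partial slicing'' maps $\fMat(R)^* \to \fMat(L)$ applied to $x$, and these maps are built from linear functionals on the right factor of $\fMat(L)\otimes\fMat(R)$. Hence it suffices to show $V_L\subseteq\calA$, and likewise $V_R\subseteq \calA$.

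Next comes the locality step. Fix a homogeneous $b\in\calB$. By invertibility, write $b$ as a sum of products of single-site operators. More usefully, observe that $\calB$ is $2\ell$-locally generated by \ref{thm:InvertibleSubalgebraIsLocallyGenerated}, with generators supported on sets of diameter at most $2\ell$. It therefore suffices to check fermionic commutation of $x^L_i$ with each local generator $g\in\calB$. Because $\dist(L,R)>2\ell$, such a $g$ cannot meet both $L$ and $R$. Suppose $g$ does not meet $R$. Then $g$ fermionically commutes with each $x^R_i$ by \ref{lem:supp_by_commutation}. Using the Leibniz rule~\eqref{eq:graded_derivation}, we get $0=\FermiCommutator{g}{x}=\sum_i \pm\FermiCommutator{g}{x^L_i}\,x^R_i$. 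Here each $\FermiCommutator{g}{x^L_i}$ is supported away from $R$ (in $L\cup\Supp g$), and the $x^R_i$ are linearly independent and supported on $R$. By tensor-product independence in $\fMat(R^c)\otimes\fMat(R)$, every $\FermiCommutator{g}{x^L_i}=0$. Homogeneity is needed for the signs, which is why I first pass to homogeneous spans. In the other case, where $g$ does not meet $L$, $g$ fermionically commutes with $x^L_i$ directly.

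This shows that each $x^L_i$ fermionically commutes with all local generators of $\calB$, hence with all of $\calB$ by the Leibniz rule. It follows that $x^L_i\in\Commf(\calB,\fMat(\ZZ^\dd))=\calA$, and symmetrically $x^R_i\in\calA$. The main obstacle I expect is bookkeeping the grading signs. I would handle it by replacing the given decomposition with one in homogeneous linearly independent elements spanning the same $V_L$ and $V_R$. This is possible because the parity involution preserves $\fMat(L)$ and $\fMat(R)$, so $V_L$ and $V_R$ are graded subspaces. Indeed, $V_L$ is the minimal subspace with $x\in V_L\otimes\fMat(R)$, and applying the parity to $x$ componentwise shows its even and odd parts also have left span inside $V_L$.
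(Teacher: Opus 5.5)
Your route is the paper's: the bicommutant property (\ref{thm:BicommutantOfInvertible}), $2\ell$-local generation of $\calB$ (itself invertible by the same proposition), and the fact that no generator of $\calB$ meets both $L$ and $R$. Your tensor-independence step, which kills $\FermiCommutator{g}{x^L_i}$ using linear independence of the $x^R_i$, makes explicit what the paper leaves implicit.

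The step that fails as written is the homogenization. The span $V_L=\spn\{x^L_i\}$ need not be graded when $x$ is inhomogeneous. Nor do the left spans of the even and odd parts of $x$ lie in $V_L$, as your last sentence asserts. Already for $\calA=\fMat(\ZZ^\dd)$, take odd self-adjoint unitaries $\gamma\in\fMat(L)$, $\gamma'\in\fMat(R)$ and $x=(\one+\gamma)(\one+\gamma')$. Then $V_L=\CC(\one+\gamma)$, while the even part $\one+\gamma\gamma'$ has left span $\spn\{\one,\gamma\}\not\subseteq V_L$. So a homogeneous basis of $V_L$ need not exist, and the inclusion you assert is both false and not the one you need.

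The repair is short. Write $x=x_0+x_1$ with $x_0,x_1\in\calA$. Minimality of the left span gives $V_L(x)\subseteq V_L(x_0)+V_L(x_1)$, which is the direction you actually need. For homogeneous $x_j$ the left span is graded: with $\theta$ the parity, $\theta(x_j)=\pm x_j$ and $V_L(\theta(y))=\theta(V_L(y))$. Expanding $x_j$ in a homogeneous basis of $V_L(x_j)$ then forces the right partners to be homogeneous and linearly independent, so your locality argument applies to each $x_j$ separately. The same works for $V_R$.

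Alternatively, you can drop homogenization entirely. For homogeneous $g$, the bilinear extension of the fermionic commutator is $\FermiCommutator{g}{y}=gy-\theta^{|g|}(y)\,g$. It satisfies $\FermiCommutator{g}{yz}=\FermiCommutator{g}{y}\,z+\theta^{|g|}(y)\FermiCommutator{g}{z}$ for arbitrary $y,z$. Hence your identity $0=\FermiCommutator{g}{x}=\sum_i\FermiCommutator{g}{x^L_i}\,x^R_i$ holds verbatim for the original, possibly inhomogeneous, decomposition. The resulting condition $\FermiCommutator{g}{x^L_i}=0$ splits into its two parity components, placing each homogeneous part of $x^L_i$ in $\calA$.
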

This is closely related to the ``additivity'' axiom 
in algebraic approaches to quantum field theory.
In~\cite{FreedmanHastings2019QCA} an algebra (in the ungraded setting) 
with such a property is called a ``locally factorizable'' algebra.
In~\cite{HarlowShaoSorceSrivastava2025} this property is called ``disjoint additivity.''
The grading plays no role in this lemma.
\begin{proof}
    Let $\calB$ be the fermionic commutant of~$\calA$.
    By~\ref{thm:BicommutantOfInvertible} we know $\calB$ is invertible with spread~$\ell$.
    From~\ref{thm:InvertibleSubalgebraIsLocallyGenerated} we know $\calB$ is $2\ell$-locally generated.
    Since the fermionic commutant of~$\calB$ is~$\calA$ by~\ref{thm:BicommutantOfInvertible},
    the membership to~$\calA$ of an arbitrary element $x \in \fMat(\ZZ^\dd)$ 
    is determined by the fermionic commutator between~$x$ and the local generators of~$\calB$.
    The support separation condition ensures that every local generator of~$\calB$
    may have a nontrivial commutator with at most one of $x^L_i$ and $x^R_i$.
    Since $x \in \calA$ has zero fermionic commutator with every local generator of~$\calB$,
    we conclude the proof.
\end{proof}

\subsection{Brauer groups}

\begin{definition}
    Two fermionic subalgebras $\calA_1\subseteq \fMat(\ZZ^\dd, p_1)$
    and $A_2\subseteq \fMat(\ZZ^\dd, p_2)$ are \emph{stably equivalent}
    if there exists
    a bounded-spread fermionic algebra isomorphism 
    \begin{equation}
        \phi: \calA_1\otimes \fMat(\ZZ^\dd, q_1)
        \xrightarrow{\;\sim\;}
        \calA_2\otimes \fMat(\ZZ^\dd, q_2)
    \end{equation}
    for some local assignments~$q_1,q_2$.
    The stable equivalence class of an invertible fermionic
    subalgebra~$\calA$, among all invertible fermionic
    subalgebras in~$\dd$ dimensions, is called
    the \emph{Brauer class} of~$\calA$ and denoted by $[\calA]$.
    A fermionic invertible subalgebra is \emph{Brauer trivial}
    if it is stably equivalent to $\CC = \fMat(\ZZ^d, p)$
    where $\fMat(\{s\}, p(s)) = \CC\one$ on every site~$s$.
\end{definition}

\begin{proposition-definition}
    The set of all Brauer classes of invertible
    fermionic subalgebras on $\ZZ^\dd$ is an
    abelian group under tensor product: we define
    \begin{equation}
        [\calA_1\subseteq \fMat(\ZZ^\dd, p_1)]
        + [\calA_2\subseteq \fMat(\ZZ^\dd, p_2)]\coloneqq
        [\calA_1\otimes \calA_2\subseteq \fMat(\ZZ^\dd, p)],
    \end{equation}
    where the local assignment $p$ is given
    by $\fMat(\{s\}, p(s)) =
    \fMat(\{s\},p_1(s))\otimes \fMat(\{s\}, p_2(s))$. 
    The inverse of $[\calA]$ is represented by
    the fermionic commutant of~$\calA$ within $\fMat(\ZZ^\dd)$
    in which $\calA$ is embedded.
\end{proposition-definition}

\begin{proof}
    First, we observe that for two fermionic algebras
    $\calA$ and $\calB$, there is a natural isomorphism
    $\calA\otimes \calB\cong \calB\otimes \calA$ by
    $a_i\otimes b_j\mapsto (-1)^{ij} b_j\otimes a_i$
    for homogeneous $a_i\in \calA$, $b_j\in \calB$,
    which is bounded-spread.
    Next we show that the tensor product
    $\calA_1\otimes \calA_2$ of two invertible
    subalgebras is invertible.
    Let $\calB_i = \FermiCommutant{\calA_i}
    {\fMat(\ZZ^\dd, p_i)}$
    for $i = 1,2$. By \ref{cor:invertible_isomorphism},
    we have an isomorphism
    $\calA_i\otimes\calB_i \cong \fMat(\ZZ^\dd, p_i)$
    with bounded-spread inverse, so we have
    \begin{equation}
        \begin{aligned}
        (\calA_1 \otimes \calA_2)\otimes (\calB_1\otimes\calB_2)
        & \cong (\calA_1\otimes\calB_1)\otimes(\calA_2\otimes\calB_2) \\
        &\cong \fMat(\ZZ^\dd,p_1)\otimes\fMat(\ZZ^\dd,p_2)
        = \fMat(\ZZ^\dd,p),
        \end{aligned}
    \end{equation}
    where all of the isomorphisms have bounded-spread
    inverses. Hence $\calA_1\otimes \calA_2$
    is invertible with inverse $\calB_1\otimes \calB_2$.

    It also follows that if
    $\calA_1\otimes \fMat(\ZZ^\dd, a_1)
    \cong \calA_1'\otimes \fMat(\ZZ^\dd,a_1')$
    and $\calA_2\otimes \fMat(\ZZ^\dd, a_2)
    \cong \calA_2'\otimes \fMat(\ZZ^\dd,a_2')$,
    then
    \begin{equation}
        \calA_1\otimes \calA_2\otimes \fMat(\ZZ^\dd,a) \cong
        \calA_1'\otimes \calA_2'\otimes \fMat(\ZZ^\dd,a'),
    \end{equation}
    where the local assignments $a$ and $a'$
    are the tensor product of those for $a_1, a_2$
    and $a_1', a_2'$.
    Hence the group operation is well-defined.
    The inverse being the fermionic commutant follows from~\ref{cor:invertible_isomorphism}.
\end{proof}

\subsubsection{Zero dimensions}

Since $\fMat(\ZZ^0,p)$ is always finite dimensional,
every invertible subalgebra in dimension zero is
finite-dimensional central.
Since we allow stabilization,
our Brauer group in dimension zero is therefore zero.
Note that the usual Brauer--Wall group of~$\CC$ is $\ZZ/2\ZZ$ 
(even or odd central simple graded $\CC$-algebras)
since there is no stabilization.

Although the zero dimensional Brauer group is zero trivially,
we do understand all finite-dimensional central simple algebras.
This will be elaborated further in~\S\ref{sec:1dQCAIndex}.

\subsubsection{One dimension}

\newcommand{\inta}[2]{\calA_{[#1,#2]}}
\newcommand{\intg}[2]{\calG_{[#1,#2]}}

Throughout this subsection,
$\calA$ stands for a fermionic invertible subalgebra of~$\fMat(\ZZ)$ with spread~$\ell > 0$.
We also write for any $a < b$
\begin{equation}
    \calA_{[a,b]} = \{ x \in \calA ~|~ \Supp(x) \subseteq [a,b] \} \, .
\end{equation}

\begin{lemma}\label{thm:LocalGenerators}
    For $b - a > 40\ell$, the algebra~$\calA_{[a,b]}$
    is $20\ell$-locally generated.
\end{lemma}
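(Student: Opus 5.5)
Given $x\in\inta ab$, we want to write it as a sum of products of elements of $\calA$ of diameter at most $20\ell$, each supported in $[a,b]$. The available tools are \ref{thm:InvertibleSubalgebraIsLocallyGenerated} and \ref{thm:LocallyFactorizable}. The first writes any $x\in\calA$ as a sum of products of $2\ell$-local generators of $\calA$ supported in $\Supp(x)^{+\ell}\subseteq[a-\ell,b+\ell]$. Those generators may leak past the endpoints, so the task is to repair them near $a$ and $b$.

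\textbf{Step 1: split $x$ into two overlapping pieces.} Pick a cut point $c$ in the middle of $[a,b]$, say $c\approx (a+b)/2$; the hypothesis $b-a>40\ell$ leaves plenty of room on both sides. Using the $\calA$--$\calB$ decomposition from \ref{cor:invertible_isomorphism}, together with \ref{lem:full_mat_tensor_factor} applied to $\fMat$ of a half-line, I aim to write
\[
x=\sum_i y_i z_i ,\qquad y_i\in\calA_{[a,\,c+O(\ell)]},\quad z_i\in\calA_{[c-O(\ell),\,b]} .
\]
This is a locality-preserving ``product decomposition.''

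\textbf{Step 2: reduce to a local statement at each endpoint.} It then suffices to handle an element supported near one endpoint only, for instance $y\in\calA$ with $\Supp(y)\subseteq[a,a+M]$ for large $M$. Expand $y$ into products of $2\ell$-local generators supported in $[a-\ell,a+M+\ell]$. Each generator $g$ that straddles $a$ I would replace by an element of $\calA$ supported in $[a,a+20\ell]$. Concretely, group the generators into the product $w$ of those lying in a window $[a-\ell,a+10\ell]$ together with the remainder supported in $[a+8\ell,\infty)$, keeping track of the ordering of the products. Since $y$ vanishes outside $[a,\infty)$, I want to apply \ref{thm:LocallyFactorizable} with $L=(-\infty,a)$ and $R=[a,\infty)$. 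However, $L$ and $R$ are not $2\ell$-separated, so I cannot use it directly.

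\textbf{Step 3: the main obstacle, a collar argument.} To get the needed separation, I would use a collar: the finite-dimensional algebra $\calA_{[a-\ell,a+10\ell]}$ and its commutant structure. Here \ref{thm:Commutant-of-central-in-central} and \ref{thm:Bicommutant-of-Central} make the boundary algebras tensor factors, and \ref{lem:VS_support} confines the fermionic center of $\inta a{a+10\ell}$ to within $2\ell$ of $\{a,a+10\ell\}$. With these, I would show that the part of $y$ living in $[a,a+10\ell]$ lies in the subalgebra generated by $\inta a{a+20\ell}$ and $\inta{a+8\ell}{b}$. The key claim to prove is
\[
\calA_{[a,\,a+M]} \;=\; \bigl\langle\, \inta a{a+20\ell},\ \calA_{[a+8\ell,\,a+M]} \,\bigr\rangle .
\]
I would prove it via a dimension/commutant comparison: the fermionic commutant of the right-hand side inside $\calA_{[a,a+M]}$ should be the same as that of the left-hand side, namely the fermionic center localized at the boundary, by \ref{lem:VS_support}. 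Then the bicommutant statement for finite-dimensional semisimple algebras (\ref{thm:Bicommutant-of-Central}, applied summand-wise) gives equality. Iterating this claim, in $20\ell$-sized windows sweeping from $a$ to $b$ and symmetrically from the $b$ end, shows that $\inta ab$ is generated by its elements of diameter at most $20\ell$.
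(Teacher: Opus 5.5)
\textbf{There is a genuine gap: Step 3 is circular, and it carries the whole proof.}

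Write $\calZ=\calA_{[a,a+M]}$ and let $\calX$ be the algebra generated by $\calA_{[a,a+20\ell]}$ and $\calA_{[a+8\ell,a+M]}$. Take $w\in\calZ$ that fermionically commutes with $\calX$. By \ref{lem:VS_support}, $w$ is supported in $[a,a+2\ell]\cup[a+M-2\ell,a+M]$. By \ref{thm:LocallyFactorizable}, its Schmidt components lie in $\calA$.

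Nothing you cite shows that a left component, known only to commute with $\calA_{[a,a+20\ell]}$, fermionically commutes with \emph{all} of $\calZ$. An element of $\calZ$ meeting $[a,a+2\ell]$ is a priori only a sum of products of $2\ell$-local generators of $\calA$ that may protrude into $[a-\ell,a)$, and those generators do not lie in $\calA_{[a,a+20\ell]}$. The claim that boundary-central elements of a short interval algebra are central in a long one is essentially the statement you are proving. In the paper it appears as \ref{lem:IntervalAlgebraCenterHasTensorDecomposition}, which is deduced \emph{from} this lemma, not the other way round.

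Even granting $\Commf(\calX,\calZ)=\FermiCent(\calZ)$, equality of algebras does not follow. The diagonal $\Mat(2)\subset\Mat(2)\oplus\Mat(2)$ has relative commutant equal to the center but is a proper subalgebra. Moreover, \ref{thm:Bicommutant-of-Central} is about \emph{central} subalgebras, which $\calX$ is not. You would additionally need $\FermiCent(\calZ)\subseteq\calX$ and a double commutant theorem in each simple summand.

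Step 1 is also unsupported. \ref{lem:full_mat_tensor_factor} requires a subalgebra containing some full $\psi(\fMat(T))$, which $\calA_{[a,c+O(\ell)]}$ does not. And a decomposition $x=\sum_i y_iz_i$ is stronger than the lemma itself.

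\textbf{The missing idea is a \emph{central} intermediate algebra, whose relative commutant is automatically a tensor complement.} The paper applies \ref{thm:FindYinbetweenXZ}, paying one Majorana $\FermionMinus{1}_s$, to the inclusion $\calA_{[a+16\ell,b-16\ell]}\subseteq\calA_{[a+12\ell,b-12\ell]}$. The trace condition holds because, by \ref{lem:VS_support}, the fermionic centers of these two algebras sit in disjoint boundary regions. This produces a central $\calC_0$, and \ref{thm:Bicommutant-of-Central} gives $\calA_{[a,b]}\otimes\FermionMinus{1}_s\cong\calC_0\otimes\calM$ with $\calM=\Commf(\calC_0,\calA_{[a,b]}\otimes\FermionMinus{1}_s)$.

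The rest is then straightforward:
\begin{itemize}
\item $\calC_0$ is generated by $2\ell$-local interior generators of $\calA$ together with the Majorana.
\item $\calM$ fermionically commutes with $\calA_{[a+16\ell,b-16\ell]}$, so it is confined to $[a,a+18\ell]\cup[b-18\ell,b]$.
\item The Schmidt components of elements of $\calM$ lie in $\calA_{[a,b]}$ by \ref{thm:LocallyFactorizable}.
\end{itemize}
No centrality of those components is ever needed, because the tensor factorization already guarantees that $\calC_0$ and $\calM$ together generate everything. Finally, the Majorana is removed by taking the $\calA$-components of the generators.
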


\begin{proof}
    Let $\calC = \calA_{[a,b]}$.
    Observe that $\calA_{[a + 16\ell, b - 16\ell]} \subseteq \calA_{[a + 12 \ell, b - 12\ell]}$,
    and by \ref{lem:VS_support} these two subalgebras have central elements at their boundaries, which are disjoint.
    Therefore, the conditions of~\ref{thm:FindYinbetweenXZ} are met, so
    we can find a central fermionic algebra $\calC_0$ such that 
    $\calA_{[a+16\ell, b-16\ell]}\subseteq \calC_0
    \subseteq \calA_{[a+12\ell,b-12\ell]} \otimes \FermionMinus{1}_s$.
    By~\ref{thm:Bicommutant-of-Central}, $\calC_0$ is a tensor factor: 
    $\calC \otimes \FermionMinus{1}_s \cong \calC_0 \otimes \calM$.

    First we show that $\calC\otimes \FermionMinus{1}_s$
    is $20\ell$-locally generated.
    By~\ref{thm:InvertibleSubalgebraIsLocallyGenerated}, 
    every element of~$\calC_0$ is a sum of products of $2\ell$-local elements from
    $\FermionMinus{1}_s$ and~$\calA_{[a+11\ell, b-11\ell]}$,
    which are members of~$\calC \otimes \FermionMinus{1}_s$.
    Then, it suffices for us to generate
    $\calM = \FermiCommutant{\calC_0}{\calC\otimes \FermionMinus{1}_s}$ by a $20\ell$-local set.
    Every element $w \in \calM$ has zero fermionic commutator with
    $\calA_{[a+16\ell, b - 16 \ell]}$.
    Letting $\FermionMinus{1}_s$ be generated by $\gamma$
    and applying~\ref{lem:VS_support},
    we obtain a decomposition 
    \begin{equation}
        w = \sum_i w^L_i w^R_i \otimes \one + \sum_j w^L_j w^R_j\otimes \gamma,
    \end{equation}
    where $\Supp(w^L_i) \subseteq [a,a+18\ell]$
    and $\Supp(w^R_i) \subseteq [b-18\ell,b]$ for each~$i$.
    We may further assume that sets
    $\{w^L_i\}_i$, $\{w^L_j\}_j$, $\{w^R_i\}_i$,
    and $\{w^R_j\}_j$ are each linearly independent;
    then \ref{thm:LocallyFactorizable} implies that all $w^L_i,w^L_j, w^R_i,w^R_j \in \calA_{[a,b]}$.
    (We do not claim that $w^L_i \in \calM$.)
    Declaring $\gamma$ and all such $w^L_{i,j}$ and $w^R_{i,j}$ as generators,
    we conclude that $\calC \otimes \FermionMinus{1}_s$ is $20\ell$-locally generated.

    To finish, let $g = \sum_i a_i \otimes b_i \in \calC \otimes \FermionMinus{1}_s$
    be a $20\ell$-local generator of~$\calC \otimes \FermionMinus{1}_s$.
    Then, we declare all such elements $a_i$ as generators for~$\calC$,
    each of which is $20\ell$-local.
    It follows that $\calC$ itself is $20\ell$-locally generated.
\end{proof}

\begin{lemma}\label{lem:IntervalAlgebraCenterHasTensorDecomposition}
    For any $a,b$,
    let 
    \begin{equation}
        \begin{aligned}
        \calZ^L_a &= \FermiCommutant{\calA_{[a,a+25\ell]}}{\calA_{[a,a+2\ell]}} \, , \\
        \calZ^R_b &= \FermiCommutant{\calA_{[b-25\ell,b]}}{\calA_{[b-2\ell,b]}}\, . 
        \end{aligned}\label{eq:LeftRightCenters}
    \end{equation}
    Then, for $b - a > 40\ell$,
    \begin{equation}
        \FermiCent(\calA_{[a,b]}) = \calZ^L_a \calZ^R_b \, .
    \end{equation}    
\end{lemma}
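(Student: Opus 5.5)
We prove the two inclusions separately. The inclusion $\supseteq$ uses locality of generators. The inclusion $\subseteq$ combines the boundary-support lemma~\ref{lem:VS_support} with the local factorizability~\ref{thm:LocallyFactorizable}. Throughout we may assume all elements are homogeneous, since $\FermiCent(\calA_{[a,b]})$ and both $\calZ^L_a,\calZ^R_b$ are graded subspaces.

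\emph{Inclusion $\calZ^L_a\calZ^R_b \subseteq \FermiCent(\calA_{[a,b]})$.} Let $z\in\calZ^L_a$ be homogeneous. By definition, $z\in \calA_{[a,a+2\ell]}\subseteq\calA_{[a,b]}$ and $\FermiCommutator{z}{y}=0$ for all $y\in\calA_{[a,a+25\ell]}$. By~\ref{thm:LocalGenerators}, since $b-a>40\ell$, the algebra $\calA_{[a,b]}$ is generated by homogeneous elements $g$ of diameter at most $20\ell$. There are two cases for such a $g$:
\begin{itemize}
\item If $\Supp(g)$ meets $[a,a+2\ell]$, then $\Supp(g)\subseteq[a,a+22\ell]$. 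Hence $g\in\calA_{[a,a+25\ell]}$ and $\FermiCommutator{z}{g}=0$.
\item Otherwise $z$ and $g$ have disjoint supports, so $\FermiCommutator{z}{g}=0$ by~\ref{lem:supp_by_commutation}.
\end{itemize}
The graded Leibniz rule~\eqref{eq:graded_derivation} propagates this to products of generators, and bilinearity propagates it to sums. Thus $z\in\FermiCent(\calA_{[a,b]})$. The same argument applies to $\calZ^R_b$. Since $\FermiCent(\calA_{[a,b]})$ is a subalgebra, it contains the products $\calZ^L_a\calZ^R_b$.

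\emph{Inclusion $\FermiCent(\calA_{[a,b]}) \subseteq \calZ^L_a\calZ^R_b$.} Let $z\in\FermiCent(\calA_{[a,b]})$ be homogeneous.

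First, $z$ fermionically commutes with $\calA\cap\fMat([a,b])=\calA_{[a,b]}$. Since $\calA$ is visibly central with range $\ell$ (\ref{lem:invertible_VS}), \ref{lem:VS_support} gives
\[
\Supp(z)\subseteq(\ZZ\setminus[a,b])^{+2\ell}\cap[a,b]=L\cup R, \qquad L=[a,a+2\ell],\ R=[b-2\ell,b].
\]

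Next, write $z=\sum_i z^L_i z^R_i$ with $\Supp(z^L_i)\subseteq L$ and $\Supp(z^R_i)\subseteq R$. We choose the factors homogeneous and the sets $\{z^L_i\}$ and $\{z^R_i\}$ each linearly independent. This is possible by expanding in homogeneous bases of $\fMat(L)$ and $\fMat(R)$; grouping by the grading of the left factor keeps both families independent. Since $L$ and $R$ are separated by more than $2\ell$, \ref{thm:LocallyFactorizable} gives $z^L_i,z^R_i\in\calA$. Hence $z^L_i\in\calA_{[a,a+2\ell]}$ and $z^R_i\in\calA_{[b-2\ell,b]}$.

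It remains to show that $z^L_i\in\calZ^L_a$ and $z^R_i\in\calZ^R_b$. Take a homogeneous $y\in\calA_{[a,a+25\ell]}$. Its support is disjoint from $R$ because $b-a>40\ell$, so $\FermiCommutator{z^R_i}{y}=0$. The Leibniz rule then gives
\[
0=\FermiCommutator{z}{y}=\sum_i \pm\,\FermiCommutator{z^L_i}{y}\,z^R_i .
\]
Each $\FermiCommutator{z^L_i}{y}$ is supported in $[a,a+25\ell]$, which is disjoint from $R$. Viewing the sum in $\fMat([a,a+25\ell])\otimes\fMat(R)$ and using the linear independence of $\{z^R_i\}$, every $\FermiCommutator{z^L_i}{y}$ vanishes. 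Hence $z^L_i\in\calZ^L_a$. Symmetrically, using independence of $\{z^L_i\}$ against $y\in\calA_{[b-25\ell,b]}$, we get $z^R_i\in\calZ^R_b$. Therefore $z\in\calZ^L_a\calZ^R_b$.

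The main obstacle is the bookkeeping in the last step. The signs must be tracked, and one must check that the linearly independent homogeneous decomposition really allows reading off the coefficients from the tensor-product structure $\fMat(L')\otimes\fMat(R)$ with $L'=[a,a+25\ell]$. Because $L'$ and $R$ are disjoint, this is a genuine fermionic tensor factorization, and the argument reduces to the same independence trick used in~\ref{lem:supp_by_commutation}.
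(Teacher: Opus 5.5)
Your proof is correct and follows the paper's route: you localize the fermionic center to $[a,a+2\ell]\cup[b-2\ell,b]$ with \ref{lem:VS_support}, obtain $\calZ^L_a\calZ^R_b\subseteq\FermiCent(\calA_{[a,b]})$ from the $20\ell$-local generation in \ref{thm:LocalGenerators}, and then split central elements into Schmidt components. The differences are minor. You decompose arbitrary homogeneous central elements rather than the canonical projectors, and you test the components directly against $\calA_{[a,a+25\ell]}$ rather than against the local generators. You also invoke \ref{thm:LocallyFactorizable} explicitly to place the components in $\calA$, a step the paper's proof leaves implicit.
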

\begin{proof}
    By~\ref{lem:VS_support} the fermionic center~$\calC = \FermiCent(\calA_{[a,b]})$ 
    must be supported 
    on $L = [a,a+2\ell]$ union $R = [b-2\ell,b]$,
    and by~\ref{thm:Wedderburn-Fermionic} is spanned by canonical projectors~$\pi_\mu$,
    each of which is even.

    Let $\calD^L = (\calD^L)^\dagger = \calC \cap \calA_{[a,a+2\ell]}  \subseteq \calZ_a^L$ and
    $\calD^R = (\calD^R)^\dagger = \calC \cap \calA_{[b-2\ell,b]}  \subseteq \calZ_b^R$.
    Since $\calA_{[a,b]}$ is $20\ell$-locally generated by~\ref{thm:LocalGenerators}, 
    we know $\calZ_a^L \subseteq \calD^L$ and $\calZ_b^R \subseteq \calD^R$.
    So, $\calZ_a^L = \calD^L$ and $\calZ_b^R = \calD^R$.
    
    By definition, $\calD^L \calD^R \subseteq \calC = \FermiCent(\calA_{[a,b]})$.
    Since $L$ and $R$ are disjoint, 
    we have a Schmidt decomposition
    $\pi_\mu = \sum_i \pi^L_{\mu,i} \pi^R_{\mu,i}$ for any
    $\pi_\mu\in \calC$,
    where each Schmidt component must fermionically commute with every $20\ell$-local generator
    of $\calA_{[a,b]}$.
    It follows that each Schmidt component $\pi^L_{\mu,i}, \pi^R_{\mu,i}$ 
    is in the fermionic center.
    That is, $\calC \subseteq \calZ_a^L \calZ_b^R$,
    so $\calC = \calD^L \calD^R$.
\end{proof}

Using these preliminary lemmas, 
the remainder of the section will
be dedicated to proving:

\begin{theorem}
    \label{thm:1DBrauerGroupTrivial}
    For any invertible subalgebra~$\calA$ of~$\fMat(\ZZ,p)$ of spread~$\ell$,
    the stabilized algebra $\calA \otimes \fMat(\ZZ, s \mapsto \FermionMinus{1})$
    is bounded-spread isomorphic to~$\fMat(\ZZ,q)$
    for some local assignment ~$q$.
    The spread of the isomorphism can be taken to be
    at most~$50\ell$.
\end{theorem}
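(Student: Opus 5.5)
We follow the Freedman--Hastings strategy: cut $\ZZ$ into blocks and exhibit $\calA\otimes\fMat(\ZZ,s\mapsto\FermionMinus{1})$ as a tensor product of finite-dimensional central algebras $\calY_j$ with bounded supports. Pick block length $L=50\ell$ (any length comfortably above $40\ell$ works) and points $c_j=jL$. For each $j$ consider the interval algebra $\calX_j=\calA_{[c_j+\ell',\,c_{j+1}-\ell']}$ with a small margin $\ell'$ (about $2\ell$–$3\ell$). By~\ref{lem:IntervalAlgebraCenterHasTensorDecomposition} its fermionic center is $\calZ^L\calZ^R$, supported near the two endpoints, and by~\ref{lem:VS_support} the canonical projectors of nested interval algebras live on disjoint boundary regions. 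The plan is to sandwich a central algebra between two such algebras using~\ref{thm:FindYinbetweenXZ}, once for each ``seam'' region around $c_j$.

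Concretely, for each $j$ I would take $\calX=\calA_{[c_j-12\ell,\,c_j+12\ell]}$ (an interval algebra around the seam) and $\calZ=\calA_{[c_j-16\ell,\,c_j+16\ell]}$, or better, the reverse arrangement used in the proof of~\ref{thm:LocalGenerators}: an inclusion $\calX\hookrightarrow\calZ$ of finite-dimensional algebras whose canonical central projectors are products of left- and right-boundary factors on mutually disjoint sites. Since the projectors of $\calX$ and of $\calZ$ are supported on disjoint regions, the trace condition~\eqref{eq:tracecondition} holds for the product tracial state of $\fMat$ (which factorizes over disjoint supports). \ref{thm:FindYinbetweenXZ} then produces a central $\calY_j$ with $\calX\subseteq\calY_j\subseteq\calZ\otimes\FermionMinus{1}_{s_j}$, where $s_j$ is a site near $c_j$ whose stabilizing Majorana we spend. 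The $\calY_j$ for different $j$ have disjoint supports (they lie in $[c_j-16\ell,c_j+16\ell]$ plus one site) and hence fermionically commute. The remaining pieces are the fermionic commutants $\calW_j$ of $\calY_j\otimes\calY_{j+1}$ inside the block algebra $\calA_{[c_j-16\ell,\,c_{j+1}+16\ell]}\otimes(\text{Majoranas on the block})$. By~\ref{thm:Bicommutant-of-Central} and~\ref{thm:Commutant-of-central-in-central} these are tensor factors, and they are central because the boundary centers have been absorbed into $\calY_j,\calY_{j+1}$. This is the key point, and it is where~\ref{lem:IntervalAlgebraCenterHasTensorDecomposition} is used. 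The Majoranas not spent on $s_j$ are distributed into $\calW_j$.

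Finally I would show that the multiplication map $\bigotimes_j(\calY_j\otimes\calW_j)\to\calA\otimes\fMat(\ZZ,\FermionMinus{1})$ is a bounded-spread isomorphism. Injectivity follows from~\ref{lem:mul_injective}, since each factor is central with finite support. Surjectivity uses~\ref{thm:InvertibleSubalgebraIsLocallyGenerated}: every element is a sum of products of $2\ell$-local generators, each lying in some block algebra that is spanned by $\calY_j\calW_j\calY_{j+1}$. Each finite-dimensional central factor is isomorphic to some $\fMat(\{s\},q(s))$ by~\ref{thm:fermionicCSAs}, and we place it at a site in its support, which gives a spread of at most about $50\ell$. \textbf{The main obstacle} is the centrality of $\calW_j$, together with the requirement that the stabilizing Majoranas be used consistently. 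When we apply~\ref{thm:FindYinbetweenXZ} at one seam, its $\FermionMinus{1}$ must not interfere with the adjacent block. I would handle this by doing the seam sandwiches first, then computing $\FermiCent$ of the block commutant directly through~\ref{lem:VS_support} and~\ref{thm:LocallyFactorizable}, checking that any remaining center would have to sit at a seam and so already lie in $\calY_j$.
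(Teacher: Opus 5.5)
Your seam sandwich is fine. Applying \ref{thm:FindYinbetweenXZ} to $\calA_{[c_j-12\ell,c_j+12\ell]}\subseteq\calA_{[c_j-16\ell,c_j+16\ell]}$ works: the canonical projectors sit on four disjoint strips, so the trace condition holds, exactly as in the proof of \ref{thm:LocalGenerators}. The gap is in the second half.

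Your $\mathcal W_j=\Commf(\calY_j\calY_{j+1},\mathcal E_j)$, with $\mathcal E_j$ the stabilized block algebra on $[c_j-16\ell,c_{j+1}+16\ell]$, is \emph{not} central. Because $\calY_j\calY_{j+1}$ is a central tensor factor, $\mathcal E_j\cong\calY_j\otimes\calY_{j+1}\otimes\mathcal W_j$. Hence $\FermiCent(\mathcal W_j)\cong\FermiCent(\mathcal E_j)=\calZ^L_{c_j-16\ell}\calZ^R_{c_{j+1}+16\ell}$, which is nontrivial in general. A central simple factor cannot absorb any center. Moreover, this center sits at the outer ends of $\mathcal E_j$, not at a seam, so the check you propose at the end cannot succeed. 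In addition, the supports of $\mathcal W_j$ and $\mathcal W_{j-1}$ overlap on $[c_j-16\ell,c_j-10\ell]\cup[c_j+10\ell,c_j+16\ell]$. So they need not fermionically commute, and the multiplication map from $\bigotimes_j(\calY_j\otimes\mathcal W_j)$ double counts. For example, if $\calA=\fMat(\ZZ)$ and $\calY_j=\calA^s_{[c_j-12\ell,c_j+12\ell]}$, both $\mathcal W_{j-1}$ and $\mathcal W_j$ contain $\fMat([c_j-16\ell,c_j-12\ell))$.

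The natural repair exposes the real missing idea. Take instead $\calM=\Commf(\prod_j\calY_j,\calA^s)$; this is a genuine tensor complement, supported on the block interiors by \ref{lem:VS_support}. To split $\calM$ blockwise into central pieces, you need the Schmidt components of its elements, which lie in $\calA^s$ by \ref{thm:LocallyFactorizable}, to still fermionically commute with $\calY_j$. That requires $\calY_j$ to be generated by elements that do not straddle the seam, and \ref{thm:FindYinbetweenXZ} gives no such control.

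Concretely, take $\calA$ to be the full qubit algebra, $L=c_j-13\ell$, $R=c_j+13\ell$. Then $\calY_j=\calA^s_{[c_j-12\ell,c_j+12\ell]}\cdot\langle Z_L,\,X_LZ_R\rangle$ is a legitimate central sandwich. Yet $Z_LX_R\in\calM$ while $X_R\notin\calM$. This is exactly why the proof of \ref{thm:LocalGenerators} remarks that it does not claim $w^L_i\in\calM$. Decomposing such an $\calM$ in general is essentially the original problem again, since $\calM$ is itself an invertible subalgebra whose twisting range exceeds the gaps.

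The paper avoids this by producing \emph{locally generated} central algebras. It fixes a minimal projector $\pi\in\calZ^L_0$, so that $\pi\calA_{[0,s]}$ has center only at its right end. It then sandwiches one-sided algebras $\pi\calA_{[0,s]}\subseteq\calC^\pi_s\subseteq\pi\calA_{[0,s+3\ell]}$, spending the Majorana at $s+3\ell$; these are automatically nested. The block algebras are the relative commutants $\Commf(\calC^\pi_a,\calC^\pi_b)$, lifted off $\pi$ to $\calG_{[a,b]}$. The nesting gives $\calG_{[a,c]}=\calG_{[a,b]}\calG_{[b,c]}$ (\ref{lem:ProductOfGIntervals}), hence $50\ell$-local generation. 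Only with that does the commutant factor into central seam pieces.
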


\begin{corollary}
    The Brauer group of invertible subalgebras
    of $\fMat(\ZZ)$ is zero.
\end{corollary}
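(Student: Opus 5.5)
We follow the Freedman--Hastings strategy: cut the line into long blocks, find a central algebra sandwiched near each block using the simple-between-semisimple lemma, and exhibit the stabilized algebra as a tensor product of these central pieces.

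Fix a block length $L = 50\ell$ (any length well above $40\ell$ suffices) and points $c_n = nL$. For each $n$, consider the pair of algebras
\[
\calX_n = \calA_{[c_n + 3\ell,\, c_{n+1} - 3\ell]}
\quad\text{and}\quad
\calZ_n = \FermiCommutant{\calZ^R_{c_n+2\ell}\calZ^L_{c_{n+1}-2\ell}}{\calA_{[c_n-2\ell,\, c_{n+1}+2\ell]}} ,
\]
or a similar choice. By \ref{lem:IntervalAlgebraCenterHasTensorDecomposition}, the fermionic centers of the interval algebras factor as $\calZ^L_a\calZ^R_b$ with the two factors supported at opposite ends. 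Hence the canonical projectors of $\calX_n$ and of the relevant $\calZ_n$ are products of left-boundary and right-boundary projectors. Because the left and right projectors are supported far apart, the trace condition~\eqref{eq:tracecondition} for the normalized trace on $\fMat$ follows from the product structure of $\tr$ on disjoint supports.

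Applying \ref{thm:FindYinbetweenXZ} then gives a central $\calY_n$ with $\calX_n \subseteq \calY_n \subseteq \calZ_n \otimes \FermionMinus{1}_{s_n}$, where $s_n$ is one stabilizing Majorana site per block. The stabilization $\fMat(\ZZ, s\mapsto\FermionMinus{1})$ supplies one Majorana on every site, which is more than enough; the unused Majoranas are simply left as additional tensor factors of $q$.

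Next, the $\calY_n$ have pairwise disjoint supports (each lies in an interval of length about $L+4\ell$, and the chosen overlaps can be shifted apart). They therefore fermionically commute, and \ref{lem:mul_injective} together with \ref{lem:TensorProdUniversalProperty} gives an embedding $\bigotimes_n \calY_n \hookrightarrow \calA\otimes\fMat(\ZZ,\FermionMinus{1})$. Each $\calY_n$ is finite-dimensional central, hence isomorphic to some $\FermionPlus{p}{q}$ or $\FermionMinus{s}$ by \ref{thm:fermionicCSAs}. We take $q$ to assign $\calY_n$ to one site of block $n$, tensored with the unused Majoranas and with trivial algebras elsewhere. The resulting isomorphism has spread of order $L$, which is at most $50\ell$ after bookkeeping.

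It remains to show surjectivity: the $\calY_n$ must generate all of $\calA\otimes\fMat(\ZZ,\FermionMinus{1})$. By \ref{thm:Bicommutant-of-Central}, $\calY_n$ is a tensor factor of $\calZ_n\otimes\FermionMinus{1}$. The plan is to show that between consecutive blocks the remaining junction algebra $\FermiCommutant{\calY_n\calY_{n+1}}{\cdots}$ near $c_{n+1}$ is itself central and absorbed. Concretely, one checks that every local generator of $\calA$ from \ref{thm:LocalGenerators}, supported on a window of width $20\ell$, lies in some $\calY_n\calY_{n+1}$ (stabilized), using the locality of generators and \ref{thm:LocallyFactorizable} to split elements straddling a junction.

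The main obstacle is exactly this junction. The sandwiching only controls the interiors of the blocks, while the boundary commutants $\calZ^R$ and $\calZ^L$ carry noncentral leftovers. My first attempt would be to choose overlapping intervals so that $\calZ_n$ contains $\calX_{n}$ together with half of each neighboring junction, and to argue via \ref{thm:Commutant-of-central-in-central} that the leftover commutant in $\calY_n\otimes\calY_{n+1}$ is trivial. Where this fails because of a parity mismatch (an odd piece meeting an even one), I would use the extra Majorana $\FermionMinus{1}$ on that site, which is precisely why stabilization is needed (cf.\ \ref{rem:CounterexampleSimpleBetweenSemisimple}).
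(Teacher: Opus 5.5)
Your proposal has a genuine gap at the junctions between blocks, and the routes you sketch for closing it would not work. The sandwich step is fine: with $\calZ_n$ a slightly larger interval algebra, the boundary projectors of $\calX_n$ and $\calZ_n$ sit on disjoint sites, so the trace condition holds, as in the proof of \ref{thm:LocalGenerators}. The trouble is that central algebras produced by applying \ref{thm:FindYinbetweenXZ} independently on each block need not fit together.

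Already for $\calA=\fMat(\ZZ)$ with a qubit on every site, an admissible output of the sandwich is $\calY_n=\calX_n\otimes\langle\sigma^x_u\sigma^x_v,\,\sigma^z_u\rangle$, where $u$ and $v$ are single sites in the left and right collars of block $n$. The commutant $\calM$ of $\prod_n\calY_n$ then contains $\sigma^z_u\sigma^z_v$, which straddles two junctions. Neither $\sigma^z_u$ nor $\sigma^z_v$ lies in $\calM$, because each anticommutes with $\sigma^x_u\sigma^x_v\in\calY_n$. So the leftover algebra is not a product of junction algebras. In general nothing in your construction controls how it entangles different junctions, so showing it is a product of local central algebras is as hard as the original problem.

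Your other two suggestions also fail. The claim that every $20\ell$-local generator of $\calA$ lies in some $\calY_n\calY_{n+1}$ is false in general: for $\calA=\fMat(\ZZ)$ with $\calY_n=\calX_n$, the gap algebras are left over. \ref{thm:Commutant-of-central-in-central} gives no reason for the leftover to vanish. The Majorana stabilization is also not a tool for repairing parity mismatches at junctions. It is consumed inside \ref{thm:FindYinbetweenXZ} itself (cf.\ \ref{rem:CounterexampleSimpleBetweenSemisimple}).

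The missing idea is a coherent choice of the sandwiched algebras that makes each of them locally generated, so that every generator meets at most one junction. The paper gets this from a source projector.
\begin{enumerate}
\item Fix a minimal projector $\pi$ of $\calZ^L_0$. Then $\pi\calA_{[0,s]}$ has fermionic center only at its right end.
\item Sandwich a central $\calC^\pi_s$ between $\pi\calA_{[0,s]}$ and $\pi\calA_{[0,s+3\ell]}$. This is where the stabilization enters.
\item Set $\calG^\pi_{[a,b]}=\FermiCommutant{\calC^\pi_a}{\calC^\pi_b}$ and strip off $\pi$. This gives central algebras $\calG_{[a,b]}$ with $\calG_{[a,c]}=\calG_{[a,b]}\calG_{[b,c]}$ (\ref{lem:ConstructCentralG}, \ref{lem:ProductOfGIntervals}), hence $50\ell$-local generation.
\end{enumerate}

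For $\calG=\prod_j\calG_j$, the commutant $\calM$ in $\calA$ lives near the junctions by \ref{lem:VS_support}. Schmidt components of its elements lie in $\calA$ by \ref{thm:LocallyFactorizable}. They also lie in $\calM$, because each local generator of $\calG_j$ meets only one junction. Hence $\calM=\prod_j\calM_j$.

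Each $\calM_j$ is central: $\calG$ and $\calM$ generate $\calA$ by \ref{lem:full_mat_tensor_factor}, so a central element of $\calM_j$ would be central in $\calA$. Note that the junction algebras $\calM_j$ are kept as their own tensor factors in the final local assignment. They are not absorbed into the block algebras as your plan required.
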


Let $\calA$ be a \emph{stabilized} fermionic invertible subalgebra of
$\fMat(\ZZ)$ with spread $\ell$.
Here, by stabilization we mean that $\calA$ 
is a tensor product of a fermionic invertible subalgebra of~$\fMat(\ZZ,p)$
and an auxiliary algebra~$\fMat(\ZZ, s \mapsto \FermionMinus{1})$
that hosts one Majorana algebra~$\FermionMinus{1}$ on every site~$s$.
This auxiliary algebra will play an important role at an early step of the proof
and then stop playing any further role.
If $\calA$ was ungraded (i.e., $\calA$ is a subalgebra of~$\Mat(\ZZ)$),
then the stabilization is \emph{not} necessary.

We need to find a bounded-spread fermionic algebra
isomorphism from $\calA$ to a tensor product
of simple algebras.
To this end, the first step is to
construct an infinite sequence of
simple, fermionically commuting subalgebras of~$\calA$
starting from a source projector.
This is almost the same argument as the one in~\cite{FreedmanHastings2019QCA}
to show the triviality of qudit (bosonic) invertible subalgebras on a circle.
Our exposition differs in that 
we are equipped with the factorization of the center of
interval algebras~$\inta{a}{b}$ (\ref{lem:IntervalAlgebraCenterHasTensorDecomposition})
and the local generation property (\ref{thm:LocalGenerators}),
before we have the full structure theorem of $1\dd$ invertible subalgebras.

Working near the origin,
let $\pi$ be a minimal projector of $\calZ_0^L$
in \eqref{eq:LeftRightCenters}.
This $\pi$ will be fixed for the remainder of this subsection.
The introduction of $\pi$ allows us to
control the elements in the center of $\pi\inta{0}{s}$
on the left side of the interval:

\begin{lemma}
    For $s > 40\ell$, the fermionic center of
    $\pi\inta{0}{s}$ is $\pi \calZ_s^R$.
\end{lemma}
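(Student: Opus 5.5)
We want $\FermiCent(\pi\inta{0}{s}) = \pi\calZ^R_s$. The plan is to reduce to \ref{lem:IntervalAlgebraCenterHasTensorDecomposition}, which gives $\FermiCent(\inta{0}{s}) = \calZ^L_0\calZ^R_s$ for $s>40\ell$. First note that $\pi$ is an even projector lying in the fermionic center of $\inta{0}{s}$, because $\calZ^L_0 \subseteq \FermiCent(\inta{0}{s})$. Hence $\pi\inta{0}{s}$ is a fermionic algebra with unit $\pi$: it is the direct sum of those canonical summands $\pi_\mu\inta{0}{s}$ with $\pi_\mu\le\pi$.

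The key general fact is the following. If $e$ is an even projector in $\FermiCent(\calC)$, then
\[
\FermiCent(e\calC)=e\,\FermiCent(\calC).
\]
To check the inclusion $\supseteq$, note that $[ez, ec]_{\mathrm f} = e[z,c]_{\mathrm f}$, since $e$ is even and central. For $\subseteq$, take $w\in\FermiCent(e\calC)$. Then $w = ew$, and $w$ fermionically commutes with $e\calC$. For any $c\in\calC$,
\[
[w,c]_{\mathrm f}=[w,ec]_{\mathrm f}+[w,(1-e)c]_{\mathrm f}.
\]
The second term vanishes because $w(1-e)=0$ and $(1-e)cw=(1-e)c\,ew=c(1-e)ew=0$. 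So $w\in\FermiCent(\calC)$, and hence $w=ew\in e\,\FermiCent(\calC)$. Applying this with $e=\pi$ and $\calC=\inta{0}{s}$ gives
\[
\FermiCent(\pi\inta{0}{s})=\pi\calZ^L_0\calZ^R_s .
\]

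It remains to show $\pi\calZ^L_0\calZ^R_s=\pi\calZ^R_s$. By \ref{thm:Wedderburn-Fermionic}, $\calZ^L_0$ is spanned by its canonical projectors, and it is commutative. Since $\pi$ is a minimal projector of $\calZ^L_0$, we get $\pi\calZ^L_0=\CC\pi$. The algebras $\calZ^L_0$ and $\calZ^R_s$ are supported on the disjoint sets $[0,2\ell]$ and $[s-2\ell,s]$, and both consist of even elements, so they commute. Therefore
\[
\pi z^Lz^R=(\pi z^L)z^R\in\CC\,\pi z^R,
\]
which yields $\pi\calZ^L_0\calZ^R_s=\pi\calZ^R_s$.

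The main point needing care is the claim that $\calZ^L_0$ is even and consists of fermionic-central elements. The odd part is excluded by \ref{thm:Wedderburn-Fermionic}, which shows the fermionic center is spanned by even projectors. One should also make sure that "minimal projector of $\calZ^L_0$" is read inside the commutative algebra $\calZ^L_0$, so that $\pi\calZ^L_0$ is one-dimensional. A secondary check is that $\pi\calZ^R_s$ is faithful, i.e.\ $\pi z^R\neq 0$ when $z^R\neq0$. The statement as written does not require this, but if needed it follows from disjoint supports via the tensor factorization across $[0,2\ell]$ and $[s-2\ell,s]$.
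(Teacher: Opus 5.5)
Your proof is correct and is essentially the paper's argument. An element of $\FermiCent(\pi\calA_{[0,s]})$ fermionically commutes with all of $\calA_{[0,s]}$ because $\pi$ is even and central, so it lies in $\FermiCent(\calA_{[0,s]})=\calZ^L_0\calZ^R_s$, and minimality of $\pi$ in the commutative algebra $\calZ^L_0$ then collapses $\pi\calZ^L_0\calZ^R_s$ to $\pi\calZ^R_s$. Packaging this as the general identity $\FermiCent(e\calC)=e\,\FermiCent(\calC)$ for an even fermionic-central projector $e$ also makes explicit the reverse inclusion $\pi\calZ^R_s\subseteq\FermiCent(\pi\calA_{[0,s]})$, which the paper leaves implicit.
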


\begin{proof}
    Let $\pi z\in \FermiCent(\pi\inta{0}{s})$.
    For any $x\in \inta{0}{s}$, we have
    $\pi x\in \pi\inta{0}{s}$ and
    \begin{equation}
        0 = \FermiCommutator{\pi z}{\pi x}
        = \pi^2 \FermiCommutator{z}{x}
        = \pi \FermiCommutator{z}{x}
        = \FermiCommutator{\pi z}{x},
    \end{equation}
    so $\pi z$ fermionically commutes with $\inta{0}{s}$.
    Hence $\pi z\in \FermiCent(\inta{0}{s})=\calZ_0^L\calZ_s^R$
    by~\ref{lem:IntervalAlgebraCenterHasTensorDecomposition}.
    But since $\pi$ is a minimal projector in $\calZ_0^L$,
    $\pi z = \pi^2 z\in \pi \calZ_0^L\calZ_s^R
    = \pi \calZ_s^R$.
\end{proof}

We have a unital inclusion $\pi \inta{0}{s} \hookrightarrow \pi \inta{0}{s+3\ell}$
where the multiplicative identity is~$\pi$.
Using this inclusion, we wish to apply~\ref{thm:FindYinbetweenXZ} 
and find a central fermionic algebra.
The auxiliary algebra plays an important role here.
Since the auxiliary algebra is a fermionic tensor product of central algebras,
our minimial projector~$\pi$ or any element of~$\calZ^R_s$ has no support on the auxiliary algebra.
If we denote by~$\calA^0$ the invertible subalgebra before stabilization,
then we have $\inta{a}{b} = \calA^0_{[a,b]} \otimes \bigotimes_{s \in [a,b]} \FermionMinus{1}_s$.
The unital inclusion we are interested in factors as
\begin{align}
    \pi \inta{0}{s} &\hookrightarrow  
    \left(\pi \inta{0}{s+3\ell}^0 \otimes \bigotimes_{j=0}^{s+3\ell - 1} \FermionMinus{1}_j \right) \\
    &\hookrightarrow 
    \left(\pi \inta{0}{s+3\ell}^0 \otimes \bigotimes_{j=0}^{s+3\ell - 1} \FermionMinus{1}_j \right) \otimes \FermionMinus{1}_{s+3\ell} = \pi \inta{0}{s+3\ell}\, . \nonumber
\end{align}
Here, the first inclusion satisfies the trace condition of~\ref{thm:FindYinbetweenXZ}:
if $\pi_\mu \in \calZ^R_s$ and
$\tau_\nu \in \calZ^R_{s+3\ell}$ are minimal projectors,
then the canonical projectors of
$\pi \inta{0}{s}$ are $\pi \pi_\mu$ 
and those of $\pi \inta{0}{s+3\ell}$ are $\pi \tau_\nu$;
with the unique $\tr$ on $\fMat([0,s+3\ell])$,
we have 
$\tr( \pi \pi_\mu \cdot \pi \tau_\nu ) \tr(\pi) 
= \tr(\pi)^2 \tr(\pi_\mu) \tr(\tau_\nu)
= \tr(\pi \pi_\mu) \tr(\pi \tau_\nu)$
since $\pi_\mu$ and $\tau_\nu$ have disjoint support.
Therefore, we find a central fermionic algebra $\calC_s^\pi$
with unital inclusions
\begin{equation}
    \pi \inta{0}{s} \hookrightarrow \calC_s^\pi \hookrightarrow \pi \inta{0}{s+3\ell}.
\end{equation}
There may be more than one choice of~$\calC^\pi_s$ but we fix one for each~$s$.

For $b-40\ell > a > 40\ell$, we define
\begin{equation}
    \calG_{[a,b]}^\pi = \FermiCommutant{\calC_a^\pi}{\calC_b^\pi},
\end{equation}
which is central by \ref{thm:Commutant-of-central-in-central}
and satisfies $\calC^\pi_b \cong \calC_a^\pi \otimes
\intg{a}{b}^\pi$.
Furthermore, we define
\begin{equation}
    \label{eq:Gab_def}
    \calG_{[a,b]} = \{x^R \in\inta{a-3\ell}{b+3\ell} \mid 
    \pi x^R \in \calG^\pi_{[a,b]}\} .
\end{equation}

\begin{lemma}
    \label{lem:ConstructCentralG}
    We have  $\inta{a+5\ell}{b-5\ell}\subseteq \calG_{[a,b]}
    \subseteq \inta{a-3\ell}{b+3\ell}$
    and $\calG^\pi_{[a,b]} = \pi \calG_{[a,b]}$.
    Moreover, $\calG_{[a,b]}$ is central.
\end{lemma}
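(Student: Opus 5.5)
The plan is to prove the three claims in order: the inclusion chain, the identity $\calG^\pi_{[a,b]} = \pi\calG_{[a,b]}$, and centrality. Two facts do most of the work. First, $\pi$ is supported near $[0,2\ell]$, hence far from $[a-3\ell,b+3\ell]$ because $a>40\ell$. Second, multiplication by $\pi$ is therefore injective on elements supported away from $\pi$: if $\pi x = 0$ with disjoint supports, a Schmidt decomposition (equivalently, \ref{lem:mul_injective}) forces $x=0$.

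\emph{Upper inclusion and $\pi\calG_{[a,b]} = \calG^\pi_{[a,b]}$.} The inclusion $\calG_{[a,b]}\subseteq\inta{a-3\ell}{b+3\ell}$ holds by definition. For the identity, $\pi\calG_{[a,b]}\subseteq\calG^\pi_{[a,b]}$ is also by definition. For the reverse, take $y\in\calG^\pi_{[a,b]}\subseteq\calC^\pi_b\subseteq \pi\inta{0}{b+3\ell}$. Since $y$ fermionically commutes with $\calC^\pi_a\supseteq\pi\inta{0}{a}$, it fermionically commutes with all of $\pi\inta{0}{a}$, and in particular with the local generators of $\inta{0}{a}$ (\ref{thm:LocalGenerators}).

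Write $y=\pi x$ with $x\in\inta{0}{b+3\ell}$. I will then show that $x$ can be replaced by an element supported on $[a-3\ell,b+3\ell]$. The tool is a Schmidt decomposition across a cut near $a-3\ell$ together with \ref{thm:LocallyFactorizable}, which keeps the Schmidt factors in $\calA$. The left factors must fermionically commute with $\pi\inta{0}{a}$ (restricted by $\pi$). By \ref{lem:VS_support} and the preceding lemma (the fermionic center of $\pi\inta{0}{s}$ is $\pi\calZ^R_s$), the left factors, multiplied by $\pi$, lie in $\pi\calZ^R_a$ up to scalars. These $\calZ^R_a$ pieces are supported in $[a-2\ell,a]$ and can be absorbed into the right factor. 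This gives $x'\in\inta{a-3\ell}{b+3\ell}$ with $\pi x'=y$.

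I expect this step to be the main obstacle. It needs care with the grading, and with the fact that $\calC^\pi_a$ may be strictly larger than $\pi\inta{0}{a}$, which is why the window is padded by $3\ell$. If the Schmidt argument gets messy, the fallback is to use the tensor decomposition $\calC_b^\pi\cong\calC^\pi_a\otimes\calG^\pi_{[a,b]}$ and match dimensions against $\pi\inta{a-3\ell}{b+3\ell}$ instead.

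\emph{Lower inclusion.} Let $x\in\inta{a+5\ell}{b-5\ell}$. Then $\pi x\in\pi\inta{0}{b}\subseteq\calC^\pi_b$. Next, $\calC^\pi_a\subseteq\pi\inta{0}{a+3\ell}$, and the supports $[0,a+3\ell]$ and $[a+5\ell,b-5\ell]$ are at distance at least $2\ell$. Hence $x$ fermionically commutes with $\inta{0}{a+3\ell}$, and since $\pi$ is even and commutes with $x$, $\pi x$ fermionically commutes with $\calC^\pi_a$. So $\pi x\in\calG^\pi_{[a,b]}$ and $x\in\calG_{[a,b]}$.

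\emph{Centrality.} The map $x\mapsto\pi x$ from $\calG_{[a,b]}$ to $\calG^\pi_{[a,b]}$ is a graded $*$-homomorphism: $\pi$ is even, central in $\inta{0}{b+3\ell}$, and $\pi^2=\pi$. It is injective by the support separation noted at the start, and surjective by the identity just proved, so it is an isomorphism. Since $\calG^\pi_{[a,b]}$ is central by \ref{thm:Commutant-of-central-in-central}, so is $\calG_{[a,b]}$. It is also closed under products and adjoints, because $\pi(xy)=(\pi x)(\pi y)$ and $\calG^\pi_{[a,b]}$ is a fermionic subalgebra.
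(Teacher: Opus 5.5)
Your lower inclusion and your centrality argument are sound and essentially match the paper. The isomorphism $x\mapsto\pi x$ from $\calG_{[a,b]}$ onto $\calG^\pi_{[a,b]}$ amounts to the paper's observation that $\pi z\in\CC\pi$ with $z$ supported away from $\pi$ forces $z\in\CC\one$.

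The gap is in the reverse inclusion $\calG^\pi_{[a,b]}\subseteq\pi\calG_{[a,b]}$. A Schmidt decomposition of $y$ across a cut near $a-3\ell$ does not meet the hypothesis of \ref{thm:LocallyFactorizable}. That lemma requires $y$ to be supported on two regions separated by more than $2\ell$, and two sides of a cut are adjacent. So you do not know that the Schmidt factors lie in $\calA$. Your claim that the left factors fermionically commute with $\calA_{[0,a]}$ also fails for the $20\ell$-local generators $g$ that straddle the cut. The Leibniz-rule argument isolates $\FermiCommutator{g}{y^L_i}$ only when $g$ is disjoint from the support of the right factors. Citing \ref{lem:VS_support} afterwards for the left factors fixes neither problem, since that lemma only applies to elements already known to lie in $\calA$. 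The dimension-counting fallback does not close the gap either: nothing controls $\dim\calG^\pi_{[a,b]}$ against $\pi\calA_{[a-3\ell,b+3\ell]}$ a priori, and dimension equality alone would not give containment.

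The missing step is to apply \ref{lem:VS_support} to $y$ itself \emph{before} cutting. You already showed that $y=\pi x\in\calA$ fermionically commutes with $\calA_{[0,a]}$. Hence $\Supp(y)\subseteq[0,2\ell]\cup[a-2\ell,b+3\ell]$, and the two pieces are separated by more than $36\ell$. The argument then runs as follows.
\begin{enumerate}
\item \ref{thm:LocallyFactorizable} now applies to a Schmidt decomposition $y=\sum_i y^L_i y^R_i$ across this gap, so all factors lie in $\calA$.
\item Since $a>40\ell$, no $20\ell$-local generator of $\calA_{[0,a]}$ meets both pieces. Each $y^L_i$ therefore lies in $\FermiCent(\calA_{[0,a]})$ and is supported in $[0,2\ell]$.
\item By the preceding lemma, $\pi y^L_i\in\pi\calZ^R_a$. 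Being supported in $[0,2\ell]$, this forces $\pi y^L_i=c_i\pi$.
\item Thus $y=\pi y=\pi x^R$ with $x^R=\sum_i c_i y^R_i\in\calA_{[a-3\ell,b+3\ell]}$, and so $x^R\in\calG_{[a,b]}$. Nothing needs to be absorbed into the right factors.
\end{enumerate}
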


\begin{proof}
    By definition $\calG_{[a,b]}\subseteq \inta{a-3\ell}{b+3\ell}$. For the left inclusion,
    take any $x\in \inta{a+5\ell}{b-5\ell}$.
    Then $x$ has disjoint support from $\calC^\pi_a\subseteq \pi\inta{0}{a+3\ell}$,
    so $\pi x$ fermionically commutes with $\calC_a^{\pi}$
    and hence $\pi x \in \calG^{\pi}_{[a,b]}$.
    This proves that
    $\inta{a+5\ell}{b-5\ell}\subseteq \calG_{[a,b]}$.

    Next we show that $\pi\calG_{[a,b]} = \calG^\pi_{[a,b]}$.
    It is clear that $\pi \calG_{[a,b]} \subseteq \calG^\pi_{[a,b]}$. We need to show that
    every $\pi x \in \calG^\pi_{[a,b]} = \FermiCommutant{\calC_a^\pi}{\calC_b^\pi}$
    can be written as $\pi x = \pi x^R$
    for some $x^R$ supported on $[a-3\ell,b+3\ell]$.
    For any $y \in \inta{0}{a}$,
    we have $\pi y\in \pi \inta{0}{a}
    \subseteq \calC_a^\pi$ and
    \begin{equation}
        0 = \FermiCommutator{\pi x}{\pi y} =
        \pi^2 \FermiCommutator{x}{y} = \pi \FermiCommutator{x}{y}
        = \FermiCommutator{\pi x}{y},
    \end{equation}
    so $\pi x$ fermionically commutes with $\inta{0}{a}$.
    Now by \ref{lem:VS_support}
    the visible centrality of $\calA$ implies that
    $\pi x$ is supported on $L = [0,2\ell]$ union
    $R = [a-2\ell, b+3\ell]$, so we have a Schmidt
    decomposition $\pi x = \sum_i x_i^L x_i^R$
    with $x_i^L, x_i^R\in \calA$ by \ref{thm:LocallyFactorizable}.
    By \ref{thm:LocalGenerators}, $\inta{0}{a}$
    is $20\ell$-locally generated; each generator $g$
    cannot overlap with both $L$ and $R$,
    forcing $\FermiCommutator{g}{x_i^L} = 0$ for each $i$.
    Hence $x_i^L\in \FermiCent(\inta{0}{a})$.
    Multiplying by $\pi$, we have
    $\pi x = \pi^2x = \sum_i (\pi x_i^L)x_i^R$,
    where $\pi x_i^L \in \FermiCent(\pi\inta{0}{a}) =
    \pi \calZ_a^R$.
    As $\pi x_i^L$ is supported on the left side,
    we must have $\pi x_i^L\in \CC\pi$.
    Thus, for every $\pi x\in \calG^\pi_{[a,b]}$,
    there is some $x^R$ such that $\pi x = \pi x^R$
    and $x^R$ is supported on $R$.

    Finally, we prove that $\calG_{[a,b]}$ is central.
    Suppose $z\in \FermiCent(\calG_{[a,b]})$;
    then $\pi z\in \FermiCent(\calG^\pi_{[a,b]}) = \CC\pi$.
    Now since $z$ is supported on $[a-3\ell,b+3\ell]$,
    this forces $z\in \CC\one$.
\end{proof}

\begin{lemma}
    \label{lem:ProductOfGIntervals}
    For $c - 80\ell > b - 40\ell > a > 40\ell$,
    we have $\calG_{[a,c]} = \calG_{[a,b]}\calG_{[b,c]}$.
    Furthermore, $\intg{a}{b}$ and $\intg{b}{c}$
    fermionically commute.
\end{lemma}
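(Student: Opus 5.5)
The plan is to prove the statement first at the level of the $\pi$-compressed algebras $\calG^\pi$, where everything is finite-dimensional and central, and then remove $\pi$ using its support.

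\emph{Step 1: the $\pi$-level statement.} The hypotheses give $40\ell < a < b-40\ell < c-80\ell$, so $\intg{a}{b}^\pi$, $\intg{b}{c}^\pi$ and $\intg{a}{c}^\pi$ are all defined. We have unital inclusions of central fermionic algebras $\calC_a^\pi \hookrightarrow \calC_b^\pi \hookrightarrow \calC_c^\pi$, all with common identity $\pi$. By~\ref{thm:Commutant-of-central-in-central}, $\calC_b^\pi = \calC_a^\pi\,\intg{a}{b}^\pi \cong \calC_a^\pi \otimes \intg{a}{b}^\pi$.

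Two facts follow directly from the definitions:
\begin{itemize}
\item $\intg{a}{b}^\pi \subseteq \calC_b^\pi$ fermionically commutes with $\calC_a^\pi$, so $\intg{a}{b}^\pi \subseteq \intg{a}{c}^\pi$;
\item $\intg{b}{c}^\pi$ fermionically commutes with $\calC_b^\pi \supseteq \calC_a^\pi \cup \intg{a}{b}^\pi$, so $\intg{b}{c}^\pi \subseteq \intg{a}{c}^\pi$, and $\FermiCommutator{\intg{a}{b}^\pi}{\intg{b}{c}^\pi}=0$.
\end{itemize}

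Next, $\intg{a}{b}^\pi \hookrightarrow \intg{a}{c}^\pi$ is a unital inclusion of central algebras. Applying~\ref{thm:Commutant-of-central-in-central} again gives
\[
\intg{a}{c}^\pi = \intg{a}{b}^\pi \cdot \FermiCommutant{\intg{a}{b}^\pi}{\intg{a}{c}^\pi}.
\]
An element of $\calC_c^\pi$ fermionically commutes with both $\calC_a^\pi$ and $\intg{a}{b}^\pi$ exactly when it fermionically commutes with the algebra they generate, which is $\calC_b^\pi$. Hence
\[
\FermiCommutant{\intg{a}{b}^\pi}{\intg{a}{c}^\pi} = \FermiCommutant{\calC_b^\pi}{\calC_c^\pi} = \intg{b}{c}^\pi,
\]
and therefore $\intg{a}{c}^\pi = \intg{a}{b}^\pi\intg{b}{c}^\pi$.

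\emph{Step 2: removing $\pi$.} The projector $\pi$ is even and supported on $[0,2\ell]$. Every element of $\intg{a}{b}$, $\intg{b}{c}$ and $\intg{a}{c}$ is supported in $[a-3\ell,c+3\ell]$, which is disjoint from $[0,2\ell]$ because $a>40\ell$.

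Because $\fMat(\ZZ)$ is a fermionic tensor product of site algebras and $\pi\neq 0$, the map $z\mapsto \pi z$ is injective on elements $z$ whose support avoids $[0,2\ell]$. Moreover, $\pi$ commutes with all such $z$, so $\FermiCommutator{\pi x}{\pi y}=\pi\FermiCommutator{x}{y}$.

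\emph{Step 3: fermionic commutation.} Take $x\in\intg{a}{b}$ and $y\in\intg{b}{c}$. By~\ref{lem:ConstructCentralG}, $\pi x\in \intg{a}{b}^\pi$ and $\pi y\in\intg{b}{c}^\pi$. Step 1 gives $\pi\FermiCommutator{x}{y}=0$, and the injectivity in Step 2 then forces $\FermiCommutator{x}{y}=0$.

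\emph{Step 4: the product equality.} For the inclusion $\intg{a}{b}\intg{b}{c}\subseteq\intg{a}{c}$: a product $xy$ lies in $\inta{a-3\ell}{c+3\ell}$, and $\pi xy=(\pi x)(\pi y)\in\intg{a}{c}^\pi$, so $xy\in\intg{a}{c}$ by the definition~\eqref{eq:Gab_def}.

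For the reverse inclusion, take $w\in\intg{a}{c}$. Step 1 lets us write $\pi w=\sum_i (\pi x_i)(\pi y_i)$ with $x_i\in\intg{a}{b}$ and $y_i\in\intg{b}{c}$, using $\calG^\pi=\pi\calG$ from~\ref{lem:ConstructCentralG}. Then $\pi\bigl(w-\sum_i x_iy_i\bigr)=0$, and the injectivity in Step 2 gives $w=\sum_i x_iy_i$.

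The main obstacle is Step 1, specifically identifying the commutant of $\intg{a}{b}^\pi$ inside $\intg{a}{c}^\pi$ with $\intg{b}{c}^\pi$. That identification relies on $\calC_b^\pi$ being exactly generated by $\calC_a^\pi$ and $\intg{a}{b}^\pi$, which is what~\ref{thm:Commutant-of-central-in-central} supplies. The support bookkeeping in Steps 2–4 is routine.
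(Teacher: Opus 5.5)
Your proof is correct and follows the paper's route. You establish the factorization and the fermionic commutation for the $\pi$-compressed algebras, lift back using $\calG^\pi = \pi\calG$ from \ref{lem:ConstructCentralG}, and then cancel $\pi$ using disjointness of supports. The only local difference is how you get $\intg{a}{c}^\pi = \intg{a}{b}^\pi\intg{b}{c}^\pi$: you apply \ref{thm:Commutant-of-central-in-central} to $\intg{a}{b}^\pi \subseteq \intg{a}{c}^\pi$ and identify the relative commutant via $\calC_b^\pi = \calC_a^\pi\intg{a}{b}^\pi$, whereas the paper Schmidt-decomposes $\pi x$ in $\calC_c^\pi \cong \calC_b^\pi \otimes \intg{b}{c}^\pi$ and uses the Leibniz rule to place the $\calC_b^\pi$-components in $\intg{a}{b}^\pi$. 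Both versions rest on the same lemma and are equally short.
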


\begin{proof}
    From \eqref{eq:Gab_def} and the definition
    of $\calG^\pi_{[a,b]}$, we immediately have
    $\intg{a}{b}\intg{b}{c}\subseteq \intg{a}{c}$.
    For the reverse inclusion, take
    a homogeneous element $x\in \intg{a}{c}$.
    Then $\pi x \in \calC_c^\pi \cong \calC_b^\pi \otimes \intg{b}{c}^\pi$ has a Schmidt decomposition
    \begin{equation}
        \pi x = \sum_i (\pi y_i)(\pi z_i),
    \end{equation}
    where $\pi y_i \in \calC_b^\pi$ and
    $\pi z_i \in \intg{b}{c}^\pi$ are homogeneous.
    Now since $\pi x$ fermionically commutes with
    $\calC_a^\pi$, the Leibniz formula~\eqref{eq:graded_derivation}
    implies that each $\pi y_i$ fermionically commutes with~$\calC_a^\pi$.
    Hence $\pi y_i\in \intg{a}{b}^\pi$.
    Finally, by \ref{lem:ConstructCentralG},
    there exist $y_i^R\in \intg{a}{b}$ and $z_i^R\in\intg{b}{c}$
    such that $\pi y_i^R = \pi y_i$ and $\pi z_i^R = \pi z_i$.
    We have $\pi x = \sum_i \pi y_i^R z_i^R$ where 
    $\pi$ has disjoint support from $x,y_i^R,z_i^R$.
    Therefore, $x = \sum_i y_i^R z_i^R$, as needed.

    To see that $\intg{a}{b}$ and $\intg{b}{c}$
    fermionically commute, take $x\in \intg{a}{b}$
    and $y\in \intg{b}{c}$. By definition,
    $\pi y$ fermionically commutes with $\calC_b^\pi \ni \pi x$,
    and so $0 = \FermiCommutator{\pi x}{\pi y} = \pi \FermiCommutator{x}{y}$.
    Since $\FermiCommutator{x}{y}$ has disjoint support
    from $\pi$, this implies $\FermiCommutator{x}{y} = 0$.
\end{proof}

These lemmas are summarized as the following.
\begin{proposition}\label{prop:IntervalContainsLocallyGeneratedSimpleAlgebra}
    Let $\calA$ be an invertible subalgebra of~$\fMat(\ZZ)$ of spread~$\ell$,
    stabilized by one-Majorana algebra if $\calA$ has nontrivial grading.
    For any interval~$[a,b]$ with $b - a > 80\ell$,
    there exists a simple subalgebra $\calG$ of~$\calA$ such that
    \begin{enumerate}
        \item $\calG$ is $50\ell$-locally generated and
        \item $\calA_{[a+5\ell, b-5\ell]} \subseteq \calG \subseteq \calA_{[a-3\ell,b+3\ell]}$.
    \end{enumerate}
\end{proposition}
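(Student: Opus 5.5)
The plan is to take $\calG = \intg{a'}{b'}$ from \eqref{eq:Gab_def}, built around a minimal projector~$\pi$ of~$\calZ^L$ placed far to the left of~$[a,b]$. Simplicity and the sandwich (2) then come from \ref{lem:ConstructCentralG}, and local generation from \ref{lem:ProductOfGIntervals}.

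\emph{Setup.} The construction preceding the statement fixes~$\pi$ near the origin. Nothing singles out the origin, so I translate: pick $o$ with $a - o > 40\ell$, set $\pi$ to be a minimal projector of $\calZ^L_o$, and build the central algebras $\calC^\pi_s$ and $\intg{a}{b}^\pi$, $\intg{a}{b}$ relative to~$o$. With $b - a > 80\ell$ we have $b - 40\ell > a > o + 40\ell$, so \ref{lem:ConstructCentralG} applies to $\calG := \intg{a}{b}$. It gives $\inta{a+5\ell}{b-5\ell} \subseteq \calG \subseteq \inta{a-3\ell}{b+3\ell}$, which is (2), and it gives that $\calG$ is central. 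To see that $\calG$ is simple, note that it is a subalgebra of the finite-dimensional algebra $\inta{a-3\ell}{b+3\ell}$, hence finite-dimensional. Central finite-dimensional fermionic algebras are simple by \ref{thm:FiniteDimensionalCentralIsSimple}. (It is closed under the $*$ and the grading because $\pi$ is even and self-adjoint and $\intg{a}{b}^\pi$ is a fermionic subalgebra.)

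\emph{Local generation.} If $b - a$ is at most about $50\ell$, then $\calG$ is generated by itself, since its support has diameter at most $b - a + 6\ell$. Here I adjust the threshold, or else apply the general argument below, which covers all lengths. Otherwise I cut $[a,b]$ into consecutive cut points $a = t_0 < t_1 < \dots < t_k = b$ with each gap in $(40\ell, 44\ell]$. This is possible whenever $b - a > 80\ell$; for example, use two gaps in the range $(40\ell, 80\ell]$ if needed, and refine otherwise. Iterating \ref{lem:ProductOfGIntervals} gives $\intg{a}{b} = \intg{t_0}{t_1}\intg{t_1}{t_2}\cdots\intg{t_{k-1}}{t_k}$. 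That lemma's hypothesis $c - 80\ell > b - 40\ell$ requires each second piece to have length above $40\ell$ and the first piece to be above $40\ell$. So I would merge inductively from the right, writing $\intg{t_0}{b} = \intg{t_0}{t_1}\intg{t_1}{b}$ with $b - t_1 > 80\ell$, and treat the last two gaps together as a single piece of length in $(80\ell, 88\ell]$. Each factor $\intg{t_i}{t_{i+1}}$ is supported in $[t_i - 3\ell, t_{i+1} + 3\ell]$, of diameter at most $44\ell + 6\ell = 50\ell$. For the final double-length piece I instead split it via \ref{lem:ProductOfGIntervals} only when the hypothesis allows. Otherwise I choose the gaps so that every piece has length in $(40\ell, 44\ell]$ and arrange the recursion so that the lemma is always applied as $\intg{t_i}{b} = \intg{t_i}{t_{i+1}}\intg{t_{i+1}}{b}$ with $b - t_{i+1} > 80\ell$, stopping when the remainder lies in $(80\ell, 88\ell]$.

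\emph{Main obstacle.} The hard step is exactly this bookkeeping. The lemma's hypothesis forces the last remaining piece to have length above $80\ell$, which naively gives diameter $88\ell + 6\ell$, exceeding $50\ell$. My first attempt would be to apply \ref{lem:ProductOfGIntervals} to that last piece with a middle point $t$ satisfying $t - 40\ell > t_{k-2}$ and $b - 80\ell > t - 40\ell$. On inspection the stated hypothesis appears to be a technical margin rather than essential. Its proof only uses that $\intg{}{}$ is defined for each piece, which needs length $> 40\ell$, and that $\calC_c^\pi \cong \calC_b^\pi \otimes \intg{b}{c}^\pi$, which holds for any $b - a > 40\ell$ with $a > o + 40\ell$. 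So I would re-run that proof with only these requirements, obtaining factorization into pieces of length in $(40\ell, 44\ell]$ and hence $50\ell$-local generators.
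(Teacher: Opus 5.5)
Your route is the paper's. You take $\calG=\calG_{[a,b]}$ and read off the sandwich and centrality from \ref{lem:ConstructCentralG}. Simplicity follows from finite-dimensionality and \ref{thm:FiniteDimensionalCentralIsSimple}, and local generation from factoring with \ref{lem:ProductOfGIntervals}. Re-centering $\pi$ at a point $o$ with $a-o>40\ell$ is a correct detail the paper leaves implicit.

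The local-generation bookkeeping, however, contains a misreading and a false step.

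\emph{The misreading.} The hypothesis $c-80\ell>b-40\ell>a>40\ell$ of \ref{lem:ProductOfGIntervals} means exactly $c-b>40\ell$, $b-a>40\ell$ and $a>40\ell$. Peeling $[t_i,t_{i+1}]$ off $[t_i,b]$ therefore needs only $b-t_{i+1}>40\ell$. There is no forced last piece of length above $80\ell$, and the ``relaxed'' hypotheses you propose to re-derive at the end are literally the stated ones.

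\emph{The gap.} You claim that whenever $b-a>80\ell$ one can cut $[a,b]$ into gaps all lying in $(40\ell,44\ell]$. This is false: $k$ such gaps have total length in $(40k\ell,44k\ell]$. For instance $b-a=100\ell$ admits no such cut, since two gaps force one of length $\ge 50\ell$ and three or more force one of length $<40\ell$. Because the lemmas as stated require every piece to exceed $40\ell$, for such lengths some factor $\calG_{[t_i,t_{i+1}]}$ is only known to lie in an interval of diameter $>50\ell$. So the $50\ell$ bound is not proved.

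The paper's one-line proof asserts the $50\ell$ bound and passes over the same arithmetic, so this is not a departure from the paper, but your argument still needs a repair. In the proofs of \ref{lem:ConstructCentralG} and \ref{lem:ProductOfGIntervals}, the conditions $b-a>40\ell$ and $c-b>40\ell$ are never actually used. What is used is:
\begin{itemize}
\item the cut points lie more than $40\ell$ to the right of $o$, so that $\FermiCent(\pi\calA_{[o,a]})=\pi\calZ^R_a$ and $\calA_{[o,a]}$ is $20\ell$-locally generated;
\item consecutive cut points are at least $3\ell$ apart, so that $\calC^\pi_a\subseteq\pi\calA_{[o,a+3\ell]}\subseteq\calC^\pi_b$;
\item $\pi$ is disjoint from the supports involved.
\end{itemize}
With that observation you may cut $[a,b]$ into pieces of lengths between $22\ell$ and $44\ell$, which is always possible when $b-a>80\ell$. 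Each factor $\calG_{[t_i,t_{i+1}]}\subseteq\calA_{[t_i-3\ell,\,t_{i+1}+3\ell]}$ then lies in an interval of diameter at most $50\ell$, as claimed.
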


\begin{proof}
    We take $\calG_{[a,b]}$ from~\ref{lem:ConstructCentralG},
    which shows that $\calG$ is finite-dimensional central 
    and hence simple by~\ref{thm:FiniteDimensionalCentralIsSimple},
    and also that $\calG$ is sandwiched by interval algebras.
    From~\ref{lem:ProductOfGIntervals},
    we see that $\calG_{[a,b]}$ is a product of two or more subalgebras,
    each of which is contained in an interval of length at most~$50\ell$.
    This shows the local generation claim.
\end{proof}

\begin{proof}[Proof of Theorem~\ref{thm:1DBrauerGroupTrivial}]
    Applying ~\ref{prop:IntervalContainsLocallyGeneratedSimpleAlgebra},
    we can find $50\ell$-locally generated simple 
    subalgebras $\calG_j = \calG_{[100j\ell+10\ell,100(j+1)\ell-10\ell]}$
    sandwiched between $\calA_{[100j\ell+15\ell], 100 (j+1)\ell - 15\ell]}$
    and
    $\calA_{[100j\ell + 7\ell,100(j+1)\ell - 7\ell]}$
    for any~$j \in \ZZ$.
    Let $\calG = \prod_j \calG_j$.
    Let $\calM = \Commf(\calG, \calA)$ be the commutant within~$\calA$.
    By~\ref{lem:VS_support}, the commutant~$\calM$ is supported on the union of
    separated intervals $[100j\ell - 20\ell, 100j\ell + 20\ell]$.
    By~\ref{thm:LocallyFactorizable}, every Schmidt component of any element of~$\calM$
    on one of the separated intervals is an element of~$\calA$,
    which is in fact an element of~$\calM$ because $\calG$ is locally generated.
    This means that $\calM$ is a product of fermionically commuting subalgebras~$\calM_j$
    on $[100j\ell - 20\ell, 100j\ell + 20\ell]$.
    If $z$ is a central element of~$\calM_j$,
    then it must be central in~$\calA$, implying that $\calM_j$ is central (and hence simple).

    Define a local operator algebra $\fMat(\ZZ,r)$
    by assigning $\fMat(\{100j\ell + 50\ell\},r(100j\ell + 50\ell)) \cong \calG_j$
    and assigning the trivial algebra~$\CC$ elsewhere.
    By~\ref{lem:full_mat_tensor_factor},
    we see that $\calM$ and $\calG$ generate all of~$\calA$.
    We conclude that $\calA$ is bounded-spread isomorphic to $\fMat(\ZZ,r) \otimes \fMat(\ZZ,q)$
    where $q$ assigns $\calM_j$ for $100j\ell \in \ZZ$ and $\CC$ elsewhere,
    which is the theorem.
\end{proof}

\subsection{Torus-trick for invertible subalgebras}
\label{subsec:TorusTrick}

In this subsection, we explore the generalizations
beyond $\ZZ^\dd$ to 
lattices over more general metric spaces,
which is the setting discussed in~\cite{FHH2019}.
We use the ``torus trick''~\cite{hastings2013classifying}
for invertible subalgebras to prove two results.
First, we show the Brauer triviality of invertible
subalgebras on one-dimensional simplicial complexes.
Then, we apply this result to show that
translation-invariant Brauer trivial invertible subalgebras
are trivial with uniformly bounded stabilization.

\begin{definition}\label{def:Lattice}
    A \emph{lattice} $\Lambda \subseteq X$ on a metric space~$X$
    is a locally finite subset,
    \emph{i.e.},
    every ball of~$X$ of finite radius
    contains only finitely many points of~$\Lambda$.
\end{definition}

To use the torus trick
for invertible subalgebras,
we first require a  ``filling-hole'' lemma:

\begin{lemma}\label{lem:fillingHoleISA}
    Let $\calA^\circ, \calB^\circ$ be fermionically commuting fermionic subalgebras of~$\fMat(\Lambda)$
    on a \emph{finite} lattice~$\Lambda$
    with the multiplication map $\mul : \calA^\circ \otimes \calB^\circ \to \fMat(\Lambda)$.
    Let $P\subseteq \Lambda$ be a ball of radius~$\ell$.
    Suppose that $\mul$ has a spread-$\ell$ partial inverse:
    \begin{equation}
        \fMat(\Lambda \setminus P) \to \calA^\circ \otimes \calB^\circ
    \end{equation}
   Then, there exists an invertible subalgebra~$\calA \subseteq \fMat(\Lambda) \otimes \FermionMinus{1}_{P}$ 
   of spread at most~$10\ell$
   such that: (i) $\calA$ contains~$\calA^\circ \cap \fMat(\Lambda \setminus P^{+10\ell})$, and
   (ii) $B^\circ$ fermionically commutes with~$\calA$. 
   Here, $\FermionMinus{1}_P$ is supported at the center site in~$P$.
\end{lemma}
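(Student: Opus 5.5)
The plan is to obtain $\calA$ as a fermionic commutant of a \emph{central} enlargement $\calY$ of $\calB^\circ$, where $\calY$ is allowed to use extra operators only near~$P$. The Majorana $\FermionMinus{1}_P$ pays for making $\calY$ central via~\ref{thm:FindYinbetweenXZ}. Everything is finite dimensional, so the results of the finite-dimensional section apply directly.

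\emph{Step 1: the center of $\calB^\circ$ lives near~$P$.} Away from~$P$, the spread-$\ell$ partial inverse of~$\mul$ plays the role of invertibility. Repeating the argument of~\ref{lem:invertible_VS} for sites $s \notin P^{+\ell}$ shows that both $\calA^\circ$ and $\calB^\circ$ are visibly central with range~$\ell$ at such sites. Then the argument of~\ref{lem:VS_support} places the canonical projectors of $\calB^\circ$ (and of $\calA^\circ$) in $\fMat(P^{+2\ell})$. Similarly, the decomposition of~\ref{lem:full_mat_tensor_factor} shows that $\mul$ is injective.

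\emph{Step 2: sandwich a central algebra.} Let $\calX = \calB^\circ$ and let $\calZ$ be the subalgebra generated by $\calB^\circ$ and $\fMat(P^{+4\ell})$. Since $\calZ \supseteq \fMat(P^{+4\ell})$, \ref{lem:full_mat_tensor_factor} (with $\psi$ the identity) gives $\calZ \cong \fMat(P^{+4\ell}) \otimes \FermiCommutant{\fMat(P^{+4\ell})}{\calZ}$. Hence the canonical projectors of $\calZ$ are supported outside $P^{+4\ell}$. In fact, using the partial inverse on $\Lambda\setminus P$, they are supported in the shell $P^{+6\ell}\setminus P^{+4\ell}$. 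The canonical projectors of $\calX$ sit in $P^{+2\ell}$ by Step~1.

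Because the two families of projectors have disjoint supports, the unique tracial state on $\fMat(\Lambda)$ factorizes on them. This gives the trace condition~\eqref{eq:tracecondition}, exactly as in the computation preceding the definition of $\calC^\pi_s$. Applying~\ref{thm:FindYinbetweenXZ} then produces a central $\calY$ with
\begin{equation}
    \calB^\circ \hookrightarrow \calY \hookrightarrow \calZ \otimes \FermionMinus{1}_P ,
\end{equation}
where $\calY$ is generated by $\calB^\circ$ together with elements supported in $P^{+6\ell}$.

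\emph{Step 3: define $\calA$ and check (i), (ii).} Set $\calA = \FermiCommutant{\calY}{\fMat(\Lambda)\otimes\FermionMinus{1}_P}$. By~\ref{thm:Commutant-of-central-in-central}, $\calA$ is central and $\calY \otimes \calA \cong \fMat(\Lambda)\otimes \FermionMinus{1}_P$, with the isomorphism given by $\mul$. Property (ii) is immediate because $\calB^\circ \subseteq \calY$. For (i), an element of $\calA^\circ$ supported outside $P^{+10\ell}$ commutes with $\calB^\circ$ and has support disjoint from the extra generators of $\calY$, which lie in $P^{+6\ell}$. Hence it lies in~$\calA$.

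\emph{Step 4: bound the spread.} The fermionic commutant of $\calA$ is $\calY$ by~\ref{thm:Bicommutant-of-Central}, so it remains to show that $\mul^{-1}$ has spread at most $10\ell$.
\begin{itemize}
    \item For $x$ supported far from $P$, the given partial inverse writes $x=\sum a_i b_i$ with $a_i\in\calA^\circ$ and $b_i \in \calB^\circ$ supported within $\ell$ of $x$. Those $a_i$ already lie in $\calA$ by the argument for~(i), so this decomposition is admissible.
    \item For $x$ supported near $P$, we only need supports inside $\Supp(x)^{+10\ell}$. Here I would write $\fMat(P^{+8\ell})\otimes\FermionMinus{1}_P$ as a tensor factor and apply~\ref{lem:full_mat_tensor_factor} inside $\calY$ and inside $\calA$ separately. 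This localizes the $\calY\otimes\calA$ decomposition of such $x$ to $P^{+10\ell}$.
\end{itemize}

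\emph{Main obstacle.} I expect the hardest part to be Steps 2 and 4 jointly: controlling exactly where the canonical projectors of $\calZ$ live, so that the trace condition holds and so that $\calY$'s extra generators are genuinely local near~$P$. The constant $10\ell$ also depends on this bookkeeping. If the shell estimate fails, I would first replace $\calZ$ by the algebra generated by $\calB^\circ$ and $\fMat(P^{+k\ell})$ for a larger~$k$, and then reduce $k$ afterwards.
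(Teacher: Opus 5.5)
Your construction is the mirror image of the paper's. The paper applies \ref{thm:FindYinbetweenXZ} on the $\calA^\circ$ side: it sandwiches $\calA$ between $\calA^\circ\cap\fMat(\Lambda\setminus P^{+4\ell})$ and $(\calA^\circ\cap\fMat(\Lambda\setminus P^{+\ell}))\otimes\FermionMinus{1}_P$. Then (i) and (ii) are immediate, and the trace condition needs only visible centrality of $\calA^\circ$ away from $P$. You instead sandwich $\calY$ on the $\calB^\circ$ side and take $\calA$ to be its commutant. That route can be made to work, but some claims need repair.

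First, the injectivity of $\mul$ on $\calA^\circ\otimes\calB^\circ$ asserted in Step~1 is false in general. Take an even projector $\pi\in\fMat(P)$, let $\calA^\circ$ be generated by $\fMat(\Lambda\setminus P)$ and $\pi$, and set $\calB^\circ=\CC\one+\CC\pi$. Then $\mul(\pi\otimes(\one-\pi))=0$. You never use this claim, so simply drop it.

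Second, \ref{thm:FindYinbetweenXZ} says nothing about generators of $\calY$. So the claim that $\calY$ is generated by $\calB^\circ$ and elements supported in $P^{+6\ell}$ is unsupported at that stage, and your proof of (i) rests on it. What you actually have is $\calY\subseteq\calZ\otimes\FermionMinus{1}_P$. Let $\calA^\circ_{\mathrm{far}}=\calA^\circ\cap\fMat(\Lambda\setminus P^{+4\ell})$. It fermionically commutes with $\calB^\circ$, with $\fMat(P^{+4\ell})$ and with $\FermionMinus{1}_P$, hence with $\calY$. So $\calA^\circ_{\mathrm{far}}\subseteq\calA$, which gives (i). The same observation is also the real reason behind your shell claim in Step~2. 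A central element of $\calZ$ commutes with $\calB^\circ$ and with $\calA^\circ_{\mathrm{far}}$, hence, via the partial inverse, with every single-site operator outside $P^{+5\ell}$.

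\textbf{The genuine gap is the second bullet of Step~4.} This is exactly where invertibility of $\calA$ near the hole is decided. \ref{lem:full_mat_tensor_factor} requires the ambient algebra to contain a full $\fMat(T)$, and neither $\calY$ nor $\calA$ contains $\fMat(P^{+8\ell})\otimes\FermionMinus{1}_P$ in general. The algebra $\calA$ must commute with $\calB^\circ$, which may have elements supported in $P$. The algebra $\calY$ must commute with $\calA^\circ_{\mathrm{far}}$, which has elements supported in $P^{+8\ell}\setminus P^{+4\ell}$. So that lemma localizes nothing here.

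The missing idea is a commutation argument, the transpose of the one the paper uses on its side.
\begin{itemize}
\item For $x$ supported in $P^{+5\ell}$ (Majorana included), write $x=\sum_i y_ia_i$ with $y_i\in\calY$ and $a_i\in\calA$ homogeneous, each family linearly independent.
\item For homogeneous $b\in\calB^\circ$ supported outside $P^{+6\ell}$, you get $0=\FermiCommutator{b}{x}=\sum_i\FermiCommutator{b}{y_i}a_i$. Injectivity of $\mul:\calY\otimes\calA\to\fMat(\Lambda)\otimes\FermionMinus{1}_P$ then forces $\FermiCommutator{b}{y_i}=0$.
\item Likewise, for $a\in\calA^\circ$ supported outside $P^{+6\ell}$, you get $\sum_i\pm y_i\FermiCommutator{a}{a_i}=0$, so $\FermiCommutator{a}{a_i}=0$.
\item Meanwhile, $y_i$ commutes with $\calA^\circ_{\mathrm{far}}\subseteq\calA$, and $a_i$ commutes with $\calB^\circ\subseteq\calY$, automatically.
\item Now decompose any single-site operator outside $P^{+7\ell}$ with the partial inverse. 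Every $y_i$ and $a_i$ commutes with it, so by \ref{lem:supp_by_commutation} all components lie in $P^{+7\ell}$.
\end{itemize}
Your first bullet handles single sites outside $P^{+5\ell}$, and decompositions of products are products of decompositions. Together these give $\calA$ invertible with bounded spread. The constant is not literally $10\ell$ for sites near the edge of $P^{+5\ell}$, but the paper's own bookkeeping is equally loose there.
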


\begin{proof}
    We first note that the visible centrality of~$\calA^\circ$ continues to hold away from~$P$:
    if $x \in \calA^\circ$ is supported outside~$P$,
    then a single-site operator that does not fermionically commute with~$x$
    has a spread-$\ell$ preimage under~$\mul$, and hence provides an element of~$\calA^\circ$
    that does not fermionically commute with~$x$.
    Then, for any region~$R \subseteq P^c$ (where \(P^c = \Lambda \setminus P\)),
    the fermionic center of~$\calA^\circ \cap \fMat(R)$
    must be supported near the boundary of~$R$.

    Consider two regions~$R = (P^{+4\ell})^c \subseteq R' = (P^{+\ell})^c$.
    The fermionic centers of finite-dimensional fermionic algebras 
    $\calA^\circ \cap \fMat(R)$ and $\calA^\circ \cap \fMat(R')$
    have disjoint support, and therefore we can apply~\ref{thm:FindYinbetweenXZ}
    with the stabilization~$\FermionMinus{1}_P$
    to obtain a simple fermionic algebra~$\calA$ with
    $\calA^\circ \cap \fMat(R)
    \subseteq \calA \subseteq
    (\calA^\circ \cap \fMat(R')) \otimes \FermionMinus{1}$.
    Note that $\FermiCommutator{\calA}{\calB^\circ} = 0$.

    It suffices to show that $\calA$ is invertible.
    Let $\calB\supseteq \calB^\circ$
    be the fermionic commutant of~$\calA$
    within $\fMat(\Lambda') = \fMat(\Lambda) \otimes \FermionMinus{1}_P$.
    Since $\Lambda$ is finite,
    we have $\calA \otimes \calB \cong \fMat(\Lambda')$;
    we need to check that this isomorphism has a
    spread-$10\ell$ inverse.
    Every element~$z \in \fMat(\Lambda')$ supported outside~$P^{+5\ell}$
    enjoys the same decomposition from the partial
    inverse of $\mul$
    using elements of~$\calA^\circ$ and~$\calB^\circ$.
    Every element~$y \in \fMat(\Lambda')$ supported inside~$P^{+6\ell}$
    decomposes according to $\calA \otimes \calB \cong \fMat(\Lambda')$,
    but all Schmidt $\calA$-components commute 
    with any element of~$\calA^\circ \cap \fMat((P^{+7\ell})^c)$
    so they are supported on~$P^{+9\ell}$.
    Since $y$ commutes with every single-site operator of~$\fMat((P^{+9\ell})^c)$,
    all Schmidt $\calB$-components are also supported on~$P^{+9\ell}$.
\end{proof}

\begin{definition}
    Consider a fermionic algebra $\fMat(\Lambda_X, p_X)$
    for a lattice $\Lambda_X$ on a metric space~$X$.
    Given another metric space~$Y$ and a map
    $\iota: Y\to X$, one can construct \emph{pull-back lattice}
    $\Lambda_Y = \iota^{-1}(\Lambda_X)$
    and the \emph{pull-back algebra} $\fMat(\Lambda_Y,p_Y)$
    defined by $p_Y(y) = p_X(\iota(y))$.
\end{definition}

We now dicuss the setup for the torus trick.
Let $\calA$ be an invertible subalgebra
of $\fMat(\Lambda_X)$ with spread $\ell$
for a lattice~$\Lambda_X$ on a metric space~$X$.
Let $B_{\dd}(L)$ denote the $\dd$-dimensional Euclidean
box of linear dimension $L$,
and let $T_{\dd}(L)$ denote the $\dd$-torus
$\RR^\dd/(L\ZZ)^\dd$.
For any $L\ge 1000\ell$,
there is an embedding
$\iota_T: B_{\dd}(L) \hookrightarrow T_{\dd}(L + 100\ell)$
such that $\dist(\iota(y),\iota(y')) = \dist(y,y')$
whenever $\dist(y,y')\le 100\ell$.
Given a fermionic algebra $\fMat(\Lambda_T)$
on the $\dd$-torus, $\iota_T$ yields a pull-back
algebra $\fMat(\Lambda_{B_\dd(L),T})$.

Let $L' = L + 100\ell$ if $\dd = 1$
or $L' = 7L$ if $\dd \ge 2$.
Suppose there is a subset $E_X\subseteq X$ isometric
to $B_\dd(L')$, and
that there is a metric-space isomorphism
$\iota_E: B_\dd(L)\to E_X^{-(L'-L)/2}$.\footnote{
    This isomorphism only exists if the dimensionality $\dd$
    is ``correct." If $\dd = 1$ but $X = \RR^2$,
    then $E_X^{-r} = \emptyset$ for any $-r < 0$.
}
Then $\iota_E$ yields a pull-back algebra
$\fMat(\Lambda_{B_\dd(L),E})$.
By our choice of $L'$, there exists an immersion
$\iota$ of the punctured $\dd$-torus $T_\dd^p$ in
$E_X$ such that the following diagram commutes:
\begin{equation}
    \begin{tikzcd}
        & E_X^{-(L'-L)/2} \arrow[rd, hook] & &  \\
        B_\dd(L) \arrow[ru, "\iota_E"] \arrow[rd, "\iota_T"'] & & E_X\cong B_\dd(L')\arrow[r, hook] & X \\
        & T_\dd^p(L+100\ell) \arrow[r,hook] \arrow[ru, "\iota"] & T_\dd(L+100\ell) & 
    \end{tikzcd}
    \label{eq:torus_setup}
\end{equation}
An example immersion is shown in \autoref{fig:torus_immersion}.
The images of $\iota_T$ and $\iota_E$ are shown in orange.

\begin{figure}
    \centering
    \begingroup
        \resizebox{0.9\textwidth}{!}{\definecolor{sgreen}{RGB}{125,247,125}
\definecolor{smagenta}{RGB}{255,102,255}
\definecolor{sorange}{RGB}{250,190,125}
\definecolor{sgray}{RGB}{135,135,135}

\def\aa{1.2}   
\def\rp{0.38}  
\def\ee{0.82}  

\newcommand{\idsquare}{%
  \draw[twoarrows] (-\aa, \aa) -- ( \aa, \aa);
  \draw[twoarrows] (-\aa,-\aa) -- ( \aa,-\aa);
  \draw[onearrow]  (-\aa,-\aa) -- (-\aa, \aa);
  \draw[onearrow]  ( \aa,-\aa) -- ( \aa, \aa);}

%
\def\gtop{(1.15,1.15)
  -- ( 4.95, 1.15) arc[start angle= 90,end angle=  0,radius=2.5cm]
  -- ( 7.45,-4.25) arc[start angle=  0,end angle=-90,radius=2.5cm]
  -- (-1.55,-6.75) arc[start angle=-90,end angle=-180,radius=2.5cm]
  -- (-4.05,-1.35) arc[start angle=180,end angle= 90,radius=2.5cm]
  -- (-1.15, 1.15)}
\def\gbot{%
  -- (-1.55,-1.15) arc[start angle= 90,end angle=180,radius=0.2cm]
  -- (-1.75,-4.25) arc[start angle=180,end angle=270,radius=0.2cm]
  -- ( 4.95,-4.45) arc[start angle=270,end angle=360,radius=0.2cm]
  -- ( 5.15,-1.35) arc[start angle=  0,end angle= 90,radius=0.2cm]
  -- ( 1.15,-1.15)}
\def\mout{(-1.15,-1.15)
  -- (-1.15,-1.55) arc[start angle=180,end angle=270,radius=2.5cm]
  -- ( 2.25,-4.05) arc[start angle=-90,end angle=  0,radius=2.5cm]
  -- ( 4.75, 1.55) arc[start angle=  0,end angle= 90,radius=2.5cm]
  -- ( 1.35, 4.05) arc[start angle= 90,end angle=180,radius=2.5cm]
  -- (-1.15, 1.15)}
\def\minn{%
  -- ( 1.15, 1.55) arc[start angle=180,end angle= 90,radius=0.2cm]
  -- ( 2.25, 1.75) arc[start angle= 90,end angle=  0,radius=0.2cm]
  -- ( 2.45,-1.55) arc[start angle=  0,end angle=-90,radius=0.2cm]
  -- ( 1.35,-1.75) arc[start angle=-90,end angle=-180,radius=0.2cm]
  -- ( 1.15,-1.15)}
\tikzset{dimline/.style={{Stealth[length=1.4mm,width=1.4mm]}-%
                         {Stealth[length=1.4mm,width=1.4mm]},line width=0.8pt}}

\begin{tikzpicture}[
  line width=1.1pt,
  onearrow/.style={
    decoration={markings,
      mark=at position 0.551 with {\arrow{Stealth[length=3.4mm,width=3.6mm]}}},
    postaction={decorate}},
  twoarrows/.style={
    decoration={markings,
      mark=at position 0.500 with {\arrow{Stealth[length=3.4mm,width=3.6mm]}},
      mark=at position 0.601 with {\arrow{Stealth[length=3.4mm,width=3.6mm]}}},
    postaction={decorate}},
  hatch/.style={pattern=north east lines,pattern color=black}]

\begin{scope}[scale=1.4]
  \idsquare
  \path[hatch] (-\rp,-\rp) rectangle (\rp,\rp);
  \draw        (-\rp,-\rp) rectangle (\rp,\rp);
\end{scope}

\node[scale=2.4] at (2.7,0) {$=$};

\begin{scope}[shift={(5.4,0)},scale=1.4]
  \fill[sorange]  (-\ee,-\ee) rectangle ( \ee, \ee);
  \fill[sgreen]   (-\aa,-\ee) rectangle (-\ee, \ee);
  \fill[sgreen]   ( \ee,-\ee) rectangle ( \aa, \ee);
  \fill[smagenta] (-\ee, \ee) rectangle ( \ee, \aa);
  \fill[smagenta] (-\ee,-\aa) rectangle ( \ee,-\ee);
  \foreach \sx in {-1,1}{\foreach \sy in {-1,1}{%
    \path[hatch] ({\sx*\ee},{\sy*\ee}) rectangle ({\sx*\aa},{\sy*\aa});
    \draw ({\sx*\aa},{\sy*\ee}) -- ({\sx*\ee},{\sy*\ee}) -- ({\sx*\ee},{\sy*\aa});}}
  \idsquare
  \draw[dimline] (-\ee,0) -- (\ee,0)
    node[midway,fill=sorange,inner sep=1pt,scale=1.25] {$L$};
\end{scope}

\node[scale=1.5] at (5.4,2.35) {$T_2^p$};

\draw[-{Stealth[length=3mm,width=3mm]}] (7.6,0) -- (9.4,0)
  node[midway,above=1pt,scale=1.5] {$\iota$};

\begin{scope}[shift={(13.26,0)},scale=0.42,line join=round,line cap=round]
  \draw (-8.05,-8.05) rectangle (8.05,8.05);
  \node[anchor=north west,scale=1.5] at (-7.75,7.75) {$E_X$};
  \fill[sgreen] \gtop -- (-1.15,-1.15) \gbot -- cycle;
  \draw \gtop;
  \draw (-1.15,-1.15) \gbot;
  \fill[smagenta] \mout -- (1.15,1.15) \minn -- cycle;
  \draw \mout;
  \draw (1.15,1.15) \minn;
  \fill[sgray] (2.45,-1.15) rectangle (4.75,1.15);
  \draw[line cap=butt] (2.45,-1.15) rectangle (4.75,1.15);
  \fill[sorange] (-1.16,-1.16) rectangle (1.16,1.16);
  \draw[dimline] (-1.15,0) -- (1.15,0)
    node[midway,fill=sorange,inner sep=1pt,scale=1.25] {$L$};
  \draw[dimline] (-8.05,-8.6) -- (8.05,-8.6)
    node[midway,below=1pt,scale=1.5] {$7L$};
\end{scope}

\end{tikzpicture}}
    \endgroup
    \caption{Immersion of the punctured $2$-torus in $\RR^2$.}
    \label{fig:torus_immersion}
\end{figure}

\begin{lemma}\label{lem:TorusTrick}
    Let $\calA$ be an invertible subalgebra with spread
    $\ell$ on $\fMat(\Lambda_X)$ for some metric space~$X$
    equipped with the maps in~\ref{eq:torus_setup}.
    Then there exists an algebra $\fMat(\Lambda_{T_\dd})$
    on $T_\dd(L+100\ell)$ and an invertible subalgebra
    $\calD\subseteq \fMat(\Lambda_{T_\dd})$ such that
    the pull-backs $\fMat(\Lambda_{B_d(L),T})$
    and $\fMat(\Lambda_{B_\dd(L),E})$ are exactly equal,
    and the pull-back of $\calD\cap \fMat(\im \iota_T)$
    via~$\iota_T$ inside $\fMat(\Lambda_{B_d(L),T})$
    exactly equals the pull-back
    of $\calA\cap \fMat(\im \iota_E)$ via~$\iota_E$ inside 
    $\fMat(\Lambda_{B_\dd(L),E})$.
\end{lemma}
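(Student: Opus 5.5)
We transport $\calA$ along the immersion $\iota$ and then repair the puncture with \ref{lem:fillingHoleISA}.

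First, define the lattice on the torus. On the image of the punctured torus $T^p_\dd(L+100\ell)$, pull back $\Lambda_X$ and $p_X$ via $\iota$. On the puncture, a ball $P$ of radius comparable to $\ell$ after shrinking, put any lattice, for instance $\Lambda_X$ pulled back through a local isometric extension. Then add the Majorana $\FermionMinus{1}_P$ at the centre of $P$. Since $\iota\circ\iota_T=\iota_E$ on $B_\dd(L)$ by commutativity of \eqref{eq:torus_setup}, the two pull-back algebras $\fMat(\Lambda_{B_\dd(L),T})$ and $\fMat(\Lambda_{B_\dd(L),E})$ coincide.

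Second, build $\calA^\circ$ and $\calB^\circ$ on the finite torus lattice. Let $\calB^{X}$ be the fermionic commutant of $\calA$. Because $\iota$ is a local isometry at scales $\le 100\ell$ and the spread is $\ell$, every local generator of $\calA$ or $\calB^X$ supported on a $2\ell$-ball inside $\im\iota$ transports to an operator on the torus. By \ref{thm:InvertibleSubalgebraIsLocallyGenerated}, $\calA$ and $\calB^X$ are $2\ell$-locally generated. Let $\calA^\circ$ (resp.\ $\calB^\circ$) be generated by the transported $2\ell$-local generators whose supports lie at distance $\ge 10\ell$ from the puncture.

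Fermionic commutation is a local condition. Two generators overlap only within a $4\ell$-ball, where $\iota$ is an isometric chart, so $\FermiCommutator{\calA^\circ}{\calB^\circ}=0$. Likewise, each single-site operator $x$ away from the puncture has a decomposition $x=\sum a_ib_i$ in $X$ supported on $\Supp(x)^{+\ell}$. This decomposition transports verbatim and gives a spread-$\ell$ partial inverse of $\mul$ on $\fMat(\Lambda\setminus P)$ for a suitably enlarged hole $P$. The radius of $P$ is a fixed multiple of $\ell$, and the constants of \ref{lem:fillingHoleISA} are rescaled accordingly.

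Third, apply \ref{lem:fillingHoleISA} to get an invertible $\calD\subseteq\fMat(\Lambda_{T_\dd})\otimes\FermionMinus{1}_P$ containing $\calA^\circ\cap\fMat(\Lambda\setminus P^{+10\ell})$ and fermionically commuting with $\calB^\circ$. Choose the constants so that $\im\iota_T$ lies far (more than $20\ell$) from $P$; the choice $L'=7L$, respectively $L+100\ell$, leaves that room.

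Finally, verify that $\calD\cap\fMat(\im\iota_T)$ corresponds exactly to $\calA\cap\fMat(\im\iota_E)$. The inclusion $\supseteq$ uses \ref{thm:InvertibleSubalgebraIsLocallyGenerated}: any $a\in\calA$ supported on $\im\iota_E$ is a polynomial in $2\ell$-local generators supported in $\Supp(a)^{+\ell}$. Those generators transport into $\calA^\circ\subseteq\calD$.

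The inclusion $\subseteq$ is the main obstacle, and I would handle it via bicommutants. An element $d\in\calD$ supported on $\im\iota_T$ fermionically commutes with $\calB^\circ$, hence with the transported local generators of $\calB^X$ near $\im\iota_T$. Its transport to $X$ then commutes with all local generators of $\calB^X$: those far from $\im\iota_E$ are disjoint from it, and those nearby are exactly the transported ones. By \ref{thm:BicommutantOfInvertible} it therefore lies in $\calA$. The delicate point is the bookkeeping that every generator of $\calB^X$ meeting $\im\iota_E$ lies in the chart region and avoids the $10\ell$-neighbourhood of $P$. I expect this to fix the needed margins, roughly $L'-L\gtrsim 100\ell$ and the distance between the puncture and the box.
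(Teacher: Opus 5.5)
Your proposal follows the paper's proof. You transport the local $\calA\calB$-decompositions along $\iota$ to get $\calA^\circ,\calB^\circ$ on the punctured torus, with a partial inverse of $\mul$ away from the puncture, and then heal the puncture with \ref{lem:fillingHoleISA}. Your two-inclusion check of exact agreement (via (i), (ii), local generation and \ref{thm:BicommutantOfInvertible}) is in fact more careful than the paper, which asserts the agreement only for $\calA^\circ_Y$ ``because the diagram commutes'' and never revisits it for $\calD$.

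The constants in your step three need repair. The room between $\im\iota_T$ and the hole is the $50\ell$ margin of $T_\dd(L+100\ell)$, not $L'$. Moreover, \ref{lem:fillingHoleISA} for a hole of radius $r$ only places $\calA^\circ\cap\fMat(\Lambda\setminus P^{+10r})$ inside $\calD$. With your $10\ell$ buffer the hole has $r\ge 11\ell$, so $P^{+10r}$ swallows $(\im\iota_T)^{+\ell}$ and the $\supseteq$ inclusion is not covered. The fix is easy: drop the buffer and use a puncture of radius $\lesssim\ell$. Then $r\approx 2\ell$, and $P^{+10r}$ stays clear of $(\im\iota_T)^{+\ell}$, which sits about $49\ell$ from the center of $P$.

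Also, away from the box $\iota$ is merely locally injective and non-contracting; it is not a local isometry, since it stretches the handles. Local injectivity and non-contraction are all your transport steps actually use.
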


\begin{proof}
    Let $Y = T_\dd^p(L+100\ell)$ be the punctured torus.
    Define an algebra $\fMat(\Lambda_Y)$
    on the punctured torus as the pull-back of
    $\fMat(E_X)$ via $\iota$.
    For any single-site operator~$y \in \fMat(\Lambda_Y)$ at~$s \in \Lambda_Y^{-10\ell}$,
    the invertible subalgebra~$\calA$
    gives a local decomposition $x = \sum_i a_i b_i$
    where $x \in \fMat(\{\iota(s)\})$ is a copy of~$y$
    and $a_i \in \calA$ and $b_i$ commutes with~$\calA$
    supported on a ball of radius~$\ell$ centered at~$\iota(s)$.
    By the definition of the pull-back, $a_i$ and $b_i$ have corresponding elements in~$\fMat(\Lambda_Y)$
    supported on a ball of radius~$\ell$ centered at~$s$.
    Here, we use the fact that the immersion $\iota$
    only increases distances locally:
    $\dist(\iota(y),\iota(y'))\ge \dist(y,y')$
    for any points $y$,$y'$ with $\dist(y,y')\le 100\ell$.
    
    Thus, we have constructed fermionic subalgebras~$\calA^\circ_Y$ and~$\calB^\circ_Y$
    of~$\fMat(\Lambda_Y)$ that allow for a partial inverse of the multiplication map
    in the sense of~\ref{lem:fillingHoleISA}.
    Because the diagram \ref{eq:torus_setup} commutes,
    $\calA_Y^\circ\cap \fMat(\im \iota_T)$ exactly
    corresponds to $\calA\cap \fMat(\im \iota_E)$.
    Now heal the puncture of the torus by~\ref{lem:fillingHoleISA}
    to obtain the invertible subalgebra~$\calD$ on
    $\fMat(\Lambda_{T_\dd})\coloneq \fMat(\Lambda_Y)\otimes \FermionMinus{1}_P$.
\end{proof}

\begin{remark}
    \label{rem:UnwrapTorus}
    Given an invertible subalgebra on the torus,
    one can consider its pull-back
    via the universal covering map $\RR^\dd\to T_\dd$
    to obtain a translation-invariant
    invertible subalgebra on~$\RR^\dd$.
    In particular, starting with an invertible subalgebra
    $\calA$ on $\ZZ^\dd$, one can use the torus trick
    to obtain a translation-invariant invertible subalgebra
    on $\ZZ^\dd$ that agrees with $\calA$ on a large region.
\end{remark}

\subsubsection{Invertible subalgebras on one-dimensional
simplicial complexes}

We use the torus trick to extend the
Brauer triviality result for invertible subalgebras
on $\ZZ$ to those on one-dimensional simplicial complexes.

\begin{definition}
    A one-dimensional simplicial complex~$X$ with a metric is \emph{$\ell$-coarse}
    if
    \begin{itemize}
        \item every $0$-cell of~$X$ is attached to finitely many $1$-cells of~$X$,
        \item the distance between any pair of distinct $0$-cells is $>10\ell$, and
        \item if $E$ is a $1$-cell of~$X$, then $E^{-\ell}$
        is separated from all other~$1$-cells by distance~$\ge \ell$.
   \end{itemize}
   We simply say $X$ is coarse if there exists $\ell > 0$ such that $X$ is $\ell$-coarse.
\end{definition}

\begin{example}
    The canonical example of a coarse $1\dd$ simplicial complex is~$\RR$
    where 0-cells are the integral points and 1-cells are the unit length intervals.
    Here, the parameter~$\ell$ can be chosen to be $0.01$.
    The set of all integral points, $\ZZ \subset \RR$, is a lattice on~$\RR$.
\end{example}
    
\begin{example}
    A slightly nontrivial example~$X$ is the union of all vertical and horizontal lines
    that cross points of integer coordinates in~$\RR^2$.
    The $0$-cells are all the points~$(x,y) \in \ZZ^2 \subset \RR^2$,
    and the $1$-cells are line segments of unit length connecting $(x,y)$ and $(x+1,y)$ or $(x,y+1)$
    with $x,y \in\ZZ$.
    A lattice is
    $\Lambda = \{ (x , y) \in X \mid y \in \ZZ, x \in \frac 1 {1000} \ZZ\}
    \cup \{ (x , y) \in X \mid x \in \ZZ, y \in \frac 1 {1000} \ZZ\}$.
\end{example}

This example shows that a lattice on a coarse one-dimensional simplicial complex
does not have to look like a line.
It is more appropriate to think of 
the $1$-skeleton of some high dimensional simplicial (or CW)
complex without any point of too high density of cells,
and to think of a locally finite collection of points on it.

\begin{proposition}\label{thm:ISAon1DSkeleton}
    There exist constants $C > c > 1$ such that the following is true.
    Let $\ell > 1$.
   Let $\Lambda$ be a lattice on a $C\ell$-coarse one-dimensional simplicial complex~$X$.
   Then, any ungraded invertible subalgebra of~$\Mat(\Lambda,p)$ of spread~$\ell$ 
   is isomorphic to $\Mat(\Lambda,q)$ for some~$q$ with spread bounded by~$c\ell$.
   Similarly, for any invertible subalgebra~$\calA$ of~$\fMat(\Lambda,p)$ of spread~$\ell$,
   we have an isomorphism between $\calA \otimes \fMat(\Lambda,s \mapsto \FermionMinus{1})$ 
   and $\fMat(\Lambda,q)$ with spread bounded by~$c\ell$.
\end{proposition}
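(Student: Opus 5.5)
The plan is to localize the one-dimensional argument to the 1-cells of~$X$ and glue at the 0-cells. The torus trick is the tool that makes each 1-cell look like a line.

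\emph{Step 1: straighten each 1-cell.} For each 1-cell~$E$ of the $C\ell$-coarse complex~$X$, the interior $E^{-\ell}$ is isometric to an interval of length at least $10C\ell - 2\ell$. It is separated by $\ge C\ell$ from every other 1-cell. So within distance~$\sim C\ell$ the lattice near~$E^{-C'\ell}$ looks like a lattice on a segment of~$\RR$, and the data of the diagram~\eqref{eq:torus_setup} is available with $\dd=1$ and $L'=L+100\ell$.

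\emph{Step 2: transplant and trivialize the middle of each edge.} Apply~\ref{lem:TorusTrick} with $\dd=1$ to obtain an invertible subalgebra~$\calD_E$ on a circle $T_1(L+100\ell)$. It has spread $O(\ell)$ after the filling-hole~\ref{lem:fillingHoleISA}, which costs one Majorana at the puncture. Moreover $\calD_E$ agrees with~$\calA$ on the middle segment $\iota_E(B_1(L))$.

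Unwrapping as in~\ref{rem:UnwrapTorus} gives an invertible subalgebra of~$\fMat(\ZZ)$. Alternatively, skip the circle and run~\ref{prop:IntervalContainsLocallyGeneratedSimpleAlgebra} directly, since it only uses local data of~$\calA$ on an interval. Either way, after stabilizing by one Majorana per site we get a locally generated simple subalgebra $\calG_E \subseteq \calA$. It satisfies $\calA_{E'^{+5\ell}\cap\cdot} \supseteq \calG_E \supseteq \calA \cap \fMat(E'')$ for nested middle segments $E''\subset E'$ of the edge.

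The only inputs of Lemmas~\ref{thm:LocalGenerators}--\ref{lem:ProductOfGIntervals} are visible centrality, local factorizability (\ref{thm:LocallyFactorizable}), and~\ref{thm:FindYinbetweenXZ}. All of these hold for~$\calA$ on regions looking like intervals, because~\ref{lem:VS_support} and~\ref{thm:LocallyFactorizable} were proved for arbitrary lattices. So I expect to use the local version directly, and to invoke the torus only to guarantee that a valid ``source projector''~$\pi$ exists at the far left of the edge segment.

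\emph{Step 3: the vertex pieces.} Fix disjoint simple algebras $\calG_E$, one per 1-cell, with supports separated by $>2\ell$. Let $\calM = \Commf(\prod_E \calG_E, \calA\otimes\fMat(\Lambda,s\mapsto\FermionMinus{1}))$. By~\ref{lem:VS_support}, $\calM$ is supported on the disjoint balls of radius $O(\ell)$ around 0-cells; these are finite because each 0-cell meets finitely many 1-cells. By~\ref{thm:LocallyFactorizable} together with local generation, $\calM = \prod_v \calM_v$ with fermionically commuting factors. Any central element of~$\calM_v$ is central in~$\calA$, so each $\calM_v$ is simple. Then~\ref{lem:full_mat_tensor_factor} shows that the $\calG_E$ and the $\calM_v$ generate everything. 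Placing $\calG_E$ at a site in the middle of~$E$ and $\calM_v$ at~$v$ defines~$q$, and yields a spread-$c\ell$ isomorphism. The ungraded case is identical without the stabilization.

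\emph{Main obstacle.} The hardest step is Step~2's source projector. On~$\ZZ$, $\pi$ is a minimal projector of~$\calZ^L_0$ near one end of the interval. On an edge, the left end abuts a vertex, where the interval algebra's center factorization (\ref{lem:IntervalAlgebraCenterHasTensorDecomposition}) must be re-verified using only the coarseness separation. That re-verification is why~$C$ must be large compared with~$c$.
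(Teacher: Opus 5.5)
\textbf{Verdict.} Your overall architecture is the paper's: torus trick on each edge, one-dimensional triviality on the edge interiors, and then visible centrality plus Lemma~\ref{thm:LocallyFactorizable} force the commutant $\calM$ of the edge algebras to split into central pieces near the vertices. Running Proposition~\ref{prop:IntervalContainsLocallyGeneratedSimpleAlgebra} directly on an edge interior is a legitimate alternative. The paper goes through the torus only so that Theorem~\ref{thm:1DBrauerGroupTrivial} can be quoted as a black box.

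\textbf{Gap in Step 3.} As written, Step 3 does not give the spread bound. You take \emph{one} simple algebra $\calG_E$ per edge and place it at a single site. For $\calM$ to live in $O(\ell)$-balls around the 0-cells, $\calG_E$ must contain everything in $\calA$ supported on $E$ away from $O(\ell)$-neighborhoods of its endpoints. Those endpoints are distinct 0-cells, hence more than $10C\ell$ apart, and edge lengths are not bounded above at all. An element of $\calG_E$ supported near one end of $E$ is therefore sent to a site at distance $\gtrsim 5C\ell > c\ell$. So your isomorphism does not have spread $c\ell$, and it has no uniform bound when edges are long. If instead $\calG_E$ is only a bounded middle block, then $\calM$ is no longer vertex-localized, and the leftover stretches of edge are unbounded.

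\textbf{The fix.} You need what the paper's proof supplies through the black box. On each edge, use the whole family of blocks $\calG_j$ of length $\sim 100\ell$ from the proof of Theorem~\ref{thm:1DBrauerGroupTrivial}. Alternatively, take $\calG_{E^\circ}$ to be the preimage of $\fMat(I,q)$ under the bounded-spread trivialization of $\calD_E$, which already carries a local tensor decomposition. Either way, $\calM$ splits into $O(\ell)$-size central pieces both between consecutive blocks on each edge and at the vertices. Each piece is then placed at a site in its support. (Also, $\calG_E$ lives in $\calA\otimes\fMat(\Lambda,s\mapsto\FermionMinus{1})$, not in $\calA$.)

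\textbf{The obstacle you single out is not one.} A source projector always exists: take any minimal projector of the finite-dimensional algebra $\calZ^L_0$. You also never need to run the interval construction up to a vertex. Choose $\pi$ at the left end of a segment inside the interior of $E$, far from other cells. Lemmas~\ref{thm:LocalGenerators}--\ref{lem:ProductOfGIntervals}, including the center factorization, only involve $\calA$ within $O(\ell)$ of that segment, so they hold verbatim.

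The vertex neighborhoods are handled entirely by the commutant argument, which uses only visible centrality and local factorizability; both are valid on any lattice. So the relation between $C$ and $c$ does not come from re-verifying center factorization at vertices. $C$ only has to be large enough that the edge interiors are long, well-separated intervals on which the one-dimensional machinery applies, including the torus trick's requirement $L\ge 1000\ell$.
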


The argument is essentially the same as in
the proof of the Brauer triviality theorem~\ref{thm:1DBrauerGroupTrivial}
using~\ref{prop:IntervalContainsLocallyGeneratedSimpleAlgebra}.
However, we will use~\ref{lem:TorusTrick} 
for the proof of~\ref{thm:ISAon1DSkeleton}
largely because we do not wish to whitebox the argument for~\ref{thm:1DBrauerGroupTrivial}.

\begin{proof}
    By~\ref{lem:TorusTrick} and \ref{rem:UnwrapTorus},
    we have an invertible subalgebra~$\calD_E$ over~$\ZZ$
    that is bounded-spread isomorphic to $\calA$
    in the middle~$E^\circ$ of any given edge~$E$.
    By the Brauer triviality result~\ref{thm:1DBrauerGroupTrivial},
    we know~$\calD_E$ is trivial,
    and thus we find a locally generated central simple subalgebra~$\calG_{E^\circ}$
    that includes all local generators of~$\calA$ in the middle~$E^\circ$ of each~$E$.
    By~\ref{lem:full_mat_tensor_factor},
    the subalgebra~$\prod_E \calG_{E^{\circ}}$ is a tensor factor of~$\calA$.
    Let $\calM$ be the commutant of~$\prod_E \calG_{E^\circ}$ within~$\calA$.
    Every Schmidt component of an element of~$\calM$
    is an element of~$\calA$ by~\ref{thm:LocallyFactorizable},
    and in turn is an element of~$\calM$ since each $\calG_{E^\circ}$ is locally generated.
    Therefore, $\calM$ factorizes into local algebras supported near the vertices of~$X$.
\end{proof}

\subsubsection{Translation-invariant Brauer trivial invertible subalgebras}

As a corollary of the above,
we prove a result about translation-invariant
Brauer trivial invertible subalgebras.

\begin{theorem}
    \label{thm:TranslationInvariantISAHasBoundedAncillas}
    Let $\calA$ be a Brauer trivial invertible subalgebra of
    $\fMat(\ZZ^2,p)$ with a translation-invariant
    set of generators. Then there exist uniformly
    bounded local assignments $q$ and $r$ such that
    $\calA\otimes \fMat(\ZZ^2,q)$ is
    bounded-spread isomorphic to $\fMat(\ZZ^2,r)$.
\end{theorem}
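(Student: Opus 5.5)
The plan is to cut $\ZZ^2$ into a periodic grid of large squares and find, inside $\calA$ (stabilized by at most one Majorana per square), a central simple algebra $\calC_Q$ filling the interior of each square $Q$. Translation invariance lets us choose all the $\calC_Q$ as translates of a single one, so they have bounded dimension. The leftover algebra then lives near the $1$-skeleton of the grid and is handled by \ref{thm:ISAon1DSkeleton}. The only input from Brauer triviality should be the trace condition of \ref{thm:FindYinbetweenXZ}; the unbounded ancillas witnessing triviality are never kept.

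\textbf{Step 1 (choice of squares).} Let $\ell$ be the spread of $\calA$. Since $\calA$ is Brauer trivial, there is a bounded-spread isomorphism $\phi:\calA\otimes\fMat(\ZZ^2,q_1)\to\fMat(\ZZ^2,q_2)$ of some spread $\ell'$, with $q_1,q_2$ possibly unbounded. Fix $L\gg \ell+\ell'$ that is a multiple of the translation period of the generators, and tile $\ZZ^2$ by squares $Q$ of side $L$ translated by $L\ZZ^2$. Take $X$ to be the union of the square boundaries, viewed as a coarse $1\dd$ simplicial complex. For each $Q$, let $\calX_Q=\calA\cap\fMat(Q^{-20\ell})$ and $\calZ_Q=\calA\cap\fMat(Q^{-10\ell})$. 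These are finite dimensional, and by \ref{lem:VS_support} their fermionic centers are supported in disjoint neighborhoods of the respective boundaries.

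\textbf{Step 2 (trace condition and the central algebra).} This is the step I expect to be the main obstacle. In $2\dd$ the boundary centers need not factorize, so the trace condition \eqref{eq:tracecondition} for $\calX_Q\hookrightarrow\calZ_Q$ is not automatic and must come from Brauer triviality.
\begin{enumerate}
\item Transport the inclusion through $\phi$. Set $\calA'=\calA\otimes\fMat(q_1)$. The algebra $\phi^{-1}(\fMat(Q^{-15\ell-2\ell'},q_2))$ is a central algebra sandwiched between $\calA'\cap\fMat(Q^{-20\ell})$ and $\calA'\cap\fMat(Q^{-10\ell})$.
\item Deduce the trace condition for the $\calA'$ inclusion. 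A central sandwich forces it: both algebras are unital in one simple algebra, and the normalized traces are computed there.
\item Descend to $\calA$. Tensoring with the central ancilla $\fMat(q_1)$ on the relevant region does not change the canonical projectors, since they lie in $\calA$, and it multiplies all traces uniformly. Hence \eqref{eq:tracecondition} already holds for $\calX_Q\hookrightarrow\calZ_Q$.
\item Apply \ref{thm:FindYinbetweenXZ} with a single Majorana $\FermionMinus{1}$ placed at the center of $Q$. This gives a central $\calC_Q$ with $\calX_Q\subseteq\calC_Q\subseteq\calZ_Q\otimes\FermionMinus{1}$.
\end{enumerate}
By translation invariance we choose $\calC_Q$ for one square and translate it to all the others. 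The dimensions are then uniformly bounded, and the $\calC_Q$ for different squares fermionically commute because their supports are disjoint.

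\textbf{Step 3 (the leftover algebra).} Let $\calA^+=\calA\otimes\fMat(\ZZ^2,\text{one Majorana per square center})$, and let $\calM=\FermiCommutant{\prod_Q\calC_Q}{\calA^+}$.
\begin{enumerate}
\item Mimic the proof of \ref{thm:1DBrauerGroupTrivial}, using \ref{lem:full_mat_tensor_factor} and \ref{thm:Bicommutant-of-Central}, to get $\calA^+\cong\bigotimes_Q\calC_Q\otimes\calM$ with bounded spread.
\item Use \ref{lem:VS_support} to show $\calM$ is supported in a $20\ell$-neighborhood of $X$.
\item Show $\calM$ is invertible with spread $O(\ell)$, now regarded on a lattice on a thickening of $X$. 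Invertibility follows by combining the decomposition of $\fMat$ into $\calA\otimes\FermiCommutant{\calA}{\fMat}$ with the tensor factorization from (1); support bounds come from \ref{thm:LocallyFactorizable}.
\item Collapse the $20\ell$-neighborhood of $X$ onto $X$ by a bounded-distance map. Since $L$ is large, $X$ is $C\ell$-coarse.
\item Apply \ref{thm:ISAon1DSkeleton}. This gives $\calM\otimes\fMat(s\mapsto\FermionMinus{1})\cong\fMat(q')$ with spread $O(\ell)$.
\end{enumerate}
To keep $q'$ bounded, I would redo that proof periodically: the torus trick (\ref{lem:TorusTrick}) applied to translates yields translate-related local algebras $\calG_{E^\circ}$ and vertex algebras, each with dimension bounded by the local dimension of $\fMat$ in a ball of radius $O(L)$.

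\textbf{Step 4 (assembly).} Collect the ancillas: one Majorana per square center, and one Majorana per site near $X$. Together these form a uniformly bounded $q$. The target $r$ assigns $\calC_Q$ to the center of each $Q$ and the algebras from $q'$ along $X$; these are uniformly bounded. Composing the isomorphisms from Steps 3 and 4 gives $\calA\otimes\fMat(\ZZ^2,q)\cong\fMat(\ZZ^2,r)$ with spread $O(L)$.
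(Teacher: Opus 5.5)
\textbf{The gap is in Step~3, and there is a reason it has to be there.} Step~2 is not where Brauer triviality is needed. The canonical projectors of $\calX_Q$ and $\calZ_Q$ are even and, by Lemma~\ref{lem:VS_support}, live in disjoint collars (within $2\ell$ inside $\partial Q^{-20\ell}$ and $\partial Q^{-10\ell}$). So the normalized trace of $\fMat(Q^{-10\ell})$ factorizes on $\pi_\mu\tau_\nu$, and \eqref{eq:tracecondition} holds automatically. This is how the paper invokes Proposition~\ref{thm:FindYinbetweenXZ} in Lemmas~\ref{thm:LocalGenerators} and~\ref{lem:fillingHoleISA}.

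Your derivation through $\phi$ is correct but superfluous. Its margins also need to scale with $\ell'$: the inclusion $\calA'\cap\fMat(Q^{-20\ell})\subseteq\phi^{-1}(\fMat(Q^{-15\ell-2\ell'}))$ requires $3\ell'\le 5\ell$.

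Consequently your argument never genuinely uses the hypothesis. If Steps~3--4 were valid, you would have shown that every translation-invariant $2\dd$ invertible subalgebra is Brauer trivial. That is far stronger than the theorem. By Theorem~\ref{thm:QCAToBrauer} it would make every translation-invariant $3\dd$ fermionic QCA blend into a shift, which is not expected (cf.\ the three-fermion Walker--Wang example in the introduction).

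\textbf{Where Step~3 breaks.} Proposition~\ref{thm:FindYinbetweenXZ} gives one abstract simple algebra $\calC_Q$ per square. It gives no control over how $\calC_Q$ sits in the collar $Q^{-10\ell}\setminus Q^{-20\ell}$, which is a ring of length $\sim 4L$.

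Here is a concrete instance. Take $\calA=\fMat(\ZZ^2,p)$ with $p(s)=\FermionPlus{2}{0}$. Choose collar sites $s_1,s_2$ with $\dist(s_1,s_2)\sim L$, and let $\calC_Q=\fMat(Q^{-20\ell})\cdot\langle\sigma^x_{s_1}\sigma^x_{s_2},\sigma^z_{s_1}\rangle$. This is a legitimate central sandwich, so the proposition may return it. Then $\sigma^z_{s_1}\sigma^z_{s_2}\in\calM$. By uniqueness of the tensor decomposition, the single-site operator $\sigma^z_{s_2}=\sigma^z_{s_1}\cdot(\sigma^z_{s_1}\sigma^z_{s_2})$ has its $\Commf(\calM)$-component at $s_1$. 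So $\calM$ has spread $\gtrsim L$, not $O(\ell)$.

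Neither tool you cite rescues this.
\begin{itemize}
\item Lemma~\ref{lem:full_mat_tensor_factor} applied to $\bigotimes_Q\calC_Q$ only yields spread $O(L)$.
\item The Lemma~\ref{thm:LocallyFactorizable} argument from Theorem~\ref{thm:1DBrauerGroupTrivial} needs the complementary factor to be locally generated below the separation scale. $\calC_Q$ is not: $\sigma^z_{s_1}\sigma^z_{s_2}\in\calM$, but its Schmidt factor $\sigma^z_{s_1}$ lies in $\calC_Q$.
\end{itemize}
With spread comparable to the edge length, the coarseness hypothesis of Proposition~\ref{thm:ISAon1DSkeleton} fails. Enlarging $L$ does not help, because the nonlocality grows with $L$.

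Separately, placing the Majorana at the center of $Q$ breaks Step~3.2. If $\calC_Q\subseteq\calZ_Q$, then $\gamma_{c_Q}\in\calM$, far from $X$. The Majorana should sit in the collar.

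\textbf{What is missing.} The interior factor must be a bounded-spread image of a tensor product of single-site algebras, and that is exactly what Brauer triviality supplies. The paper proceeds as follows:
\begin{itemize}
\item It restricts the trivializing isomorphism to $\fMat(B^{-\ell},r_A)$ on one box.
\item It keeps the finitely many ancillas $q_A|_B$ and copies everything periodically using translation invariance.
\item Lemma~\ref{lem:full_mat_tensor_factor} then gives a commutant on the skeleton with spread $O(\ell)$.
\item Corollary~\ref{cor:ISAInverseSameSupport} makes that commutant an invertible subalgebra there, and Proposition~\ref{thm:ISAon1DSkeleton} applies.
\end{itemize}
Your own Step~2 sandwich $\phi^{-1}(\fMat(\cdot,q_2))$ already has the right structure. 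Keep it, with its ancillas on a single square copied by translation, rather than replacing it by the output of Proposition~\ref{thm:FindYinbetweenXZ}.
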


\newcommand{\Tint}{T_{\mathrm{int}}}

\begin{figure}
    \centering
    \begingroup
        \resizebox{0.4\textwidth}{!}{\definecolor{scyan}{RGB}{95,231,244}
\definecolor{sorange}{RGB}{250,214,170}
\definecolor{sred}{RGB}{200,20,20}

\def\ov{0.35}   
\def\bl{0.18}   
\def\hs{0.80}   
\def\os{0.81}   

\begin{tikzpicture}[line width=1pt]

\fill[scyan,even odd rule]
  (-\bl,-\bl) rectangle (4+\bl,4+\bl)
  \foreach \x in {1,3}{\foreach \y in {1,3}{%
      (\x-\hs,\y-\hs) rectangle (\x+\hs,\y+\hs)}};

\foreach \x in {1,3}{\foreach \y in {1,3}{%
    \fill[sorange] (\x-\os,\y-\os) rectangle (\x+\os,\y+\os);}}

\foreach \c in {0,2,4}{%
  \draw (\c,-\ov) -- (\c,4+\ov);
  \draw (-\ov,\c) -- (4+\ov,\c);}

\node[anchor=west] at (2.86,5.02) {$T_{\text{ext}}$};
\draw[-{Stealth[length=2mm,width=2mm]}]
  (2.79,5.0) to[bend right=30] (2.38,4.28);
\node[sred] at (3,3) {$T_{\text{int}}$};

\end{tikzpicture}}
    \endgroup
    \caption{Copies of $[0,1000\ell]\times [0, 1000\ell]$
    tiling the plane.}
    \label{fig:torus_skeleton}
\end{figure}

\begin{proof}
    Since $\calA$ is Brauer trivial, there is
    a bounded-spread isomorphism $\calA\otimes 
    \fMat(\ZZ^2,q_A)\cong \fMat(\ZZ^2,r_A)$.
    Let $\ell$ be the spread of this isomorphism.
    Take $\fMat(\ZZ^2, q_A)$ supported on
    $B = [0, 1000\ell]\times [0, 1000\ell]$ and copy it
    across the entire plane: in other words,
    let $\fMat(\ZZ^2, q_T)$ be the algebra
    obtained by setting
    \begin{equation}
        \fMat(\{s = (x,y)\}, q_T(s)) =
        \fMat(\{(x\text{ mod } 1000\ell, y \text{ mod }
        1000\ell)\}, q_A).
    \end{equation}
    Define $\fMat(\ZZ^2, r_T)$ similarly.
    Since $\calA$ is translation-invariant,
    we can copy the restricted map
    $\phi: \fMat(B^{-\ell}, r_A)\to (\calA\cap \fMat(B,p))
    \otimes \fMat(B,q_A)$
    across the plane.
    The copies of $B^{-\ell}$ are the orange patches
    in \autoref{fig:torus_skeleton}, the union of which
    we call the interior $\Tint$.
    Thus, we have a map $\tilde{\phi}:
    \fMat(\Tint, r_T)\to \calA\otimes \fMat(\ZZ^2,q_T)$.

    By \ref{lem:full_mat_tensor_factor}, $\im\phi$
    is a tensor factor of $\calA \otimes \fMat(\ZZ^2,q_T)$,
    so we can write
    $\tilde{\phi}(\fMat(\Tint, r_T)) \otimes \calC
    \cong \calA\otimes \fMat(\ZZ^2,q_T)$
    for some $\calC$ supported on a $1\dd$ skeleton
    $T_{\text{ext}} = (\Tint^{-\ell})^c$
    (blue in \autoref{fig:torus_skeleton}).
    By \ref{cor:ISAInverseSameSupport},
    $\calC$ is a genuine invertible subalgebra on
    the $1\dd$ skeleton,
    which is a one-dimensional simplicial complex satisfying
    the conditions of \ref{thm:ISAon1DSkeleton}.
    Applying the theorem, we see that
    after stabilizing with $\FermionMinus{1}$
    per site, $\calC$ is bounded-spread isomorphic
    to some $\fMat$ on the $1\dd$ skeleton.
    Hence $\calA\otimes \fMat(\ZZ^2,q_T)\otimes
    \fMat(\ZZ^2, \FermionMinus{1})$
    is bounded-spread isomorphic to some $\fMat(\ZZ^2)$,
    implying that $\calA$ is Brauer trivial with
    uniformly bounded stabilization.
\end{proof}

\section{Automorphisms of Brauer trivial fermionic algebras}

In this section, all algebra homomorphisms and inclusions are unital.

\begin{definition}
    A \emph{(fermionic) QCA} is a bounded-spread automorphism (see~\ref{def:BoundedSpread})
    of a fermionic algebra~$\fMat(\ZZ^\dd, p)$.
\end{definition}

\begin{lemma}
    \label{lem:inverse_QCA_spread}
    If $\alpha$ is a fermionic QCA with spread $\ell$,
    then $\alpha^{-1}$ is a QCA with spread $\ell$.
\end{lemma}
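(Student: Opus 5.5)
The plan is to characterize supports by fermionic commutation (\ref{lem:supp_by_commutation}) and transport that characterization through $\alpha$. Let $y \in \fMat(\ZZ^\dd,p)$ and set $S = \Supp(y)$; write $x = \alpha^{-1}(y)$. We must show $\Supp(x) \subseteq S^{+\ell}$. By \ref{lem:supp_by_commutation} it suffices to show $\FermiCommutator{x}{z} = 0$ for every $z \in \fMat(\ZZ^\dd \setminus S^{+\ell})$. Since the fermionic commutator is bilinear, we may take $z$ homogeneous, and by the same reasoning we may split $y$ (hence $x$) into even and odd parts, because $\alpha$ is graded and preserves homogeneity.

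Next, we apply $\alpha$, which is a graded automorphism and therefore preserves fermionic commutators: $\alpha(\FermiCommutator{x}{z}) = \FermiCommutator{y}{\alpha(z)}$. Since $\alpha$ has spread $\ell$, we have $\Supp(\alpha(z)) \subseteq \Supp(z)^{+\ell}$. Every point of $\Supp(z)$ lies at distance $>\ell$ from $S$, so $\Supp(z)^{+\ell}$ is disjoint from $S$. Elements with disjoint supports fermionically commute, which is the easy direction of \ref{lem:supp_by_commutation}. Hence $\FermiCommutator{y}{\alpha(z)} = 0$, and by injectivity of $\alpha$ we get $\FermiCommutator{x}{z}=0$. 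This gives $\Supp(\alpha^{-1}(y)) \subseteq \Supp(y)^{+\ell}$, so $\alpha^{-1}$ has spread $\ell$.

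There are two remaining points. First, $\alpha^{-1}$ is a graded $*$-automorphism, since the inverse of a graded $*$-isomorphism is again graded and $*$-preserving. Second, we need the metric fact: if $\dist(t,S) > \ell$, then $\{t\}^{+\ell}\cap S=\emptyset$. This holds because $\dist(t,s)\le \ell$ for some $s\in S$ would contradict $\dist(t,S)>\ell$. The only mildly delicate point is the use of the \emph{converse} direction of \ref{lem:supp_by_commutation}, which relies on centrality of $\fMat(\ZZ^\dd\setminus S)$; this was already established, so no real obstacle remains.
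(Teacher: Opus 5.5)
Your proof is correct and is essentially the paper's argument. The paper also uses \ref{lem:supp_by_commutation} to recast spread $\ell$ as the vanishing of $\FermiCommutator{\alpha(x)}{y}$ whenever the supports of $x$ and $y$ are separated by more than $\ell$, and then applies $\alpha^{-1}$ to swap the roles of $\alpha$ and $\alpha^{-1}$; you have unpacked that one-line symmetry argument, including the metric check that $\Supp(z)^{+\ell}$ misses $\Supp(y)$ and the fact that a graded automorphism preserves fermionic commutators.
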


\begin{proof}
    Applying \ref{lem:supp_by_commutation},
    the condition for an automorphism $\alpha$ to be a QCA is
    $\FermiCommutator{\alpha(x)}{y}=0$ for all
    $x, y\in \fMat(\ZZ^\dd)$ whose supports are separated by 
    distance $\ell$. Applying $\alpha^{-1}$, this is equivalent
    to $\FermiCommutator{x}{\alpha^{-1}(y)} = 0$ for
    such $x$ and $y$. Hence $\alpha^{-1}$ is a QCA with spread $\ell$.
\end{proof}

\subsection{Circuits}

Obvious fermionic QCAs are built out of automorphisms of local algebras.
Here we recall this class of QCAs and (re)prove basic facts.

\begin{example}\label{ex:swap}
    Consider $\SWAP: \FermionMinus{1}^{\otimes 2} \to \FermionMinus{1}^{\otimes 2}$
    defined by~$\gamma_1 \mapsto \gamma_2$ and $\gamma_2 \mapsto \gamma_1$.
    Since $\FermionMinus{1}^{\otimes 2}$ is even and hence every automorphism is inner,
    $\SWAP$ must be a conjugation by a unitary.
    Indeed, it is implemented by an odd unitary~$u = (\gamma_1 + \gamma_2)/\sqrt{2}$.
\end{example}

\begin{lemma}\label{thm:SWAPisEvenOrOdd}
    Let $\calF$ be a finite-dimensional central fermionic algebra.
    Define
    \begin{equation}
        \SWAP: \calF \otimes \calF \ni : x_a \otimes y_b \mapsto (-1)^{ab} y_b \otimes x_a \in \calF \otimes \calF \label{eq:DefSWAP}
    \end{equation}
    for all homogeneous elements~$x_a, y_b \in \calF$ where $a,b \in \ZZ_2$ denote grading.
    Then, $\SWAP$ is the conjugation by a unitary~$w \in \calF \otimes \calF$.
    The unitary $w$ is even if and only if~$\calF$ is even.
\end{lemma}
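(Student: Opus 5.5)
First I would check that $\SWAP$ is a graded $*$-automorphism of $\calF \otimes \calF$. It is plainly a graded linear involution. To see that it is multiplicative, I would use the universal property \ref{lem:TensorProdUniversalProperty} with $\phi(x) = \one \otimes x$ and $\psi(y) = y \otimes \one$. These images fermionically commute, so they induce a graded homomorphism $\calF \otimes \calF \to \calF \otimes \calF$, $x_a \otimes y_b \mapsto (\one \otimes x_a)(y_b \otimes \one) = (-1)^{ab} y_b \otimes x_a$, which is exactly \eqref{eq:DefSWAP}. The $*$-compatibility follows from a one-line sign check using the definition of $\dagger$ on the tensor product.

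\emph{$\SWAP$ is inner.} Since $\calF \otimes \calF$ is central by \ref{thm:CentralTensorCentralIsCentral}, and its rank tuple is a square $(x,y)^2 = (x^2+y^2, 2xy)$, it is always an even central algebra by \ref{rem:EvenOddUnderTensorProduct}. By \ref{thm:FiniteDimensionalCentralIsSimple} its ungraded center is $\CC\one$, so as an ungraded algebra it is some $\Mat(n)$. Then \ref{thm:AutoIsInnerInMat} gives a unitary $u$ implementing $\SWAP$. Because $\SWAP$ is graded, \ref{thm:AutomorphismByUnitaryConjugationCanAlwaysBeHomogeneous} lets us replace $u$ by a homogeneous unitary $w$. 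The parity of $w$ is well defined, since $w$ is unique up to a scalar (the center is $\CC\one$).

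\emph{Parity.} This is the main step. If $\calF$ is odd, let $\gamma$ be its odd central Majorana. Then $\gamma \otimes \one$ and $\one \otimes \gamma$ are odd, and each commutes with $\calF \otimes \one$ and anticommutes with the odd part of $\one \otimes \calF$, or the reverse. I would compute how $w$ must act on them. Alternatively, I would reduce to \ref{ex:swap}: write $\calF \cong \FermionPlus{n}{0} \otimes \FermionMinus{1}$, so that $\SWAP_\calF$ factors, up to reordering by bounded sign-twists, as $\SWAP_{\FermionPlus{n}{0}} \otimes \SWAP_{\FermionMinus{1}}$. The first factor is an ungraded swap with an even implementer (the permutation unitary), and the second has the odd implementer $(\gamma_1+\gamma_2)/\sqrt2$, so $w$ is odd.

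If $\calF = \FermionPlus{p}{q}$ is even, let $\Gamma$ be its grading unitary. I would check directly that the signed permutation $w$, with $w(e_i \otimes e_j) = \pm e_j \otimes e_i$ on homogeneous basis vectors, realizes $\SWAP$, and that $w$ commutes with $\Gamma \otimes \Gamma$. Since $\Gamma \otimes \Gamma$ implements the grading of $\calF \otimes \calF$, this shows $w$ is even.

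Conversely, a cleaner uniform criterion for the parity would be to test whether $w$ commutes or anticommutes with a grading operator. In the odd case there is no such operator inside $\calF$, so I would use the explicit Majorana factorization instead. The hardest part is keeping the reordering signs consistent in that factorization. As a fallback, I would verify the odd case by a trace argument: $\tr(w\,\Gamma_{\calF\otimes\calF}) = 0$ when $w$ is odd, and I would compute this trace explicitly for $\FermionMinus{1}$ tensored with an even algebra.
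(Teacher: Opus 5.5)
Your proof is correct. The first two steps match the paper: you get multiplicativity from Lemma~\ref{lem:TensorProdUniversalProperty} (a tidy replacement for the paper's direct sign check), and innerness from Skolem--Noether plus homogenization.

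The parity step is where you diverge. The paper uses exactly the uniform criterion you set aside. Since $\calF\otimes\calF$ is even, its grading is conjugation by a unitary $G$, namely $\Gamma\otimes\Gamma$ if $\calF$ is even and $\ii\,\gamma\otimes\gamma$ if $\calF$ is odd. Because $wGw^\dagger=\SWAP(G)$, with $\SWAP(\Gamma\otimes\Gamma)=\Gamma\otimes\Gamma$ and $\SWAP(\gamma\otimes\gamma)=-\gamma\otimes\gamma$, any homogeneous $w$ commutes with $G$ in the first case and anticommutes with it in the second.

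Your objection that an odd $\calF$ has no grading operator inside $\calF$ is beside the point, because the operator needed lives in $\calF\otimes\calF$. Your own first idea already produces it: $w(\gamma\otimes\one)w^\dagger=\one\otimes\gamma$ and $w(\one\otimes\gamma)w^\dagger=\gamma\otimes\one$ give $w(\gamma\otimes\gamma)w^\dagger=(\one\otimes\gamma)(\gamma\otimes\one)=-\gamma\otimes\gamma$. Likewise, in the even case you need not build the signed permutation: $\SWAP$ fixes $\Gamma\otimes\Gamma$, so every implementer commutes with it.

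Your factorization route for odd $\calF$ does work, and the sign bookkeeping you worried about is empty. With $\calF\cong\FermionPlus{n}{0}\otimes\FermionMinus{1}$ and $\FermionPlus{n}{0}$ purely even, the reordering $(a\otimes m)\otimes(a'\otimes m')\mapsto(a\otimes a')\otimes(m\otimes m')$ is a sign-free graded isomorphism. It intertwines $\SWAP_\calF$ with $\SWAP_{\FermionPlus{n}{0}}\otimes\SWAP_{\FermionMinus{1}}$. Conjugation by $P\otimes(\gamma_1+\gamma_2)/\sqrt 2$ then acts factorwise, since the first factor $\FermionPlus{n}{0}\otimes\FermionPlus{n}{0}$ is purely even and no Koszul signs arise. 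Together with your correct remark that $w$ is unique up to a phase, this fixes the parity and even exhibits $w$ explicitly. The price is reliance on Wall's classification and more computation, where the paper only has to identify the parity operator.

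One caution: your trace fallback could not prove oddness. The condition $\tr(wG)=0$ is necessary for $w$ to be odd but not sufficient. For example, for $\calF=\FermionPlus{1}{1}$ the even implementer also satisfies $\tr(w\,\Gamma\otimes\Gamma)=0$.
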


\begin{proof}
    It is routine to check that $\SWAP$ is a graded homomorphism;
    the only nontrivial piece\footnote{It is instructive to note that this step fails without $(-1)^{ab}$ in~\eqref{eq:DefSWAP}.}
    is $\SWAP((x_a \otimes y_b)(x'_{a'} \otimes y'_{b'})) 
    = (-1)^{ba'} \SWAP(x_a x'_{a'} \otimes y_b y'_{b'})
    = (-1)^{ba'+(a+a')(b+b')} y_b y'_{b'} \otimes x_{a} x'_{a'}
    = \SWAP(x_a \otimes y_b) \SWAP(x'_{a'} \otimes y'_{b'})$.
    Since $\calF \otimes \calF$ is always even (\ref{ex:TableOfTensorProduct})
    and $\SWAP$ is an automorphism of an ungraded central simple $*$-algebra,
    we find by~\ref{thm:AutoIsInnerInMat} 
    a unitary~$w \in \calF \otimes \calF$ implementing~$\SWAP$.
    We may choose~$w$ to be homogeneous by~\ref{thm:AutomorphismByUnitaryConjugationCanAlwaysBeHomogeneous}.
    If $\calF$ is even, the fermion parity unitary of~$\calF \otimes \calF$ is
    the product of the fermion parity unitaries of the factors,
    and, conjugating the parity unitary by \(w\), we see that $w$ is even;
    if $\calF$ is odd, the fermion parity unitary of~$\calF \otimes \calF$ is
    the product of the odd central generators of the factors,
    and hence $w$ is odd.
\end{proof}

Many local automoprhisms may be combined to give a QCA as follows.

\begin{remark}\label{rem:TensorProductOfMorphism}
    For two graded homomorphisms 
    $\alpha: \calA \to \calA'$
    and $\beta : \calB \to \calB'$,
    we can form its tensor product
    \begin{equation}
        \gamma = \alpha \otimes \beta : 
        \calA \otimes \calB \ni x \otimes y \mapsto \alpha(x) \otimes \beta(y) \in \calA' \otimes \calB'
    \end{equation}
    It is routine to check that $\gamma$ is a graded homomorphism:
    $\gamma((x_a \otimes y_b)(x'_i \otimes y'_j))
    =\gamma(x_a x'_i \otimes y_b y'_j)(-1)^{bi}
    =\alpha(x_a x'_i) \otimes \beta(y_b y'_j) (-1)^{bi}
    =\alpha(x_a)\alpha(x'_i) \otimes \beta(y_b)\beta(y'_j) (-1)^{bi}
    =(\alpha(x_a) \otimes \beta(y_b) )(\alpha(x'_i) \otimes \beta(y'_j))$.
    More generally, 
    given a collection of automorphisms~$\alpha_s$, one for every site of some lattice~$\Lambda$,
    we have an automorphism~$\alpha = \bigotimes_{s \in \Lambda} \alpha_s$ of~$\fMat(\Lambda)$,
    where,
    unlike in the definition of infinite tensor product algebras,
    infinitely many $\alpha_s$ may be non-identity.
\end{remark}

\begin{definition}
    A \emph{depth-1 circuit} of local automorphisms
    is a product $\alpha = \prod_s \alpha_s$
    of locally finite set of pairwise commuting automorphisms~$\alpha_s$,
    all of which have a support diameter uniformly bounded across the lattice.
    Its action as an automorphism is defined for every local operator
    by ignoring those whose support do not overlap with that of the local operator.
    A \emph{circuit of local automorphisms}
    is a finite composition of depth-1 circuit of local automorphisms.
    The number of depth-1 circuits in such a composition 
    is the \emph{depth} of the circuit.
    Circuits are \emph{equivalent} 
    if the composite automorphisms are the same.
\end{definition}

We do not consider circuits of infinite depth in this paper,
although such infinite compositions can sometimes makes sense
as a limit~\cite{Haah2023invertible,Ranard2022}.
Following conventional usage,
we often write ``finite depth circuits of local automorphisms''
to emphasize that there are finitely many layers of local automorphisms.

\begin{remark}
    ``Circuits'' are most commonly discussed with unitary gates,
    where the unitaries act on a local operator algebra by conjugation.
    Since we have noted in~\ref{thm:AutomorphismByUnitaryConjugationCanAlwaysBeHomogeneous}
    that every local automorphism is the conjugation by some homogeneous local unitary,
    it is tempting to identify a circuit of local unitary elements
    with a circuit of local automorphisms.
    However, the local unitary corresponding to a local automorphism may be odd,
    and odd operators of disjoint support anticommute
    while corresponding local automorphisms commute.
    Hence, the relation between circuits of local automorphisms and circuits of unitary gates
    requires more careful treatment.
    We elaborate on this subtlety in~\cite{paper2}.
\end{remark}

Circuits of local automorphisms are closely related
to the commutator subgroup of the (stable) group of all QCA~\cite{FHH2019}.
The following lemmas provide the basic connection.

\begin{lemma}\label{lem:CircuitGroupIsNormal}
    The group of all finite depth circuits of local automorphisms
    is a normal subgroup of the group of all QCAs on~$\ZZ^\dd$.
\end{lemma}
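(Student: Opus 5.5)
The claim has three parts: circuits form a subgroup of the QCA group, and this subgroup is closed under conjugation. I will first check that a depth-1 circuit $\alpha = \prod_s \alpha_s$ is a QCA. Each $\alpha_s$ is an automorphism of some $\fMat(S_s)$ with $\operatorname{diam} S_s \le r$, extended by the identity. The family is locally finite and pairwise commuting, so on a local operator $x$ only finitely many $\alpha_s$ act nontrivially. Hence $\alpha(x)$ is well defined and is a graded homomorphism.

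For the spread, I would argue as follows. Let $y$ be supported at distance $> r$ from $\Supp(x)$. Every $\alpha_s$ that moves $x$ has support meeting $\Supp(x)$, so its support lies within distance $r$ of $\Supp(x)$ and misses $\Supp(y)$. Therefore $\FermiCommutator{\alpha(x)}{y}=0$, and \ref{lem:supp_by_commutation} gives spread $\le r$. The inverse of $\alpha$ is $\prod_s \alpha_s^{-1}$, which is again a depth-1 circuit. This shows $\alpha$ is a bijective QCA.

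Closure under composition holds by the definition of a circuit as a finite composition. The inverse of a depth-$n$ circuit is the reversed composition of the inverse layers, which is again a circuit. So circuits form a subgroup.

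Normality is the main step. Let $\beta$ be a QCA of spread $\ell$ and $\alpha=\prod_s\alpha_s$ a depth-1 circuit. It suffices to show that $\beta\alpha\beta^{-1}$ is a circuit, since conjugation distributes over composition. I would conjugate each gate separately: set $\alpha'_s = \beta\alpha_s\beta^{-1}$. This is an automorphism of $\fMat(\ZZ^\dd)$ that restricts to an automorphism of the subalgebra $\beta(\fMat(S_s))$. That subalgebra is supported in $S_s^{+\ell}$ and is central.

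Now take $x$ supported outside $S_s^{+2\ell}$. By \ref{lem:inverse_QCA_spread}, $\beta^{-1}(x)$ is supported outside $S_s^{+\ell}$, hence disjoint from $S_s$, so $\alpha_s$ fixes it. Thus $\alpha'_s$ fixes $\fMat((S_s^{+2\ell})^c)$. By \ref{lem:full_mat_tensor_factor} applied with $\psi=\mathrm{id}$, every element of $\fMat(\ZZ^\dd)$ decomposes as a sum of products of elements of $\fMat(S_s^{+2\ell})$ and $\fMat((S_s^{+2\ell})^c)$. The automorphism $\alpha'_s$ fixes the latter factor, and so it preserves its fermionic commutant, which is $\fMat(S_s^{+2\ell})$. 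Hence $\alpha'_s$ is a local automorphism of $\fMat(S_s^{+2\ell})$, with support diameter at most $r+4\ell$.

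The conjugated gates still pairwise commute, because conjugation preserves commutation. The family remains locally finite because the supports are enlarged uniformly. Finally, $\beta\alpha\beta^{-1}$ agrees with $\prod_s\alpha'_s$ on every local operator $x$. Indeed, $\beta^{-1}(x)$ is local, so only finitely many $\alpha_s$ act on it, and on that finite set conjugation is a group homomorphism.

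The subtle point is this last step: showing that the infinite product commutes with conjugation. The fact that only finitely many $\alpha_s$ act nontrivially on the local operator $\beta^{-1}(x)$ is what justifies it. I would also be careful that "support of an automorphism" is used in the sense established above: $\alpha'_s$ is the identity on the fermionic commutant of a local full matrix algebra.
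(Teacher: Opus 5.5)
Your proposal is correct and follows the paper's proof. Like the paper, you conjugate each gate by the QCA and check that the conjugated gates pairwise commute, have uniformly enlarged supports, and remain locally finite; beyond the paper, you also verify the subgroup property and why the infinite product commutes with conjugation, both of which the paper leaves implicit. Your support bound $S_s^{+2\ell}$ is correct but slightly loose: the paper gets $S_s^{+\ell}$, because for $x$ supported outside $S_s^{+\ell}$ the element $\beta^{-1}(x)$ already avoids $S_s$.
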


This is a standard result.

\begin{proof}
    It suffices to show that for any QCA~$\alpha$ of spread \(\ell\),
    the conjugation $\alpha \circ (\prod_s \sigma_s) \circ \alpha^{-1}$
    of a depth-$1$ circuit is again a circuit.
    Let $\mu_s = \alpha \circ \sigma_s \circ \alpha^{-1}$. 
    Observe that $\{ \mu_s \}$ consists of pairwise commuting automorphisms.
    If $\sigma_s$ is supported on~$R$, then $\mu_s$ is supported on~$R^{+\ell}$
    where $\ell$ is the spread of~$\alpha$.
    Hence, $\{\mu_s\}$ is locally finite and the support diameter is uniformly bounded.
\end{proof}

\begin{proposition}\label{thm:QCAQCAInverseIsCircuit}
    For any lattice~$\Lambda$ and a QCA~$\alpha$ on~$\fMat(\Lambda)$,
    the product
    $\alpha \otimes \alpha^{-1}$ is a finite depth circuit of local automorphisms
    with the individual spread depending only on the spread~$\ell$ of~$\alpha$.
\end{proposition}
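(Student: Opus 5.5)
The plan is the standard swap trick of Arrighi et al. and Freedman--Hastings, adapted to the graded setting using the fermionic $\SWAP$ of~\ref{thm:SWAPisEvenOrOdd}. Write $\fMat(\Lambda)\otimes\fMat(\Lambda)$ as $\fMat(\Lambda, p\cdot p)$, whose site~$s$ carries $\fMat(\{s\})\otimes\fMat(\{s\})$. Let $\SWAP_s$ be the local automorphism of this site algebra given by~\eqref{eq:DefSWAP}. Since the $\SWAP_s$ act on disjoint sites they pairwise commute, so $\SWAP_\Lambda=\prod_s \SWAP_s$ is a depth-1 circuit of single-site automorphisms. On the whole algebra it acts as the global graded flip $x\otimes y\mapsto (-1)^{|x||y|} y\otimes x$; this follows from~\ref{rem:TensorProductOfMorphism} and the sign bookkeeping, since the sign of a global swap factorizes over sites.

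The key identity I will prove is
\[
\alpha\otimes\alpha^{-1} \;=\; (\alpha\otimes\id)\circ\SWAP_\Lambda\circ(\alpha^{-1}\otimes\id)\circ\SWAP_\Lambda .
\]
To check it, apply both sides to homogeneous $x\otimes y$. Both sides are graded homomorphisms, so the signs $(-1)^{|x||y|}$ from the two flips cancel because $\alpha$ preserves grading. This reduces $\alpha\otimes\alpha^{-1}$ to $\SWAP_\Lambda$ composed with the conjugate $(\alpha\otimes\id)\circ\SWAP_\Lambda\circ(\alpha\otimes\id)^{-1}$. Here $\alpha\otimes\id$ is a QCA of spread~$\ell$ on the doubled lattice, and its inverse $\alpha^{-1}\otimes\id$ also has spread~$\ell$ by~\ref{lem:inverse_QCA_spread}. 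By the argument of~\ref{lem:CircuitGroupIsNormal}, the conjugate $\prod_s (\alpha\otimes\id)\SWAP_s(\alpha\otimes\id)^{-1}$ is a depth-1 circuit of pairwise commuting automorphisms $\mu_s$, each supported on $\{s\}^{+\ell}$. So $\alpha\otimes\alpha^{-1}$ is a depth-2 composition of depth-1 circuits: the first layer has support diameter $0$, the second has diameter $2\ell$.

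The main obstacle is that the paper's definition of a depth-1 circuit requires locally finite, pairwise commuting local automorphisms with bounded support, and I must verify the conjugated family really is one. A local automorphism $\sigma_s$ supported on $R$ is understood as an automorphism of $\fMat(R)$ extended by identity. I will show that $\mu_s=\beta\sigma_s\beta^{-1}$, with $\beta=\alpha\otimes\id$, preserves $\fMat(R^{+\ell})$ and fixes every operator supported outside $R^{+\ell}$. For the second property: if $z$ is supported outside $R^{+\ell}$, then $\beta^{-1}(z)$ is supported outside $R$ by the spread bound for $\beta^{-1}$, so $\sigma_s$ fixes it. For the first, I will use that $\fMat(R^{+\ell})$ is the fermionic commutant of $\fMat((R^{+\ell})^c)$ (via~\ref{lem:supp_by_commutation}) and that $\mu_s$ fixes the latter pointwise. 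Pairwise commutation of the $\mu_s$ is inherited from the $\SWAP_s$, and local finiteness holds because each point lies in only finitely many $\{s\}^{+\ell}$.

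Finally, the action of an infinite product is defined on each local operator through the finitely many factors that touch it. I will check this is consistent with the conjugation, by noting that on an operator $x$ only the $\mu_s$ with $\{s\}^{+\ell}$ meeting $\Supp(x)$ act nontrivially. The depth (two) and the individual gate spreads ($0$ and $2\ell$) depend only on~$\ell$, as claimed.
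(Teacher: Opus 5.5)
Your proposal is correct and follows essentially the same route as the paper: the identity $\alpha\otimes\alpha^{-1}=(\alpha\otimes\id)\,\mathsf S\,(\alpha^{-1}\otimes\id)\,\mathsf S^{-1}$ with the sitewise graded $\SWAP$ (your version uses $\mathsf S$ in place of $\mathsf S^{-1}$, which is the same thing because the graded flip is an involution), followed by the observation that $(\alpha\otimes\id)\,\mathsf S\,(\alpha\otimes\id)^{-1}$ is a depth-1 circuit by Lemma~\ref{lem:CircuitGroupIsNormal}. Your extra checks are sound and only fill in details the paper leaves implicit: that the product of sitewise swaps reproduces the global sign $(-1)^{|x||y|}$, and that each $\mu_s$ preserves $\fMat(\{s\}^{+\ell})$ and fixes everything supported outside it.
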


This is a generalization of the ungraded result in~\cite{Arrighi2007}.
It turns out that the grading is immaterial.

\begin{proof}
    Let $\sigma_s$ be $\SWAP$ 
    acting on a pair~$s$ of corresponding sites in the doubled system~$\fMat(\ZZ^\dd)^{\otimes 2}$. 
    First, we observe that a depth-1 circuit~$\mathsf S = \bigotimes_s \sigma_s$ 
    of local swaps on the doubled system gives 
    \begin{equation}
        \alpha \otimes \alpha^{-1} = 
        (\alpha \otimes \id) \mathsf S (\alpha^{-1} \otimes \id) \mathsf S^{-1}.
        \label{eq:alphaAlphaSwap}
    \end{equation}
    Indeed,
    \begin{align}
        x_i \otimes y_j
        &\xmapsto{~\mathsf S^{-1}~}
        (-1)^{ij} y_j \otimes x_i \nonumber\\
        &\xmapsto{~\alpha^{-1} \otimes \id~}
        (-1)^{ij}\alpha^{-1}(y_j) \otimes x_i\\
        &\xmapsto{~\mathsf S~}
        x_i \otimes \alpha^{-1}(y_j)\nonumber\\
        &\xmapsto{~\alpha \otimes \id~} \alpha(x_i) \otimes \alpha^{-1}(y_j) \, .\nonumber
    \end{align}
    Next, as in the ungraded case~\cite{Arrighi2007},
    we observe that the composition 
    $(\alpha \otimes \id) \mathsf S (\alpha^{-1} \otimes \id)$
    is a circuit by~\ref{lem:CircuitGroupIsNormal}.  
\end{proof}

\begin{corollary}\label{cor:QCAmodCircuitsIsAbelian}
    For any lattice~$\Lambda$
    and any QCA~$\alpha,\beta$ on~$\fMat(\Lambda, p)$,
    the stabilized commutator~$(\alpha\circ\beta\circ\alpha^{-1}\circ\beta^{-1}) \otimes \id$
    on~$\fMat(\Lambda, p^2)$
    is a finite depth circuit.
\end{corollary}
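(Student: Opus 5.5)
The plan is to write the commutator as a product of two stabilized QCAs, each of which is a circuit by \ref{thm:QCAQCAInverseIsCircuit}, and then invoke normality of circuits from \ref{lem:CircuitGroupIsNormal}. We identify $\fMat(\Lambda,p^2)$ with the doubled system $\fMat(\Lambda,p)\otimes\fMat(\Lambda,p)$, where each site carries $\fMat(\{s\},p(s))^{\otimes 2}$. The identity $\id$ in the statement then acts on the second copy, and we are asked to show that $\kappa\otimes\id$ is a finite depth circuit, where $\kappa=\alpha\beta\alpha^{-1}\beta^{-1}$.

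The key algebraic step is the factorization
\begin{equation}
    \kappa\otimes \id = \bigl[(\alpha\otimes\alpha^{-1})\bigr]\circ\bigl[(\beta\alpha^{-1}\otimes \alpha)\bigr]\circ\bigl[(\beta^{-1}\otimes\id)\bigr]\circ\cdots ,
\end{equation}
which I would instead organize as follows. First,
\begin{equation}
    (\alpha\otimes\alpha^{-1})\circ(\beta\otimes\id)\circ(\alpha^{-1}\otimes\alpha)\circ(\beta^{-1}\otimes\id) = \kappa\otimes\id .
\end{equation}
This holds because tensor products of morphisms compose factorwise (\ref{rem:TensorProductOfMorphism}). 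On the first factor the composite is $\alpha\beta\alpha^{-1}\beta^{-1}$, and on the second it is $\alpha^{-1}\alpha=\id$. Now rewrite the product as
\begin{equation}
    (\alpha\otimes\alpha^{-1})\circ\Bigl[(\beta\otimes\id)\circ(\alpha^{-1}\otimes\alpha)\circ(\beta\otimes\id)^{-1}\Bigr].
\end{equation}
By \ref{thm:QCAQCAInverseIsCircuit}, $\alpha\otimes\alpha^{-1}$ is a finite depth circuit. Applying the same proposition to the QCA $\alpha^{-1}$ (a QCA by \ref{lem:inverse_QCA_spread}) shows that $\alpha^{-1}\otimes\alpha$ is a finite depth circuit too. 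The bracketed term is the conjugate of this circuit by the QCA $\beta\otimes\id$ on the doubled lattice. It is therefore a circuit by \ref{lem:CircuitGroupIsNormal}, applied on the lattice $\Lambda$ with doubled local algebras. That lemma is stated for $\ZZ^\dd$, but its proof uses only locality and works verbatim on any lattice. Since circuits are closed under composition (depths add), $\kappa\otimes\id$ is a finite depth circuit.

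The only point needing care is that $\beta\otimes\id$ and $\alpha\otimes\alpha^{-1}$ are genuinely bounded-spread automorphisms of $\fMat(\Lambda,p^2)$, so that each doubled site counts as a single site. This holds because the graded tensor product of graded automorphisms is a graded automorphism (\ref{rem:TensorProductOfMorphism}), and supports are computed sitewise. I also need to check that no parity sign enters when composing tensor products of graded morphisms. This is immediate, since each factor morphism is graded, so the $(-1)^{ab}$ signs are preserved. I expect no real obstacle here. The mildest issue is confirming that the normality lemma and \ref{thm:QCAQCAInverseIsCircuit} hold on a general lattice $\Lambda$ rather than only on $\ZZ^\dd$, and both proofs use only locality.
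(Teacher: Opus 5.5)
Your argument is correct and is essentially the paper's: both rest only on \ref{thm:QCAQCAInverseIsCircuit} and the normality lemma \ref{lem:CircuitGroupIsNormal}. The paper argues modulo circuits, $(\alpha\circ\beta)\otimes\id \sim \alpha\otimes\beta \sim (\beta\circ\alpha)\otimes\id$, whereas you write the explicit factorization $\kappa\otimes\id = (\alpha\otimes\alpha^{-1})\circ\bigl[(\beta\otimes\id)(\alpha^{-1}\otimes\alpha)(\beta\otimes\id)^{-1}\bigr]$, which is the same idea made concrete; your abandoned first display should simply be deleted.
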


This is another standard result.

\begin{proof}
    Let $\sim$ denote the equivalence up to a circuit.
    We do not have to specify whether the circuit is precomposed or postcomposed
    because the circuit group is normal (\ref{lem:CircuitGroupIsNormal}).
    Now, $(\alpha\circ\beta) \otimes \id \sim \alpha \otimes \beta
    \sim (\beta \circ \alpha) \otimes \id$
    because $\beta^{-1} \otimes \beta$ and $\beta \otimes \beta^{-1}$
    are both circuits by~\ref{thm:QCAQCAInverseIsCircuit}.
\end{proof}

The stabilization in~\ref{cor:QCAmodCircuitsIsAbelian}
can often be removed~\cite{FHH2019}.
We state a special version that we use later.

\begin{lemma}[adapted from {\cite[\S2]{FHH2019}}]\label{lem:AncillaRemoval}
    Let $p,m$ be local assignments,
    where $m$ assigns a fixed local algebra on every site.
    Consider a circuit of local automorphisms that implements a QCA~$\alpha \otimes \id$
    on $\fMat(\ZZ^\dd, p m) \otimes \fMat(\ZZ^\dd, m^k)$ for some $k \ge 1$,
    where the individual local automorphisms may act nontrivially on~$\fMat(\ZZ^\dd, m^k)$.
    Then, $\alpha$ without stabilization is a circuit of local automorphisms,
    where the individual local automorphisms act by~$\id$ on~$\fMat(\ZZ^\dd, m^k)$.
\end{lemma}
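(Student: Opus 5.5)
The plan is to adapt the ancilla-removal argument of~\cite[\S2]{FHH2019}. The idea is to ``localize'' the whole circuit onto sites with spare copies of~$m$, using local swaps, and then cancel those swaps against one another. Write $\beta = \beta_D \circ \cdots \circ \beta_1$ for the given circuit implementing $\alpha \otimes \id$ on $\fMat(\ZZ^\dd, pm)\otimes\fMat(\ZZ^\dd,m^k)$. Let $r$ be the maximal support diameter of the gates, so that $\alpha$ has spread at most~$Dr$.

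\textbf{Step 1: replace the ancilla by the $m$-factor inside the system.} Since $m$ is a fixed local algebra on every site, the ancilla $\fMat(\ZZ^\dd,m^k)$ is a finite tensor power of copies of $\fMat(\ZZ^\dd,m)$. The system algebra $\fMat(\ZZ^\dd,pm) = \fMat(\ZZ^\dd,p)\otimes\fMat(\ZZ^\dd,m)$ contains one further such copy.

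\textbf{Step 2: coarse-grain and relocate.} Coarse-grain $\ZZ^\dd$ into blocks of linear size $R \gg Dr$, and inside each block choose a checkerboard-type pattern of sublattices. The key manipulation is a depth-1 circuit $\mathsf S$ of local $\SWAP$s of the kind in~\ref{thm:SWAPisEvenOrOdd}. These are single-site automorphisms exchanging the ancilla copy of~$m$ at a site with the system copy of~$m$ at the same site. They are genuine local automorphisms of $\fMat(\{s\},m)^{\otimes 2}$ with the sign convention~\eqref{eq:DefSWAP}, so gradings cause no trouble.

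Conjugating $\beta$ by $\mathsf S$ restricted to a region $U$ gives a circuit $\beta^U$ of the same depth and range. On operators supported where $\mathsf S$ acts trivially, or where $\alpha$ fixes the relevant $m$-copies, this circuit still implements~$\alpha\otimes\id$.

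\textbf{Step 3: the standard FHH trick.} Write $\alpha\otimes\id = \beta$ and also $\alpha\otimes\id = \mathsf S\beta\mathsf S^{-1}$ (as automorphisms, since $\mathsf S$ commutes with $\alpha\otimes\id$ only on the $m$-parts). One then truncates $\beta$ in a way that uses no ancillas. Gates of $\beta$ that touch the ancilla are kept only in a thin shell. There their ancilla legs are rerouted, via the swaps, onto ancilla-free degrees of freedom; this uses the fact that $\alpha\otimes\id$ acts as the identity on the ancilla. Concretely, I would follow the two-step argument in FHH. First, show that $\beta$ restricted to gates in a patch $P$ (a ``truncated circuit'' $\beta_P$) agrees with $\alpha\otimes\id$ deep inside $P$. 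Then the composition $(\alpha^{-1}\otimes\id)\circ\beta_P$ is a QCA acting trivially outside a neighborhood of~$\partial P$. Its action on the ancilla near $\partial P$ is a locally supported automorphism that can be undone, because the ancilla is an identical tensor factor copy of~$m$ and is available in unbounded supply across the boundary region. Tiling space by patches in finitely many colors and iterating gives a circuit for $\alpha$ whose gates act by~$\id$ on the ancilla.

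\textbf{Main obstacle.} The hardest step is the shell correction. I must show that the boundary defect automorphism $\delta = (\alpha^{-1}\otimes\id)\circ\beta_P$, which acts trivially on system operators far from $\partial P$, can be rewritten as a circuit on system-only degrees of freedom. The fact that $\delta$ acts as $\id$ on the ancilla deep inside and deep outside should let me use~\ref{lem:full_mat_tensor_factor} to split off the ancilla. I would then use~\ref{thm:CancellationPropertyForIsomorphisms} to conclude that $\delta$ restricts to an automorphism of the system part near~$\partial P$. Being finitely supported on a central algebra, that restriction is inner by~\ref{thm:AutoIsInnerInMat}. Finally, \ref{thm:AutomorphismByUnitaryConjugationCanAlwaysBeHomogeneous} makes the implementing unitary homogeneous, so it is a legitimate local automorphism.
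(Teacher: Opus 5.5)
Your proposal has a genuine gap: nothing in it actually frees the ancilla, and the step you flag as the main obstacle fails for exactly the reason the lemma is nontrivial.

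(i) The identity $\alpha\otimes\id=\mathsf S\beta\mathsf S^{-1}$ is false in general. Conjugating by the on-site swap of the system's copy of $m$ with an ancilla copy yields $\alpha$ transported onto $\fMat(\ZZ^\dd,p)\otimes(\text{ancilla copy of } m)$, acting trivially on the system's own $m$. An on-site swap also gains nothing, because the slot you swap in is being used by the circuit at the same time.

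(ii) $\delta=(\alpha^{-1}\otimes\id)\circ\beta_P$ is not trivial away from $\partial P$. Outside $P$ it equals $\alpha^{-1}\otimes\id$.

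(iii) Most importantly, even with a correctly defined boundary defect, the defect near the cut does \emph{not} in general act by $\id$ on the ancilla there. The truncation cuts the circuit through gates that act on the ancilla, so near $\partial P$ the ancilla is left acted upon. \ref{thm:CancellationPropertyForIsomorphisms} requires the map to be the identity on the ancilla factor, so it cannot split off a system-only automorphism. This nontrivial action on the ancilla near the cut is the entire difficulty. Replacing that step by a dimension count plus \ref{thm:AutoIsInnerInMat} is also unreliable in the graded setting, since dimension does not determine a central fermionic algebra; for example, $\FermionPlus{a}{b}\otimes\FermionMinus{1}\cong\FermionPlus{a+b}{0}\otimes\FermionMinus{1}$.

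The missing idea, which is what the paper uses (its ancilla-removal figure, following FHH), is to \emph{borrow} copies of $m$ from the system itself wherever the circuit is idle, rather than to excise the ancilla algebraically.

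One rewrites the circuit as $4k$ layers $U_1,\dots,U_{4k}$ in four groups of $k$. Each layer is supported on a union $R_j$ of far-separated regions (strips in $2\dd$) containing at most a $1/k$ fraction of the sites. Each ancilla leg of a gate in a first-group layer is replaced, by conjugating with swaps, by the system's own copy of $m$ at a nearby site of $R_{2k+1}\cup\dots\cup R_{3k}$, which that layer does not touch, and is swapped back immediately. The second, third and fourth groups borrow in the same way from the fourth, first and second groups.

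Sparsity of the $R_j$ and taking $k$ large guarantee enough idle copies of $m$ within bounded distance. The resulting gates are therefore local and act by $\id$ on $\fMat(\ZZ^\dd,m^k)$. What makes ancilla removal possible is the system's own $m$ on sites idle at the relevant stage, which your same-site swaps never reach.
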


\begin{proof}
    See \autoref{fig:ancilla-removal} and \cite[\S2]{FHH2019}.
\end{proof}

\begin{figure}
    \centering
    \begingroup
        \hspace*{-10ex}
        \resizebox{0.95\textwidth}{!}{\input{paper1figures/ancilla_removal}}
    \endgroup
    \caption{Ancilla (fermion) removal.
    Any finite depth circuit of local automorphisms can be rewritten with $4k$ layers $U_1, U_2,\ldots, U_{4k}$,
    where each $U_j$ is a circuit supported on the union~$R_j$ of far separated regions
    (strips in $2\dd$).
    Each $R_j$ takes up a fraction at most 
    $1/k$ of all sites.
    Any (fermionic) ancilla used by layer~$j_a$ 
    with~$j_a \in [1,k]$
    can be borrowed from a nearby ancilla in
    $R_{2k+1} \cup \cdots \cup R_{3k}$ and immediately returned.
    Each layer just needs $O(1)$ strips
    to borrow ancillas from,
    but we give a big enough pool of ancillas 
    by taking $k$ large.
    The ancillas used by layers $j_b \in [k+1,2k]$
    are borrowed from~$R_{3k+1} \cup \cdots \cup R_{4k}$;
    those used by layers $j_c \in [2k+1,3k]$
    are from $R_1 \cup \cdots \cup R_k$;
    finally, those used by layers $j_d \in [3k+1,4k]$
    are from $R_{k+1} \cup \cdots \cup R_{2k}$.
    }
    \label{fig:ancilla-removal}
\end{figure}

\subsection{Shifts}

Another example QCA is to permute simple factors within a bounded distance.
We consider a generalization of permutations
and show that they are virtually equivalent to permutations.

\begin{definition}
    A QCA $\alpha$ on a lattice~$\Lambda$ 
    is a \emph{shift} if for any site $s\in \Lambda$,
    there is a factorization $\fMat(\{s\},p(s)) = \bigotimes_j \calA_j$
    of the single-site algebra
    into simple fermionic algebras $\calA_j$ such that
    the image $\alpha(\calA_j)$ is supported on a single site
    for each $j$.
    If every such factorization consists of primitive algebras,
    then we say the shift is \emph{primitive}.
    A shift is \emph{monolayer} if each single-site algebra
    $\fMat(\{s\},p(s))$ is mapped onto a single-site algebra.
\end{definition}

With stabilization every shift becomes primitive.

\begin{lemma}\label{lem:ShiftsAreCircuitEquivToPrimitive}
    Let \(\alpha : \fMat(\Lambda,p) \to \fMat(\Lambda,p)\) be a shift QCA.
    Then there is a local assignment~\(p'\)
    such that \(\alpha \otimes \id : \fMat(\Lambda,pp') \to \fMat(\Lambda,p p')\)
    is circuit equivalent to a primitive shift. 
    The assignment \(p'\) is uniformly bounded if \(p\) is uniformly bounded.
\end{lemma}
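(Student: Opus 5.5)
The plan is to take $p'$ to be the Majorana stabilization $p'(s) = \FermionMinus{1}$ on every site, so that \ref{thm:StabilizedThenDecomposed} becomes available. For each site $s$, the shift structure gives a factorization $\fMat(\{s\},p(s)) = \bigotimes_j \calA_{s,j}$ into simple factors with each $\alpha(\calA_{s,j})$ supported on a single site $t(s,j)$. The difficulty is that $\calA_{s,j}$ itself need not be primitive. Moreover, after tensoring with $\FermionMinus{1}_s$, the factor $\calA_{s,j}\otimes\FermionMinus{1}$ is moved by $\alpha\otimes\id$ onto two different sites: $\calA_{s,j}$ goes to $t(s,j)$, while the Majorana stays at $s$. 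So I would first apply a local circuit that repairs this.

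\textbf{Step 1: handling multiple factors.} If a site has several factors $\calA_{s,1},\dots,\calA_{s,k}$, one Majorana per site is not enough to primitivize each factor separately. I would therefore take $p'(s) = \FermionMinus{1}^{\otimes N}$ with $N$ at least the number of factors. This is uniformly bounded whenever $p$ is, because each factor has dimension at least $2$ and so $k \le \log_2\dim\fMat(\{s\},p(s))$. Assign one Majorana $\gamma_{s,j}$ to each factor $\calA_{s,j}$.

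\textbf{Step 2: local SWAPs and the modified QCA.} Let $\sigma$ be the depth-one (or bounded-depth) circuit that swaps, for each $(s,j)$, the Majorana $\gamma_{s,j}$ with a fresh Majorana $\gamma'_{t(s,j),j}$ in the ancilla at the target site, using \ref{ex:swap}/\ref{thm:SWAPisEvenOrOdd}. Distinct pairs must use distinct ancilla Majoranas; I would guarantee this by indexing ancillas by incoming pairs, whose number is bounded by the spread of $\alpha$ and the bound on local dimension. These swaps act on disjoint Majorana pairs with supports of diameter at most the spread, so they commute and form a circuit. The composite $\beta = \sigma\circ(\alpha\otimes\id)$, after reordering SWAPs so that each affected Majorana is permuted rather than left fixed, maps each $\calA_{s,j}\otimes\FermionMinus{1}_{s,j}$ onto a single-site subalgebra at $t(s,j)$. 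It also fixes the remaining ancilla Majoranas or sends them to single sites; I would arrange the pairing as a permutation of Majoranas along paths of bounded length.

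\textbf{Step 3: refining into primitive factors.} By \ref{thm:StabilizedThenDecomposed}, each $\calA_{s,j}\otimes\FermionMinus{1}$ factors into primitive algebras, and the leftover single Majoranas are already primitive. Transport this factorization of $\calA_{s,j}\otimes\FermionMinus{1}_{s,j}$ through $\beta$. The image is a single-site subalgebra at $t(s,j)$ of the same isomorphism type, and every primitive factor of the source is mapped into that one site. This refined factorization witnesses that $\beta$ is a primitive shift. Finally, $\alpha\otimes\id = \sigma^{-1}\circ\beta$ is circuit equivalent to $\beta$.

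\textbf{Main obstacle.} The step I expect to be hardest is Step 2: making the Majorana bookkeeping consistent so that $\beta$ is genuinely a shift. In particular, the factorization must cover every site's algebra, including the ancillas that receive Majoranas and the ones vacated. The cleanest approach I would try first is to let $\sigma$ implement a bounded-range permutation $\gamma_{s,j}\mapsto\gamma_{t(s,j),\iota(s,j)}$ on a finite Majorana set per site. Such a permutation decomposes into a bounded number of layers of disjoint SWAPs, since its cycles are bounded-range matchings that can be edge-colored with boundedly many colors. Then every site algebra of $pp'$ is a tensor product of the factors $\calA_{s,j}$ and single Majoranas, each of which is sent to one site.
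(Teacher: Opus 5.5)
Your proposal is correct and is essentially the paper's proof. The paper also puts one Majorana per simple factor at its source site and one at its target site, applies a depth-one layer of SWAPs exchanging each such pair (so that $\calA_{s,j}\otimes\FermionMinus{1}$ lands on a single site and the target Majorana returns to $s$), and refines into primitive factors by~\ref{thm:StabilizedThenDecomposed}; its extra conjugation by the on-site automorphisms $\Delta,\Delta'$ is not needed for the definition of a primitive shift.

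Keep the pairing exactly as in your Step~2, where each pair is an involution of disjoint transpositions and hence has depth one. Drop the claim in your last paragraph that an arbitrary bounded-range permutation of Majoranas decomposes into boundedly many SWAP layers. That is false in general: a translation of Majoranas along a line has index $\neq 1$ and is not a circuit. So a single-pool variant, in which source Majoranas are cyclically permuted into target slots, could fail.
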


\begin{proof}
    Let \(\fMat(\{s\},p(s)) = \bigotimes_{j} \calA_{j_s}'\) be the decomposition of a single-site algebra into simple factors promised by the definition of a shift.
    Note that we also have a decomposition of every site as \(\fMat(\{s\},p(s)) = \bigotimes_{j} \alpha(\calA_{j_t}')\) by \ref{thm:Commutant-of-central-in-central} and the fact that \(\alpha\) is an automorphism. 
    Here, \(\calA_{j_t}'\) is one of the simple algebras for a nearby site \(t\).
    
    Define \(p'\) as a product of two Majorana algebras
    \(\FermionMinus{1}^{\otimes 2} = \FermionMinus{1}_s \otimes \FermionMinus{1}_t\)
    for each \(\calA'_{j_s}\),
    placing one factor on the same site~$s$ as~\(\calA'_{j_s}\)
    and the other on the same site~$t$ as~\(\alpha(\calA'_{j_s})\).
    Let \(\delta_{j_s}: \calA_{j_s}' \otimes \FermionMinus{1}_s
    \xrightarrow{\cong} \calA_{j_s} \otimes \FermionMinus{1}_s\)
    be an isomorphism from the simple factors and one stabilizing Majorana
    to a tensor product of primitive algebras.
    At the image site, similarly define
    \(\delta'_{j_t}: \alpha(\calA_{j_s}') \otimes \FermionMinus{1}_t
    \xrightarrow{\cong} \calA_{j_t} \otimes \FermionMinus{1}_t\)
    by \(\delta'_{j_s} =  (\beta \otimes \iota)\circ\delta_{j_s} \circ (\alpha^{-1}\otimes\iota^{-1})\),
    where \(\iota: \FermionMinus{1}_s \to \FermionMinus{1}_t\)
    and \(\beta: \calA_{j_s} \to \calA_{j_t}\) are arbitrary isomorphisms.
    In particular, \(\delta'_{j_t} \circ (\alpha \otimes \iota) \circ \delta_{j_s}^{-1} = \beta \otimes \iota\)
    is a tensor product of isomorphisms on the primitive algebras.
    
    Defining \(\Delta^{(\prime)} = \bigotimes_{s,j_s} \delta^{(\prime)}_{j_s}\) (taking the product across the whole lattice), we observe that
    \begin{equation}
        \tilde{\alpha} = \Delta' \circ (\alpha \otimes \mathsf{S}) \circ \Delta^{-1}
    \end{equation}
    is a primitive shift, where \(\mathsf{S}\) is a depth-1 circuit of swaps between the (disjoint) pairs of stabilizing Majorana algebras,
    \begin{align}
        \mathsf{S} = \bigotimes_{s,j_s} \SWAP_{j_s},
        \quad
        \SWAP_{j_s} : \FermionMinus{1}_s \otimes\FermionMinus{1}_t &\to \FermionMinus{1}_s \otimes\FermionMinus{1}_t \\
        x \otimes y &\mapsto (-1)^{|x| \cdot |y|} \iota^{-1}(y) \otimes \iota(x). \nonumber
    \end{align}
    (Here, \(|x|\) is the grading of the homogeneous element \(x\).)
    Indeed, for each pair of primitive algebras \(\calA_{j_s} \otimes \FermionMinus{1}_s\), the factorization of \(\delta'_{j_t} \circ (\alpha \otimes \iota) \circ \delta_{j_s}^{-1} = \beta \otimes \iota\) ensures that every primitive factor is simply shifted to another site.
    The remaining \(\FermionMinus{1}_t\) algebras are also just swapped back to site \(s\).
    The primitive shift \(\tilde{\alpha}\) differs from \(\alpha\) by a tensor product with a depth-1 circuit, and multiplication on the left and right by depth-1 circuits.
    Thus, it is finite-depth circuit equivalent to \(\alpha\).
\end{proof}

\begin{lemma}\label{lem:PrimitiveShiftToPermutation}
    For any lattice~$\Lambda$ and any primitive shift QCA~$\alpha$ on~$\fMat(\Lambda,p)$,
    there exists a circuit~$\delta$ of on-site automorphisms (depth~$1$ and spread~$0$)
    such that $\delta \circ \alpha$ is a permutation of primitive algebra factors.
\end{lemma}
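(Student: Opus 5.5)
The idea is to fix, once and for all, a "standard" model $\calP_r$ for each primitive rank tuple $r$ ($\FermionPlus{p}{0}$ for odd primes $p$, and $\FermionMinus{1}$). Then we use on-site automorphisms to carry every image $\alpha(\calA_j)$ onto a standard copy placed inside a fixed tensor decomposition of the target site.

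\textbf{Step 1: decompose each site by the images.} Fix a site $t$. Let $\calA_{j}^{(s)}$ range over all primitive factors, at all sites $s$, whose image $\alpha(\calA_j^{(s)})$ is supported on $t$. These images fermionically commute, because $\alpha$ preserves fermionic commutators. They generate $\fMat(\{t\},p(t))$: since $\alpha$ is surjective, every element of $\fMat(\{t\})$ is $\alpha$ of a product of factors, and only factors whose image lies at $t$ can contribute, by the support argument of \ref{lem:supp_by_commutation}. Each image is central. Therefore \ref{lem:mul_injective}, or simply iterating \ref{thm:Commutant-of-central-in-central}, gives a graded isomorphism
\[
    \fMat(\{t\},p(t)) = \bigotimes_{(s,j)\to t} \alpha(\calA_j^{(s)}).
\]
This sum is finite, because $\alpha$ has bounded spread and the lattice is locally finite. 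By construction, $\alpha$ restricted to each factor is an isomorphism $\calA_j^{(s)}\to\alpha(\calA_j^{(s)})$ between primitive algebras with the same rank tuple.

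\textbf{Step 2: choose the target factorization and the on-site automorphism.} At each site $t$ we also have the original primitive factorization $\fMat(\{t\}) = \bigotimes_k \calA_k^{(t)}$ from the definition of a primitive shift. By \ref{thm:StabilizedThenDecomposed} (uniqueness of constituent primitive algebras), the multiset of rank tuples of the factors $\alpha(\calA_j^{(s)})$ at $t$ coincides with that of the $\calA_k^{(t)}$. I would justify this by counting primes and the power of $(1,1)/\sqrt2$ via $\dim_\CC$, which the uniqueness statement covers when both sides are products of primitives. Choose a bijection between the two lists matching rank tuples, and fix isomorphisms $\beta: \alpha(\calA_j^{(s)})\to\calA_k^{(t)}$. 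By \ref{rem:TensorProductOfMorphism}, $\delta_t := \bigotimes \beta$ is a graded automorphism of $\fMat(\{t\})$. Setting $\delta = \bigotimes_t \delta_t$ gives a depth-1, spread-0 circuit.

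\textbf{Step 3: conclude.} Then $\delta\circ\alpha$ maps each $\calA_j^{(s)}$ onto some $\calA_k^{(t)}$, and this assignment is a bijection of factors. To call it a "permutation," I would additionally arrange that the map is the canonical identification. For that, choose the $\beta$'s so that $\beta\circ\alpha|_{\calA_j^{(s)}}$ equals a fixed reference isomorphism between standard models. The $\beta$ were arbitrary, so this costs nothing.

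The main obstacle is Step 2's claim that the rank-tuple multisets agree at each site. This needs the uniqueness of primitive factorizations, which fails for general central algebras (recall $(3,0)$ is not prime). I expect it to go through because all factors are primitive. If the dimension-counting argument turned out to be insufficient, I would instead match factors using the prime factorization of each envelope rank separately for the even part and the Majorana part.
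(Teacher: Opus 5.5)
Your proposal is correct and is essentially the paper's argument. The images of the primitive factors landing at a site generate that site's algebra. The unique-factorization property \ref{thm:StabilizedThenDecomposed} then matches them, by rank tuple, with the site's own primitive factors. The obstacle you flag is harmless: a tensor product of $k$ copies of $\FermionMinus{1}$ and factors $\FermionPlus{p_i}{0}$ with $p_i$ odd primes has $\dim_\CC = 2^k \prod_i p_i^2$, so $k$ and the multiset of $p_i$ are determined. The on-site automorphisms realizing these matchings assemble into the depth-$1$, spread-$0$ circuit $\delta$. Your Step~1 and the normalization in Step~3 spell out details that the paper's short proof leaves implicit, but the route is the same.
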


\begin{proof}
    Since $\alpha$ is a primitive shift,
    we have a preferred factorization $\fMat(\{s\}, p(s) ) = \bigotimes \calA_{s,j}$
    for each single-site algebra at~$s \in \Lambda$
    such that  $\alpha(\calA_{s,j})$ is supported on a single site.
    Then, the preimage under~$\alpha$
    of the single-site algebra~$\bigotimes_j \calA_{s,j}$ at~$s$
    must be generated by 
    a mutually fermionically commuting collection
    of primitive algebras~$\alpha(\calA_{s',j'(s')})$
    with $s'$ near~$s$.
    By the unique factorization property~\ref{thm:StabilizedThenDecomposed},
    there exists a single-site automorphism at~$s$ 
    that sends $\alpha(\calA_{s',j'(s')})$ to~$\calA_{s,i(s')}$ for each~$s'$.
    Let $\delta$ be the circuit consisting of all these single-site automorphisms.
\end{proof}

The next lemma shows that
any primitive shift may be represented by
a permutation on a minimal system.

\begin{lemma}\label{lem:AncillaReduction}
    Let $p$ be an arbitrary local assignment on a lattice~$\Lambda$.
    Let $m$ be a uniform local assignment $\Lambda \ni s \mapsto \calM$
    for a fixed primitive algebra~$\calM$.
    For an integer $k \ge 1$,
    any primitive shift~$\alpha$ on~$\fMat(\Lambda, p m^{k+1})$
    is a primitive shift~$\beta \otimes \id$
    on~$\fMat(\Lambda, p m^k)\otimes \fMat(\Lambda, m)$
    followed by a circuit of local automorphisms.
\end{lemma}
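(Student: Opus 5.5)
First reduce to permutations. By \ref{lem:PrimitiveShiftToPermutation}, there is a depth-1 spread-0 circuit $\delta$ such that $\delta\circ\alpha$ is a permutation of primitive factors. The statement allows a trailing circuit, so it suffices to treat the case where $\alpha$ is a permutation $\sigma$ of primitive factors of $\fMat(\Lambda, pm^{k+1})$. Each site $s$ carries a list of primitive factors: those of $p(s)$, together with $k+1$ copies of $\calM$. By \ref{thm:StabilizedThenDecomposed} this list is unique up to automorphism. Since $\alpha$ is an automorphism, it induces a bijection between the multiset of factors at $s$ and the multiset of factors landing at $s$. Hence the number of copies of $\calM$-type factors (and of each other primitive type) arriving at $s$ equals the number leaving $s$. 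The bookkeeping object is the bijection $\sigma$ on the set of all factors, preserving type and moving each factor a bounded distance $\ell$.

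The goal is to rewrite $\sigma$ so that one designated copy of $\calM$ at every site, the last tensor factor $\fMat(\Lambda,m)$, is fixed. I would do this by composing $\sigma$ with single-site automorphisms, relabelling which $\calM$-factor at a site is called ``the ancilla''. This costs only a depth-1 on-site circuit. Concretely, at each site $s$ at least one $\calM$-factor arrives, because $k+1\ge 1$ copies of $\calM$ sit at $s$ and type counts are conserved. Let $c_s$ be the designated ancilla copy at $s$, and let $f_s$ be some $\calM$-factor with $\sigma(f_s)=c_s$. Choose a type-preserving, on-site permutation $\pi$ of the factors, realised by an on-site circuit via the SWAP unitaries of \ref{thm:SWAPisEvenOrOdd}. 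Then $\sigma\circ\pi$ fixes $c_s$ only if $\pi(c_s)=f_s$, which requires $f_s$ to be located at $s$; that fails in general.

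So the real issue is that the $\calM$-flow may be a genuine translation, e.g.\ all $\calM$ copies shifting one step. The fix is to use the fact that circuits can implement a SWAP between two $\calM$ copies at nearby sites (a local automorphism on two sites). The plan is to decompose the restriction of $\sigma$ to $\calM$-type factors on the $m^{k+1}$ part as follows. Write the $k+1$ layers of $\calM$ as $\calM^{(0)},\dots,\calM^{(k)}$. Then $\sigma$ restricted there, composed with finitely many layers of nearby two-site swaps, can move the $\calM$-flow off layer $k$. Specifically, define $\tau$ by: for each site $t$, the designated $c_t$ is swapped with $\sigma(f)$, where $f$ is the designated copy at $\sigma^{-1}$'s source. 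Equivalently, postcompose by the transposition exchanging $c_t$ with the factor that $\sigma$ sends the copy $c_{s}$ to. Because $\sigma$ has spread $\ell$, each such transposition acts within distance $\ell$. The transpositions are disjoint: distinct $c_s$ have distinct images, and the images $c_t$ are distinct. Hence the collection of transpositions forms a matching. A matching of bounded-range transpositions is a depth-2 circuit, since each transposition involves factors only at two sites, and after grouping by a bounded colouring of the lattice graph it becomes finite depth. After this correction, $\sigma'$ satisfies $\sigma'(c_s)=c_s$, so $\sigma'=\beta\otimes\id$ with $\beta$ a primitive shift on the rest.

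The main obstacle is checking that the corrected map is still a genuine primitive shift on $\fMat(\Lambda,pm^k)$ with bounded spread. The spread at most doubles, so that part is fine. The remaining work is making the ``transposition of factors'' precise as an automorphism. Two-site swaps of $\calM$ copies are local automorphisms by \ref{thm:SWAPisEvenOrOdd}, and they commute pairwise when their supports are disjoint as factors. Factor-disjoint swaps sharing a site still commute as automorphisms, since they act on fermionically commuting tensor factors. So the family is a depth-1 circuit once local finiteness holds, and local finiteness follows from the bounded spread and bounded number of $\calM$ copies per site.
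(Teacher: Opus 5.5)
Your reduction to a permutation of primitive factors via~\ref{lem:PrimitiveShiftToPermutation} matches the paper's first step. The argument breaks at the claim that the transpositions $\{c_s,\sigma(c_s)\}$ form a matching. Injectivity of $s\mapsto c_s$ and of $s\mapsto\sigma(c_s)$ separately does not rule out $\sigma(c_s)=c_{s'}$ with $s'\neq s$, and whenever that happens the pairs for $s$ and $s'$ share $c_{s'}$.

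Take exactly the case you flagged, a translation of the designated layer with $\sigma(c_s)=c_{s+1}$ on $\ZZ$. Then consecutive pairs always overlap. Worse, any $\tau$ with $\tau\sigma(c_s)=c_s$ for all $s$ must send $c_{s+1}\mapsto c_s$ for every $s$, and no involution can do that, so no product of disjoint swaps can either.

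A sanity check confirms that an ingredient is missing: your argument never uses $k\ge1$, yet the statement is false for $k=0$. Take $\Lambda=\ZZ$, $p$ trivial, $\calM=\FermionPlus{3}{0}$ and $\alpha$ the unit translation. Then $\beta$ acts on $\fMat(\ZZ,p)=\CC$, so the conclusion would make $\alpha$ a circuit, whereas $\ind\alpha=3$ and circuits have index~$1$.

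The missing idea is that the motion of the designated layer cannot be undone on that layer alone; it has to be handed to another copy of $\calM$.

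\textbf{The paper's route.} On-site relabellings $\delta_1,\delta_2,\delta_3$ make the layer $\Lambda\times\{0\}$ invariant, so the permutation splits as $\alpha_3^A\otimes\alpha_3^B$. The paper then postcomposes with $\delta_4=\alpha_3^{B,\mathrm{copy}}\otimes(\alpha_3^B)^{-1}$, where the copy acts on $\Lambda\times\{1\}$. This is a circuit by~\ref{thm:QCAQCAInverseIsCircuit}, and it is exactly where $k\ge1$ enters.

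\textbf{Rescuing your swap idea.} You can instead \emph{choose} which copies are designated; this is an on-site relabelling, harmless by~\ref{lem:CircuitGroupIsNormal}. Let $d_s\ge2$ be the number of $\calM$-factors at site $s$. For a finite set $X$ of sites, let $N(X)$ be the set of sites receiving $\calM$-factors from $X$ under $\sigma$. Conservation of copies gives $\sum_{v\in N(X)}(d_v-1)\ge\frac12\sum_{v\in N(X)}d_v\ge\frac12\sum_{u\in X}d_u\ge|X|$.

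Apply M.~Hall's infinite theorem, as in~\ref{lem:MatchPrimitiveAlgebras}, with each site $t$ split into $d_t-1$ slots. This yields one factor $x_u$ per site such that at most $d_t-1$ of the images $\sigma(x_u)$ land at each site $t$. A further on-site relabelling $\delta$ then gives $\delta\sigma(S)\cap S=\emptyset$ for $S=\{x_u\}$. Only now are the pairs $\{x,\delta\sigma(x)\}$, $x\in S$, pairwise disjoint and of range at most~$\ell$. One layer of swaps therefore fixes $S$ pointwise, and the rest of your argument goes through.

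\textbf{Comparison.} The rescued route replaces the paper's copy-and-invert circuit by a matching argument. It needs no one-per-site set invariant under the permutation, only one that $\delta\sigma$ moves entirely off itself. Both routes use $k\ge1$, that is, at least two copies of $\calM$ per site, in the same essential way.
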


\begin{proof}
    (To a permutation of~$\calM$'s)
    From~\ref{lem:PrimitiveShiftToPermutation},
    we have a circuit $\delta_1$ of spread~$0$ such that 
    $\alpha_1 = \delta_1 \circ \alpha$ is a permutation of primitive algebras,
    which is accompanied by a preferred factorization of~$pm^{k+1}$ into primitive factors.
    We then isolate a permutation among the tensor powers of~$\calM$.
    In this isolated permutation, every site has at least~$k+1$ factors of~$\calM$.
    Hence, $\alpha_1 = \delta_1 \circ \alpha$ induces a permutation on
    a vertex set $V = \{ (s,j) \mid s \in \Lambda,\, j \in \{0,1,2,\ldots,k,\ldots,j_\mathrm{max}(s)\}$
    with a bounded spread on the $\Lambda$-component.
    We are going to use the same letter~$\alpha_1$ to denote this permutation.
    Define $B = \Lambda \times \{0\} \subseteq V$ and $A = V \setminus A$.
    The vertex set~$V$ maps onto the set~$\Lambda$ of all sites.
   
    (To simple cycles)
    Decompose $\alpha_1$ into disjoint cycles.
    We say a cycle is \emph{simple} if it visits any site $s \in \Lambda$ at most once;
    a cycle can visit a site~$s$ at most $j_\mathrm{max}(s)+1$ times.
    We claim that there exists a depth-1 circuit~$\delta_2$ of on-site permutations
    such that $\alpha_2 = \delta_2 \circ \alpha_1$ consists of simple cycles only.
    Suppose $\tau$ is a nonsimple cycle that visits a site~$s$ more than once:
    \begin{equation}
        \cdots \to (s',i') \to (s,i) \to (s'',i'') \to \cdots \to (s''',i''') \to (s,j) \to (s'''',i'''') \to \cdots
    \end{equation}
    where $\tau$ does not visit $s$ between $(s,i)$ and $(s,j)$.
    Let $\pi$ be the transposition of~$(s,i)$ and~$(s,j)$.
    Then, $\pi \circ \tau$ is a product of two disjoint cycles
    \begin{equation}
        \big(\cdots \to (s',i') \to (s,j) \to (s'''',i'''') \to \cdots \big)
        \cdot
        \big( (s,i) \to (s'',i'') \to \cdots \to (s''',i''') \to (s,i) \big),
    \end{equation}
    the first of which visits~$s$ one fewer time and the second visits~$s$ only once.
    Repeatedly applying this on every site, we see that every cycle becomes simple.
    Therefore, $\delta_2$ is constructed.
    
    (Decoupling~$B$ from~$A$)
    We claim that there exists a depth-1 circuit~$\delta_3$ of on-site permutations
    such that $\alpha_3 = \delta_3 \circ \alpha_2 \circ \delta_3^{-1}$
    has no cycle that traverses between $B$ and $A$.
    The conjugation by~$\delta_3$ is a permutation 
    of the on-site index~$i \in \{0,1,2,\ldots,k,\ldots,j_\mathrm{max}(s)\}$.
    Imagine that we travel along a simple cycle starting with a $B$-vertex,
    and whenever we encounter a $A$-vertex $(s,i)$,
    we swap~$i$ with~$0$.
    Then, this simple cycle become entirely supported on~$B$.
    Once all $B$-vertices are visited by some simple cycle,
    which is entirely supported on~$B$,
    no other simple cycle can traverse between~$B$ and~$A$.
    Therefore, $\alpha_3$ is a tensor product of disjoint permutations
    $\alpha_3^\calA$ on~$A$ and $\alpha_3^\calB$ on~$B$.

    (Tossing cycles from~$B$ to~$A$)
    The vertex subset~$A$ contains
    a copy of $B = \Lambda \times \{ 0 \}$;
    we choose $\Lambda \times \{1\}$ as a preferred copy.%
    \footnote{This is where we use the assumption~$k \ge 1$.}
    Let $\alpha_3^{\calB,\mathrm{copy}}$ be the copy of~$\alpha_3^\calB$
    permuting vertices of the preferred copy~$\Lambda \times \{1\}$.
    As a QCA, $\alpha_3^{\calB,\mathrm{copy}}$ acts on~$\fMat(A)$.
    Let $\delta_4 = \alpha_3^{\calB,\mathrm{copy}}\otimes(\alpha_3^\calB)^{-1}$.
    We see that $\alpha_4 = \delta_4 \circ \alpha_3$ is a permutation
    that acts by identity on~$\fMat(B)$.
    By~\ref{thm:QCAQCAInverseIsCircuit},
    we know that $\delta_4$ is a circuit of local automorphisms.
    
    Define $\beta \otimes \id = \alpha_4$ on~$\fMat(A) \otimes \fMat(B)$.
    Unrolling all deformations above,
    we have $(\beta \otimes \id) = \delta_4 \delta_3 \delta_2 \delta_1 \alpha \delta_3^{-1}$
    or
    $\alpha = \big(\delta_1^{-1} \delta_2^{-1} \delta_3^{-1} \delta_4^{-1} (\beta \otimes \id) \delta_3 (\beta \otimes \id)^{-1} \big) (\beta\otimes\id) $.
    By~\ref{lem:CircuitGroupIsNormal},
    the circuit~$\delta_3$ conjugated by~$\beta$ is a circuit,
    and we complete the proof.
\end{proof}

\subsection{QCA-algebra correspondence}

In this subsection we are going to establish the correspondence between
QCA on $\dd$-dimensional lattice and $(\dd-1)$-dimensional invertible subalgebras.
A useful equivalence relation to this end is the following.

\begin{definition}
    Let $\alpha$ be a QCA on $\fMat(\ZZ^\dd,p)$ and $\beta$
    be a QCA on $\fMat(\ZZ^\dd,q)$.
    For any $\ell > 0$, let 
    $S_-(n,\ell) = (\ZZ\cap (-\infty,n-\ell)) \times \ZZ^{\dd-1}$
    and $S_+(n,\ell) = (\ZZ\cap (n+\ell,\infty)) \times \ZZ^{\dd-1}$.
    We say that $\alpha$ \emph{blends into} $\beta$
    along the first axis if there exists $\ell > 0$
    (unrelated to the spread of the two QCAs),
    a position of the interface $n \in \ZZ$,
    and a spread-$\ell$ blending QCA $\gamma$ of $\fMat(\ZZ^\dd,r)$
    such that
    \begin{align}
    r(s) = p(s) \text{ and } \gamma(x) = \alpha(x) 
    &\text{ for all }
    x \in \fMat(S_-(n,\ell),p),
    s \in S_-(n,\ell),\\
    r(s) = q(s) \text{ and } \gamma(x) = \beta(x) 
    &\text{ for all }
    x \in \fMat(S_+(n,\ell),q),
    s \in S_+(n,\ell). \nonumber
    \end{align}
\end{definition}

\begin{lemma}\label{lem:BlendingIsEquivalence}
    Along a fixed axis,
    a QCA $\alpha$ blends into a QCA $\beta$
    with blending interface at~$n$,
    if and only if $\alpha$ blends into~$\beta$
    with blending interface anywhere,
    if and only if $\beta$ blends into 
    $\alpha$ with blending interface anywhere.
\end{lemma}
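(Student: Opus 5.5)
The statement has two parts. The first equivalence, that the interface position is immaterial, is pure bookkeeping. The second, symmetry of blending, is the real content.

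\emph{Moving the interface.} Suppose $\gamma$ on $\fMat(\ZZ^\dd,r)$ blends $\alpha$ into $\beta$ with interface $n$ and width $\ell$, and let $m\in\ZZ$ be arbitrary. I would keep the same $\gamma$ and enlarge the width parameter to $\ell' = \ell + |m-n|$. Then $m-\ell' \le n-\ell$ and $m+\ell' \ge n+\ell$. Hence
\[
S_-(m,\ell')\subseteq S_-(n,\ell), \qquad S_+(m,\ell')\subseteq S_+(n,\ell).
\]
The conditions on $r$ and on $\gamma$ restrict to these smaller half-spaces. Moreover, $\gamma$ has spread $\ell\le\ell'$, so it is also a spread-$\ell'$ QCA. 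This shows that blending with interface at $n$ implies blending with interface at any $m$. The converse is tautological, since ``anywhere'' means at some interface.

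\emph{Symmetry.} It suffices to show that if $\alpha$ blends into $\beta$, then $\beta$ blends into $\alpha$. I would first invert: $\gamma^{-1}$ is a QCA of spread $\ell$ by \ref{lem:inverse_QCA_spread}. For $x$ supported in $S_-(n,2\ell)$ we have $\gamma^{-1}(x)=\alpha^{-1}(x)$: indeed $\alpha^{-1}(x)$ is supported in $S_-(n,\ell)$, and there $\gamma$ agrees with $\alpha$. The same argument on the right gives $\gamma^{-1}(x)=\beta^{-1}(x)$. So $\alpha^{-1}$ blends into $\beta^{-1}$, and blending is preserved under inversion.

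Next I would use composition. If $\gamma_1$ blends $\alpha_1\to\beta_1$ and $\gamma_2$ blends $\alpha_2\to\beta_2$ on the same assignment $r$, then $\gamma_1\gamma_2$ blends $\alpha_1\alpha_2\to\beta_1\beta_2$, after enlarging the width by the spreads, as in the first part. The aim is to write a blend of $\beta$ into $\alpha$ as such a product of available blends. A natural candidate is a QCA that equals $\beta\circ\alpha^{-1}\circ\alpha$ on the left and $\alpha\circ\beta^{-1}\circ\beta$ on the right. To build it I would use the trivial blends $\id\to\id$ together with $\gamma$ and $\gamma^{-1}$, placed with interfaces at different positions.

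\emph{Main obstacle.} Factors acting on $p$ and on $q$ cannot be composed across the interface unless the local assignments are matched. This assignment bookkeeping is where I expect the real difficulty. My first attempt would be a layered construction. Place a copy of $\gamma^{-1}$, whose assignment $r$ is $p$ on the left and $q$ on the right, next to a reflected-interface copy of $\gamma$. Join them by a depth-1 circuit of $\SWAP$s as in \ref{thm:QCAQCAInverseIsCircuit}. Then truncate that circuit near the interface. This is allowed because circuits truncate, and by \ref{lem:CircuitGroupIsNormal} conjugates of circuits remain circuits. The goal is to cancel the $\alpha$ on the left and the $\beta$ on the right, leaving a QCA on an assignment that is $q$ on the left and $p$ on the right. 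If extra ancilla layers appear, I would try to remove them as in \ref{lem:AncillaRemoval}.
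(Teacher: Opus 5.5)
Your first equivalence (enlarging $\ell$ to move the interface) is correct. So is your remark that $\gamma^{-1}$ blends $\alpha^{-1}$ into $\beta^{-1}$, but the justification should be different. For $x$ supported in $S_-(n,2\ell)$, $\gamma^{-1}(x)$ lies in $\fMat(S_-(n,\ell))$ because $\gamma^{-1}$ has spread $\ell$. Since $\alpha=\gamma$ there, $\alpha(\gamma^{-1}(x))=x$. The spread of $\alpha^{-1}$ itself is unrelated to $\ell$, so it cannot be used directly.

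The symmetry part has a genuine gap: no blend of $\beta$ into $\alpha$ is actually constructed.
\begin{itemize}
\item Your candidate ``$\beta\circ\alpha^{-1}\circ\alpha$ on the left'' is ill-typed. The QCAs $\alpha$ and $\beta$ act on different assignments, and composing blends requires a common $r$.
\item A ``reflected-interface copy of $\gamma$'' does not help. A spatial reflection blends the reflected $\beta$ into the reflected $\alpha$, not $\beta$ into $\alpha$. Any other reading is the very object you are trying to build.
\item \ref{lem:AncillaRemoval} is the wrong tool for leftover layers. It strips a uniform ancilla from a QCA that is already a circuit. Here the extra layers are nonuniform and the QCA is not a circuit.
\end{itemize}

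The missing idea is to stack rather than compose.
\begin{itemize}
\item On three layers with assignment $q\otimes r\otimes p$, take $\delta_0=\beta\otimes\gamma^{-1}\otimes\alpha$. Far to the left this is $\beta\otimes\alpha^{-1}\otimes\alpha$ on $q\otimes p\otimes p$. Far to the right it is $\beta\otimes\beta^{-1}\otimes\alpha$ on $q\otimes q\otimes p$.
\item By \ref{thm:QCAQCAInverseIsCircuit}, $\alpha^{-1}\otimes\alpha$ and $\beta\otimes\beta^{-1}$ are circuits of local automorphisms, so they can be truncated to half-spaces. Compose $\delta_0$ with the inverse of the first circuit, truncated to the far left on layers $2,3$. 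Also compose with the inverse of the second, truncated to the far right on layers $1,2$.
\item The result is a QCA $\delta$ equal to $\beta\otimes\id\otimes\id$ far left and $\id\otimes\id\otimes\alpha$ far right. This circuit property is exactly the input the paper's proof (that of \cite[Lemma~3.12]{Haah2023invertible}) relies on.
\item Finally, $\delta$ fixes pointwise the algebra $\mathcal{E}$ generated by layers $2,3$ far left and layers $1,2$ far right. Hence $\delta$ preserves the fermionic commutant of $\mathcal{E}$. By \ref{lem:supp_by_commutation}, this commutant is $\fMat(\ZZ^\dd,r')$ with $r'=q$ on the left and $r'=p$ on the right.
\item The restriction of $\delta$ to this commutant is the required blend of $\beta$ into $\alpha$.
\end{itemize}
Discarding the identity layers by passing to a commutant, rather than by ancilla removal, is what resolves the bookkeeping obstacle you identified.
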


\begin{proof}
    The proof is exactly the same as in~\cite[Lemma~3.12]{Haah2023invertible},
    except that we use~\ref{thm:QCAQCAInverseIsCircuit}
    instead of the ungraded result of~\cite{Arrighi2007}.
\end{proof}

\begin{corollary}
    \label{cor:QCA_group}
    Given an axis, blending is an equivalence relation,
    modulo which the set of all $\dd$-dimensional QCAs
    form an abelian group under tensor product,
    with the identity class $[\id]$ represented by the identity QCA
    and the inverse of $[\alpha]$ represented by $\alpha^{-1}$.
\end{corollary}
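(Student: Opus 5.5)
We have to prove the statement of \ref{cor:QCA_group}: blending is an equivalence relation, and the classes form an abelian group under tensor product, with identity $[\id]$ and inverse $[\alpha^{-1}]$.

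\textbf{Equivalence relation.} We use \ref{lem:BlendingIsEquivalence}, which already gives that the interface position is irrelevant and that blending is symmetric.
\begin{itemize}
\item Reflexivity is immediate: take $\gamma=\alpha$ as the blending QCA.
\item For transitivity, suppose $\alpha$ blends into $\beta$ via $\gamma_1$ with interface at $n_1$, and $\beta$ blends into $\delta$ via $\gamma_2$. By \ref{lem:BlendingIsEquivalence} we may place the second interface at $n_2\gg n_1$, separated from the first by much more than the spreads and blending widths. Define $\gamma$ to act as $\gamma_1$ on operators supported left of the midpoint $m=(n_1+n_2)/2$, and as $\gamma_2$ on operators supported right of $m$.
\item This $\gamma$ is well defined and is a QCA. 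Near $m$ both $\gamma_1$ and $\gamma_2$ agree with $\beta$, and the local assignments $r_1$ and $r_2$ both equal $q$ there. Hence on the slab around $m$ the two prescriptions coincide.
\item To see that $\gamma$ extends to an automorphism, write every operator as a product of a left-supported piece and a right-supported piece. Images of the two pieces fermionically commute when the supports are far apart. Injectivity and surjectivity follow because $\gamma^{-1}$ can be built the same way from $\gamma_1^{-1}$ and $\gamma_2^{-1}$, whose spreads are bounded by \ref{lem:inverse_QCA_spread}.
\end{itemize}

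\textbf{Group operation.} Define $[\alpha]+[\beta]=[\alpha\otimes\beta]$ on the local assignment $pq$.
\begin{itemize}
\item Well-definedness: if $\gamma_1$ blends $\alpha$ into $\alpha'$ and $\gamma_2$ blends $\beta$ into $\beta'$ with the same interface, then $\gamma_1\otimes\gamma_2$ blends $\alpha\otimes\beta$ into $\alpha'\otimes\beta'$. This uses \ref{rem:TensorProductOfMorphism}, and spreads add at most.
\item Associativity holds because the tensor product of local algebras is associative.
\item Commutativity: $\alpha\otimes\beta$ and $\beta\otimes\alpha$ differ by conjugation with the site-wise graded swap of tensor factors. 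This swap is a spread-$0$ isomorphism, so we treat the two as equivalent. We must check that the definition of blending tolerates relabeling of the local assignment by such on-site isomorphisms; I expect to handle this implicitly, as the paper does with $\fMat$ up to isomorphism.
\item Identity: $\id$ on a trivial assignment, or more generally on $\fMat(\ZZ^\dd,m)$ for any $m$, represents the neutral element.
\end{itemize}

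\textbf{Inverse.} This is the main step. We must show that $\alpha\otimes\alpha^{-1}$ blends into $\id$.
\begin{itemize}
\item By \ref{thm:QCAQCAInverseIsCircuit}, $\alpha\otimes\alpha^{-1}$ is a finite-depth circuit of local automorphisms.
\item Truncate the circuit: in each layer, keep only the gates whose support meets the half-space $S_-(n,\ell)$. Each layer is a product of pairwise commuting, uniformly bounded gates, so the truncated layer is still a depth-1 circuit. The composition of truncated layers is therefore a QCA.
\item On far-left operators this QCA agrees with the full circuit, i.e.\ with $\alpha\otimes\alpha^{-1}$. On far-right operators it acts as the identity, provided $\ell$ exceeds the depth times the gate diameter.
\item The two sides carry the same assignment $p^2$. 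Comparing with $\id$ on $p^2$ then gives the blending, because the neutral element may be $\id$ on any assignment, as noted above.
\end{itemize}

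The main obstacle is bookkeeping the differing local assignments: the identity on $p^2$ versus the identity on the trivial assignment, and the swap needed for commutativity. I would resolve this by showing that $\id$ on any $\fMat(\ZZ^\dd,m)$ blends into $\id$ on the trivial assignment. Since the blending QCA $\gamma$ is allowed its own assignment $r$, we take $r=m$ on the left and $r$ trivial on the right, and let $\gamma$ be the identity on $\fMat(\ZZ^\dd,r)$.
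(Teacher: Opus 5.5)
Your proposal is correct and is essentially the paper's argument. The paper declares the corollary clear from Lemma~\ref{lem:BlendingIsEquivalence} (interface independence and symmetry) and Proposition~\ref{thm:QCAQCAInverseIsCircuit} (so $\alpha\otimes\alpha^{-1}$ is a circuit and blends into $\id$), and you have filled in what it leaves implicit: transitivity by gluing, well-definedness under $\otimes$, independence of $[\id]$ from the local assignment, and half-space truncation of the circuit.

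The commutativity bookkeeping you flag closes in the same spirit. If $\theta$ is the on-site graded swap relating the two orderings of tensor factors, then $\theta_R\circ(\alpha\otimes\beta)\circ\theta_R^{-1}$, with $\theta_R$ acting as $\theta$ on one half-space and trivially on the other, is an explicit blending of $\alpha\otimes\beta$ into $\beta\otimes\alpha$.
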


\begin{proof}
    Clear by \ref{lem:BlendingIsEquivalence} 
    and the fact that $\alpha \otimes \alpha^{-1}$
    blends into the identity (\ref{thm:QCAQCAInverseIsCircuit}).
\end{proof}

\begin{theorem}\label{thm:QCAToBrauer}
    The abelian group of all blending equivalence classes of
    $\dd$-dimensional fermionic QCA (blending along a fixed axis),
    modulo those represented by shift QCA, is isomorphic
    to the Brauer group of $(\dd-1)$-dimensional
    fermionic invertible subalgebras.
\end{theorem}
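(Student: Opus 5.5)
We will follow the strategy of \cite{Haah2023invertible}: build a homomorphism from QCAs to $(\dd-1)$-dimensional invertible subalgebras by restricting the QCA to a half-space, show it descends to blending classes, identify its kernel with the shifts, and prove surjectivity by an explicit construction.

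\emph{The map.} Given a QCA $\alpha$ on $\fMat(\ZZ^\dd,p)$ of spread $\ell$, we set
\begin{equation}
\calA_\alpha = \FermiCommutant{\alpha(\fMat(S_{\ge 0}))}{\fMat(S_{\ge -\ell})},
\end{equation}
where $S_{\ge n} = (\ZZ\cap[n,\infty))\times\ZZ^{\dd-1}$. Thus $\calA_\alpha$ is supported in the slab $[-\ell,\ell]\times\ZZ^{\dd-1}$, and we regard the slab as a $(\dd-1)$-dimensional lattice with enlarged site algebras. To see that $\calA_\alpha$ is invertible, we use the multiplication isomorphism $\alpha(\fMat(S_{\ge0}))\otimes \calA_\alpha\cong \fMat(S_{\ge-\ell})$ given by \ref{lem:full_mat_tensor_factor} (applied to $\psi=\alpha$). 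Its fermionic commutant inside $\fMat$ of the slab is then $\alpha(\fMat([0,\ell]\times\ZZ^{\dd-1}))$. The bounded-spread inverse of $\mul$ comes from writing each slab operator as $\alpha(\alpha^{-1}(x))$ and splitting $\alpha^{-1}(x)$ across the cut at $0$, with \ref{lem:inverse_QCA_spread} controlling the spreads.

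\emph{Well-definedness and homomorphism.} Changing the cut position, or $\ell$, changes $\calA_\alpha$ only by tensoring with full slab algebras, by \ref{lem:full_mat_tensor_factor} and \ref{cor:ISAInverseSameSupport}; this leaves the Brauer class unchanged. If $\gamma$ blends $\alpha$ into $\beta$, we read off $\calA_\gamma$ at a cut deep in the $\alpha$-region and at a cut deep in the $\beta$-region. The two are related by tensoring with a full algebra on the intermediate slab, so $[\calA_\alpha]=[\calA_\beta]$. Additivity under tensor product of QCAs is immediate from the definition. For a shift, $\calA_\alpha$ is generated by single-site factors and is therefore a tensor product algebra, hence Brauer trivial. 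We thus obtain a group homomorphism from (QCA mod blending) mod shifts to the Brauer group.

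\emph{Surjectivity.} Given an invertible $\calA$ on $\ZZ^{\dd-1}$ with commutant $\calB$, we place a copy of $\fMat(\ZZ^{\dd-1})\cong\calA\otimes\calB$ on each layer $j\in\ZZ$. On these layers we define the QCA that maps $\calA$ on layer $j$ to $\calA$ on layer $j+1$ and fixes $\calB$. It is well defined by the bounded-spread isomorphism of \ref{cor:invertible_isomorphism} and the universal property \ref{lem:TensorProdUniversalProperty}. A direct computation shows its half-space algebra is $\calA$ up to stabilization.

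\emph{Injectivity, the main obstacle.} Suppose $\calA_\alpha$ is Brauer trivial, i.e.\ there is a bounded-spread isomorphism $\calA_\alpha\otimes\fMat(q_1)\cong\fMat(q_2)$. We must show $\alpha$ blends into a shift. We first stabilize $\alpha$ by $\id$ on ancillas of type $q_1$, which does not change the blending class. The triviality isomorphism then says $\calA_\alpha\otimes\fMat(q_1)$ is generated by local full-matrix factors. On $x<0$ we construct the blending QCA $\gamma$ to agree with $\alpha$. On $x>0$ we let $\gamma$ act as the identity, but on a relabeled algebra in which the slab's factor $\fMat(q_2)$ replaces $\calA_\alpha\otimes\fMat(q_1)$. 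Concretely, $\gamma$ sends $\fMat(S_{\ge0})$ to the identity region and sends $\fMat(S_{<0})\otimes\fMat(q_1)$ onto $\alpha(\fMat(S_{<0}))\otimes\calA_\alpha\otimes\fMat(q_1)$ via the triviality isomorphism, so that it is an automorphism of $\fMat(\ZZ^\dd, r)$ for a suitable local assignment $r$. The delicate point is matching local assignments so that $\gamma$ is an automorphism of a single algebra $\fMat(\ZZ^\dd,r)$ rather than an isomorphism between different ones. The slab factor $\fMat(q_2)$ must be placed so that, after re-indexing sites near the interface, the discrepancy is absorbed by a shift (this is exactly why shifts are quotiented out). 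We will follow \cite[Thm.~3.13]{Haah2023invertible}, using \ref{lem:BlendingIsEquivalence}, and we expect the grading to enter only through fermionic commutants and through \ref{thm:QCAQCAInverseIsCircuit}.
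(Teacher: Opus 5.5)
Your map, its invariance under moving the cut and under blending, its vanishing on shifts, and the layer-stacking argument for surjectivity all follow the paper; your $\calA_\alpha$ is exactly the paper's boundary algebra $\calB(-1,\ell)$. The genuine gap is the kernel step, which is where the real content of the theorem lies.

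\textbf{Why your blending construction cannot work.} As you describe it, $\gamma$ agrees with $\alpha$ on $x<0$ and is the identity on $x>0$. Taken literally, this forces $\alpha(\fMat(S_{<0}))$ to fermionically commute with $\gamma(\fMat(S_{\ge0}))=\fMat(S_{\ge0})$, which fails whenever $\alpha$ moves anything across the cut. More fundamentally, any QCA that equals $\alpha$ far to the left and the identity far to the right blends $\alpha$ into $\id$. That is false in general, even when the boundary algebra is Brauer trivial. For $\dd=1$ every boundary algebra is trivial, yet the Majorana shift of \S\ref{sec:1dQCAIndex} has index $\sqrt2$, and blending preserves the index (\ref{lem:blendingImpliesSameIndex}).

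So no relabeling or re-indexing near the interface can absorb the discrepancy. It is a fixed amount of algebra per unit of cross-section that must be carried off to infinity. In other words, the blending target must be a shift acting nontrivially on the whole right half-space.

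Relatedly, re-coordinatizing $\calA_\alpha\otimes\fMat(q_1)$ as $\fMat(q_2)$ only touches the part of the slab that $\alpha$ already reaches from the left. What must be fed in from the right is its complement. Your formula sending $\fMat(S_{<0})\otimes\fMat(q_1)$ onto $\alpha(\fMat(S_{<0}))\otimes\calA_\alpha\otimes\fMat(q_1)$ also double-counts, since $\calA_\alpha\subseteq\alpha(\fMat(S_{<0}))$.

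\textbf{The missing idea: the bilayer shift of \ref{lem:trivial_alg_blend_shift}.} With $\ell=1$, the paper uses Brauer triviality of the complement $\calA(n,1)=\alpha(\fMat(\ZZ^\dd\setminus H(n)))\cap\fMat(S(n,1))$, whose class is $-[\calB(n,1)]=0$. This gives an isomorphism $\phi:\fMat(\ZZ^{\dd-1},q)\cong\calA(n,1)\otimes\fMat(\ZZ^{\dd-1},p')$. The underlying algebra has $\fMat(p(x,y))$ for $x\le n$, $\fMat(p(n+1,y))\otimes\fMat(p'(y))$ at $x=n+1$, and an extra $\fMat(q(y))$ for $x\ge n+2$. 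The blending QCA $\gamma$ then acts as follows:
\begin{itemize}
\item $\gamma=\alpha$ on $x\le n$;
\item the $\fMat(p(n+1,y))\otimes\fMat(p'(y))$ layers translate one step right;
\item the $\fMat(q(y))$ layers at $x\ge n+3$ translate one step left;
\item the $\fMat(q(y))$ layer at $x=n+2$ is sent through $\phi$ into the slab $\{n,n+1\}$.
\end{itemize}
Surjectivity holds because the slab algebra is $\calB(n,1)\calA(n,1)\otimes\fMat(\ZZ^{\dd-1},p')$: $\calB(n,1)$ is hit by $\alpha$ and the rest by $\phi$. Far to the right, $\gamma$ is the bilayer shift $\beta_1\otimes\beta_2$, with one layer flowing in to feed $\phi$ and one flowing out.

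\textbf{A smaller slip in your invertibility step.} The commutant of $\calA_\alpha$ in the slab is $\alpha(\fMat(S_{\ge0}))\cap\fMat(\text{slab})$. It is not $\alpha(\fMat([0,\ell]\times\ZZ^{\dd-1}))$, which is generally not even supported in the slab. Also, spread bounds alone do not place the split pieces in the slab. You need, as the paper does, that they fermionically commute with $\fMat$ of the regions on either side, and then apply \ref{lem:supp_by_commutation}.
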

\noindent
This parallels the established correspondence~\cite{Haah2023invertible}
for the ungraded case, and we follow the same argument.

For any $n\in \ZZ$, we define
\begin{equation}
    \begin{aligned}
        H(n) &= (\ZZ\cap (-\infty, n])\times \ZZ^{\dd-1} &\text{(a half space)},\\
        T(n,\ell) &= (\ZZ\cap (-\infty, n-\ell])\times \ZZ^{\dd-1} &\text{(a slightly smaller half space)},\\
        S(n,\ell) &= (\ZZ\cap [n-\ell+1,n+\ell])\times\ZZ^{\dd-1} &\text{(a slab)} .\\
    \end{aligned}
    \label{def:regions}
\end{equation}

\begin{lemma}
    \label{lem:boundary_alg_tensor_factor}
    Let $\alpha$ be a QCA with spread $\ell > 1$ on $\fMat(\ZZ^\dd)$.
    Define
    \begin{equation}
        \calB(n,\ell) = \alpha(\fMat(H(n)))\cap \fMat(S(n,\ell)).
    \end{equation}
    Then $\calB(n,\ell)$ is a fermionic subalgebra of $\fMat(S(n,\ell))$
    such that the multiplication map
    \begin{equation}
        \mul: \fMat(T(n,\ell))\otimes\calB(n,\ell)\to
        \alpha(\fMat(H(n)))
    \end{equation}
    is an isomorphism of fermionic algebras.
\end{lemma}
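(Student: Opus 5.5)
The plan is to apply \ref{lem:full_mat_tensor_factor} with $\psi = \id$ (trivially bounded-spread) and $\calM = \alpha(\fMat(H(n)))$, taking $T = T(n,\ell)$. This requires showing two things: that $\fMat(T(n,\ell)) \subseteq \calM$, and that the fermionic commutant of $\fMat(T(n,\ell))$ inside $\calM$ is exactly $\calB(n,\ell)$. Once both hold, the multiplication map is an isomorphism directly by that lemma. We also need $\calB(n,\ell)$ to be a fermionic subalgebra. This is immediate: it is an intersection of two graded unital $*$-subalgebras, since $\alpha$ is a graded $*$-automorphism.

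\textbf{Containment.} By \ref{lem:inverse_QCA_spread}, $\alpha^{-1}$ has spread $\ell$. For $x \in \fMat(T(n,\ell))$, the element $\alpha^{-1}(x)$ is supported on $T(n,\ell)^{+\ell} \subseteq H(n)$, using the $\ell_\infty$ metric, where the first coordinate moves by at most $\ell$. Hence $x = \alpha(\alpha^{-1}(x)) \in \alpha(\fMat(H(n)))$.

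\textbf{Identifying the commutant.} Let $y \in \calM$ fermionically commute with $\fMat(T(n,\ell))$. By \ref{lem:supp_by_commutation}, $y$ is supported on the complement of $T(n,\ell)$, i.e.\ on first coordinates $> n-\ell$. Also $y = \alpha(z)$ with $z \in \fMat(H(n))$, so $\Supp(y) \subseteq H(n)^{+\ell}$, i.e.\ first coordinates $\le n+\ell$. Together these give $\Supp(y) \subseteq S(n,\ell)$, so $y \in \calB(n,\ell)$. Conversely, every element of $\calB(n,\ell)$ lies in $\calM$ and is supported in $S(n,\ell)$, which is disjoint from $T(n,\ell)$. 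Such an element fermionically commutes with $\fMat(T(n,\ell))$ by the easy direction of \ref{lem:supp_by_commutation}. Thus $\FermiCommutant{\fMat(T(n,\ell))}{\calM} = \calB(n,\ell)$.

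The only step needing real care is the boundary bookkeeping of integer intervals. We must check that $T(n,\ell)^{+\ell} \subseteq H(n)$, which holds since $(n-\ell)+\ell = n$. We must also check that the slab $S(n,\ell) = [n-\ell+1, n+\ell]$ exactly matches the complement of $T(n,\ell)$ intersected with $H(n)^{+\ell}$. The remaining technical input, namely surjectivity despite infinite support, is already handled inside \ref{lem:full_mat_tensor_factor}. Its hypothesis that $\calM$ contains $\psi(\fMat(T))$ is exactly what the containment step establishes.
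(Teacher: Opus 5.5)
Your proof is correct and follows the paper's route: you use the spread of $\alpha^{-1}$ to get $\fMat(T(n,\ell))\subseteq\alpha(\fMat(H(n)))$, and then invoke Lemma~\ref{lem:full_mat_tensor_factor}. You also spell out that the fermionic commutant of $\fMat(T(n,\ell))$ inside $\alpha(\fMat(H(n)))$ is exactly $\calB(n,\ell)$; the paper leaves this implicit in its observation that $\alpha(\fMat(H))\subseteq\fMat(T\sqcup S)$.
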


\begin{proof}
    We suppress the arguments $n$ and $\ell$ for clarity.
    By the definition of QCA spread,
    $\alpha(\fMat(H))\subseteq \fMat(T \sqcup S)$.
    Using \ref{lem:inverse_QCA_spread}, $\alpha^{-1}$
    is also a QCA with spread $\ell$, so
    $\alpha^{-1}(\fMat(T))\subseteq \fMat(H)$.
    Applying $\alpha$ gives $\fMat(T)\subseteq \alpha(\fMat(H))$,
    so we are done by \ref{lem:full_mat_tensor_factor}.
\end{proof}

\begin{definition}
    The subalgebra $\calB(n,\ell)$ defined in 
    \ref{lem:boundary_alg_tensor_factor} is called
    a \emph{boundary algebra} of $\alpha$ on the positive
    first axis.
\end{definition}

\begin{lemma}
    Every boundary algebra of a fermionic QCA on $\ZZ^\dd$
    is a fermionic invertible subalgebra in $\dd-1$ dimensions.
    If the QCA has spread at most $\ell$,
    then the invertible subalgebra has spread at most $2\ell$.
\end{lemma}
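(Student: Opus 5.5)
The plan is to regard $S(n,\ell)$ as a $(\dd-1)$-dimensional lattice: each site is a column $[n-\ell+1,n+\ell]\times\{s\}$, with single-site algebra $\fMat([n-\ell+1,n+\ell]\times\{s\})$, which is central simple. I will identify the fermionic commutant of $\calB = \calB(n,\ell)$ inside $\fMat(S)$ explicitly and then produce local $\calB\calC$-decompositions directly from $\alpha^{-1}$. Write $H^c = \ZZ^\dd\setminus H(n)$ and $T' = (\ZZ\cap[n+\ell+1,\infty))\times\ZZ^{\dd-1}$. The mirror image of \ref{lem:boundary_alg_tensor_factor} (same proof, reflected along the first axis) gives $\calC := \alpha(\fMat(H^c))\cap\fMat(S)$ with $\mul:\fMat(T')\otimes\calC\to\alpha(\fMat(H^c))$ an isomorphism.

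\emph{Step 1: $\calC = \FermiCommutant{\calB}{\fMat(S)}$.} Since $\fMat(H)$ and $\fMat(H^c)$ fermionically commute and $\alpha$ is a graded automorphism, $\FermiCommutator{\calB}{\calC}=0$. Conversely, let $y\in\fMat(S)$ fermionically commute with $\calB$. Then $y$ also fermionically commutes with $\fMat(T)$, because the supports are disjoint. By \ref{lem:boundary_alg_tensor_factor}, $y$ therefore fermionically commutes with all of $\alpha(\fMat(H))$. Applying $\alpha^{-1}$, the element $\alpha^{-1}(y)$ fermionically commutes with $\fMat(H)$, so \ref{lem:supp_by_commutation} gives $\alpha^{-1}(y)\in\fMat(H^c)$. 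Hence $y\in\calC$.

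\emph{Step 2: local decomposition.} Let $x\in\fMat(S)$ have support $R$, where $R$ is a finite union of columns. By \ref{lem:inverse_QCA_spread}, $\alpha^{-1}(x)$ is supported on $R^{+\ell}$. Using $\fMat(\ZZ^\dd)=\fMat(H)\otimes\fMat(H^c)$, I write $\alpha^{-1}(x)=\sum_i h_ih'_i$ with homogeneous $h_i\in\fMat(H\cap R^{+\ell})$ and $h'_i\in\fMat(H^c\cap R^{+\ell})$, so that $x=\sum_i\alpha(h_i)\alpha(h'_i)$. Each $\alpha(h_i)$ is supported on $R^{+2\ell}$ and lies in $\alpha(\fMat(H))\cong\fMat(T)\otimes\calB$. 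Decomposing it there gives $\alpha(h_i)=\sum_j t_{ij}b_{ij}$, and by \ref{lem:full_mat_tensor_factor} the factors can be taken with $t_{ij}\in\fMat(T\cap R^{+2\ell})$ and $b_{ij}\in\calB\cap\fMat(R^{+2\ell})$: choose the $t_{ij}$ linearly independent, and then a Leibniz-rule argument shows that every Schmidt component fermionically commutes with anything supported outside $R^{+2\ell}$. The same argument applied to $\alpha(h'_i)$ gives $\alpha(h'_i)=\sum_k t'_{ik}c_{ik}$ with $t'_{ik}\in\fMat(T'\cap R^{+2\ell})$ and $c_{ik}\in\calC\cap\fMat(R^{+2\ell})$.

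\emph{Step 3: remove the $T,T'$ factors.} Reorder each product as $\pm\, t_{ij}t'_{ik}\, b_{ij}c_{ik}$, with signs fixed by the gradings. Then apply the partial trace $E$ over $(T\cup T')\cap R^{+2\ell}$, i.e.\ $\tr\otimes\id$ on $\fMat((T\cup T')\cap R^{+2\ell})\otimes\fMat(S\cap R^{+2\ell})$, which is a finite-dimensional algebra containing everything in sight. Because the tracial state kills odd elements, $E(t\,s)=\tr(t)\,s$ for homogeneous $t$ supported in $T\cup T'$ and $s\in\fMat(S)$ with no stray signs. Since $x\in\fMat(S)$, $E(x)=x$, and so
\begin{equation}
x=\sum_{i,j,k}\pm\,\tr(t_{ij}t'_{ik})\,b_{ij}c_{ik},
\end{equation}
with $b_{ij}\in\calB$ and $c_{ik}\in\calC$ all supported on $R^{+2\ell}$. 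In the $(\dd-1)$-dimensional column metric this is spread $2\ell$. Together with Step 1, this shows $\calB$ is invertible with spread at most $2\ell$.

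I expect the main obstacle to be Step 3, and in particular the bookkeeping in Step 2. One must confirm that the Schmidt components $t_{ij},b_{ij}$ really stay inside $R^{+2\ell}$ and that the conditional expectation commutes correctly with the fermionic signs. I would handle the first point with the support argument inside the proof of \ref{lem:full_mat_tensor_factor}. For the second, it suffices to observe that the trace vanishes on odd elements, so only even $t$'s contribute, and even elements commute with everything.
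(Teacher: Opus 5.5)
Your proof is correct, but it takes a different route from the paper's.

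\textbf{The paper's route.} It takes a Schmidt decomposition $\alpha^{-1}(y)=\sum_i b'_i a'_i$ across $H\,|\,H^c$ in which \emph{both} families are linearly independent. It sets $b_i=\alpha(b'_i)$ and $a_i=\alpha(a'_i)$, and notes that locality already gives $b_i\in\fMat(T\sqcup S)$ and $a_i\in\fMat(S\sqcup T')$. Since $y\in\fMat(S)$ fermionically commutes with $\fMat(T)\subseteq\alpha(\fMat(H))$ and with $\fMat(T')\subseteq\alpha(\fMat(H^c))$, the Leibniz rule together with linear independence forces $\FermiCommutator{t}{b_i}=0$ and $\FermiCommutator{t'}{a_i}=0$. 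By \ref{lem:supp_by_commutation}, each Schmidt factor therefore already lies in $\fMat(S)$. So the factors themselves are the desired $\calB$- and commutant-components, and no refactoring or averaging is needed.

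\textbf{Your route.} You start from an arbitrary decomposition. You refactor each piece through $\alpha(\fMat(H))\cong\fMat(T)\otimes\calB$ and its mirror $\alpha(\fMat(H^c))\cong\fMat(T')\otimes\calC$, with support control taken from the proof of \ref{lem:full_mat_tensor_factor}. You then discard the $T,T'$ legs with a slice map.

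\textbf{What each buys.} The paper's argument is shorter and gives a sharper structural fact: any linearly independent Schmidt splitting of $\alpha^{-1}(y)$ is carried by $\alpha$ directly to a $\calB\calC$-decomposition of $y$. Yours is more modular. It uses only the tensor-factor lemma and a generic conditional-expectation trick, and avoids linear-independence bookkeeping across the $\alpha$-image splitting.

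\textbf{Simplifications available to you.}
\begin{itemize}
\item The slice map works with any linear functional $\phi$ satisfying $\phi(\one)=1$. Traciality and vanishing on odd elements are not needed, because you put the $T\cup T'$ legs on the left, where $ts$ is simply $t\otimes s$ with no sign.
\item You may take the slice over any finite region of $T\cup T'$ containing the $t$'s. So you never need to control $\Supp(t_{ij})$, only $\Supp(b_{ij})$ and $\Supp(c_{ik})$, and linear independence of the $t$'s already gives that.
\item Your Step~1 proves the full identification $\calC=\FermiCommutant{\calB}{\fMat(S)}$, which the paper records separately as \ref{lem:boundary_algebra_commutant}. This is more than the definition of invertibility requires. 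The easy inclusion $\calC\subseteq\FermiCommutant{\calB}{\fMat(S)}$ is enough.
\end{itemize}
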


\begin{proof}
    We continue with the notation of 
    \ref{def:regions} and \ref{lem:boundary_alg_tensor_factor}.
    Let $\calA$ be the fermionic commutant of $\calB$
    within $\fMat(S)$; we must show that every $y\in \fMat(S)$
    admits an $\calA\calB$ decomposition.
    Since $\alpha^{-1}$
    is a QCA of spread $\ell$ by \ref{lem:inverse_QCA_spread},
    $\alpha^{-1}(y)$ is supported
    on the wider slab $S^{+\ell}$.
    Using $\fMat(\ZZ^\dd) = \fMat(H)\otimes \fMat(\ZZ^\dd\setminus H)$,
    we have a decomposition
    $\alpha^{-1}(y) = \sum_i b_i'\otimes a_i'$,
    where $b_i'\in \fMat(H)$ and $a_i'\in \fMat(\ZZ^\dd\setminus H)$.
    Furthermore we choose the decomposition
    so that $\{a_i'\}$ and $\{b_i'\}$
    form linearly independent sets.
    Now define $a_i = \alpha(a_i')\in\alpha(\fMat(\ZZ^\dd\setminus H))$
    and $b_i = \alpha(b_i')\in \alpha(\fMat(H))$,
    which are supported on $\Supp(y)^{+2\ell}$ and
    yield the decomposition
    \begin{equation}
        y = \sum_i b_i\otimes a_i\in \alpha(\fMat(H))\otimes
        \alpha(\fMat(\ZZ^\dd\setminus H)) \cong \fMat(\ZZ^\dd).
    \end{equation}
    By definition, $a_i$ fermionically commutes
    with all of $\calB$, so we just need to show
    that $a_i$ and $b_i$ are supported on $S$.
    
    Define the region to the right of the slab $S$
    as $R(n,\ell) = (\ZZ\cap [n+\ell+1, \infty))\times \ZZ^{\dd-1}$.
    The locality of the QCA implies that
    $\Supp(b_i)\subseteq T\sqcup S$ and
    $\Supp(a_i)\subseteq S\sqcup R$;
    we also have $\fMat(T)\subseteq \alpha(\fMat(H))$
    and $\fMat(R)\subseteq \alpha(\fMat(\ZZ^\dd\setminus H))$.
    Since $y\in \fMat(S)$, any $t\in \fMat(T)$ fermionically
    commutes with $y$; 
    however, by~\eqref{eq:graded_derivation}
    we also have
    \begin{equation}
        0 = \FermiCommutator{t}{y} = \sum_i 
        \FermiCommutator{t\otimes \one}{b_i \otimes a_i}
        = \sum_i \FermiCommutator{t}{b_i}\otimes a_i,
    \end{equation}
    which forces $\FermiCommutator{t}{b_i} = 0$
    as the $a_i$'s are linearly independent.
    Similarly, any $r\in \fMat(R)$ fermionically commutes
    with $y$, which by the same argument forces
    $\FermiCommutator{r}{a_i} = 0$ for all $i$.
    Now by \ref{lem:supp_by_commutation},
    $a_i$ and $b_i$ must be supported on $S$, as needed.
\end{proof}

\begin{lemma}
    \label{lem:BoundaryAlgebraShift}
    Let $\calB(n,\ell)$ denote the boundary algebra
    as in \ref{lem:boundary_alg_tensor_factor}.
    Then we have
    \begin{equation}
    \begin{aligned}
        \calB(n,\ell+1) &\cong \fMat(\{n-\ell\}\times \ZZ^{\dd-1})
        \otimes \calB(n,\ell), \\
        \calB(n+1,\ell+1) &\cong \alpha(\fMat(\{n+1\}\times \ZZ^{\dd-1})) \otimes \calB(n,\ell).
    \end{aligned}
    \end{equation}
\end{lemma}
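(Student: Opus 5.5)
The plan is to derive both isomorphisms from \ref{lem:boundary_alg_tensor_factor}, applied twice to the same QCA~$\alpha$, and then cancel a common tensor factor $\fMat(T)$ using \ref{thm:CancellationPropertyForIsomorphisms}.

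First note that a QCA of spread~$\ell$ also has spread~$\ell+1$. Hence \ref{lem:boundary_alg_tensor_factor} applies with parameter~$\ell+1$, giving
\[
\mul:\fMat(T(n',\ell+1))\otimes\calB(n',\ell+1)\xrightarrow{\sim}\alpha(\fMat(H(n')))
\]
for every~$n'$. I would also record a characterization of the boundary algebra. By \ref{lem:supp_by_commutation}, an element of $\alpha(\fMat(H(n')))$ that fermionically commutes with $\fMat(T(n',\ell+1))$ is supported on the complement of~$T(n',\ell+1)$. The locality of~$\alpha$ already confines $\alpha(\fMat(H(n')))$ to $T(n',\ell+1)\sqcup S(n',\ell+1)$. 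Combining these,
\[
\calB(n',\ell+1)=\FermiCommutant{\fMat(T(n',\ell+1))}{\alpha(\fMat(H(n')))}.
\]

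For the first isomorphism, take $n'=n$ and use $T(n,\ell)=T(n,\ell+1)\sqcup(\{n-\ell\}\times\ZZ^{\dd-1})$. Then \ref{lem:boundary_alg_tensor_factor} with parameter~$\ell$ gives
\[
\alpha(\fMat(H(n)))\cong \fMat(T(n,\ell+1))\otimes\fMat(\{n-\ell\}\times\ZZ^{\dd-1})\otimes\calB(n,\ell),
\]
where the isomorphism is again given by multiplication. Both the column algebra and $\calB(n,\ell)$ are supported in $S(n,\ell+1)$, which is disjoint from $T(n,\ell+1)$. Both also lie in $\alpha(\fMat(H(n)))$, because the column lies in~$T(n,\ell)$, and $\fMat(T(n,\ell))\subseteq\alpha(\fMat(H(n)))$ was shown in the proof of \ref{lem:boundary_alg_tensor_factor}. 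So the multiplication image of $\fMat(\{n-\ell\}\times\ZZ^{\dd-1})\otimes\calB(n,\ell)$ lies inside $\calB(n,\ell+1)$.

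We now have two multiplication isomorphisms onto $\alpha(\fMat(H(n)))$, both restricting to the identity on the central factor $\fMat(T(n,\ell+1))$. Composing one with the inverse of the other and applying \ref{thm:CancellationPropertyForIsomorphisms} (which allows infinite-dimensional algebras, and $\fMat(T)$ is central as shown in the proof of \ref{thm:InvertibleSubalgebraIsLocallyGenerated}), we obtain
\[
\fMat(\{n-\ell\}\times\ZZ^{\dd-1})\otimes\calB(n,\ell)\cong\calB(n,\ell+1).
\]

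For the second isomorphism, take $n'=n+1$ and use $H(n+1)=H(n)\sqcup(\{n+1\}\times\ZZ^{\dd-1})$ together with $T(n+1,\ell+1)=T(n,\ell)$. Since $\alpha$ is an automorphism, $\alpha(\fMat(H(n+1)))$ is generated by the fermionically commuting algebras $\alpha(\fMat(H(n)))$ and $\alpha(\fMat(\{n+1\}\times\ZZ^{\dd-1}))$. Combining this with \ref{lem:boundary_alg_tensor_factor} gives
\[
\alpha(\fMat(H(n+1)))\cong\fMat(T(n,\ell))\otimes\calB(n,\ell)\otimes\alpha(\fMat(\{n+1\}\times\ZZ^{\dd-1})),
\]
with injectivity of the multiplication map supplied by \ref{lem:mul_injective}. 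For the support check, $\calB(n,\ell)$ lives on columns $[n-\ell+1,n+\ell]$, and $\alpha$ of the column $n+1$ lives on $[n+1-\ell,n+1+\ell]$; both are contained in $S(n+1,\ell+1)=[n-\ell+1,n+\ell+2]\times\ZZ^{\dd-1}$. Hence both factors fermionically commute with $\fMat(T(n+1,\ell+1))$ and lie in $\calB(n+1,\ell+1)$. The same cancellation argument then yields the claim.

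I expect the main obstacle to be the bookkeeping in the cancellation step: making sure the two isomorphisms really agree on $\fMat(T)\otimes\one$, so that the triangle in \ref{thm:CancellationPropertyForIsomorphisms} commutes. This holds because both isomorphisms are multiplication maps.
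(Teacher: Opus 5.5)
Your proof is correct, but it takes a different route from the paper's. The paper works directly inside the larger boundary algebra. It notes that $\fMat(\{n-\ell\}\times\ZZ^{\dd-1})\subseteq\calB(n,\ell+1)$, respectively $\alpha(\fMat(\{n+1\}\times\ZZ^{\dd-1}))\subseteq\calB(n+1,\ell+1)$, and applies \ref{lem:full_mat_tensor_factor} with $\calM$ equal to that boundary algebra. It then identifies the complementary relative fermionic commutant with $\calB(n,\ell)$ by a support computation. In the first case this uses that $\alpha(\fMat(H(n)))$ has no support beyond column $n+\ell$. In the second case it uses $\calB(n+1,\ell+1)\cap\alpha(\fMat((\ZZ\setminus\{n+1\})\times\ZZ^{\dd-1}))=\calB(n,\ell)$.

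You instead stay at the level of the half-space image $\alpha(\fMat(H))$. You get two multiplication factorizations of it from \ref{lem:boundary_alg_tensor_factor}, at spreads $\ell$ and $\ell+1$, using $T(n,\ell)=T(n,\ell+1)\sqcup(\{n-\ell\}\times\ZZ^{\dd-1})$ and $T(n+1,\ell+1)=T(n,\ell)$. You then cancel the common factor $\fMat(T)$ with \ref{thm:CancellationPropertyForIsomorphisms}.

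What your route buys is that \ref{lem:boundary_alg_tensor_factor} is used as a black box and surjectivity comes for free from the cancellation lemma. You never have to prove that a relative commutant is exactly $\calB(n,\ell)$. Your containment check is only needed to see that the resulting isomorphism is literally the multiplication map. The cost is that you need centrality of the infinite-dimensional algebra $\fMat(T)$; this holds by the same argument used in the proof of \ref{thm:InvertibleSubalgebraIsLocallyGenerated}. The paper's route is a single application of \ref{lem:full_mat_tensor_factor} plus an explicit identification of the complementary factor, so it is somewhat more direct.
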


\begin{proof}
    Since $\fMat(\{n-\ell\}\times\ZZ^{\dd-1})\subseteq \calB(n,\ell+1)$,
    applying \ref{lem:full_mat_tensor_factor} gives
    \begin{equation}
        \calB(n,\ell+1) \cong \fMat(\{n-\ell\} \times\ZZ^{\dd-1})\otimes
        (\calB(n,\ell+1)\cap \fMat((\ZZ\setminus \{n-\ell\})\times \ZZ^{\dd-1}))
    \end{equation}
    The second tensor factor is exactly $\calB(n,\ell)$.

    For the second part, note that $\calB(n+1,\ell+1)$ contains
    $\alpha(\fMat(\{n+1\}\times\ZZ^{\dd-1}))$, so
    applying~\ref{lem:full_mat_tensor_factor} again
    yields
    \begin{equation}
    \begin{aligned}
        \alpha^{-1}(\calB(n+1,\ell+1)) &\cong
        \fMat(\{n+1\}\times\ZZ^{\dd-1}) \\
        & \quad \otimes
        (\alpha^{-1}(\calB(n+1,\ell+1))\cap \fMat((\ZZ\setminus \{n+1\})\times\ZZ^{\dd-1})).
    \end{aligned}
    \end{equation}
    By the definition of the boundary algebra,
    $\calB(n+1,\ell+1)\cap \alpha(\fMat((\ZZ\setminus \{n+1\})\times\ZZ^{\dd-1})) = \calB(n,\ell)$,
    so the second tensor factor is exactly $\alpha^{-1}(\calB(n,\ell))$.
    This proves the lemma.
\end{proof}

\begin{corollary}
    \label{lem:boundary_alg_well_defined}
    Let $\alpha$ be a $\dd$-dimensional QCA with spread $\ell$.
    Then for any $n_1, n_2\in \ZZ$ and $\ell_1, \ell_2\ge \ell$,
    the boundary algebras $\calB(n_1, \ell_1)$ and $\calB(n_2,\ell_2)$
    (defined as in \ref{lem:boundary_alg_tensor_factor})
    are stably equivalent.
\end{corollary}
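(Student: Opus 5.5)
The plan is to connect any two boundary algebras through a chain of elementary moves, each changing $\ell$ by one or $n$ by one, using \ref{lem:BoundaryAlgebraShift} at every step. Each move should produce a bounded-spread isomorphism of the form $\calB' \cong \calB \otimes (\text{something Brauer trivial})$. Stable equivalence is transitive: bounded-spread isomorphisms compose, and the stabilizing factors simply tensor together. So it suffices to treat the two moves $(n,\ell)\to(n,\ell+1)$ and $(n,\ell)\to(n+1,\ell+1)$, each for $\ell \ge \ell_{\alpha}$, where $\ell_\alpha$ denotes the spread of $\alpha$. Any pair $(n_1,\ell_1)$, $(n_2,\ell_2)$ with $\ell_i\ge\ell_\alpha$ is reached this way. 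First raise both widths to a common large value $L$. Then shift the interface position: go up along $(n,L)\to(n+1,L+1)$ and come back down along $(n+1,L)\to(n+1,L+1)$, traversing the second move in reverse as needed. All of this is fine because the relation is symmetric.

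For the first move, \ref{lem:BoundaryAlgebraShift} gives $\calB(n,\ell+1)\cong \fMat(\{n-\ell\}\times\ZZ^{\dd-1})\otimes\calB(n,\ell)$ via the multiplication map, which is a spread-zero identification. We view everything as a $(\dd-1)$-dimensional system by projecting out the first coordinate, which is legitimate since slabs have bounded width. The extra factor is then literally a tensor product algebra $\fMat(\ZZ^{\dd-1},q)$, so the two algebras are stably equivalent with $q_1=q$ and $q_2$ trivial.

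For the second move, the extra factor is $\alpha(\fMat(\{n+1\}\times\ZZ^{\dd-1}))$, and $\alpha$ restricted to this slice is a bounded-spread isomorphism from $\fMat(\ZZ^{\dd-1},p|_{n+1})$ onto it. Its spread is at most $\ell_\alpha$ after projecting to $\ZZ^{\dd-1}$, and its inverse likewise has spread $\ell_\alpha$ by \ref{lem:inverse_QCA_spread}. Combining this with the multiplication isomorphism, we obtain a bounded-spread isomorphism
$\calB(n+1,\ell+1)\cong \fMat(\ZZ^{\dd-1},p|_{n+1})\otimes\calB(n,\ell)$, which is again a stable equivalence.

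The main obstacle is bookkeeping rather than substance. Both the multiplication map and its inverse must have bounded spread in the projected $(\dd-1)$-dimensional metric, and the local assignments must be well defined after projection, i.e.\ every column of the slab must have a finite-dimensional local algebra. For the second, we merge all sites in a column $\{\cdot\}\times\{t\}$ into one site $t$, which works because the slab has finite width. For the inverse of $\mul$ in \ref{lem:full_mat_tensor_factor}, we check that its surjectivity proof yields components supported in $\Supp(x)^{+O(\ell)}$. The support of the $\alpha$-image part is controlled by $\alpha^{-1}$ having spread $\ell_\alpha$, and the complementary component is then forced near $\Supp(x)$ by \ref{lem:supp_by_commutation}.
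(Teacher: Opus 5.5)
Your proposal is correct and is the paper's argument: the paper's proof is the one-line observation that the corollary follows from \ref{lem:BoundaryAlgebraShift}, because $\alpha(\fMat(\{n+1\}\times\ZZ^{\dd-1}))$ is isomorphic to a copy of $\fMat(\ZZ^{\dd-1})$. Your chain of moves in $(n,\ell)$, together with your checks that the multiplication isomorphisms and the restriction of $\alpha$ to a slice have bounded-spread inverses in the projected $(\dd-1)$-dimensional lattice, spells out bookkeeping that the paper leaves implicit.
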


\begin{proof}
    This immediately follows from \ref{lem:BoundaryAlgebraShift},
    since $\alpha(\fMat(\{n+1\}\times\ZZ^{\dd-1}))$ is
    isomorphic to some $\fMat(\ZZ^{\dd-1})$.
\end{proof}

Thus, the stable equivalence class of the boundary algebra
does not depend on the choice of $n$ and $\ell$.

\begin{lemma}
    \label{lem:boundary_algebra_commutant}
    Continuing with the notation of 
    \ref{lem:boundary_alg_tensor_factor},
    we have
    \begin{equation}
        \calA(n,\ell)\calB(n,\ell) = \fMat(S(n,\ell)),
    \end{equation}
    where $\calA(n,\ell) = \alpha(\fMat(\ZZ^\dd\setminus H(n)))\cap
    \fMat(S(n,\ell))$.
\end{lemma}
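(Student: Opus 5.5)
The plan is to reuse the decomposition already built in the proof of the preceding lemma (showing that $\calB(n,\ell)$ is invertible), and to read it in the other direction. The inclusion $\calA(n,\ell)\calB(n,\ell) \subseteq \fMat(S(n,\ell))$ is immediate, since both factors are by definition subalgebras of $\fMat(S)$. The content is therefore the reverse inclusion: every $y \in \fMat(S)$ must be written as a finite sum of products $a_i b_i$ with $a_i \in \calA(n,\ell)$ and $b_i \in \calB(n,\ell)$.

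Fix $y \in \fMat(S)$, which we may take homogeneous. Since $\alpha$ is an automorphism, apply $\alpha^{-1}$ and use the tensor decomposition $\fMat(\ZZ^\dd) = \fMat(H)\otimes \fMat(\ZZ^\dd\setminus H)$. This gives $\alpha^{-1}(y) = \sum_i b_i' a_i'$ with $b_i' \in \fMat(H)$, $a_i' \in \fMat(\ZZ^\dd \setminus H)$, each homogeneous, and with both $\{a_i'\}$ and $\{b_i'\}$ linearly independent. Setting $b_i = \alpha(b_i')$ and $a_i = \alpha(a_i')$ yields $y = \sum_i b_i a_i$, where $b_i \in \alpha(\fMat(H))$ and $a_i \in \alpha(\fMat(\ZZ^\dd\setminus H))$. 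Linear independence is preserved because $\alpha$ is injective.

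The remaining step is to show $\Supp(b_i), \Supp(a_i) \subseteq S$, which is the argument already carried out in the previous lemma. By \ref{lem:inverse_QCA_spread}, $\fMat(T) \subseteq \alpha(\fMat(H))$, and by the same reasoning $\fMat(R) \subseteq \alpha(\fMat(\ZZ^\dd\setminus H))$, where $R$ is the region to the right of the slab. Take any homogeneous $t \in \fMat(T)$. Since $t$ has support disjoint from $y$, and $t$ fermionically commutes with every $a_i$ (they lie in the fermionic commutant of $\alpha(\fMat(H))$), the Leibniz rule \eqref{eq:graded_derivation} gives $0 = \FermiCommutator{t}{y} = \sum_i \FermiCommutator{t}{b_i} a_i$. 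This identity is taken inside $\alpha(\fMat(H)) \otimes \alpha(\fMat(\ZZ^\dd\setminus H)) \cong \fMat(\ZZ^\dd)$. Linear independence of the $a_i$ then forces $\FermiCommutator{t}{b_i}=0$. The symmetric argument with $r \in \fMat(R)$ gives $\FermiCommutator{r}{a_i}=0$. Locality already confines $b_i$ to $T \sqcup S$ and $a_i$ to $S \sqcup R$, so \ref{lem:supp_by_commutation} yields $b_i \in \calB(n,\ell)$ and $a_i \in \calA(n,\ell)$. Hence $y = \sum_i \pm a_i b_i \in \calA\calB$.

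The main point requiring care is the passage from the vanishing sum to the vanishing of each coefficient. This needs the multiplication map $\alpha(\fMat(H))\otimes\alpha(\fMat(\ZZ^\dd\setminus H)) \to \fMat(\ZZ^\dd)$ to be injective, which holds because it is $\alpha$ applied to the canonical isomorphism. It also needs the $a_i$ to be homogeneous so that the graded signs are consistent. Beyond that, the proof is a direct repackaging of the previous lemma.
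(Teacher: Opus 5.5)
Your argument is correct. In substance it reruns the paper's proof of the preceding lemma (that $\calB(n,\ell)$ is invertible), where the factors $a_i$ are already shown to lie in $\alpha(\fMat(\ZZ^\dd\setminus H))\cap\fMat(S)=\calA(n,\ell)$.

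The paper proves this lemma differently, in two lines, by reducing to that lemma rather than repeating it. Applying $\alpha^{-1}$, it identifies $\calA(n,\ell)$ with the fermionic commutant of $\calB(n,\ell)$ inside $\fMat(S)$. The reason is that an element of $\fMat(S)$ that fermionically commutes with $\calB(n,\ell)$ also commutes with $\fMat(T)$. It therefore commutes with $\fMat(T)\calB(n,\ell)=\alpha(\fMat(H))$ by Lemma~\ref{lem:boundary_alg_tensor_factor}, so its preimage under $\alpha$ lies in $\fMat(\ZZ^\dd\setminus H)$. The paper then cites invertibility of $\calB(n,\ell)$ together with Cor.~\ref{cor:invertible_isomorphism}.

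The two routes trade off as follows.
\begin{itemize}
\item Your route is self-contained and never needs the commutant identification, but it duplicates an argument already on the page.
\item The paper's route is shorter. It also records $\calA(n,\ell)=\Commf(\calB(n,\ell),\fMat(S))$ and the injectivity of $\mul$, i.e.\ $\calA(n,\ell)\otimes\calB(n,\ell)\cong\fMat(S)$. That isomorphism is the form used later in Lemma~\ref{lem:IndexDefinedFromRight}.
\end{itemize}
You can recover that injectivity as well. The map $\mul$ on $\calA(n,\ell)\otimes\calB(n,\ell)$ is a restriction of the isomorphism $\alpha(\fMat(\ZZ^\dd\setminus H))\otimes\alpha(\fMat(H))\to\fMat(\ZZ^\dd)$ that you already invoke.
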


\begin{proof}
    Applying $\alpha^{-1}$, it is clear that $\calA(n,\ell)$ is
    the fermionic commutant of $\calB(n,\ell)$ 
    within $\fMat(S(n,\ell))$,
    so the result follows by \ref{cor:invertible_isomorphism}.
\end{proof}

\begin{lemma}
    \label{lem:alg_to_QCA}
    Given any invertible subalgebra $\calB$ in $\fMat(\ZZ^{\dd-1},p)$,
    there exists a local operator algebra $\fMat(\ZZ^\dd)$
    with a QCA $\alpha$, such that $\calB$ is stably equivalent
    to its boundary algebra. If $\calB$ has spread at most $\ell \ge 1$,
    then $\alpha$ has spread at most $\ell$.
\end{lemma}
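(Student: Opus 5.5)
We follow the standard construction from~\cite{Haah2023invertible}: a ``partial shift'' built from the invertible subalgebra and its fermionic commutant. Let $\calB \subseteq \fMat(\ZZ^{\dd-1},p)$ be invertible with spread~$\ell$, and let $\calA = \FermiCommutant{\calB}{\fMat(\ZZ^{\dd-1},p)}$. By~\ref{thm:BicommutantOfInvertible}, $\calA$ is also invertible with spread~$\ell$. By~\ref{cor:invertible_isomorphism}, the multiplication map $\mul:\calA\otimes\calB\to\fMat(\ZZ^{\dd-1},p)$ is an isomorphism, and its inverse has spread~$\ell$.

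On $\ZZ^\dd = \ZZ\times\ZZ^{\dd-1}$, take the local assignment $s=(x,y)\mapsto p(y)$, so that $\fMat(\ZZ^\dd)\cong\bigotimes_{x\in\ZZ}\fMat(\{x\}\times\ZZ^{\dd-1},p)$, a fermionic tensor product of layers. Using the layer-wise isomorphisms $\mul^{-1}$, we identify each layer with $\calA_x\otimes\calB_x$. We then define $\alpha$ as the graded homomorphism that fixes every $\calA_x$ and sends $\calB_x\to\calB_{x+1}$ by the layer-translation isomorphism.

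To make this a well-defined automorphism, we use the universal property~\ref{lem:TensorProdUniversalProperty}. The images $\calA_x$ and $\calB_{x+1}$ fermionically commute across and within layers, because distinct layers are distinct tensor factors and $\FermiCommutator{\calA_x}{\calB_x}=0$. Reordering the fermionic tensor factors via the graded swap isomorphism $a\otimes b\mapsto(-1)^{|a||b|}b\otimes a$ shows that the product of all $\calA_x$ and $\calB_{x+1}$ again generates everything. Surjectivity follows since $\calA_x\calB_x$ equals the full layer. Injectivity follows from~\ref{lem:mul_injective}, since every piece is visibly central by~\ref{lem:invertible_VS}.

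Spread is controlled as follows. A single-site operator at $(x,y)$ decomposes as $\sum a_ib_i$ with supports in $\{y\}^{+\ell}$ within layer~$x$. The map $\alpha$ moves each $b_i$ to layer $x+1$ with the same transverse support, so the image lies in $\{(x,y)\}^{+\ell}$ in the $\ell_\infty$ metric. The same estimate applies to $\alpha^{-1}$ by symmetry, or by~\ref{lem:inverse_QCA_spread}. Hence $\alpha$ is a QCA of spread at most~$\ell$.

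It remains to identify the boundary algebra. Compute $\alpha(\fMat(H(n)))$: it is generated by all layers $x\le n$, except that layer $n$ loses $\calB_n$, which is replaced by $\calB_{n+1}$ in layer $n+1$. So $\alpha(\fMat(H(n)))=\fMat(H(n-1))\cdot\calA_n\cdot\calB_{n+1}$. Intersecting with $\fMat(S(n,\ell))$ and applying~\ref{lem:full_mat_tensor_factor} gives
\[
\calB(n,\ell)\cong\fMat(\text{layers }n-\ell+1,\dots,n-1)\otimes\calA_n\otimes\calB_{n+1}.
\]
Here $\calA_n\otimes\calB_{n+1}$ is bounded-spread isomorphic to $\calA\otimes\calB\cong\fMat(\ZZ^{\dd-1},p)$ after collapsing the first coordinate, and the remaining layers are trivial tensor factors. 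Thus the boundary algebra is Brauer trivial, which is the wrong class.

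The fix is to shift only $\calB$ and drop $\calA$. Equivalently, the intersection near the cut should retain a single uncompensated $\calB$, so we should use $\alpha(\fMat(H(n)))\cap\fMat(S)$ carefully. The layers strictly between the slab and $T(n,\ell)$ are whole, and exactly one extra $\calB_{n+1}$ appears, with its partner $\calA_{n+1}$ missing. Layer $n$ is whole, since $\calA_n\calB_n$ with $\calB_n$ is the image of $\calB_{n-1}$. Redoing the count: $\alpha(\fMat(H(n)))$ contains $\calA_x$ for $x\le n$ and $\calB_x$ for $x\le n+1$. So the boundary algebra is the full layers up to $n$ together with $\calB_{n+1}$, which is $\fMat(\cdots)\otimes\calB$, stably equivalent to~$\calB$ as required.

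The main obstacle is this bookkeeping, together with a careful check of the graded sign conventions in the reordering isomorphism that shows $\alpha$ is an automorphism rather than merely an injective endomorphism.
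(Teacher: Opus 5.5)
Your construction is the paper's: fix each $\calA_x$, translate $\calB_x$ to $\calB_{x+1}$, and glue the layers with \ref{lem:TensorProdUniversalProperty}. Your corrected count $\alpha(\fMat(H(n)))=\fMat(H(n))\cdot\calB_{n+1}$, hence $\calB(n,\ell)\cong\fMat(\text{layers } n-\ell+1,\dots,n)\otimes\calB_{n+1}$, is exactly the paper's observation that $\alpha$ sends $\fMat((-\infty,0]\times\ZZ^{\dd-1})$ to itself tensored with one copy of $\calB$.

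The first count, $\fMat(H(n-1))\cdot\calA_n\cdot\calB_{n+1}$, was just a slip: it drops $\calB_n=\alpha(\calB_{n-1})$. So the construction needs no ``fix'', and that detour should be deleted, including the claim that the boundary algebra is Brauer trivial.
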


\begin{proof}
    Construct the local operator algebra in $\dd$ dimensions
    by copying $\fMat(\ZZ^{\dd-1},p)$ at each point
    of a new axis $\ZZ$; we will denote the copy at point $j\in \ZZ$ by
    $\fMat(\ZZ^{\dd-1}_j) \coloneq \fMat(\{j\}\times \ZZ^{\dd-1})$.
    Let $\calB_j$ be the copy of $\calB$ in $\fMat(\ZZ^{\dd-1}_j)$,
    and let $\calA_j = \Commf(\calB, \fMat(\ZZ^{\dd-1}_j))$.
    Then for any $x^{(j)}\in \fMat(\ZZ^{\dd-1}_j)$, we can use
    the invertible property (\ref{cor:invertible_isomorphism}) to write
    $x^{(j)} = \sum_i a_i^{(j)}b_i^{(j)}$, where 
    $a_i^{(j)}\in \calA_j$ and $b_i^{(j)}\in\calB_j$ are supported
    on the $\ell$-neighborhood of $\Supp(x^{(j)})\subseteq
    \{j\}\times\ZZ^{\dd-1}$.
    Consider the maps
    \begin{equation}\begin{aligned}
        \alpha_j: & \fMat(\ZZ^{\dd-1}_j)\ni x^{(j)} =
        \sum_i a_i^{(j)}b_i^{(j)}\mapsto \sum_i a_i^{(j)}
        \otimes b_i^{(j+1)} 
        \in \fMat(\ZZ^{\dd-1}_j)\otimes \fMat(\ZZ^{\dd-1}_{j+1}) \\
        \beta_j: & \fMat(\ZZ^{\dd-1}_j)\ni 
        x^{(j)} = \sum_i a_i^{(j)}b_i^{(j)}\mapsto \sum_i a_i^{(j)}
        \otimes b_i^{(j-1)}
        \in \fMat(\ZZ^{\dd-1}_j)\otimes \fMat(\ZZ^{\dd-1}_{j-1}).
    \end{aligned}\end{equation}
    To see that these are well-defined 
    fermionic algebra homomorphisms,
    note that $\alpha_j$ is just the composition
    \begin{equation}
        \fMat(\ZZ_j^{\dd-1}) \xrightarrow{\mul^{-1}}
        \calA_j\otimes \calB_j \longrightarrow
        \calA_j\otimes \calB_{j+1}\lhook\joinrel\longrightarrow
        \fMat(\ZZ^{\dd-1}_j)\otimes \fMat(\ZZ^{\dd-1}_{j+1}),
    \end{equation}
    where $\mul^{-1}$ exists by \ref{cor:invertible_isomorphism}.
    To construct QCAs on $\fMat(\ZZ^\dd)$,
    we define
    \begin{equation}\begin{cases}
        \alpha: \fMat(\ZZ^\dd)& \to\fMat(\ZZ^\dd), \quad 
        x = \prod_j x^{(j)}\mapsto \prod_j \alpha_j(x^{(j)}) \\
        \beta: \fMat(\ZZ^\dd)& \to\fMat(\ZZ^\dd), \quad 
        x = \prod_j x^{(j)}\mapsto \prod_j \beta_j(x^{(j)}),
    \end{cases}\end{equation}
    where the maps are extended to all of $\fMat(\ZZ^\dd)$
    using $\CC$-linearity.
    Since $\calA$ and $\calB$ fermionically commute,
    the images of $\alpha_j$ and $\alpha_{j'}$ fermionically
    commute for all $j\neq j'$.
    Hence $\alpha$ is a well-defined fermionic algebra
    homomorphism by \ref{lem:TensorProdUniversalProperty},
    and likewise for $\beta$.
    Moreover, $\beta \circ \alpha = \id = \alpha \circ \beta$:
    \begin{equation}\begin{cases}
        x^{(j)} 
        &\xmapsto{\quad \alpha \quad } \sum_i a_i^{(j)} b_i^{(j+1)}
        \xmapsto{\quad \beta \quad } \sum_i \beta_j(a_i^{(j)})\beta_{j+1}(	b_i^{(j+1)}) 
        = \sum_i a_i^{(j)} b_i^{(j)} = x^{(j)},\\
        x^{(j)} 
        &\xmapsto{\quad\beta\quad } \sum_i a_i^{(j)} b_i^{(j-1)}
        \xmapsto{\quad\alpha\quad } \sum_i \alpha_j(a_i^{(j)})\alpha_{j-1}(	b_i^{(j-1)}) 
        = \sum_i a_i^{(j)} b_i^{(j)} = x^{(j)}.
    \end{cases}\end{equation}
    Therefore, $\alpha$ is an automorphism of $\fMat(\ZZ^\dd)$;
    by construction $\alpha(x)$ is supported on $\Supp(x)^{+\ell}$,
    so it is a QCA with spread $\ell$.

    Finally, we note that $\calB$ is stably equivalent to a boundary
    algebra of $\alpha$: indeed, 
    $\alpha$ sends
    $\calM = \fMat((\ZZ\cap (-\infty,0])\times\ZZ^{\dd-1})$
    to $\calM\otimes \calB$.
\end{proof}

\begin{lemma}
    \label{lem:trivial_alg_blend_shift}
    A QCA with a Brauer trivial boundary algebra on the positive
    first axis blends into a bilayer shift QCA along the
    first axis.
\end{lemma}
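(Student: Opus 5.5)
Given $\alpha$ of spread $\ell$ on $\fMat(\ZZ^\dd,p)$, I would build a blending QCA that agrees with $\alpha$ on the far left and with a bilayer shift on the far right. The template is the ungraded argument of~\cite{Haah2023invertible}.

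\emph{Step 1: trivialize the boundary algebra.} Fix $n$ and consider $\calB=\calB(n,\ell)\subseteq\fMat(S(n,\ell))$. By \ref{lem:boundary_alg_tensor_factor} we have $\alpha(\fMat(H(n)))\cong \fMat(T(n,\ell))\otimes\calB$, and by \ref{lem:boundary_algebra_commutant} we have $\fMat(S)\cong \calA\otimes\calB$ with $\calA=\calA(n,\ell)$. Brauer triviality gives a bounded-spread isomorphism
\[
\phi:\calB\otimes\fMat(\ZZ^{\dd-1},q_1)\to\fMat(\ZZ^{\dd-1},q_2),
\]
where the $(\dd-1)$-dimensional lattice is placed inside the slab. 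I stabilize the whole $\dd$-dimensional system by ancillas $\fMat(\ZZ^\dd, q_1 q_2)$ placed near the slab; more precisely, I tensor $\alpha$ with the identity on a layer of such ancillas at every column. This only changes $\alpha$ by a tensor factor $\id$, so its blending class is unchanged.

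\emph{Step 2: define the blending QCA $\gamma$.} I specify $\gamma$ on the three regions.
\begin{itemize}
\item On $\fMat(H(n))$, set $\gamma=\alpha$. Its image is $\fMat(T)\otimes\calB$.
\item On the ancilla $\fMat(q_1)$ at the interface, set $\gamma=\id$. Together with the previous item, the image is $\fMat(T)\otimes\calB\otimes\fMat(q_1)$, and $\phi$ identifies this with $\fMat(T)\otimes\fMat(q_2)$.
\item On $\fMat(\ZZ^\dd\setminus H(n))$ together with the $q_2$ ancilla, the target must be the fermionic commutant of the previous image. That commutant is $\calA\otimes\fMat(q_2)^{\mathrm{c}}\otimes\fMat(R)$, where $\fMat(q_2)^{\mathrm{c}}$ is the commutant of $\phi(\calB\otimes\fMat(q_1))$ in the stabilized slab. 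By \ref{lem:full_mat_tensor_factor} and the counting of \ref{lem:BoundaryAlgebraShift}, the right half $\fMat(\ZZ^\dd\setminus H(n))\otimes\fMat(q_2)$ has to be mapped onto this commutant by a bounded-spread isomorphism. Far to the right, this map should be a fixed layered shift.
\end{itemize}

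\emph{Step 3: build the right-half map.} The key identity is that $\calA\otimes\calB\otimes\fMat(q_1)\cong\fMat(S)\otimes\fMat(q_1)$, and that $\calB\otimes\fMat(q_1)\cong\fMat(q_2)$ via $\phi$. Hence the commutant $\calA\otimes(\text{rest})$ is bounded-spread isomorphic to a tensor product algebra $\fMat(S)\otimes\fMat(q_1)$ up to these identifications. Concretely, I would proceed as follows.
\begin{itemize}
\item Map the column algebras to the right of $S$ identically, since $\alpha$ does not touch their ancillas.
\item Map the columns of $\fMat(\ZZ^\dd\setminus H)$ onto $\alpha(\fMat(\ZZ^\dd\setminus H))$ only within a slab, and shift the displaced degrees of freedom one layer to the right.
\end{itemize}
The result is that, far to the right, $\gamma$ is a shift that moves the $q_1$-type ancilla factors and the $\fMat(S)$-type factors by one slab. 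This is a bilayer shift: one layer is the original system and one layer is the ancillas. Every piece is a bounded-spread isomorphism between tensor products of local matrix algebras. By the universal property \ref{lem:TensorProdUniversalProperty}, the pieces assemble into a graded automorphism, because their images fermionically commute and together generate everything.

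\emph{Main obstacle.} The delicate step is Step 3. I must ensure that the right-half map is a genuine automorphism with bounded spread, and that it equals an honest shift away from the interface. In particular, I must check that the surplus $\calA$ versus $\fMat(\ZZ^\dd\setminus H)$ mismatch is absorbed by shifting whole columns rather than by an infinite telescoping.

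I would handle this as in \ref{lem:alg_to_QCA}. Repeat the slab decomposition at every $n'=n+2\ell k$ for $k\ge 1$, using the trivialization $\phi$ translated to each slab; the Brauer class is independent of position by \ref{lem:boundary_alg_well_defined}. Then define $\gamma$ on each slab as the composite
\[
\fMat(\text{slab}_k)\to\calA_k\otimes\calB_k\to\calA_k\otimes\fMat(q_2)_k\ \text{(shifted)}.
\]
This makes the far-right behaviour manifestly a shift of tensor factors between adjacent slabs. Spread control then follows from the spreads of $\alpha$ and $\phi$.
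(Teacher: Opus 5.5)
\textbf{There is a genuine gap, in Step 3.}

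Your Steps 1--2 are correct. So is the identity $\calA\otimes\fMat(q_2)\cong\fMat(S)\otimes\fMat(q_1)$ with bounded spread, which is exactly the input the paper uses. The paper phrases it as Brauer triviality of the commutant $\calA=\calA(n,\ell)$, i.e.\ a bounded-spread isomorphism $\phi:\fMat(\ZZ^{\dd-1},q)\to\calA\otimes\fMat(\ZZ^{\dd-1},p')$.

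Step 3, however, is the whole content of the lemma, and there the right-half map is never actually constructed. The fallback you propose does not produce a shift.
\begin{itemize}
\item Suppose you decompose $\fMat(S(n+2\ell k,\ell))=\calA_k\otimes\calB_k$ at every slab and send $\calA_k\otimes\calB_k\to\calA_k\otimes\fMat(q_2)$ (shifted). Then the $\alpha$-dependent algebras $\calA_k$, which are not tensor products of single-site algebras, persist all the way to infinity. So $\gamma$ cannot coincide far to the right with a shift, which must send single-site factors of a fixed algebra into single sites.
\item $\alpha$ has no translation invariance, so there is no ``$\phi$ translated to each slab''. Each $\calB_k$ is Brauer trivial, but with its own trivialization.
\item ``One layer is the original system'' cannot be shifted by a slab, because $p$ is not uniform along the first axis.
\item Your first bullet, mapping the columns to the right of $S$ identically, cannot work either. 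In $\dd=1$ it would make $\alpha$ blend into the identity, contradicting \ref{lem:blendingImpliesSameIndex} whenever $\ind\alpha\neq 1$ (e.g.\ for a translation).
\end{itemize}

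\textbf{The missing idea.} The blending QCA may live on a local assignment $r$ that forgets $p$ to the right of the interface. Then a single interface suffices, and no decomposition of $\alpha$ is needed anywhere else. After lumping sites so that $\ell=1$ (so $S=\{n,n+1\}\times\ZZ^{\dd-1}$), let $Q=\fMat(S)\otimes\fMat(q_1)$, viewed as a column algebra. The construction is then as follows.
\begin{itemize}
\item Take $r=p$ for $x\le n$.
\item Place copies of $\fMat(\{n+1\}\times\ZZ^{\dd-1},p)\otimes\fMat(q_2)$ at every $x\ge n+1$, and copies of $Q$ at every $x\ge n+2$.
\item Set $\gamma=\alpha$ on $x\le n$.
\item Translate the first layer one step to the right.
\item Translate the $Q$-layer one step to the left for $x\ge n+3$.
\item Send the $Q$-copy at $x=n+2$ onto $\calA\otimes\fMat(q_2)$ by the inverse of your key isomorphism.
\end{itemize}
The images fermionically commute, so \ref{lem:TensorProdUniversalProperty} gives a homomorphism. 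Surjectivity holds for three reasons:
\begin{itemize}
\item $\alpha^{-1}$ covers $x\le n-1$;
\item $\calB\cdot(\calA\otimes\fMat(q_2))$ covers the slab;
\item the translations cover $x\ge n+2$.
\end{itemize}
Far to the right, $\gamma$ is then literally a bilayer shift of two translation-invariant layers moving in \emph{opposite} directions. One layer supplies the copy of $Q$ consumed at the interface; the other carries off the interface content that this copy displaced. This is what your Step 3 needs to be replaced by; the repeated slab decomposition is unnecessary.
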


\begin{proof}
    We continue using the notation of 
    \ref{def:regions} and \ref{lem:boundary_alg_tensor_factor}.
    By compressing the lattice, we may assume
    that $\ell = 1$, so that the image of a single-site
    operator under the QCA
    $\alpha:\fMat(\ZZ^\dd,p)\to\fMat(\ZZ^\dd,p)$ is supported
    on neighboring sites (with respect to $\ell_\infty$-distance
    on $\ZZ^\dd$).
    The assumption $\ell = 1$ is not a restriction,
    as we can always lump sites together until it holds, then un-lump them if desired.

    Let $\calA = \calA(n,\ell=1)$ be defined as in
    \ref{lem:boundary_algebra_commutant},
    which is the fermionic commutant of~$\calB = \calB(n,\ell=1)$ within
    $\fMat(S(n,\ell=1)) = \fMat(\{n,n+1\}\times \ZZ^{\dd-1})$.
    We have $\calA \otimes \calB \cong \fMat(\ZZ^{\dd-1})$ by~\ref{cor:invertible_isomorphism}.
    Since $\calB$ is by assumption Brauer trivial,
    we have a stable isomorphism $\calA \otimes \calB \sim \calA$, 
    which in turn gives a bounded-spread isomorphism of fermionic algebras
    \begin{equation}
        \phi: \fMat(\ZZ^{\dd-1},q)\xrightarrow{\quad \cong \quad}
        \calA(n,\ell=1)\otimes \fMat(\ZZ^{\dd-1},p')
    \end{equation}
    for some local assignments $q$ and $p'$.

    Now we define two monolayer shift QCAs $\beta_1$ and $\beta_2$.
    Since we are distinguishing the first axis,
    we will use $s=(x,y)\in \ZZ^\dd$ to denote a lattice site,
    where $x\in \ZZ$ and $y\in \ZZ^{\dd-1}$.
    A local assignment $p$
    contains the data of the isomorphism class
    of the central fermionic algebra at each site;
    for a site $s\in \ZZ^\dd$, let $\fMat(p(s))$ denote
    this local algebra.
    \begin{itemize}
        \item For $\beta_1$, define the local operator algebra 
        by $\fMat(\{(x,y)\},p_1(x,y)) = \fMat(q(y))$,
        copying $\fMat(\ZZ^{\dd-1},q)$ at each site on the first axis.
        The action of $\beta_1$ is given by translation
        to the left by one unit along the first axis.
        
        \item For $\beta_2$, define the local operator algebra 
        by $\fMat(\{(x,y)\},p_2(x,y)) =
        \fMat(p(n+1,y))\otimes \fMat(p'(y))$.
        The action of $\beta_2$ is given by translation to the
        right by one unit along the first axis.
    \end{itemize}
    The bilayer shift QCA is $\beta = \beta_1\otimes \beta_2$.

    \begin{figure}[b]
        \centering
        \begingroup
            \resizebox{0.9\textwidth}{!}{\begin{tikzpicture}[
    x=1.2cm, y=-1cm, 
    dot/.style={circle, draw=black, thick, fill=black, inner sep=1.2pt},
    orange_dot/.style={circle, draw=orange, thick, fill=orange, inner sep=1.2pt},
    red_circle/.style={circle, draw=red, thick, inner sep=1.2pt},
    green_sq/.style={rectangle, draw=teal!80!black, thick, inner sep=1.5pt},
    action_arrow/.style={-Stealth, green!60!black, bend left=40, thick},
    straight_arrow/.style={-Stealth, green!60!black, thick},
    support_arrow/.style={-Stealth, magenta!60, dashed, bend left=30, thick}
]

    \draw[thick, ->] (-0.5, -1.25) -- (0.5, -1.25) node[right] {$x$};
    \draw[thick, ->] (-0.5, -1.25) -- (-0.5, -0.25) node[below] {$y$};

    \foreach \col [count=\i] in {n-1, n, n+1, n+2, n+3, n+4} {
        \node at (\i, -0.5) {$\col$};
    }
    \node at (0.2, 3) {$\dots$};
    \node at (6.8, 3) {$\dots$};

    \foreach \y in {0,...,6} {
        \node[dot] at (1, \y) {};
        \node[dot] at (2, \y) {};

        \foreach \x in {3,...,6} {
            \begin{scope}[shift={(\x, \y)}]
                \ifnum\x=3
                    \node[red_circle] at (0.1, 0) {};
                    \node[orange_dot] at (-0.1, 0) {};
                \fi
                \ifnum\x>3
                    \node[red_circle] at (0.1, -0.1) {};
                    \node[orange_dot] at (-0.1, -0.1) {};
                    \node[green_sq] at (0, 0.1) {};
                \fi
            \end{scope}
        }
    }

    \draw[straight_arrow] (5.9, 0.1) to (5.1, 0.1);
    \draw[straight_arrow] (4.9, 0.1) to (4.1, 0.1);
    \draw[action_arrow] (3.95, 0.85) to (4.85, 0.85);
    \draw[action_arrow] (4.15, 0.85) to (5.05, 0.85);
    \draw[action_arrow] (4.95, 1.85) to (5.85, 1.85);
    \draw[action_arrow] (5.15, 1.85) to (6.05, 1.85);
    \draw[action_arrow] (2.95, 2.95) to (3.85, 2.85);
    \draw[action_arrow] (3.15, 2.95) to (4.05, 2.85);

    \draw[support_arrow] (3.95, 3.15) to (2.95, 3.05);
    \draw[support_arrow] (3.95, 3.15) to node[midway, below, scale=0.7, red] {\Large $\phi$} (2.05, 3.05);
    \draw[support_arrow] (1.95, 5.05) to node[midway, below, scale=0.7, red]{\Large $\alpha$} (1.05, 5.05);
    \draw[support_arrow] (2, 4.92) to (2.85, 4.05);
    \draw[support_arrow] (2.05, 5.05) to  (2.9, 5.95);

    \draw [decorate, decoration={brace, amplitude=6pt, mirror, raise=4pt}, thick]
        (1.9, 6.2) -- (3.2, 6.2) node [black, midway, yshift=-0.6cm] {$S(n, \ell=1)$};

    \begin{scope}[shift={(8, 0)}]
        \node[dot, label=right:{$\;=\; \mathsf{fMat}(p(x,y))$}] at (0,0) {};
        \node[orange_dot, label=right:{$\; = \;\mathsf{fMat}(p(n+1,y))$}] at (0,1) {};
        \node[red_circle, label=right:{$\; = \;\mathsf{fMat}(p'(y))$}] at (0,2) {};
        \node[green_sq, label=right:{$\; = \;\mathsf{fMat}(q(y))$}] at (0,3) {};
        
        \draw[action_arrow, bend right=0] (-0.5, 4.5) -- (0.3, 4.5) 
            node[right, green!60!black] {$=$};
        \node[green!60!black, text width=3cm, align=center] at (1.85,4.5)
            {action of blending QCA by translation};
        
        \draw[support_arrow, bend left=0] (-0.5, 6) -- (0.3, 6) 
            node[right, magenta!60]{$=$};
        \node[magenta!60, text width=3cm, align=center] at (1.85,6)
            {possible support of image under blending QCA};
    \end{scope}

\end{tikzpicture}}
        \endgroup
        \caption{Blending QCA~$\gamma$ from~$\alpha$ on the left with a trivial Brauer boundary algebra~$\calB$
        into bilayer shift QCA~$\beta = \beta_1 \otimes \beta_2$ on the right.
        The construction is transcription of the following line of thought.
        The green squares are mapped according to~$\phi$ onto the fermionic commutant~$\calA$ of~$\calB$,
        where $\phi$ is provided by the Brauer triviality.
        Then, $\phi$  take up a part of the codomain
        so any other potential image must live elsewhere.
        In particular, the on-site algebras at $x = n,n+1$ must map out of that region,
        so we send them out to the right.
        This incurs cascading surplus and deficit of local algebras,
        leading to a chain of local shifts, resulting in~$\beta_1$ and~$\beta_2$.
        }
        \label{fig:blending_QCA}
    \end{figure}
    
    Finally, we construct a QCA
    $\gamma:\fMat(\ZZ^\dd, r)\to \fMat(\ZZ^\dd,r)$
    blending $\alpha$ into $\beta$.
    We define the local operator algebra by
    \begin{equation}
        \fMat(r(x,y)) = \begin{cases*}
            \fMat(p(x,y)) & $x \le n$ \\
            \fMat(p(n+1,y))\otimes \fMat(p'(y)) & $x = n+1$ \\
            \fMat(p(n+1,y))\otimes \fMat(p'(y))
            \otimes \fMat(q(y)) & $x \ge n+2$,
        \end{cases*}
    \end{equation}
    as shown in Fig.~\ref{fig:blending_QCA}.
    The action of $\gamma$ is determined by the action
    on each of these tensor factors:
    \begin{itemize}
        \item For $x\le n$ (black dots in Fig.~\ref{fig:blending_QCA}),
        $\gamma$ follows the action of $\alpha$.
        The image of any operator supported
        on $x\le n$ is supported on $x\le n$
        and the $\fMat(p(n+1,y))$ component of $x = n+1$.
        \item $\gamma$ translates
        the copies of $\fMat(p(n+1,y))\otimes \fMat(p'(y))$
        for $x\ge n+1$ (orange and red)
        to the right by one lattice site.
        \item $\gamma$ translates the copies of
        $\fMat(q(y))$ for $x\ge n+3$ (green) to the left
        by one lattice site.
        \item Finally, $\gamma$ sends each $\fMat(q(y))$
        at $x=n+2$ (green) to its image under $\phi$,
        supported on $x \in \{n, n+1\}$.
    \end{itemize}
    The images of each of these tensor factors fermionically
    commute, so by \ref{lem:TensorProdUniversalProperty}
    this prescription defines a homomorphism of fermionic algebras
    $\fMat(\ZZ^\dd,r)\to \fMat(\ZZ^\dd,r)$.
    Since $\fMat(\ZZ^\dd,r)$ is simple, $\gamma$ is automatically
    injective. All that remains is to show surjectivity.
    
    Every operator $o$ supported on $x \le n-1$ is covered by
    $\gamma(\alpha^{-1}(o))$, since $\alpha^{-1}(o)$ is supported
    on $x \le n$ (using $\ell = 1$).
    All operators supported on $x\ge n+2$ are in the image
    of the translations.
    Finally, looking at the slab $S$, we have
    \begin{equation}
        \fMat(x = \{n,n+1\},r) = \calB(n,\ell=1)\calA(n,\ell=1)\otimes
        \fMat(\ZZ^{\dd-1},p').
    \end{equation}
    By definition $\calB$ is in the image of~$\alpha$ applied to
    $x\le n$, while $\calA\otimes \fMat(\ZZ^{\dd-1},p')$
    is in the image of~$\phi$.
    Hence $\gamma$ is an isomorphism.
    It is bounded-spread by construction, 
    so it is a valid blending QCA.
\end{proof}

\begin{proof}[Proof of Theorem~\ref{thm:QCAToBrauer}]
    Let $G_{\text{QCA}}$ be the abelian group of blending
    equivalence classes of QCAs (along a fixed axis)
    from \ref{cor:QCA_group}, and let $G_{\text{Br}}$
    be the Brauer group in $\dd-1$ dimensions.
    Consider the map $\phi$ taking a QCA in $G_{\text{QCA}}$
    to the Brauer class of its boundary algebra;
    this is well-defined by \ref{lem:boundary_alg_well_defined},
    and it clearly preserves the tensor product operation.
    Furthermore, if $\gamma$ is a QCA which blends
    the QCAs $\alpha$ and $\beta$,
    then we see
    by varying the location of the boundary on $\gamma$
    (\ref{lem:boundary_alg_well_defined})
    that
    the boundary algebra of $\alpha$ is stably equivalent
    to the boundary algebra of $\beta$.
    Hence $\phi: G_{\text{QCA}}\to G_{\text{Br}}$
    is a group homomorphism.

    The boundary algebras of shift QCAs are clearly Brauer
    trivial, and by \ref{lem:trivial_alg_blend_shift}
    any QCA with trivial boundary algebra blends with a
    bilayer shift QCA.
    Hence $\ker \phi$ is exactly the subgroup of shift QCAs.
    Finally, $\phi$ is surjective because of~\ref{lem:alg_to_QCA},
    so $G_{\text{Br}}$ is isomorphic to $G_{\text{QCA}}$
    modulo shifts.
\end{proof}

\subsection{Structure of fermionic QCAs in low dimensions}

In this subsection, we prove that after stabilization,
every fermionic QCA is trivial in dimension~$1$ and~$2$:

\begin{theorem} \label{thm:fQCAStructure1d2d}
    Suppose $\dd = 1$ or $\dd=2$.
    Let $m : \ZZ^\dd \ni s \mapsto \FermionMinus{1}$ be a uniform local assignment 
    of one Majorana algebra on every site.
    For every fermionic QCA $\alpha$ 
    on a local operator algebra $\fMat(\ZZ^\dd,p)$, 
    the stabilized QCA
    $(\alpha \otimes \id) : \fMat(\ZZ, p m) \to \fMat(\ZZ, p m)$
    is a circuit of local automorphisms followed by a primitive shift. 
    
    If a local operator algebra $\fMat(\ZZ^\dd,p)$ 
    contains a Majorana operator on every site,
    the stabilization by~$m$ is not necessary.
\end{theorem}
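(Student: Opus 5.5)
We will prove the statement by first handling $\dd=1$ and then reducing $\dd=2$ to it via the QCA--algebra correspondence (\ref{thm:QCAToBrauer}).

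\emph{The case $\dd=1$.} The boundary algebra $\calB(n,\ell)$ of $\alpha$ is a $0$-dimensional invertible subalgebra, i.e.\ a finite-dimensional central fermionic algebra. By \ref{lem:trivial_alg_blend_shift} (the $0$-dimensional Brauer group is zero), $\alpha$ blends into a bilayer shift. We will not use the blending class directly, however. Instead we run the GNVW-style argument concretely.

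First we coarse-grain so that $\ell=1$. Then $\alpha(\fMat(\{2j,2j+1\}))$ lies in $\fMat(\{2j-1,\dots,2j+2\})$. The algebras $\calB_j=\alpha(\fMat(H(2j+1)))\cap\fMat(\{2j+1,2j+2\})$ and their fermionic commutants $\calA_j$ inside $\fMat(\{2j+1,2j+2\})$ are central by \ref{thm:Commutant-of-central-in-central}. Moreover, $\alpha(\fMat(\{2j,2j+1\}))=\calA_{j-1}\calB_j$.

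After stabilizing by $m$, we apply \ref{thm:StabilizedThenDecomposed} to write each $\calA_j\otimes\FermionMinus{1}$ and $\calB_j\otimes\FermionMinus{1}$ as products of primitive algebras. We will need to distribute the $\FermionMinus{1}$ ancillas. One way is to pair $\FermionMinus{1}$'s from neighboring sites into $\FermionPlus{1}{1}$, which is even and can be absorbed, while using single Majoranas where an odd factor is needed. A cleaner alternative is to stabilize twice and then remove one copy with \ref{lem:AncillaRemoval}, though that lemma concerns circuits and so is not directly applicable here.

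Next, let $u$ be a depth-$1$ circuit of local automorphisms on blocks $\{2j+1,2j+2\}$ that maps $\calA_j\otimes\calB_j$ (stabilized) isomorphically onto a standard tensor product of primitive factors placed on sites $2j+1$ and $2j+2$. This is possible because both sides are central fermionic algebras of the same rank tuple inside the same block algebra, and every automorphism of a finite-dimensional central fermionic algebra is inner up to parity, by \ref{thm:AutoIsInnerInMat} and \ref{thm:AutomorphismByUnitaryConjugationCanAlwaysBeHomogeneous}. Then $u\circ\alpha$ maps each two-site block algebra onto a product of single-site primitive factors. Hence, after another layer of on-site automorphisms, $u\circ\alpha$ is a primitive shift by \ref{lem:PrimitiveShiftToPermutation}. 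It follows that $\alpha=u^{-1}\circ(\text{shift})$, and by normality (\ref{lem:CircuitGroupIsNormal}) this equals $(\text{shift})\circ(\text{circuit})$, giving the required form.

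\emph{The case $\dd=2$.} By \ref{thm:QCAToBrauer} together with \ref{thm:1DBrauerGroupTrivial}, the boundary algebra of $\alpha$ along the first axis is Brauer trivial. So by \ref{lem:trivial_alg_blend_shift}, $\alpha$ blends into a bilayer shift $\beta$. Following \cite{FreedmanHastings2019QCA}, we use the blending to cut $\alpha$ along the lines $x=kL$ for $L\gg\ell$:
\begin{enumerate}
\item Write $\alpha\otimes\beta^{-1}\otimes\cdots$ near each cut as a blending QCA. Concretely, glue alternating copies of $\alpha$ and $\beta$ with blending QCAs $\gamma$ at the interfaces.
\item Show that $\alpha$ composed with the inverse of the resulting glued QCA is supported in disjoint slabs around the cuts, hence is a product of $1\dd$ QCAs on strips.
\item Apply the $\dd=1$ result to each strip, treating the strip as $\ZZ$ with coarse-grained finite sites (this is where the Majorana on every site is needed).
\item Repeat along the second axis for the residual.
\end{enumerate}
Shifts compose with circuits modulo circuits by \ref{lem:CircuitGroupIsNormal}, and a composite of shifts is again a shift.

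\emph{Main obstacle.} The delicate step is keeping local dimensions uniformly bounded while converting the cut-and-glue into an honest factorization. We would first try the direct Freedman--Hastings route: $\alpha\circ\gamma_{\text{glued}}^{-1}$ acts as the identity far from the cuts up to the shift $\beta$ pieces, so it factors over strips.

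For the final clause, if each site already contains a Majorana $\gamma_s$, then the ancilla $m$ can be absorbed: $\fMat(p)\cong\fMat(p')\otimes\FermionMinus{1}$ sitewise. The stabilized statement for $p'$ then gives the claim for $p$, after handling the discrepancy between $\FermionMinus{1}^{\otimes2}\cong\FermionPlus{1}{1}$ by a sitewise isomorphism.
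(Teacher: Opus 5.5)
Your $\dd=1$ argument is sound in substance. The GNVW blocks $\alpha(\fMat(\{2j,2j+1\}))=\calA_{j-1}\calB_j$, with one Majorana attached to each of $\calA_j$ and $\calB_j$, together with uniqueness of primitive factorizations (\ref{thm:StabilizedThenDecomposed}), do give the block automorphism $u$. You also need a block automorphism on the domain blocks $\{2j,2j+1\}$ (not an on-site one) to align the pulled-back factorization with single sites.

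The genuine gap is in $\dd=2$, and it has two layers.

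First, the cut-and-glue fails as written. The bilayer shift $\beta$ from \ref{lem:trivial_alg_blend_shift} acts on a different local algebra than $\alpha$. Moreover, if $G$ acts as $\alpha$ on some strips and as $\beta$ on others, then $\alpha\circ G^{-1}$ (after whatever stabilization makes it defined) is $\alpha\beta^{-1}$ on entire strips, not something concentrated at the cuts. Step~4 has no content. The repairable version is to blend $\alpha\otimes\beta^{-1}$ into the identity (via \ref{thm:QCAQCAInverseIsCircuit} and \ref{lem:BlendingIsEquivalence}) and write it as two layers of disjoint strip QCAs.

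Second, even the repaired route only proves that $\alpha\otimes\id_Y$ is a circuit followed by a primitive shift, where $Y$ contains the shift's local algebra. That algebra consists of copies of $p(n+1,y)$ and of the local factors $q(y)$ produced by the Brauer trivialization, which in general include bosonic primitives such as $\FermionPlus{3}{0}$. The theorem asserts that one Majorana per site suffices. The only reduction tools available, \ref{lem:AncillaReduction} and \ref{lem:AncillaRemoval}, strip uniform copies of one fixed primitive algebra. They can remove the last copy only if that primitive already sits on every site of the main algebra, so you cannot get back to $m$. Your ``main obstacle'' is about uniform boundedness, but the real obstruction is the \emph{type} of ancilla.

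The paper never blends. In \ref{lem:Decompose1DStrip}, the fermionic commutant of $\alpha(\fMat(A))$ in $\fMat(A^{+\ell})$ is $\calA_L\otimes\calA_R$. By \ref{thm:1DBrauerGroupTrivial}, these factor into local central algebras after adding only Majoranas. Together with the primitive factors of $(\alpha\otimes\id)(\fMat(\{s\}))$, they give local, mutually commuting primitive subalgebras generating the strip algebra. Hall's marriage theorem (\ref{lem:MatchPrimitiveAlgebras}) converts this family into a bounded-spread automorphism $\beta_i$ of the strip, a $(\dd-1)$-dimensional QCA, such that $\beta_i^{-1}\circ(\alpha\otimes\id)$ is a primitive shift on $A_i$. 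A second matching on the complementary strips $B_i$ yields $\gamma_i$. Everything uses only $\FermionMinus{1}^{\otimes 10}$ per site (\ref{lem:fQCAStructure1d2dWithBiggerAncilla}), which is exactly what lets \ref{lem:AncillaReduction} and \ref{lem:AncillaRemoval} reduce to $m$. This matching step is the idea you are missing: it turns a local primitive factorization of a strip into an honest strip automorphism without new ancillas.

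Finally, your argument for the last clause is a non sequitur. $\alpha$ need not preserve $\fMat(\ZZ^\dd,p')$, and the Majoranas inside $p$ are moved by $\alpha$, so they are not free ancillas. You must start from $\alpha\otimes\id_m=\delta\circ\sigma$ on $\fMat(\ZZ^\dd,p'm^2)$. Then use \ref{lem:AncillaReduction} to make $\sigma$ act trivially on one copy of $m$, and remove that copy from the circuit with \ref{lem:AncillaRemoval}.
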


To prove this,
we will adopt and extend the argument in~\cite[Appendix~A]{clifqca1}
to the fermionic setting.

\begin{proposition}
    \label{lem:MatchPrimitiveAlgebras}
    Let $\Lambda$ be a possibly infinite set of sites,
    where $\fMat(\{s\},q(s))$ is primitive for each $s\in \Lambda$.
    Suppose that there are primitive subalgebras~$\calD_i$ of~$\fMat(\Lambda,q)$,
    that are mutually fermionically commuting.
    If the $\calD_i$'s generate the full algebra~$\fMat(\Lambda,q)$,
    then there is an algebra automorphism~$\beta$ of~$\fMat(\Lambda,q)$
    such that
    for every $i$, $\calD_i = \beta(\fMat(\{s_i\}))$ 
    for some $s_i \in \Supp(\calD_i)$.
\end{proposition}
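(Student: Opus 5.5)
The plan is to reduce the statement to a perfect matching problem on a bipartite graph, as in \cite[Appendix~A]{clifqca1}, and then build~$\beta$ factor by factor.

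\textbf{The graph.} One side is the index set $\{i\}$ of the $\calD_i$, the other side is~$\Lambda$. We join $i$ to $s$ exactly when two conditions hold: $s\in\Supp(\calD_i)$, and $\calD_i$ and $\fMat(\{s\},q(s))$ have the same primitive rank tuple, i.e.\ $(p,0)$ with the same odd prime~$p$, or $(1,1)/\sqrt2$.

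\textbf{The graph is locally finite.} Each $\calD_i$ is finite dimensional, so $\Supp(\calD_i)$ is finite. For a site~$s$, $\fMat(\{s\})$ is finite dimensional and the $\calD_i$ generate, so $\fMat(\{s\})$ lies in the algebra generated by finitely many of them, say $\calD_{i_1},\dots,\calD_{i_k}$. Any other $\calD_j$ fermionically commutes with these, hence with $\fMat(\{s\})$. Since~$\fMat(\{s\})$ is central, the argument of \ref{lem:supp_by_commutation} (applied to the single site~$s$) gives $s\notin\Supp(\calD_j)$. So only finitely many $\calD_i$ touch~$s$.

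\textbf{Hall's condition from the $\calD$ side.} Take a finite set~$I$ of indices and let $W=\bigcup_{i\in I}\Supp(\calD_i)$. The $\calD_i$ are mutually fermionically commuting and central, so by \ref{lem:mul_injective} (or directly by \ref{thm:tensor_prod_simple} plus simplicity) they generate $\calD_I\cong\bigotimes_{i\in I}\calD_i$, which is central. By \ref{thm:Bicommutant-of-Central} and \ref{thm:Commutant-of-central-in-central}, $\calD_I$ is a tensor factor of~$\fMat(W)$. Hence, by \ref{thm:RankTupleMonoid}, the rank tuple of~$\calD_I$ divides that of~$\fMat(W)$.

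We then compare exponents prime by prime. For an odd prime~$p$, the exponent of~$p$ in $\dim_\CC$ shows that the number of $i\in I$ of type $(p,0)$ is at most the number of sites of that type in~$W$; these are exactly the sites adjacent to that part of~$I$. For the Majorana type we use the exponent of~$2$ in the dimension, since only $(1,1)/\sqrt2$ contributes factors of~$2$ among primitive algebras. The same count is done on the subalgebra generated by the $\calD_i$ of each single type, so Hall's condition holds type by type.

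\textbf{Hall's condition from the site side.} Take a finite set~$S$ of sites of a given type and let $J$ be the finite set of $\calD_i$ touching~$S$. By the previous paragraph's argument, $\fMat(S)\subseteq\calD_J$, since the $\calD$'s not in~$J$ fermionically commute with~$\fMat(S)$ and the whole family generates. Then $\fMat(S)$ is a tensor factor of the central algebra~$\calD_J$, and the same exponent count gives the reverse inequality.

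\textbf{From two injections to a bijection.} Hall's theorem for locally finite bipartite graphs (finite Hall plus compactness/König) now yields an injective matching in each direction. A Kőnig/Schröder--Bernstein argument on the union of the two matchings then produces a perfect matching $i\mapsto s_i$ with $s_i\in\Supp(\calD_i)$ of the same primitive type.

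\textbf{Constructing~$\beta$.} For each~$i$ choose a graded isomorphism $\fMat(\{s_i\})\to\calD_i$; this exists because both algebras have the same rank tuple (\ref{thm:fermionicCSAs}). Assemble these via \ref{lem:TensorProdUniversalProperty}, which applies because the images mutually fermionically commute. The result is surjective because the $\calD_i$ generate, and injective because $\fMat(\Lambda)$ is simple (\ref{thm:InvertibleSubalgebraIsLocallyGenerated}).

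\textbf{Main obstacle.} I expect the hardest part to be the type-refined Hall counting, above all isolating the Majorana factors through the $2$-adic exponent while odd primes mix with them in the rank tuples; the unique factorization in \ref{thm:StabilizedThenDecomposed} should be what makes this clean. The infinite matching step is the other delicate point, but it is standard.
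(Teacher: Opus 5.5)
Your proposal is correct and takes essentially the same route as the paper: the same bipartite graph (edges between same-type pairs with $s\in\Supp(\calD_i)$), local finiteness on both sides, Hall's condition in both directions via the tensor-factor count, Marshall Hall's infinite marriage theorem combined with Schr\"oder--Bernstein, and then $\beta$ assembled factor by factor. Your type-by-type dimension count, together with the step $\fMat(S)\subseteq\calD_J$, spells out details that the paper's appeal to \ref{thm:StabilizedThenDecomposed} leaves implicit; that step holds because $\calD_J$ is the fermionic commutant of the central factor $\calD_{K\setminus J}$ inside a finite $\calD_K$.
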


This is a generalization of~\cite[Lemma A.3]{clifqca1} 
to the fermionic setting
and also to an infinite number of sites.
The assumption that $\calD_i$ is finite dimensional
implies that each $\calD_i$ is finitely supported
since $\fMat(\Lambda,q)$ consists of finitely supported operators.

\begin{proof}
    We use Marshall Hall, Jr.'s marriage theorem~\cite[Theorem 5.1.2]{MHallJrBook1988}:%
    \emph{
        for a possibly infinite bipartite graph
        where every left node has finitely many neighbors on the right,
        if for any~$n \ge 0$ every finite set of $n$ left nodes
        has at least $n$ distinct neighbors on the right,
        then there exists a subset of edges that gives an injection
        from the left nodes to the right.
    }
    
    Consider a bipartite graph $G = (A \sqcup B, E)$
    with vertex sets $A = \{\calD_i\}$
    and $B = \{s\in \Lambda\}$,
    and an edge $(\calD_i, s) \in E$ 
    between~$\calD_i$ and site~$s$ 
    if and only if 
    $\calD_i \cong \fMat(\{s\})$ and $s\in \Supp(\calD_i)$.
    The statement of the lemma is equivalent to finding
    a perfect matching in~$G$; given a matching, we can
    take $\beta$ to be a map sending each $\fMat(\{s\})$
    to its corresponding~$\calD_i$.
    
    Since every $\calD_i$ is finitely supported,
    every $A$-vertex has finitely many $B$-neighbors.
    Let $I_s$ be the minimal set of indices~$i$
    such that $\fMat(\{s\}) \subseteq \calD(s) = \prod_{i \in I_s} \calD_i$.
    Since $\fMat(\{s\})$ has finitely many generators,
    each of which is generated by finitely many~$\calD_i$'s by assumption,
    the index set~$I_s$ is finite.
    Any~$\calD_i$ with $i \notin I_s$ 
    fermionically commutes with $\calD(s)$ by assumption,
    and hence cannot have support on~$s$.
    Therefore, every $B$-vertex has finitely
    many $A$-neighbors.
    We claim that the marriage
    condition holds in both directions.
    Then, we have injections between~$A$ and~$B$,
    by which we obtain a perfect matching
    (the Schr\"oder--Bernstein theorem).
    
    First, take an arbitrary set 
    of $k$ primitive subalgebras
    $\calD_{i_1},\dots, \calD_{i_k}$;
    their tensor product is embedded in $\fMat(S, q)$,
    where $S = \bigcup_m \Supp(\calD_{i_m})$.
    By uniqueness in~\ref{thm:StabilizedThenDecomposed},
    the single-site algebras making up $\fMat(S,q)$
    must include $k$ primitive algebras corresponding
    to the same isomorphism classes as $\{\calD_{i_m}\}$.
    This is the marriage condition 
    in the $A \to B$ direction.
    
    For the other direction, take an arbitrary set of
    $k$ sites $S = \{s_1,\dots, s_k\}$.
    Since the $\calD_i$'s generate $\fMat(\Lambda)$,
    $\fMat(S)$ is embedded in $\prod_{i\in T} \calD_i$,
    where $T = \bigcup_{s \in S} I_s$ is the set of indices $i$ such that
    $\Supp(\calD_i)\cap S\neq \emptyset$.
    Applying the uniquness in \ref{thm:StabilizedThenDecomposed}
    again, $\{\calD_i \mid i\in T\}$ must contain
    $k$ primitive algebras corresponding to 
    $\fMat(\{s\})$ for $s\in S$.
    This proves the marriage condition in the $B \to A$ direction.
\end{proof}

Now we can decompose the action of 
a $\dd$-dimensional fermionic QCA~$\alpha$ on a slab
into a $(\dd-1)$-dimensional QCA and a shift on that strip,
whenever all $(\dd-1)$-dimensional invertible subalgebras are Brauer trivial.

\begin{lemma}
    \label{lem:Decompose1DStrip}
    Suppose $\dd =1$ or $\dd=2$.
    Let $A = [0, 1000\ell]\times \ZZ^{\dd-1}$ be a vertical strip of the lattice.
    Let $p' : A^{+\ell} \ni s \mapsto \FermionMinus{1}^{\otimes 2}$
    be a uniform local assignment.
    Then, there exists a $(\dd-1)$-dimensional fermionic QCA
    $\beta: \fMat(A^{+\ell}, pp')\to\fMat(A^{+\ell},pp')$
    with spread at most $50\ell$ such that:
    for any $s\in A$, $\fMat(\{s\},pp'(s))$ can be decomposed
    into a tensor product of primitive fermionic algebras where
    the image of each factor under
    $\beta^{-1}\circ (\alpha\otimes \id)$
    is supported on a single site.
    In other words, $\beta$ agrees with
    $\alpha\otimes \id$ on $A$ up to a shift.
\end{lemma}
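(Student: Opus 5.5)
We use the boundary algebras of $\alpha$ on the two sides of the strip to build $\beta$, as in the QCA--algebra correspondence.

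\emph{Step 1: a strip algebra and its Brauer triviality.} Put $\calS = \alpha(\fMat(A)) \cap \fMat(A^{+\ell})$; in fact $\alpha(\fMat(A)) \subseteq \fMat(A^{+\ell})$. By \ref{lem:full_mat_tensor_factor}, $\fMat(A^{-\ell})$ is a tensor factor of $\alpha(\fMat(A))$. What remains are a left boundary algebra $\calB_L$ near $x=0$ and a right boundary algebra $\calB_R$ near $x=1000\ell$. Each is an invertible subalgebra in $\dd-1$ dimensions, of spread $\le 2\ell$ (\ref{lem:boundary_alg_tensor_factor} and its sequel, applied to the half spaces $H(1000\ell)$ and the complement of $H(-1)$). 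Both boundary slabs $S_L, S_R$ have width $2\ell$ and are separated by more than $900\ell$. For $\dd=1$ they are finite-dimensional and central. For $\dd=2$, \ref{thm:1DBrauerGroupTrivial} applies to $\calB_L$, $\calB_R$, and to their commutants $\calA_L$, $\calA_R$ inside $\fMat(S_L)$, $\fMat(S_R)$. After stabilizing with one Majorana per site (one factor of $p'$), each becomes bounded-spread ($\le 50\cdot 2\ell$, to be tuned) isomorphic to a tensor-product algebra $\fMat(\ZZ^{\dd-1}, q)$. This stabilization is the reason $p'$ carries $\FermionMinus{1}$ factors.

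\emph{Step 2: define $\beta$ on $\fMat(A^{+\ell}, pp')$.} The target decomposes as
\[
\fMat(A^{+\ell}) = (\calA_L \calB_L) \otimes \fMat(\text{middle}) \otimes (\calB_R \calA_R) = \calA_L \otimes \alpha(\fMat(A)) \otimes \calA_R ,
\]
after stabilization. We let $\beta$ map tensor factors of the source to tensor factors of the target as follows.
\begin{itemize}
\item[(i)] $\fMat(A^{-c\ell})$ is mapped by $\alpha$ itself.
\item[(ii)] The remaining sites of $A$ near each edge, together with part of the stabilizing Majoranas, are sent onto $\calB_L \otimes (\text{Majoranas})$ and $\calB_R \otimes (\text{Majoranas})$, using the Brauer-trivializing isomorphisms.
\item[(iii)] The sites of $A^{+\ell}\setminus A$, together with the second Majorana layer, are sent onto $\calA_L$ and $\calA_R$, again using the stabilized trivializations.
\end{itemize}
The images fermionically commute and generate everything, so \ref{lem:TensorProdUniversalProperty} together with simplicity gives an automorphism. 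Its spread is governed by the trivialization spreads, hence $O(\ell)$.

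\emph{Step 3: the shift property.} By construction, $\beta^{-1}\circ(\alpha\otimes\id)$ restricted to $\fMat(A)$ is the identity in the bulk. Near the edges it sends each single-site algebra into a product of single-site algebras of the tensor-product models $\fMat(\ZZ^{\dd-1},q)$, up to local isomorphisms. On each site, one then factors $\fMat(\{s\},pp'(s))$ into primitive algebras, using \ref{thm:StabilizedThenDecomposed}; this works because each site carries at least one Majorana. \ref{lem:MatchPrimitiveAlgebras} then matches the primitive factors of the source sites with the primitive images. The result is that, after composing $\beta$ with a suitable automorphism of the edge regions, each primitive factor is sent to a single site.

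\emph{Main obstacle.} The difficult step is arranging the bookkeeping in Step 2 so that the source site algebras are exactly isomorphic, as tensor products, to the pieces $\calB_{L,R}\otimes(\cdot)$ and $\calA_{L,R}\otimes(\cdot)$. Bounded-spread isomorphisms $\fMat(\ZZ^{\dd-1},q)\cong \calB\otimes\fMat(\ZZ^{\dd-1},\FermionMinus{1})$ only hold up to redistributing local dimensions, so the assignment $q$ need not equal the site algebras available.

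The plan is to resolve this by passing to primitive factors everywhere, made possible by the $\FermionMinus{1}^{\otimes 2}$ stabilization. Once every relevant algebra is a product of primitives, it is enough that the counts match on large finite regions. The marriage theorem in \ref{lem:MatchPrimitiveAlgebras} then supplies the bijection, and the counting follows from $\calA\otimes\calB\cong\fMat(S)$ combined with the unique factorization of \ref{thm:StabilizedThenDecomposed}.
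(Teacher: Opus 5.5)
There is a genuine gap. In Step 2(i) and again in Step 3 you require $\beta$ to coincide with $\alpha\otimes\id$ on a bulk region $A^{-c\ell}$, and you correct $\beta$ only by an automorphism of the edge regions. With spread $50\ell$ this is impossible in general.

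Take $\dd=2$, a qubit on every site, and $\alpha$ the unit translation $x\mapsto x+1$.
\begin{itemize}
\item On the source side, the fermionic commutant of $\fMat(A^{-c\ell},pp')$ in $\fMat(A^{+\ell},pp')$ has left piece $\fMat([-\ell,c\ell-1]\times\ZZ,pp')$.
\item On the target side, the commutant of $(\alpha\otimes\id)(\fMat(A^{-c\ell},pp'))$ has a left piece with one extra qubit column, at $x=c\ell$. Its right piece has one qubit column fewer.
\end{itemize}
Suppose the bulk separates the two edges by much more than $50\ell$. Then $\beta$ and $\beta^{-1}$ must carry left piece to left piece. So $\beta^{-1}$ injects the target left piece over $|y|\le N$ into the source left piece over $|y|\le N+50\ell$. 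Let $D$ be the per-row dimension of the source left piece; the target's is $4D$. Comparing dimensions gives $(4D)^{2N+1}\le D^{2N+1+100\ell}$, i.e.\ $4^{2N+1}\le D^{100\ell}$, which is false for large $N$. The same count rules out $\dd=1$.

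In other words, the Hall condition for your edge-only matching fails. The identity $\calA\otimes\calB\cong\fMat(S)$ balances counts over a whole slab. Once the bulk is pinned to $\alpha$, however, the source and target edge regions differ by the flux of $\alpha$ through the strip, and no bounded-spread $\beta$ can absorb it. Separately, trivializing $\calB_L,\calB_R$ does not help: their trivializing factors are not images of site factors, so sending sites onto them does not yield the shift property.

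The fix, which is the paper's route, is to drop bulk agreement altogether. The lemma only asks that $\beta^{-1}\circ(\alpha\otimes\id)$ be a shift; in the example above the correct answer is $\beta=\id$, and the shift is $\alpha$ itself.
\begin{enumerate}
\item Trivialize only $\calA_L$ and $\calA_R$: apply \ref{thm:1DBrauerGroupTrivial} after one Majorana layer, then split into primitives using the second layer.
\item Form the family consisting of the images under $\alpha\otimes\id$ of the primitive factors of every site of $A$, together with the primitive factors of the trivialized $\calA_L$ and $\calA_R$. These are mutually fermionically commuting primitive subalgebras of diameter $O(\ell)$. They generate $\fMat(A^{+\ell},pp')$ because $\fMat(A^{+\ell})=\calA_L\otimes\alpha(\fMat(A))\otimes\calA_R$.
\item Apply \ref{lem:MatchPrimitiveAlgebras} once, to the whole strip; its hypotheses are then satisfied via \ref{thm:StabilizedThenDecomposed}. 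The matching defines $\beta$, whose spread is bounded by the diameters of the pieces. By construction, $\beta^{-1}$ sends each $(\alpha\otimes\id)$-image of a primitive factor to a single site.
\end{enumerate}
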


This lemma assumes $\dd =1,2$
because we will use the fact that the Brauer group in dimension~$\dd-1$
is trivial.

\begin{proof}
    By the above results for boundary algebras,
    we have the decomposition
    \begin{equation}
        \FermiCommutant{\alpha(\fMat(A,p))}{\fMat(A^{+\ell},p)}
        = \calA_L\otimes \calA_R,
    \end{equation}
    where $\calA_L$, $\calA_R$ are $(\dd-1)$-dimensional invertible subalgebras
    such that $\Supp(\calA_L)\subseteq [-\ell,\ell]\times \ZZ^{\dd-1}$ and
    $\Supp(\calA_R)\subseteq [999\ell,1001\ell]\times \ZZ^{\dd-1}$.
    These algebras are the fermionic commutants of the boundary algebras
    on the left and right
    (see \ref{lem:boundary_algebra_commutant}).
    
    Let $\fMat(A^{+\ell},p_1)$ be the algebra with $\FermionMinus{1}$
    at each site of~$A^{+\ell}$.
    If we use this algebra as stabilization,
    then by \ref{thm:1DBrauerGroupTrivial} if $\dd=2$
    or by definition if~$\dd=1$,
    we know that $\calA_L$ and $\calA_R$ factorize into a tensor product
    of $50\ell$-local central fermionic algebras.
    Furthermore, the image $(\alpha\otimes \id)(\fMat(\{s\},pp_1(s)))$
    can be written as a tensor product of primitive algebras for each $s\in A$
    (by~\ref{thm:StabilizedThenDecomposed}).
    Hence,
    \begin{equation}
        \fMat(A^{+\ell}, pp_1) \cong \calA_L \otimes \alpha(\fMat(A,p))
        \otimes \calA_R \otimes \fMat(A^{+\ell}, p_1)
        \cong \bigotimes_j \calC_j \otimes \bigotimes_{s\in A} \calD_s, \label{eq:stripDecomposition}
    \end{equation}
    where the $\calC_j$ are local central fermionic algebras
    coming from the stabilized $\calA_L$ and $\calA_R$,
    and $\calD_s = (\alpha\otimes \id)(\fMat(\{s\},pp_1(s))$ with~$s \in A$.
    
    We wish to ``disentangle'' all these local central fermionic algebras
    via~\ref{lem:MatchPrimitiveAlgebras}.
    The only requirement for~\ref{lem:MatchPrimitiveAlgebras}
    that we do not immediately have in~\ref{eq:stripDecomposition} 
    is the condition that the local central algebras be primitive.
    This condition is easily met by stabilizing with
    another product of one-Majorana algebras
    $\fMat(A^{+\ell},p_2) \cong \fMat(A^{+\ell}, p_1)$.\footnote{Closer examination 
        of the proof of~\ref{thm:1DBrauerGroupTrivial} tells us that 
        the second stabilization~$p_2$ is unnecessary,
        but we quote~\ref{thm:1DBrauerGroupTrivial} in a blackbox manner.
    }
    Decomposing into primitive factors, 
    we do not worsen the locality of the central fermionic algebras in~\ref{eq:stripDecomposition}.
    We then apply~\ref{lem:MatchPrimitiveAlgebras} to 
    the collection of primitive factors, denoted by~$\calD'_a$, 
    of stabilized~$\calD_s$ and~$\calC_j$
    viewed as subalgebras of the tensor product of all the primitive factors of $p p_1 p_2$,
    to obtain an
    automorphism~$\beta$ of~$\fMat(A^{+\ell},p p_1 p_2)$
    with spread at most~$50\ell$.
    
    By construction, $\alpha \otimes \id$ maps
    a primitive factor of~$p p_1 p_2$ 
    to a primitive factor~$\calD'_a$.
    Under~$\beta^{-1}$, this primitive factor~$\calD'_a$ is sent to a single site.
    Hence, $\beta$ agrees with $\alpha\otimes\id$ up to a shift.
\end{proof}

We extend the result on slabs to the entire $\dd$-dimensional lattice.
This will be almost the structure theorem~\ref{thm:fQCAStructure1d2d}.

\begin{lemma}\label{lem:fQCAStructure1d2dWithBiggerAncilla}
    Theorem~\ref{thm:fQCAStructure1d2d} is true
    with $p' : \ZZ^\dd \ni s \mapsto \FermionMinus{1}^{\otimes 10}$
    in place of~$m: \ZZ^\dd \ni s \mapsto \FermionMinus{1}$.
\end{lemma}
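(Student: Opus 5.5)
We will tile the lattice by vertical strips of width $1000\ell$ and split them into two interleaved families, the even strips $A_{2k} = [2000k\ell, 2000k\ell+1000\ell]\times\ZZ^{\dd-1}$ and the odd strips $A_{2k+1}$ shifted by $1000\ell$. Within one family, the thickened strips $A_j^{+\ell}$ are pairwise disjoint: consecutive members are separated by gaps of width about $1000\ell$, up to the $\ell$-thickening.

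Fix one family and apply \ref{lem:Decompose1DStrip} to each strip $A_j$ with its own copy of $\FermionMinus{1}^{\otimes 2}$ per site, obtained by splitting $p'=\FermionMinus{1}^{\otimes 10}$ into five blocks of $\FermionMinus{1}^{\otimes 2}$. This gives $(\dd-1)$-dimensional QCAs $\beta_j$ of spread $50\ell$ on $\fMat(A_j^{+\ell})$. Their supports are disjoint, so $\beta^{\mathrm{even}}=\bigotimes_j\beta_j$ is a single QCA. The point is that each $\beta_j$ is itself a local automorphism only in the direction transverse to the strip, not globally. However, we do not need $\beta_j$ to be a circuit. It suffices that $\gamma=(\beta^{\mathrm{even}})^{-1}\circ(\alpha\otimes\id)$ agrees with a shift on the even strips, in the sense that every site algebra in $\bigcup A_{2k}$ factors into primitive pieces whose images are single-site.

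Next consider $\gamma$ restricted to the complementary odd strips. Since $\gamma$ is a QCA of spread at most about $51\ell$, we apply \ref{lem:Decompose1DStrip} again to $\gamma$ on the odd strips, using a fresh block of Majorana ancillas, to obtain $\beta^{\mathrm{odd}}$. Then $(\beta^{\mathrm{odd}})^{-1}\gamma$ is a shift on the odd strips.

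The main obstacle is that $\beta^{\mathrm{odd}}$ must not destroy the shift property on the even strips. To handle this, we would choose the odd strips slightly enlarged, of width $1000\ell$ plus margins of about $100\ell$ into the even strips. Then the region where the even-strip shift structure is spoiled is where $\beta^{\mathrm{odd}}$ acts. Since $\gamma$ is a shift there, on the overlap its action is just a local permutation of primitive factors. By \ref{lem:PrimitiveShiftToPermutation} and \ref{lem:ShiftsAreCircuitEquivToPrimitive}, such a permutation can be absorbed.

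The remaining step is to show that each $\beta_j$ is a circuit. Here we would use the dimension reduction: $\beta_j$ is a QCA on a slab, effectively of dimension $\dd-1\le 1$. For $\dd=1$ the slab is finite, so $\beta_j$ is a single local automorphism. For $\dd=2$ it is a $1$d QCA. We then use the torus-style argument, applying \ref{lem:Decompose1DStrip} once more in one lower dimension, to write $\beta_j$ as a circuit times a shift. The leftover Majorana blocks, ten in total, supply the ancillas needed at each of these stages. Composing gives $\alpha\otimes\id$ as a circuit followed by a shift, and \ref{lem:ShiftsAreCircuitEquivToPrimitive} makes that shift primitive.
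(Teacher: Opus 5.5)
There is a genuine gap at exactly the step you flag as the main obstacle. Your $\beta^{\mathrm{odd}}$ comes from a second, black-box use of \ref{lem:Decompose1DStrip}. There, the commutant $\calA_L\otimes\calA_R$ of $\gamma(\fMat(A'))$, with $A'$ an (enlarged) odd strip, is factorized by \ref{thm:1DBrauerGroupTrivial} into some local central algebras $\calC_j$. Then \ref{lem:MatchPrimitiveAlgebras} produces an automorphism sending single-site primitive factors onto the $\calC_j$.

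Nothing forces the $\calC_j$ to be the single-site primitive algebras $\gamma(\mathcal P)$, where $\mathcal P$ is a primitive factor at an even-strip site just outside $A'$. So $(\beta^{\mathrm{odd}})^{-1}\gamma(\mathcal P)$ is in general spread over several sites. The shift property you obtained on the even strips can therefore be lost on a band of width of order $100\ell$ around $\partial A'$.

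Enlarging $A'$ by margins only moves this band deeper into the even strips. It never removes it, because any region you feed into the lemma has such a boundary band. Nor do \ref{lem:PrimitiveShiftToPermutation} and \ref{lem:ShiftsAreCircuitEquivToPrimitive} help. They rearrange a QCA that is already a shift at every site, whereas the trouble is that $(\beta^{\mathrm{odd}})^{-1}\gamma$ need not be a shift on the band at all.

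The missing idea is the ``sharp support'' property the paper exploits. After the first layer, your $\gamma=(\beta^{\mathrm{even}})^{-1}\circ(\alpha\otimes\id)$ maps $\fMat(A_{2k}\cup A_{2k+2})$ onto a tensor product of single-site primitive algebras. Hence $\gamma(\fMat(A_{2k+1}))$ is the fermionic commutant of that algebra inside $\fMat(A_{2k+1}^{+51\ell})$, and so it is automatically a tensor product of on-site central algebras. No second Brauer-triviality argument is needed.

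After one more Majorana per site these on-site factors become primitive. Applying \ref{lem:MatchPrimitiveAlgebras} to the primitive images $\gamma(\mathcal P)$, with $\mathcal P\subseteq\fMat(A_{2k+1})$, gives an automorphism of this commutant alone. Extended by the identity on the single-site images of the even strips, it cannot disturb the first step, and the composite is a primitive shift everywhere.

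Your route can be repaired the same way by opening up \ref{lem:Decompose1DStrip}. Since $\partial A'$ lies where $\gamma$ is already a shift, $\calA_L\otimes\calA_R$ is already the tensor product of the single-site algebras $\gamma(\mathcal P)$. With that factorization, the matching can only pair each $\gamma(\mathcal P)$ with a primitive factor on its own site. But then \ref{thm:1DBrauerGroupTrivial} plays no role, and this is just the paper's second step.

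Your final reduction of the strip QCAs for $\dd=2$ to the already-proved $\dd=1$ case agrees with the paper.
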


\begin{proof}
    Divide the plane into vertical strips
    $A_i = [2000\ell i, (2000i+1000)\ell]\times \ZZ^{\dd-1}$
    and $B_i = ((2000i+1000)\ell, (2000i+2000)\ell)\times \ZZ^{\dd-1}$.
    Applying \ref{lem:Decompose1DStrip} to each $A_i$,
    we obtain a $(\dd-1)$-dimensional QCA $\beta_i$ acting on a stabilized
    algebra in a slightly wider strip $A_i^{+\ell}$.
    Let $\beta = \prod_i \beta_i$, and suppose
    $\beta$ acts on the algebra $\fMat(\ZZ^\dd, pq_1)$
    where $q_1$ denotes the stabilization used in~\ref{lem:Decompose1DStrip}.
    By construction,
    \begin{equation}
        \eta\coloneq\beta^{-1}\circ(\alpha\otimes\id)
    \end{equation}
    acts by shift on each $A_i$,
    and has spread at most $51\ell$.
    The key property%
    \footnote{In the language of~\cite{clifqca1}, $\eta(\fMat(A_i))$ has ``sharp'' support.}
    of $\eta$ is that it sends primitive algebras
    in the decomposition of $\fMat(\{s\},pq_1)$ for $s\in A_i$ to
    primitive algebras on single sites in $A_i^{+\ell}$.

    Let $q_2$ be the local assignment with
    $\FermionMinus{1}$ at each site in the $B_i$'s,
    so that $\fMat(\{s\},pq_1q_2)$ can be decomposed into
    primitive algebras for all $s\in B_i$.
    Now focus on a single $B_i$, and consider
    \begin{equation}  \begin{aligned}
    \calB_i &= (\eta\otimes\id)(\fMat(B_i, pq_1q_2))\\
        &= \FermiCommutant{(\eta\otimes\id)
        (\fMat(A_i\cup A_{i+1},pq_1q_2)}
        {~~\fMat(B_i^{+51\ell}, pq_1q_2)}.
    \end{aligned}\end{equation}
    By the key property above, the image of
    $\fMat(A_i\cup A_{i+1})$ under~$\eta$ 
    is the tensor product of
    primitive algebras on single sites.
    Hence the commutant is a tensor product of a
    central fermionic algebras on each $s\in B_i^{+51\ell}$.
    Then if $q_3$ is the local assignment with $\FermionMinus{1}$
    at each site of $B_i^{+51\ell}$, then
    $\calB_i \otimes \fMat(B_i^{+51\ell},q_3)$
    can be decomposed as the tensor product of primitive
    algebras on each site.
    
    On the other hand, we have $\calB_i = \bigotimes_j \calD_j$,
    where the $\calD_j$'s are the images under~$\eta$ of primitive algebras
    in the decompositions of the single-site algebras
    $\fMat(\{s\},pq_1q_2)$ for $s\in B_i$.
    Thus, we can apply \ref{lem:MatchPrimitiveAlgebras}
    to obtain a bounded-spread automorphism $\gamma_i$ of
    $\calB_i\otimes \fMat(B_i^{+51\ell},q_3)$
    such that $\gamma_i$ agrees with $\eta\otimes\id$
    on $B_i$ up to a primitive shift.
    
    Letting $\gamma = \prod_i \gamma_i$, we see that
    $\gamma^{-1}\circ (\beta^{-1}\otimes \id)\circ (\alpha\otimes\id)$
    is an overall primitive shift on $\fMat(\ZZ^\dd, p p')$,
    where $p'$ includes all above stabilizations.
    In other words, we have decomposed $\alpha\otimes\id$
    as the product of two layers of disjoint $(\dd-1)$-dimensional QCAs
    and a primitive shift.
    
    To finish, we note that
    each $\beta_i$ and $\gamma_i$ is a $(\dd-1)$-dimensional fermionic QCA.
    If $\dd = 1$, then these are local automorphisms,
    and we conclude the proof for $\dd=1$ case.
    If $\dd=2$, since we have just proved the theorem in $\dd=1$ case,
    we decompose $\beta_i$ and $\gamma_i$ into a circuit composed with a primitive shift.
    The local automorphisms in this decomposition
    may have to be further decomposed,
    in which case the depth of the circuit will scale with the width
    of the strips, but this is finite bounded only by~$\ell$.
    Thus, the two-dimensional QCA $\alpha\otimes\id$ is the composition
    of a circuit and a primitive shift.
\end{proof}

\begin{proof}[Proof of the structure Theorem \ref{thm:fQCAStructure1d2d}]
    The final structure theorem differs from~\ref{lem:fQCAStructure1d2dWithBiggerAncilla}
    in that the ancilla system is $m^{10}$ instead of just~$m$.
    Let $\zeta$ denote the identity QCA on~$\fMat(\ZZ^\dd, m)$.
    From~\ref{lem:fQCAStructure1d2dWithBiggerAncilla}
    we have $\alpha \otimes \zeta^{\otimes 10} = \delta \circ \beta$
    where $\delta$ is a circuit and $\beta$ is a primitive shift.
    With~\ref{lem:AncillaReduction} inductively applied,
    we find a circuit~$\delta'$
    such that $\beta = \delta' \circ (\beta' \otimes \zeta^{\otimes 9})$
    where $\beta'$ is a primitive shift on $\fMat(\ZZ^\dd, pm)$.
    So, $\alpha \otimes \zeta^{\otimes 10} = \delta\circ \delta' \circ (\beta' \otimes \zeta^{\otimes 9})$
    and hence
    $((\alpha \otimes \zeta)\circ (\beta')^{-1}) \otimes \zeta^{\otimes 9} = \delta \circ \delta'$
    is a circuit that acts by identity on $\fMat(\ZZ^\dd, m^9)$.
    The ancillla removal~\ref{lem:AncillaRemoval}
    gives an equivalent circuit~$\delta''$ where all individual local automorphisms act
    by identity on~$\fMat(\ZZ^\dd, m^9)$.

    If $p$ already contains a Majorana operator on every site,
    then we may write $p = qm$, and we have a circuit~$\delta''$ and a primitive shift~$\beta'$
    acting on $\fMat(\ZZ^\dd, pm) = \fMat(\ZZ^\dd, qm^2)$.
    By~\ref{lem:AncillaReduction} we replace~$\beta'$
    with another primitive shift~$\beta''$ acting on~$\fMat(\ZZ^\dd, p)$
    at the cost of changing the circuit to~$\delta'''$.
    By~\ref{lem:AncillaRemoval} we remove one copy of~$m$ from~$\delta'''$.
\end{proof}

\subsection{Index theory in one dimension}
\label{sec:1dQCAIndex}

Here, we reproduce the well-known index theory of $1\dd$ QCA~\cite{GNVW,fermionGNVW2}
for completeness.

\begin{definition}
    Let $\alpha$ be a QCA acting on $\fMat(\ZZ,p)$
    with spread $\ell$, and
    let $\calB(n,\ell)$ be the boundary algebra of $\alpha$
    as defined in \ref{lem:boundary_alg_tensor_factor}.
    The \emph{index} of the QCA $\alpha$ is given by
    \begin{equation}
        \label{eq:IndexDefinition}
        \ind \alpha = \sqrt{\frac{\dim_{\CC}\calB(n,\ell)}{\dim_{\CC}\fMat([n-\ell+1,n],p)}} \, .
    \end{equation}
\end{definition}

\begin{lemma}
    \label{lem:IndexWellDefined}
    The index is well-defined: 
    it is invariant if we increase $\ell$
    or change the location~$n$ of the boundary.
\end{lemma}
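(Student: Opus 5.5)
The plan is to use the two tensor-factorization identities of~\ref{lem:BoundaryAlgebraShift}, specialized to $\dd=1$, and to take dimensions in them. In one dimension the slab $S(n,\ell)=[n-\ell+1,n+\ell]$ is a finite interval. So $\calB(n,\ell)$ is a finite-dimensional algebra, and every isomorphism in~\ref{lem:BoundaryAlgebraShift} gives an exact multiplicative identity of $\CC$-dimensions. The whole proof then reduces to checking that the ratio in~\eqref{eq:IndexDefinition} is unchanged under the two elementary moves $\ell\mapsto\ell+1$ and $(n,\ell)\mapsto(n+1,\ell+1)$. Any two admissible pairs $(n_1,\ell_1)$ and $(n_2,\ell_2)$ with $\ell_i\ge\ell$ are connected by these moves: first increase both spread parameters to a common large value, then shift $n$. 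Shifting $n$ downward is handled by reading the second move backwards.

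For increasing $\ell$, the first identity gives $\calB(n,\ell+1)\cong\fMat(\{n-\ell\},p)\otimes\calB(n,\ell)$. Hence
\[
\dim\calB(n,\ell+1)=\dim\fMat(\{n-\ell\},p)\,\dim\calB(n,\ell).
\]
The denominator behaves the same way: $\fMat([n-\ell,n],p)=\fMat(\{n-\ell\},p)\otimes\fMat([n-\ell+1,n],p)$, so its dimension picks up the identical factor. The ratio is therefore unchanged.

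For moving the boundary, the second identity gives $\calB(n+1,\ell+1)\cong\alpha(\fMat(\{n+1\},p))\otimes\calB(n,\ell)$. Since $\alpha$ is injective, $\dim\alpha(\fMat(\{n+1\},p))=\dim\fMat(\{n+1\},p)$. The new denominator is $\dim\fMat([n-\ell+1,n+1],p)$, which is $\dim\fMat([n-\ell+1,n],p)$ times the same factor $\dim\fMat(\{n+1\},p)$. Again the ratio is unchanged.

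The only point requiring care is bookkeeping. First, \ref{lem:BoundaryAlgebraShift} is applied with spread parameter $\ell'\ge\ell$, where $\ell$ is the QCA spread, so that \ref{lem:boundary_alg_tensor_factor} is valid at every intermediate step. Second, the index uses the $\CC$-dimension, which is multiplicative under the graded tensor product (the underlying vector space is the ordinary tensor product). This means the odd or even nature of the factors, including any odd single-site algebras, plays no role. I expect no genuine obstacle beyond this. The square root is harmless: the index is the square root of an invariant positive rational number, so it is itself invariant.
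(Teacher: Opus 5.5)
Your proposal is correct and is essentially the paper's proof: it takes dimensions in the two factorizations of \ref{lem:BoundaryAlgebraShift} and checks that the numerator and denominator of \eqref{eq:IndexDefinition} pick up the same factor under $\ell\mapsto\ell+1$ and under $(n,\ell)\mapsto(n+1,\ell+1)$. Your remarks on connecting arbitrary pairs $(n_i,\ell_i)$ and on multiplicativity of $\dim_\CC$ under the graded tensor product are details the paper leaves implicit. One small wording point: the shift move also raises $\ell$, so you increase $\ell$ again afterwards to match, which your first move already covers.
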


\begin{proof}
    If we increase $\ell\to \ell+1$,
    the denominator in \ref{eq:IndexDefinition}
    will increase by a factor of $\dim_\CC \fMat(\{n-\ell\})$.
    By \ref{lem:BoundaryAlgebraShift}, we have
    $\calB(n,\ell+1)\cong \fMat(\{n-\ell\})\otimes \calB(n,\ell)$,
    so under $\ell\to \ell+1$
    the numerator in \ref{eq:IndexDefinition}
    also increases by a factor of $\dim_\CC \fMat(\{n-\ell\})$.
    Hence the index is invariant if we increase the
    spread $\ell$.

    Next, we consider the transformation
    $(n,\ell)\to (n+1,\ell+1)$. The denominator in
    \ref{eq:IndexDefinition} will increase by a factor of
    $\dim_\CC \fMat(\{n+1\})$. 
    By \ref{lem:BoundaryAlgebraShift}, we have
    $\calB(n+1,\ell+1)\cong \alpha(\fMat(\{n+1\}))\otimes
    \calB(n,\ell)$,
    so the numerator in \ref{eq:IndexDefinition}
    also increases by a factor of $\dim_{\CC} \fMat(\{n+1\})$
    under the transformation $(n,\ell)\to (n+1,\ell+1)$.
    Therefore, the overall index is invariant under changes
    in $n$ and $\ell$.
\end{proof}

\begin{corollary}\label{lem:blendingImpliesSameIndex}
    If the $1\dd$ fermionic QCAs $\alpha$ and $\beta$
    admit a blending, then $\ind \alpha = \ind \beta$.
\end{corollary}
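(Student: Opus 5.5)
The plan is to use the blending QCA $\gamma$ as an interpolating object and compute its index at two different interface locations, one deep in the $\alpha$-region and one deep in the $\beta$-region. Since $\gamma$ is a QCA on $\fMat(\ZZ,r)$ with some finite spread $\ell_\gamma$, its index is well defined by~\ref{lem:IndexWellDefined}. In particular, $\ind \gamma$ does not depend on the boundary position $n'$ used in~\eqref{eq:IndexDefinition}, provided we take $\ell' \ge \ell_\gamma$.

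The first key step is to choose $n'$ far to the left of the blending interface. Take $n' \ll n - \ell$, far enough that the slab $[n'-\ell'+1, n'+\ell']$ and a further $\ell'$-neighborhood of it lie inside $S_-(n,\ell)$. Also take $\ell' \ge \max(\ell_\alpha,\ell_\gamma)$. We then need the boundary algebra of $\gamma$ at $(n',\ell')$ to coincide with that of $\alpha$ at $(n',\ell')$:
\begin{equation}
    \gamma(\fMat(H(n')))\cap\fMat(S(n',\ell')) = \alpha(\fMat(H(n')))\cap\fMat(S(n',\ell')).
\end{equation}
This identity needs some care, because $\gamma$ and $\alpha$ agree only on $\fMat(S_-(n,\ell))$, not on all of $\fMat(H(n'))$. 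I would use~\ref{lem:boundary_alg_tensor_factor}, which gives $\gamma(\fMat(H(n'))) = \fMat(T(n',\ell'))\otimes\calB_\gamma(n',\ell')$, and the analogous decomposition for $\alpha$. Elements of $\calB_\gamma$ can be characterized as those $x\in\fMat(S(n',\ell'))$ with $\gamma^{-1}(x)\in\fMat(H(n'))$. Here $\gamma^{-1}(x)$ has spread $\ell_\gamma$ by~\ref{lem:inverse_QCA_spread}, so it is supported inside $S_-(n,\ell)$.

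Within $S_-(n,\ell)$, the maps $\gamma$ and $\alpha$ agree on elements supported there. Hence $\gamma^{-1}(x)=\alpha^{-1}(x)$ whenever either side is supported in $S_-(n,\ell)$: apply $\alpha$ to $\gamma^{-1}(x)$ and use $\alpha=\gamma$ on its support. This gives equality of the two boundary algebras. Since the local assignments also agree there ($r=p$), the denominators in~\eqref{eq:IndexDefinition} agree too, and therefore $\ind\gamma=\ind\alpha$.

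Symmetrically, for $n'\gg n+\ell$ the same argument gives $\ind\gamma=\ind\beta$. By position independence (\ref{lem:IndexWellDefined}), $\ind\alpha=\ind\gamma=\ind\beta$. The one point I expect to be delicate is making sure the support of $\gamma^{-1}(x)$ falls into the region where $\gamma=\alpha$. This is a matter of choosing $n'$ at least a few multiples of $\max(\ell_\alpha,\ell_\gamma,\ell)$ away from $n$, so that all the relevant neighborhoods stay inside $S_\mp(n,\ell)$.
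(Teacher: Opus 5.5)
Your proposal is correct and takes the same approach as the paper. The paper just invokes \ref{lem:IndexWellDefined} to evaluate $\ind\gamma$ at a cut deep in the $\alpha$-region and at one deep in the $\beta$-region; your slab-local characterization of $\calB_\gamma$ via $\gamma^{-1}$ supplies the details it leaves implicit. One small correction: for $n' < n-\ell$ you have $H(n')\subseteq S_-(n,\ell)$, so $\gamma=\alpha$ on all of $\fMat(H(n'))$ and the $\alpha$-side identity is immediate, contrary to your remark. Your $\gamma^{-1}$ argument is genuinely needed only on the $\beta$ side, where $H(n')$ contains the interface.
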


\begin{proof}
    Let $\gamma$ be a QCA that blends $\alpha$ and $\beta$.
    By \ref{lem:IndexWellDefined}, we can compute $\ind \gamma$
    far in the region where $\gamma$ agrees with $\alpha$,
    or far in the region where $\gamma$ agrees with $\beta$.
    Hence $\ind \alpha = \ind \gamma = \ind \beta$.
\end{proof}

\begin{example}
    For bosonic (qudit) QCA~\cite{GNVW},
    the index is valued in~$\QQ^\times$,
    and every rational index is realized by some shift QCA.
    For fermionic QCA, the index group is enlarged
    by one extra generator~$\sqrt{2}$~\cite{fermionGNVW2}.
    This irrational index is realized by the Majorana shift QCA:
    assign $\FermionPlus{1}{1} = \langle \gamma_{2s-1}, \gamma_{2s}\rangle$
    on every site~$s$ of the $1\dd$ lattice~$\ZZ$,
    where each $\gamma_j$ is an odd self-adjoint unitary,
    and consider a fermionic QCA $\alpha: \gamma_j \mapsto \gamma_{j+1}$ for all~$j$.
    This QCA has spread~$1$, and the boundary algebra is~$\FermionMinus{1}$.
\end{example}

\begin{lemma}
    \label{lem:IndexDefinedFromRight}
    The index can equally well be defined from the right
    half line: if $\calA(n,\ell)$ is defined as in
    \ref{lem:boundary_algebra_commutant}, then
    \begin{equation}
        \label{eq:IndexDefinitionOpposite}
        \ind \alpha = \sqrt{\frac{\dim_\CC \fMat([n+1,n+\ell],p)}{\dim_\CC \calA(n,\ell)}}
    \end{equation}
\end{lemma}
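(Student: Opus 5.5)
The plan is to compare the two formulas through a dimension count on the slab $S(n,\ell)=[n-\ell+1,n+\ell]$. By \ref{lem:boundary_algebra_commutant} and \ref{cor:invertible_isomorphism}, the multiplication map $\calA(n,\ell)\otimes\calB(n,\ell)\to\fMat(S(n,\ell),p)$ is an isomorphism of fermionic algebras. Here $\calA(n,\ell)$ is the fermionic commutant of $\calB(n,\ell)$ inside the finite slab. Since both algebras are finite dimensional (they sit inside the finite-dimensional $\fMat(S(n,\ell))$), we get the exact identity
\begin{equation}
    \dim_\CC\calA(n,\ell)\cdot\dim_\CC\calB(n,\ell) = \dim_\CC\fMat([n-\ell+1,n],p)\cdot\dim_\CC\fMat([n+1,n+\ell],p),
\end{equation}
using the fact that the dimension of a fermionic tensor product is the product of dimensions, because as a vector space it is the ordinary tensor product.

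Next, divide both sides by $\dim_\CC\calA(n,\ell)\cdot\dim_\CC\fMat([n-\ell+1,n],p)$. This gives
\begin{equation}
    \frac{\dim_\CC\calB(n,\ell)}{\dim_\CC\fMat([n-\ell+1,n],p)}=\frac{\dim_\CC\fMat([n+1,n+\ell],p)}{\dim_\CC\calA(n,\ell)} .
\end{equation}
Taking positive square roots yields \eqref{eq:IndexDefinitionOpposite} from \eqref{eq:IndexDefinition}, for the same $n$ and $\ell$. Well-definedness of the right-hand side under changes of $n$ and $\ell$ is then inherited from \ref{lem:IndexWellDefined}, so no separate argument is needed.

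The only point requiring care is the surjectivity of the multiplication map onto the whole slab algebra. This is not automatic from $\calA$ merely being a commutant, and it is exactly the content of \ref{lem:boundary_algebra_commutant}, which relies on the invertibility of the boundary algebra. Injectivity comes from \ref{lem:mul_injective} via visible centrality of $\calB$. With both in hand, the identity is a pure dimension count, and the graded structure plays no role in it.
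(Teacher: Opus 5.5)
Your proposal is correct and is essentially the paper's proof: both take dimensions in the isomorphism $\calA(n,\ell)\otimes\calB(n,\ell)\cong\fMat([n-\ell+1,n],p)\otimes\fMat([n+1,n+\ell],p)$ supplied by \ref{lem:boundary_algebra_commutant} and rearrange. Your extra remarks on where surjectivity and injectivity of the multiplication map come from accurately reflect what that lemma relies on.
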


\begin{proof}
    By \ref{lem:boundary_algebra_commutant}, we have
    \begin{equation}
        \calA(n,\ell)\otimes\calB(n,\ell)\cong
        \fMat(S(n,\ell))\cong
        \fMat([n-\ell+1,n])\otimes\fMat([n+1,n+\ell]).
    \end{equation}
    Taking dimensions, we see that the expressions in
    \ref{eq:IndexDefinition} and \ref{eq:IndexDefinitionOpposite}
    are equal.
\end{proof}

\begin{proposition}
    \label{lem:SameIndexQCABlend}
    Any two $1\dd$ fermionic QCAs $\alpha$ and $\beta$
    with the same index admit a blending.
    If $\alpha$ and $\beta$ both have spread at most $\ell$,
    then the blending QCA $\gamma$ can be chosen to have
    spread at most $2\ell$.
    If $\alpha$ and $\beta$ are on the same algebra~$\fMat(\ZZ,p)$,
    then a blending QCA~$\gamma$ exists on~$\fMat(\ZZ,p) \otimes \FermionMinus{1}_0$
    where the stabilization is at the origin site.
\end{proposition}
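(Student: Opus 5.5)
I would construct the blending by hand, the way the GNVW argument does, using the boundary algebras at a cut. Place the interface at $n=0$ and let both $\alpha$ and $\beta$ have spread at most $\ell$. For $\alpha$ take the boundary algebra $\calB_\alpha=\calB(0,\ell)$, which is supported on $[-\ell+1,\ell]$ and contains $\alpha(\fMat(\{x\le 0\}))$ modulo $\fMat(T(0,\ell))$. For $\beta$ take $\calA_\beta=\calA(0,\ell)$, the fermionic commutant of its boundary algebra in $\fMat(S(0,\ell))$; by \ref{lem:boundary_algebra_commutant} this is the slab part of $\beta(\fMat(\{x>0\}))$.

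The candidate $\gamma$ acts as $\alpha$ on sites $x\le 0$ and as $\beta$ on sites $x>0$. For this to give an automorphism of $\fMat(\ZZ,r)$ with $r=p$ away from the slab, I need the images to fermionically commute and together generate everything. Near the interface the image algebra is $\calB_\alpha\otimes\calA_\beta$, sitting inside $\fMat(T(0,\ell))\otimes(\,\cdot\,)\otimes\fMat(\{x>\ell\})$. These are two different slab-supported algebras, so the plan is to find a graded isomorphism $\psi:\calB_\alpha\otimes\calA_\beta\to\fMat(S(0,\ell),p)$, possibly after stabilization. I would then define $\gamma$ as $\alpha$ on $x\le 0$ and $\beta$ on $x>0$, but with the slab parts of the images composed with $\psi$. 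Concretely, write $\alpha|_{x\le0}$ as landing in $\fMat(T)\otimes\calB_\alpha$ by \ref{lem:boundary_alg_tensor_factor}, write $\beta|_{x>0}$ as landing in $\calA_\beta\otimes\fMat(x>\ell)$, and then map $\calB_\alpha\otimes\calA_\beta$ onto the slab via $\psi$. The universal property \ref{lem:TensorProdUniversalProperty} gives a homomorphism. Simplicity gives injectivity, and surjectivity holds because $\psi$ is onto the slab while the outer half-spaces are covered by $\alpha$ and $\beta$. The spread is bounded by $2\ell$, since everything moves within the slab of width $2\ell$ plus one application of a spread-$\ell$ QCA.

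The main obstacle is producing $\psi$, i.e.\ showing $\calB_\alpha\otimes\calA_\beta\cong\fMat(S(0,\ell),p)$ as graded algebras. Both sides are finite-dimensional central fermionic algebras. The tensor product is central by \ref{thm:tensor_prod_simple}, and $\calB_\alpha$ and $\calA_\beta$ are invertible subalgebras in zero dimensions, hence central. Their isomorphism classes are determined by rank tuples (\ref{thm:fermionicCSAs}, \ref{thm:RankTupleMonoid}). For dimensions, $\ind\alpha=\ind\beta$ together with \eqref{eq:IndexDefinition} and \eqref{eq:IndexDefinitionOpposite} gives
\[
\dim\calB_\alpha\cdot\dim\calA_\beta=\dim\fMat([-\ell+1,0])\cdot\dim\fMat([1,\ell])=\dim\fMat(S).
\]
Equal dimensions determine an odd algebra, but an even one is not determined by dimension alone; the even/odd type and the split $(a,b)$ can differ. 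This is exactly where I would stabilize by $\FermionMinus{1}_0$. By \ref{thm:StabilizedThenDecomposed}, after tensoring with $\FermionMinus{1}$ the isomorphism class depends only on the $\CC$-dimension. So I would replace the slab algebra by $\fMat(S)\otimes\FermionMinus{1}_0$ and compare it with $\calB_\alpha\otimes\calA_\beta\otimes\FermionMinus{1}$, which has the same dimension and is therefore isomorphic.

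When $\alpha$ and $\beta$ live on the same $\fMat(\ZZ,p)$, this single Majorana at the origin is the whole modification of $r$, and $\gamma$ fixes it only up to the isomorphism. For different $p$ and $q$, I would instead let the slab carry whatever algebra $\calB_\alpha\otimes\calA_\beta$ is, which is legitimate since $r$ is free there. I would double-check that the $\alpha$-side and $\beta$-side image factorizations fermionically commute. They do, because $\calB_\alpha$ and $\calA_\beta$ enter as separate tensor factors in the domain of $\psi$, and \ref{lem:mul_injective} handles the identification.
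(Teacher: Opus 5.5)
Your construction is essentially the paper's proof. It uses the same pair $\calB_\alpha(0,\ell)$, $\calA_\beta(0,\ell)$ and the same dimension identity from \ref{lem:IndexDefinedFromRight}. It adds a Majorana at the origin and invokes \ref{thm:StabilizedThenDecomposed} to produce $\psi$, then glues $\alpha$ on the left to $\beta$ on the right through $\psi$ via \ref{lem:TensorProdUniversalProperty}; the paper writes this as $\gamma=\eta\circ\delta$. Two points need correcting.

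\emph{The aside for $p\neq q$ does not work.} Since $\gamma$ is an automorphism of a single algebra $\fMat(\ZZ,r)$, making the slab of $r$ equal to $\calB_\alpha\otimes\calA_\beta$ also changes the slab of the domain. The domain then no longer contains $\fMat([-\ell+1,0],p)\otimes\fMat([1,\ell],q)$, so ``$\alpha$ on $x\le 0$, $\beta$ on $x>0$'' is undefined there. You are back to needing an isomorphism of exactly the kind that can fail in the even case. The aside is also unnecessary, because your Majorana construction already handles arbitrary $p,q$. Take $r=p$ on $(-\infty,0]$ and $q$ on $[1,\infty)$, with $\FermionMinus{1}$ added at the origin. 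Let $\psi$ map $\calB_\alpha\otimes\calA_\beta\otimes\FermionMinus{1}$ onto $\fMat([-\ell+1,0],p)\otimes\FermionMinus{1}\otimes\fMat([1,\ell],q)$. This is what the paper does, and the same-algebra clause is just the case $p=q$. As a small aside, equal dimension does force equal parity: even central algebras have dimension $(a+b)^2$ and odd ones $2s^2$, so only the split $(a,b)$ can differ.

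\emph{Your spread count gives $3\ell$, not $2\ell$.} ``Slab of width $2\ell$ plus one spread-$\ell$ QCA'' adds up to $3\ell$, and the overshoot is real. Take a site $s=-2\ell+1$. Its image under $\alpha$ can have a nontrivial $\calB_\alpha$-component at $-\ell+1$, and an arbitrary $\psi$ may move that to site $\ell$, a distance $3\ell-1$. For example, let $\alpha=\beta$ be a shift by $\ell$ and let $\psi$ reflect the slab. The paper's ``manifestly at most $2\ell$'' has the same looseness. With $\psi$ unconstrained, this construction only gives $3\ell-1$; getting $2\ell$ would require choosing $\psi$ with additional locality.
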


Hence the index completely classifies $1\dd$ fermionic QCAs
up to blending equivalence.

\begin{proof}
    Suppose $\alpha$ acts on $\fMat(\ZZ, p)$ and
    $\beta$ acts on $\fMat(\ZZ,q)$.
    Let $\calB_\alpha(n,\ell)$ and $\calB_\beta(n,\ell)$
    be the boundary algebras for the QCAs $\alpha$ and $\beta$, respectively.
    Define $\calA_\alpha(n,\ell)$ and $\calA_\beta(n,\ell)$
    as in \ref{lem:boundary_algebra_commutant}.
    By \ref{lem:IndexDefinedFromRight}, the condition $\ind \alpha = \ind \beta$
    implies that (choosing the boundary at $n = 0$)
    \begin{equation}
        \dim_\CC \calB_\alpha(0,\ell)\cdot \dim_\CC \calA_\beta(0,\ell)
        = \dim_\CC\fMat([-\ell+1,0],p)\cdot
        \dim_\CC\fMat([1,\ell],q).
    \end{equation}
    Stabilizing by~$\FermionMinus{1}$,
    we have from~\ref{thm:StabilizedThenDecomposed} a fermionic algebra isomorphism
    \begin{equation}
        \phi: \calB_\alpha(0,\ell)\otimes \calA_\beta(0,\ell)
        \otimes \FermionMinus{1} \xrightarrow{\;\sim\;} 
        \fMat([-\ell+1,0],p)\otimes 
        \fMat([1,\ell],q)\otimes \FermionMinus{1}.
    \end{equation}
    To construct the blending QCA,
    consider the algebra isomorphism
    \begin{equation}
        \begin{aligned}
            \delta: \fMat(\ZZ,r) &= \fMat((-\infty, 0],p) \otimes\FermionMinus{1}\otimes \fMat([1,\infty),q) \\
            & \xrightarrow{\;\sim\;}
            \alpha^{-1}(\fMat((-\infty,-\ell])) \otimes
            \alpha^{-1}(\calB_\alpha(0,\ell))
            \otimes \FermionMinus{1} \\
            &\qquad \qquad \otimes
            \beta^{-1}(\calA_\beta(0,\ell))\otimes 
            \beta^{-1}(\fMat([\ell+1,\infty))),
        \end{aligned}
    \end{equation}
    where the ancillary algebra~$\FermionMinus{1}$ 
    is located at the origin of the lattice~$\ZZ$.
    We define a blending QCA acting on $\fMat(\ZZ,r)$
    as a composition $\gamma = \eta \circ \delta$, 
    where $\eta$ is defined by sending pure tensors
    \begin{equation}
        \eta: x\otimes x'\otimes w\otimes y'\otimes y
        \mapsto \alpha(x)\otimes \phi\Big(\alpha(x')\otimes \beta(y')\otimes w\Big) \otimes \beta(y).
    \end{equation}
    The image of $\eta$ lives in the algebra
    \begin{equation}
    \begin{aligned}
        \fMat(\ZZ,r) &\cong 
        \fMat((-\infty,-\ell])\otimes \Big(\fMat([-\ell+1,0],p) \\
        &\qquad \otimes \fMat([1,\ell],q)\otimes \FermionMinus{1}\Big) 
        \otimes \fMat([\ell+1,\infty))\,.
    \end{aligned}
    \end{equation}
    By \ref{lem:TensorProdUniversalProperty}, $\eta$ is
    a homomorphism of fermionic algebras.
    It is by definition injective and surjective.
    The composition~$\eta \circ \delta$ 
    manifestly has spread at most $2\ell$.
    Thus $\gamma = \eta \circ \delta$ is the desired blending QCA.
\end{proof}

\begin{proposition}
    \label{lem:CompositionQCAIndexAdds}
    For two QCAs $\alpha$ and $\beta$
    acting on $\fMat(\ZZ,p)$, we have
    $\ind(\alpha\beta) = \ind(\alpha \otimes \beta) = \ind(\alpha) \ind(\beta)$.
\end{proposition}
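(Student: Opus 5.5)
We prove the two equalities separately and then compare.

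\emph{Tensor product.} For $\alpha\otimes\beta$ on $\fMat(\ZZ,p^2)$, take $\ell$ at least the spreads of both. The half-line algebra factorizes as $\fMat(H(n),p)\otimes\fMat(H(n),p)$, so its image is $\alpha(\fMat(H(n)))\otimes\beta(\fMat(H(n)))$. Intersecting with $\fMat(S(n,\ell),p^2)=\fMat(S(n,\ell),p)\otimes\fMat(S(n,\ell),p)$ should give
\[
\calB_{\alpha\otimes\beta}(n,\ell)=\calB_\alpha(n,\ell)\otimes\calB_\beta(n,\ell).
\]
The inclusion $\supseteq$ is clear. For $\subseteq$, Schmidt-decompose an element across the two layers with linearly independent components. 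Then argue as in the proof of~\ref{lem:supp_by_commutation}, using fermionic commutation with $\fMat$ of the complementary regions of each layer together with the characterization via~\ref{lem:boundary_alg_tensor_factor}. This shows that each component lies in the respective boundary algebra. Dimensions multiply, and so does the denominator $\dim\fMat([n-\ell+1,n],p^2)$. Hence $\ind(\alpha\otimes\beta)=\ind\alpha\,\ind\beta$.

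\emph{Composition.} Use~\ref{cor:QCAmodCircuitsIsAbelian} and the argument in its proof, which gives $(\alpha\circ\beta)\otimes\id\sim\alpha\otimes\beta$ modulo circuits: $\alpha\beta\otimes\id=(\alpha\otimes\beta)\circ(\beta^{-1}\otimes\beta)$ up to reordering, and $\beta^{-1}\otimes\beta$ is a circuit by~\ref{thm:QCAQCAInverseIsCircuit}. We then need two facts.
\begin{enumerate}
\item The index is unchanged under composition with a finite depth circuit.
\item $\ind(\gamma\otimes\id)=\ind\gamma$. This follows from the tensor-product case, since $\ind\id=1$: the boundary algebra of $\id$ is $\fMat([n-\ell+1,n])$.
\end{enumerate}

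For fact~1, it suffices to treat a depth-1 circuit $\sigma=\prod_s\sigma_s$ composed with $\gamma$. Restrict the circuit to gates supported on $x\le n$ or on $x>n$ only, choosing a cut $n$ that no gate straddles. Gates straddling the cut can be avoided by writing the depth-1 layer as two layers whose gates avoid a given cut, after grouping sites. Such a circuit $\sigma'$ is a product of automorphisms of the left and right half algebras. It preserves $\fMat(H(n))$ and $\fMat(S(n,\ell'))$ for $\ell'$ large, so $\calB_{\gamma\circ\sigma'}=\calB_\gamma$ and $\calB_{\sigma'\circ\gamma}=\sigma'(\calB_\gamma)$, which has the same dimension. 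Alternatively, one can note that a circuit blends into the identity by truncating it, and that $\gamma\circ\sigma$ blends into $\gamma$; then~\ref{lem:blendingImpliesSameIndex} gives the result directly. I would use this blending route: truncate the circuit to the left half line to get a QCA that agrees with $\gamma\sigma$ on the left and with $\gamma$ on the right.

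The main obstacle is the rigorous identification $\calB_{\alpha\otimes\beta}=\calB_\alpha\otimes\calB_\beta$ in the graded setting. The fermionic signs in the Schmidt decomposition across two layers of the same region require care. I would handle them by choosing homogeneous components and using the graded Leibniz rule~\eqref{eq:graded_derivation}, exactly as in the proof of~\ref{lem:supp_by_commutation}.
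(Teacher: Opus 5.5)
Your proposal is correct and is essentially the paper's argument: multiplicativity under $\otimes$ from $\calB_{\alpha\otimes\beta}(n,\ell)=\calB_\alpha(n,\ell)\otimes\calB_\beta(n,\ell)$, then $\alpha\beta\otimes\id=(\alpha\otimes\beta)\circ(\beta\otimes\beta^{-1})$ (the factors go in this order, not $\beta^{-1}\otimes\beta$), with $\beta\otimes\beta^{-1}$ a circuit that blends into the identity by dropping gates, so \ref{lem:blendingImpliesSameIndex} applies. You also spell out two points the paper leaves implicit, namely $\ind(\gamma\otimes\id)=\ind\gamma$ and the boundary-algebra factorization. The factorization needs no graded sign bookkeeping, since it is just the subspace identity $(A\otimes B)\cap(C\otimes D)=(A\cap C)\otimes(B\cap D)$; and the cut-avoiding alternative you set aside is unnecessary and would not work as stated, because commuting gates in a depth-1 layer may straddle every cut.
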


\begin{proof}
    It is clear from the definition of the index that
    $\ind(\alpha\otimes\beta) = \ind(\alpha) \ind(\beta)$.
    By~\ref{thm:QCAQCAInverseIsCircuit}, $\beta\otimes\beta^{-1}$
    is a circuit of local automorphisms.
    By dropping some of the local automorphisms, 
    we see that $\beta\otimes\beta^{-1}$ blends with the identity.
    Therefore, $(\alpha \beta \otimes \id)(\beta^{-1} \otimes \beta) = \alpha \otimes \beta$
    blends with $\alpha \beta \otimes \id$.
    Using the invariance of the index under blending (\ref{lem:blendingImpliesSameIndex}),
    we conclude the proof.
\end{proof}

\begin{remark}
    It is therefore appropriate to think of the group of all indices of fermionic QCA
    as the Grothendieck group of the rank tuples.
\end{remark}

\section{Nonuniform local algebras}

In prior sections, we frequently had to stabilize by tensoring with small fermionic algebras.
However, we only ever introduced a bounded density of new degrees of freedom.
Whenever the original algebra enjoyed a uniform bound on the local algebra dimension, so too will the stabilized algebra.

In this section, we explore the additional power granted by tensoring with fermionic algebras with no uniform bound on the local dimension.
A notable consequence of this is that all shift QCAs on~$\ZZ^\dd$
can be expressed as circuits of local automorphisms in any dimension \(\dd \geq 2\).
In particular, all QCAs in \(\dd=2\) are circuits.
Indeed, this follows from the main result of this section:

\begin{theorem}[QCA with a hole is a circuit]\label{thm:QCAWithAHoleIsTrivial}
    If \(\alpha : \fMat(\ZZ^\dd,p) \to \fMat(\ZZ^\dd,p)\)
    is a \mbox{spread-\(\ell\)} QCA
    which acts trivially in a ball \(B \subseteq \ZZ^\dd\) of radius \(r \geq 2^{4\dd+5} \ell\),
    then there is a (generally nonuniform) local assignment \(p'\)
    such that \(\alpha\otimes \id : \fMat(\ZZ^\dd,p p') \to \fMat(\ZZ^\dd,p p')\)
    is a circuit of local automorphisms.
\end{theorem}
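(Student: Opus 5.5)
The plan is to convert the quiet region $B$ into an ever-growing hole by exploiting unbounded ancillas, so that $\alpha$ is "swept out to infinity" in finitely many layers. The key idea, in the spirit of the swindle/torus-trick arguments of~\cite{FHH2019}, is a shell decomposition. Partition $\ZZ^\dd\setminus B$ into concentric annular shells $A_k=\{s: R_k<\dist(s,0)\le R_{k+1}\}$ with radii growing geometrically, $R_{k+1}=2R_k$ and $R_0=r/2$. Since $\alpha$ acts trivially on $B$, the restriction of $\alpha$ to a neighborhood of each shell "looks like" a QCA with a large hole. The ancilla $p'$ will be chosen so that, near shell $A_k$, there is a full copy of $\fMat(\text{ball of radius }R_{k+1},p)$ spread over a bounded neighborhood of $A_k$ (nonuniform, since the dimension needed grows with $k$). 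This is exactly where bounded local dimension is abandoned.

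First step (truncation). For each $k$, I would build a QCA $\alpha_k$ that agrees with $\alpha$ on the ball $\{\dist\le R_k\}$ and acts as the identity outside $\{\dist\le R_k+C\ell\}$, at the cost of an ancilla supported near the sphere of radius $R_k$. To construct it, I would use the boundary algebra of $\alpha$ along that sphere: by \ref{lem:full_mat_tensor_factor}, $\alpha(\fMat(\text{ball}))=\fMat(\text{inner ball})\otimes\calB_k$, where $\calB_k$ is a $(\dd-1)$-dimensional invertible subalgebra on a spherical shell. It then remains to fill in the complement $\calB_k^{c}$ by an ancilla. Since the shell is finite, \ref{cor:ISAInverseSameSupport} together with finite-dimensional classification (\ref{thm:Commutant-of-central-in-central} and \ref{thm:StabilizedThenDecomposed}, with one $\FermionMinus{1}$) lets me realize the truncation as an exact automorphism of a stabilized finite algebra. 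Because $\alpha$ is trivial on $B$, the truncation is well defined, with the hole ensuring the shell boundary algebras are genuinely annular.

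Second step (telescoping). Write $\alpha\otimes\id=\lim_k \alpha_k$ and factor $\alpha_{k+1}=(\alpha_{k+1}\alpha_k^{-1})\alpha_k$. Each quotient $\alpha_{k+1}\alpha_k^{-1}$ is supported on an annulus around $A_k$, so consecutive quotients overlap only with their neighbors. I would group even and odd $k$ into two layers of disjointly supported automorphisms, obtaining $\alpha\otimes\id=\big(\prod_{k\text{ odd}}\beta_k\big)\big(\prod_{k\text{ even}}\beta_k\big)$ up to reordering that is justified by \ref{lem:CircuitGroupIsNormal}. The catch is that each $\beta_k$ has support diameter of order $R_k$, which is unbounded, so this is not yet a circuit.

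Third step (main obstacle: making each annular piece a bounded-depth, bounded-range circuit). This is exactly where the nonuniform ancilla and the condition $r\ge 2^{4\dd+5}\ell$ must enter. I would first try to refine each annulus $A_k$ into $O(1)$ colors of cubes of side $\sim 2^{c}\ell$, in the style of \autoref{fig:ancilla-removal}, and recursively apply the same truncation to $\beta_k$ restricted to cubes. Each recursion level is an instance of the theorem on a $\dd$-cube with a hole, and the dimension-dependent exponent $4\dd+5$ reflects the $O(\dd)$ levels of halving needed before pieces become $O(\ell)$-local. If this recursion fails to close, the fallback is to invoke \ref{thm:QCAQCAInverseIsCircuit}: insert the ancilla copy $\alpha_k^{-1}$ near the shell, so that $\beta_k\otimes\id$ becomes a product of $\alpha\otimes\alpha^{-1}$-type swaps, which are local with spread depending only on $\ell$. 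Finally, \ref{lem:AncillaRemoval} can be used to tidy up the ancilla usage.
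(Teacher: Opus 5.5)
\textbf{There is a genuine gap.} Step 3 is not a technicality; it is the whole theorem, and Steps 1--2 hand it pieces it cannot handle.

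\emph{The truncation is not local.} In Step 1 you complete $\alpha$ on a ball by an isomorphism between the commutant of the spherical boundary algebra and a stabilized outer shell, obtained from finite-dimensional classification. Such an isomorphism has no locality at all. So $\alpha_k$, and hence $\beta_k=\alpha_{k+1}\alpha_k^{-1}$, can have spread of order $R_k$, not merely support of diameter $R_k$. A filler whose spread is bounded uniformly in $k$ would amount to a uniform bounded-spread Brauer triviality statement for $(\dd-1)$-dimensional invertible subalgebras on spheres, which is not available for general $\dd$. Even granting it, $\beta_k$ would be a bounded-spread automorphism of an annulus of width of order $R_k$, and showing it is a circuit of depth independent of $k$ is essentially the original problem again.

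\emph{The hole is never used where it matters.} It only lets you set $\alpha_0=\id$; without a hole, $\alpha_0$ would just be one finitely supported automorphism. So if Step 3 could turn the $\beta_k$ from Steps 1--2 into bounded-depth circuits, the same argument would show that a $1\dd$ shift, tensored with arbitrary ancillas, is a finite-depth circuit. That contradicts the index: a circuit blends into the identity, so it has index $1$ by \ref{lem:blendingImpliesSameIndex}, and ancillas do not change the index. The hole therefore has to be used in constructing the pieces, not just in starting the telescope.

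\emph{Neither Step 3 route works.} Restricting $\beta_k$ to cubes does not give automorphisms, let alone QCAs with holes, and the pieces still grow with $k$, so the recursion does not stop at bounded depth. Applying \ref{thm:QCAQCAInverseIsCircuit} to $\beta_k$ gives gates of range comparable to the spread of $\beta_k$, and it produces $\beta_k\otimes\beta_k^{-1}$, not $\beta_k\otimes\id$. Also, the reordering you justify by \ref{lem:CircuitGroupIsNormal} does not apply to non-circuit pieces; the right device is to alternate left and right factorizations.

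\textbf{What is missing} is a mechanism that moves $\alpha$ outward while keeping every step of bounded range and restoring the spread each time.

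\emph{Hemming.} Your idea of parking a copy of the interior on ancillas near a shell is the right instinct; it is exactly hemming (\ref{thm:Hemming}). One swaps the contents of $B^{+R}\setminus B^\circ$ radially onto ancillas on the outer sphere and conjugates $\alpha$ by these swaps. The swap gates have range about $R+2\ell$, so the step $R$ must be a constant; your geometric shells would make these gates unbounded. The spread grows only by the Lipschitz factor $K=1+R/r_j$, which is at most $2$ because $R\le r_0\le r_j$. This is where the size of the hole genuinely enters.

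\emph{Pleating and ironing.} The inflated spread is then removed locally (\ref{lem:PleatingLocalOneDimension}, \ref{lem:SpreadReduction}). One writes $\alpha\otimes\id$ as circuit-equivalent to $\alpha\otimes\alpha^{-1}\otimes\cdots\otimes\alpha$ with $\lambda$ factors; here \ref{thm:QCAQCAInverseIsCircuit} is applied only to a bounded-spread QCA. One then deletes the pairing gates across staggered cuts and compresses the resulting meander. This restores inner spread $8^\dd\ell$ with outer spread $\ell$, using bounded-depth local circuits. The unbounded ancilla density is consumed here and in hemming.

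\emph{Induction.} Iterating hem-then-reduce (\ref{lem:InductionStep}), with alternating left and right multiplication, gives $\alpha\otimes\id=(\prod_i\beta_{2i}^{-1})(\prod_i\beta_{2i+1}^{-1})$, with disjoint factors within each product. The hypothesis $r\ge 2^{4\dd+5}\ell$ is exactly what makes the hole advance by more than the width of each circuit's support ($R_0>R_1$) under $K\le 2$ and $L\le 8^\dd\ell$. It does not come from $O(\dd)$ levels of halving.
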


Several immediate corollaries with nonuniform stabilization allowed
are explored below:
the circuit class of any QCA is locally computable~(\ref{cor:CircuitClassIsLocallyComputable});
any shift QCA in $\dd \ge 2$ is a circuit~(\ref{thm:ShiftsTrivialForDdGeq2});
and
any $\dd =2$ fermionic QCA is a circuit~(\ref{thm:2DfQCAAreTrivialWithNonuniformStabilization}).

Two subsequent subsections constitute the proof of~\ref{thm:QCAWithAHoleIsTrivial}.
First,
we establish a range-reduction lemma:
the spread of a QCA can be reduced by the application of a finite depth circuit
and the introduction of ancillas,
a procedure known as \emph{pleating} and \emph{ironing}
which we adapt from~\cite{FHH2019}.
Next,
we set up an inductive procedure
by which the ``hole'' in \(\alpha\) posited by \ref{thm:QCAWithAHoleIsTrivial}
can be made larger and larger by the application of finite depth circuits,
in a process we call \emph{hemming}.
Hemming increases the spread of the QCA,
but interleaving with pleating and ironing results
in a circuit expression for \(\alpha \otimes \id\).

\begin{remark}
The fermionic grading plays no particular role in this section.
Results are phrased for fermionic QCAs, but all calculations and arguments
are semantically identical to the ungraded case.
Furthermore, when applied to Clifford QCA~\cite{clifqca1,clifQCA},
this shows that the translation-invariance assumption
in the classification result~\cite{clifQCAclassification}
may be dropped.
\end{remark}

\begin{corollary}[Circuit/Brauer class is locally computable]
    \label{cor:CircuitClassIsLocallyComputable}
    Let \(\alpha : \fMat(\ZZ^\dd,p) \to \fMat(\ZZ^\dd,p)\) be a spread-\(\ell\) QCA.
    Then determining the finite depth circuit equivalence class
    (with nonuniform ancilla dimensions in space) of~\(\alpha\)
    only requires knowing \(\alpha|_B : \fMat(B,p) \hookrightarrow \fMat(B^{+\ell},p)\),
    the restriction of~\(\alpha\) to any ball \(B \subseteq \ZZ^\dd\)
    of radius \(r \geq 2^{4\dd+5} \ell \).

    Similarly, if \(\calA\) is an invertible subalgebra of spread \(\ell\),
    then determining the Brauer class of \(\calA\)
    only requires knowledge of \(\calA_B = \calA \cap \fMat(B)\).
\end{corollary}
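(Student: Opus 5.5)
The plan is to reduce the corollary to~\ref{thm:QCAWithAHoleIsTrivial} by ``cutting a hole'' in one QCA using another that agrees with it on the ball. Let $\alpha,\alpha'$ be two spread-$\ell$ QCAs on $\fMat(\ZZ^\dd,p)$ with $\alpha|_B = \alpha'|_B$ on a ball $B$ of radius $r \ge 2^{4\dd+5}\ell$. We will show that they are circuit equivalent after nonuniform stabilization.

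Consider $\beta = \alpha'^{-1}\circ\alpha$. For any $x\in\fMat(B)$ we have $\alpha(x) = \alpha'(x)$, hence $\beta(x)=x$. Thus $\beta$ acts trivially on all of $\fMat(B)$. By~\ref{lem:inverse_QCA_spread}, $\beta$ has spread at most $2\ell$. The radius condition in~\ref{thm:QCAWithAHoleIsTrivial} is stated for spread $\ell$, so before invoking it I would do one of two things. The cleaner option is to note that the theorem's proof only uses that the hole exceeds a fixed multiple of the spread, and shrink $B$ accordingly. If we insist on the exact constant, the alternative is to coarse-grain the lattice so that the spread becomes comparable; I would accept a constant-factor loss here. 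With that settled, \ref{thm:QCAWithAHoleIsTrivial} gives a nonuniform $p'$ such that $\beta\otimes\id$ is a circuit. Hence $\alpha\otimes\id = (\alpha'\otimes\id)\circ(\beta\otimes\id)$ is circuit equivalent to $\alpha'\otimes\id$. Using normality of circuits (\ref{lem:CircuitGroupIsNormal}), it does not matter on which side the circuit sits.

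This shows that the class of $\alpha$ depends only on $\alpha|_B$, provided that any admissible local datum $\alpha|_B$ determines the same class regardless of which global extension $\alpha'$ realizes it. That is exactly the argument above. The main subtlety, which I expect to be the real obstacle, is the spread/radius bookkeeping for $\beta$. Separately, different QCAs may live on different stabilized algebras, so the stabilizations must be matched by tensoring both with a common $p'$.

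For the invertible subalgebra statement, I would pass through~\ref{thm:QCAToBrauer} via the construction of~\ref{lem:alg_to_QCA}. Given $\calA$ of spread $\ell$, build the $(\dd+1)$-dimensional QCA $\alpha_\calA$ whose boundary algebra is $\calA$. The restriction of $\alpha_\calA$ to a $(\dd+1)$-ball around a point of the new axis is determined by $\calA\cap\fMat(B)$ together with its fermionic commutant restricted near $B$. The latter is determined by $\calA_B$ on a slightly smaller ball through~\ref{cor:ISAInverseSameSupport}, up to shrinking $B$ by $O(\ell)$.

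Two invertible subalgebras $\calA,\calA'$ with the same $\calA_B$ then yield QCAs agreeing on a ball. By the first part, these QCAs are circuit equivalent with nonuniform ancillas. A circuit blends into the identity by dropping gates, so the QCAs blend, and~\ref{thm:QCAToBrauer} then gives $[\calA]=[\calA']$, with shifts having trivial boundary class.

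Care is needed in this last step because the Brauer equivalence obtained uses nonuniform stabilization. I would state explicitly that the Brauer class here is taken with nonuniform ancillas allowed, consistent with the QCA statement.
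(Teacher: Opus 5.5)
Your proposal is correct and takes essentially the paper's route. You compose with the inverse of another extension that agrees with $\alpha$ on $B$ to get a QCA with a hole, apply Theorem~\ref{thm:QCAWithAHoleIsTrivial} after matching stabilizations, and reduce the subalgebra statement to the QCA statement via the construction of \ref{lem:alg_to_QCA}, using that circuit-equivalent (hence blending) QCAs have stably equivalent boundary algebras.

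You are more careful than the paper about the spread $2\ell$ of $\alpha'^{-1}\circ\alpha$ and the $O(\ell)$ shrinkage in the subalgebra case. However, neither of your proposed fixes for the factor $2$ works. Shrinking $B$ and coarse-graining do not improve the ratio of hole radius to spread. You have to absorb it either by accepting a larger constant or by using the slack in the proof of Theorem~\ref{thm:QCAWithAHoleIsTrivial}, for instance hemming by $R = 2r_0$ instead of $R = r_0$ there.
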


This means that the finite depth circuit equivalence class of a QCA
and the Brauer class of an invertible subalgebra
are both \emph{locally computable}~\cite{GNVW}.
That is, the class can be determined, in principle,
just from the restriction of the QCA or invertible subalgebra
to a sufficiently large ball.

\begin{proof}
    Given \(\alpha|_B\), arbitrarily extend it to a QCA \(\tilde{\alpha} : \fMat(\ZZ^\dd,q) \to \fMat(\ZZ^\dd,p)\), where \(\tilde{\alpha}|_B = \alpha|_B\).
    This is possible as \(\alpha|_B\) was originally found from truncating a QCA, so there is at least one way to make the extension, namely as \(\tilde{\alpha} = \alpha\).
    Then \ref{thm:QCAWithAHoleIsTrivial} implies that \(\tilde{\alpha}\) and \(\alpha\) are in the same circuit class:
    add ancillas \(q'\) to \(q\) and \(p'\) to \(p\) such that \(q q' = p p'\), and observe that \((\alpha^{-1} \otimes \id)(\tilde{\alpha}\otimes \id)\) acts trivially on \(B\), and thus is a circuit after stabilization.
    We thus learn the circuit class of \(\alpha\) by computing that of \(\tilde{\alpha}\).

    For an invertible subalgebra, similarly extend \(\calA_B\) to a full \(\tilde{\calA}\), and then use that the Brauer class of \(\tilde{\calA}\) is determined by the circuit class of the QCA \(\tilde{\alpha}\) from \ref{lem:alg_to_QCA}.
    Were we to compute the similar QCA \(\alpha\) using the actual \(\calA\) instead of our arbitrary extension, it would still agree with \(\tilde{\alpha}\) on a cylinder \(B \times \ZZ\).
    Thus, we reduce to the case of circuit classes of QCAs.
\end{proof}

\begin{remark}
    Combining \ref{thm:QCAWithAHoleIsTrivial}
    with the torus trick of~\cite{hastings2013classifying}
    shows that every QCA is circuit equivalent to a QCA
    which is periodic along every axis.
    With the torus trick for invertible subalgebras~(\ref{lem:TorusTrick}),
    we also see that every invertible subalgebra is circuit equivalent to
    another that is periodic along every axis.
\end{remark}

\begin{corollary}\label{thm:ShiftsTrivialForDdGeq2}
    Let \(\dd \geq 2\) and \(\alpha : \fMat(\ZZ^\dd,p) \to \fMat(\ZZ^\dd,p)\) be a shift QCA.
    Then there is a possibly nonuniform stabilization
    such that \(\alpha\otimes \id\) is equal to a finite depth circuit of local automorphisms.
\end{corollary}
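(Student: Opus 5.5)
The plan is to reduce Corollary~\ref{thm:ShiftsTrivialForDdGeq2} to Theorem~\ref{thm:QCAWithAHoleIsTrivial}. I would show that a shift QCA in $\dd \ge 2$ is circuit equivalent, after stabilization, to a shift that acts trivially on a large ball. The theorem then shows that this modified shift is a circuit after nonuniform stabilization. Since the group of circuits is normal (\ref{lem:CircuitGroupIsNormal}), this gives the claim for $\alpha$ itself.

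The first step puts $\alpha$ in a normal form. By \ref{lem:ShiftsAreCircuitEquivToPrimitive}, after a uniform stabilization $\alpha\otimes\id$ is circuit equivalent to a primitive shift. By \ref{lem:PrimitiveShiftToPermutation}, composing with an on-site circuit turns it into a permutation $\sigma$ of primitive factors, which moves each factor by at most $\ell$. The question is then purely combinatorial: we have a bounded-displacement bijection $\sigma$ on a vertex set $V=\{(s,j)\}$ lying over $\ZZ^\dd$, and we may add ancilla vertices freely, even with nonuniform multiplicity. I would follow the decomposition already used in \ref{lem:AncillaReduction}, splitting $\sigma$ into disjoint cycles (possibly infinite orbits).

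The second step modifies $\sigma$ by a circuit so that it fixes every vertex over a ball $B$ of radius $r = 2^{4\dd+5}\ell$. The natural approach is a finite-depth circuit of local permutations supported near $B$. Any single orbit of $\sigma$ through $B$ can be cut: if $(s,i)\mapsto(s',i')$ with both sites near $\partial B^{+\ell}$, compose with a transposition that closes off the segment inside $B$ into a finite cycle and reconnects the outside parts. The finite cycles inside $B^{+2\ell}$ are then undone by a finite permutation. That permutation is a local automorphism supported on a ball of radius $O(r)$, which counts as depth one since $r$ is a fixed constant. The obstruction is conservation. Cutting the orbits crossing $\partial B$ forces the number of factors of each primitive type flowing inward to balance the number flowing outward. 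This is exactly the $\dd$-dimensional analogue of the one-dimensional index, and it can fail in $\dd=1$.

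I expect this flux balancing to be the main obstacle, and it is where $\dd\ge2$ is used. First I would simply replace the trivial hole with a single localized ``defect'' of finite size. I would apply \ref{thm:QCAWithAHoleIsTrivial} not to a ball where $\sigma$ is the identity, but after choosing $\tilde\sigma$ to agree with $\sigma$ outside $B$ and to be the identity deep inside, using extra ancilla factors near $\partial B$ to absorb the imbalance. Since $\partial B$ is a finite set, its net flux of each primitive type is a finite integer, and finitely many ancillas placed on an annulus around $B$ can supply or absorb it. This finiteness is what nonuniform stabilization permits.

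This first fix is not yet enough: the ancillas only move the defect from $B$ onto the annulus. I would therefore repeat the construction on a sequence of disjoint annuli going to infinity, which requires growing ancilla dimension. The net effect is that $\sigma$ is deformed by a depth-bounded circuit into a permutation that is the identity on $B$. The point to verify is that the needed ancilla counts are finite at each site. That holds because in $\dd\ge2$ the flux through a closed surface of a bounded-displacement permutation on the full sphere is zero, as every vertex inside is both a source and a target.

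Finally, I would apply \ref{thm:QCAWithAHoleIsTrivial} to this modified shift and unroll the circuits that were introduced.
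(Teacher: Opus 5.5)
Your normal form (primitive shift, then a permutation $\sigma$ of primitive factors) and your reduction to \ref{thm:QCAWithAHoleIsTrivial} match the paper. The step that clears the hole, however, does not work as described. Suppose you close off an orbit segment through $B$ with a transposition and then undo the resulting inner cycle. The permutation you are left with sends the last vertex before the entry point directly to the first vertex after the exit point. These two vertices can sit on opposite sides of $B$, so the modified shift has displacements of order $2r$ near $B$, not $\ell$. \ref{thm:QCAWithAHoleIsTrivial} requires the hole radius to be at least $2^{4\dd+5}$ times the spread, and that is now impossible. Local spread reduction (\ref{lem:SpreadReduction}) cannot rescue this either: it erodes the hole by $5\cdot 2^{\dd-1}$ times the inner spread, which here exceeds $r$. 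Clearing $B$ while keeping the spread at $\ell$ is the entire content of this corollary, and your proposal never addresses it.

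Relatedly, the flux obstruction you describe does not exist. Any orbit that enters the finite set of vertices over $B$ must also leave it. So the net number of factors of each type crossing $\partial B$ is automatically zero in every dimension, $\dd=1$ included. The ancilla absorption near $\partial B$ and the iteration over annuli going to infinity therefore address a non-issue; \ref{thm:QCAWithAHoleIsTrivial} already handles everything outside $B$. More seriously, your argument never genuinely uses $\dd\ge 2$. Your closing justification holds verbatim for $\dd=1$, where the conclusion is false because of the index of \S\ref{sec:1dQCAIndex}.

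What $\dd\ge 2$ actually supplies is connectedness of the boundary sphere. A loop in $B$ bounds a disk, and a path through $B$ with endpoints near $\partial B$ is homologous, rel endpoints, to a path inside the shell $B^{+\ell}\setminus B$ with steps of length at most $\ell$. The paper uses this as follows:
\begin{enumerate}
\item For each component of the orbit graph in $B^{+\ell}$, take such a $2$-chain $\Sigma$.
\item Subdivide $\Sigma$ into triangles $[i,j,k]$ of diameter at most $\ell$.
\item Let each triangle act by the $3$-cycle $\SWAP_{ij}\SWAP_{ki}$ on factors of the matching primitive type, adding ancillas where needed.
\end{enumerate}
The finite product reroutes each orbit into the shell. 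The new shift is the identity on $B$ and still has spread $\ell$, so \ref{thm:QCAWithAHoleIsTrivial} applies directly.
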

\begin{proof}
    We find a circuit QCA \(\beta\) such that \(\beta \alpha\) acts as the identity in a sufficiently large ball \(B\) (radius \(r \geq 2^{4\dd+5} \ell\)), and then apply \ref{thm:QCAWithAHoleIsTrivial} to conclude that \(\beta \alpha\), and hence \(\alpha\), is a finite depth circuit.

    Up to a finite depth circuit, we can consider \(\alpha\) to be primitive by \ref{lem:ShiftsAreCircuitEquivToPrimitive}.
    By \ref{lem:PrimitiveShiftToPermutation} this primitive shift is simply a permutation of the single-site primitive factors of \(\fMat(\ZZ^\dd, p) = \bigotimes_j \calA_j\).
    Associate a directed graph \(G_\alpha\) to this permutation in the obvious way: the vertex set \(V = \{j\}\) indexes the collection of single-site primitive factor algebras \(\calA_{j}\), and an edge \((j,j')\) is present if \(\alpha(\calA_{j}) = \calA_{j'}\). 
    The graph \(G_\alpha\) is then a collection of disjoint cycles:
    a cycle is a directed one-dimensional graph without boundary,
    either an infinite line or a finite circle.
    
    Let \(B\) be a ball of radius \(r \geq 2^{4\dd+5} \ell\),
    and consider a connected component \(\gamma\) of \(G_\alpha \cap B^{+\ell}\). 
    Every loop in the \(\dd\)-dimensional ball is the boundary of a $2$-disk in the ball,
    and for \(\dd \geq 2\) every path from one point on the surface~$\partial B$
    to another is homologous to a path between the same end-points
    which is entirely along the surface~$\partial B$.
    Thus, we can find a finite simplicial 2-complex (a disk) \(\Sigma\)
    such that \(\gamma + \partial \Sigma \subseteq B^{+\ell} \setminus B\)
    is a (possibly empty) path along the surface with steps of length \(\leq \ell\).
    In the expression \(\gamma + \partial \Sigma\),
    we regard \(\gamma\) and \(\Sigma\) as \(\ZZ\)-chains
    in the cell complex for \(B^{+\ell}\)
    (with vertex points separated out into primitive factor algebras).

    Subdivide \(\Sigma\) until it has 2-cells of diameter at most \(\ell\).%
    \footnote{To maintain the lattice as \(\ZZ^\dd\), one can move each vertex of the sub-triangulation to the nearest integer point, so the diameter of a 2-cell is now at most \(\ell+1\).}
    To each triangle \(\sigma = [i,j,k]\),
    associate a shift \(u_\sigma = \SWAP_{ij}\SWAP_{ki}\)
    acting on algebras isomorphic to those shifted by \(\alpha\),
    which must be added as ancillas if not already present.
    Then the product \(\beta_\gamma = \prod_{\sigma \in \Sigma} u_\sigma\)
    is a shift around \(\partial \Sigma\),
    and \(\beta_\gamma \alpha\)
    moves the component of the shift associated to \(\gamma\)
    to the surface \(B^{+\ell} \setminus B\).
    The product \(\prod_{\sigma \in \Sigma} u_\sigma\)
    is a finite depth circuit as the triangulation is finite.

    Repeat this for every connected component of \(G_\alpha \cap B^{+\ell}\), of which there can only be finitely many, and thus obtain a finite circuit \(\beta = \prod_{\gamma} \beta_\gamma\) such that \(\beta \alpha\) acts as the identity in \(B\) and has spread \(\ell\). Then apply \ref{thm:QCAWithAHoleIsTrivial}.
\end{proof}

\begin{remark}
    A similar proof (replacing paths and 2-cells with higher dimensional cells) shows that any \(\dd\)-dimensional (fermionic) QCA given by a product of embedded QCAs of dimension \(k < \dd\) is a finite depth circuit after possibly nonuniform stabilization.
\end{remark}

\begin{corollary}[\(2\dd\) QCA are circuits]\label{thm:2DfQCAAreTrivialWithNonuniformStabilization}
    Let \(\alpha : \fMat(\ZZ^2,p) \to \fMat(\ZZ^2,p)\) be a QCA. Then there is a possibly nonuniform local assignment \(p'\) such that \(\alpha\otimes\id : \fMat(\ZZ^2,pp') \to \fMat(\ZZ^2,pp')\) is a finite depth circuit.
\end{corollary}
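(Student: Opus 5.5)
This will follow by chaining three earlier results: the uniform-stabilization structure theorem in two dimensions, the triviality of shifts in $\dd \ge 2$ with nonuniform ancillas, and the normality of the circuit subgroup.

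First, apply Theorem~\ref{thm:fQCAStructure1d2d} with $\dd=2$. With the uniform Majorana stabilization $m$, it writes
\[
\alpha\otimes\id_{\fMat(\ZZ^2,m)} = \delta\circ\sigma ,
\]
where $\delta$ is a finite depth circuit of local automorphisms and $\sigma$ is a primitive shift on $\fMat(\ZZ^2,pm)$. Second, apply Corollary~\ref{thm:ShiftsTrivialForDdGeq2} to $\sigma$, which is allowed since $\dd=2\ge 2$. This gives a possibly nonuniform local assignment $p''$ such that $\sigma\otimes\id_{\fMat(\ZZ^2,p'')}$ is a finite depth circuit $\delta_\sigma$.

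Now set $p' = m p''$. Then
\[
\alpha\otimes\id_{\fMat(\ZZ^2,p')} = (\delta\otimes\id_{p''})\circ\delta_\sigma .
\]
Here $\delta\otimes\id$ is still a finite depth circuit: each local automorphism is simply extended by the identity on the ancillas, which keeps the support diameters, the local finiteness, and the depth unchanged. A composition of two finite depth circuits is again a finite depth circuit, so this finishes the argument. Lemma~\ref{lem:CircuitGroupIsNormal} is not even needed, since no conjugation occurs.

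The substantive work sits entirely in the cited results, above all Theorem~\ref{thm:QCAWithAHoleIsTrivial}, on which Corollary~\ref{thm:ShiftsTrivialForDdGeq2} relies. The only point I expect to need checking is the bookkeeping of ancillas: the shift corollary may itself pass through Lemma~\ref{lem:ShiftsAreCircuitEquivToPrimitive} and add further ancillas. All of these ancilla algebras are tensored in, and every circuit built earlier is extended by the identity on them. Because the extended circuits keep their depth and locality, the final $p'$ is just the product of all the assignments introduced along the way.
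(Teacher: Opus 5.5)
Your proposal is correct and matches the paper's proof: Theorem~\ref{thm:fQCAStructure1d2d} (with uniform Majorana ancillas) writes $\alpha\otimes\id$ as a circuit composed with a primitive shift, and Corollary~\ref{thm:ShiftsTrivialForDdGeq2} turns that shift into a finite depth circuit after a possibly nonuniform stabilization. You also make explicit two steps the paper leaves implicit: extending each circuit by the identity on the later ancillas, and taking $p'$ to be the product of all the local assignments introduced along the way.
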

\begin{proof}
    By \ref{thm:fQCAStructure1d2d}, every fermionic QCA on \(\ZZ^2\) is finite depth circuit equivalent to a primitive shift after a uniformly bounded stabilization.
    By \ref{thm:ShiftsTrivialForDdGeq2}, every shift in \(\dd = 2\) becomes a finite depth circuit after a possibly nonuniform stabilization.
\end{proof}

\subsection{Spread reduction by pleating and ironing}\label{subsec:PleatingIroning}

The following two subsections present the proof of \ref{thm:QCAWithAHoleIsTrivial}.
The first step of the proof is to establish a method of \emph{spread reduction}---that is,
to construct a finite depth circuit \(\beta\)
such that \(\beta \circ \alpha\) has much smaller spread than \(\alpha\).

In the spread reduction procedure, it is convenient to introduce more general lattices (see \ref{def:Lattice}) over \(\RR^\dd\), so that we can achieve spreads that are less than \(1\).
In the following, we always take \(\Lambda \subset \RR^\dd\) to be a lattice in \(\RR^\dd\).
The use of an arbitrary lattice is harmless: to recover an integer lattice, we can simply round the points of \(\Lambda\) to integers to find a QCA defined on \(\ZZ^\dd\) with slightly larger spread.

The construction proceeds iteratively over the coordinate axes of \(\RR^\dd\).
As such, it is useful to introduce notation for spreads along different axes.

\begin{definition}[Anisotropic spread]
    For \(\mathbf{L} \in [0,\infty)^\dd\) and \(S \subseteq \RR^\dd\), define the \(\mathbf{L}\)-growth of \(S\) by
    \begin{equation}
        S^{+\mathbf{L}} =
        \{t = (t_1,\ldots,t_\dd) \in \RR^\dd \mid \text{there is }s\in S
        \text{ such that }|t_k-s_k| \leq L_k \text{ for all }k\},
    \end{equation}
    where \(t_k,s_k,L_k\) are the \(k\)th components of \(t,s,\mathbf{L}\).
    If a QCA \(\alpha\) on \(\fMat(\ZZ^\dd)\) obeys \(\Supp(\alpha(x)) \subseteq \Supp(x)^{+\mathbf{L}}\) for all \(x \in \fMat(\ZZ^\dd)\), we say that \(\alpha\) has anisotropic spread \(\mathbf{L}\).
\end{definition}

The simplest spread reduction construction operates on one axis of the whole lattice, as follows.

\begin{lemma}\label{lem:PleatingSpreadReductionOneDimension}
    Given any QCA \(\alpha\) on \(\fMat(\Lambda,p)\) with anisotropic spread \(\mathbf{L}\),
    an axis \(k\in \{1, ...,\dd\}\), and an odd integer scaling factor \(\lambda \in 2\NN+1\),
    there is a finite depth circuit \(\beta\) on \(\fMat(\Lambda',pp')\) (where \(\Lambda' \supseteq \Lambda\))
    such that \(\beta \circ (\alpha \otimes \id)\) has anisotropic spread 
    \begin{equation}
        \mathbf{L}' = \left(2L_1, \ldots, 2 L_{k-1}, \frac{4L_k}{\lambda} , 2 L_{k+1},\ldots, 2L_\dd\right).
    \end{equation}
    Furthermore, \(\beta\) consists of
    a layer of single-site local automorphisms
    followed by a layer of commuting local automorphisms,
    the support of which have a width at most \(10L_k\) in the \(k\)th axis
    and at most \(2L_{j} \) in all other axes $j \neq k$.
    In particular, \(\beta\) has anisotropic spread
    \((2L_1, ..., 10L_k, ..., 2L_\dd)\).
\end{lemma}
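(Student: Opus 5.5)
The plan is to realize $\beta$ as a \emph{rescaling} of the $k$th axis, done blockwise by swaps with ancillas (pleating). The ancillas placed on a refined lattice $\Lambda'$ make room for the compressed degrees of freedom.

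\emph{Step 1 (block partition and ancillas).} Cut the $k$th axis into consecutive blocks $I_m=[2mL_k,2(m+1)L_k)$ of length $2L_k$. Let $c_m$ be the centre of $I_m$, and let $\sigma_m(t)=c_m+(t-c_m)/\lambda$ be the contraction by the odd factor $\lambda$ towards $c_m$. Take $\Lambda'=\Lambda\cup\sigma(\Lambda)$, where $\sigma$ acts blockwise on the $k$th coordinate and fixes the other coordinates. At every new point $\sigma(s)$, and also at every original $s$, place an ancilla isomorphic to $\fMat(\{s\},p(s))$. Since ancilla dimensions are not bounded here, $p'$ is just the product of all these copies. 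Oddness of $\lambda$ is used only so that the contracted copies of neighbouring blocks interleave without collision and the centre $c_m$ stays fixed.

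\emph{Step 2 (the circuit $\beta$).} The first layer of $\beta$ consists of single-site automorphisms. They put the data into a canonical position, so that the image of each original on-site factor is paired with an ancilla copy. The second layer is a product over $m$ and over transverse boxes of side $2L_j$, $j\neq k$. Each factor is a commuting local automorphism, made of $\SWAP$s from \ref{thm:SWAPisEvenOrOdd}, supported on $I_m^{+4L_k}$ and on the transverse box. It moves every algebra sitting at a point $t$ of $\alpha(\fMat(I_m))$'s support to the ancilla at the compressed point $\sigma_m(t)$, and moves the displaced ancillas back out. The widths are then at most $10L_k$ along axis $k$ and at most $2L_j$ along the other axes, as claimed. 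The factors act on disjoint sets of sites, so they commute and form a depth-$1$ circuit.

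\emph{Step 3 (spread estimate).} Take $x$ supported at a site $s\in I_m$. Then $\alpha(x)$ lies within $L_k$ of $s$ along axis $k$, hence inside $I_{m-1}\cup I_m\cup I_{m+1}$. After $\beta$ its support lies in the contracted images of these blocks, within distance $(L_k+2L_k)/\lambda+L_k/\lambda\le 4L_k/\lambda$ of the point representing $s$ in the new coordinates. Along the transverse axes, $\beta$ adds at most one extra $L_j$, which gives $2L_j$. Composing with the isomorphism $\Lambda\ni s\mapsto\sigma(s)\in\Lambda'$, which only relabels positions, yields the anisotropic spread $\mathbf L'$.

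The main obstacle is Step 2 near the boundaries between blocks. There the image $\alpha(\fMat(I_m))$ straddles two blocks, and the two neighbouring contractions $\sigma_m,\sigma_{m+1}$ must not send distinct factors to the same ancilla. I would handle this by applying $\sigma_m$ to the whole support of $\alpha(\fMat(I_m))$ rather than to geometric positions, using that this image is a tensor factor by \ref{lem:full_mat_tensor_factor}. I would then check, through the interleaving given by the odd factor $\lambda$, that the target sets for different $m$ are disjoint; if that check fails, I would fall back on two staggered layers of blocks, which doubles the constant.
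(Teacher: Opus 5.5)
There is a genuine gap. Rescaling a single copy of the axis cannot divide the spread by $\lambda$, and the failure is not the collision issue you flag.

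\emph{Literal reading.} $\beta\circ(\alpha\otimes\id)$ relocates only outputs. Take $\alpha$ to be the shift by $L_k$ along axis $k$, and $x$ at $s=2mL_k$. Then $\alpha(x)$ sits at $c_m=\sigma_m(c_m)$, so the composite still moves $x$ by $L_k$, which exceeds $4L_k/\lambda$ once $\lambda\ge 5$. The ancillas you move ``back out'' are also displaced by order $L_k$, and $\alpha\otimes\id$ does nothing to them.

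\emph{Charitable reading.} Your closing relabeling $s\mapsto\sigma(s)$ of the inputs amounts to conjugating by a bounded-displacement swap circuit. That is harmless up to a circuit by \ref{lem:CircuitGroupIsNormal}, but it does not help. $\alpha$ read in the contracted coordinates still has spread of order $L_k$, because contracting each block toward its own centre tears the axis at the block boundaries. Two sites adjacent across the boundary between $I_m$ and $I_{m+1}$ end up about $2L_k(1-1/\lambda)$ apart, and your estimate $(L_k+2L_k)/\lambda+L_k/\lambda$ ignores this gap.

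Neither oddness of $\lambda$ nor regrouping into the tensor factors $\alpha(\fMat(I_m))$ can repair this. No bounded-displacement relabeling of one straight axis can shrink all distances at scale $L_k$ by a factor $\lambda$. Chaining the contraction over $n$ consecutive steps of length $L_k$ would force a displacement of order $nL_k(1-4/\lambda)$ for every $n$.

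What is missing is a way to make the chain $\lambda$ times longer before compressing it; this is the pleating step.
\begin{itemize}
\item The paper replaces $\alpha\otimes\id$ by $\gamma=\alpha\otimes\alpha^{-1}\otimes\alpha\otimes\cdots\otimes\alpha$ with $\lambda$ factors. Oddness of $\lambda$ is what makes this product begin and end with $\alpha$; it has nothing to do with interleaving contracted copies. By \ref{thm:QCAQCAInverseIsCircuit}, $\gamma$ differs from $\alpha\otimes\id$ by a circuit.
\item The paper then deletes the gates of the $\alpha^{-1}\otimes\alpha$ circuits that cross the cuts $s_k\in 10L_k\ZZ$, and those of the $\alpha\otimes\alpha^{-1}$ circuits that cross $s_k\in 10L_k(\ZZ+\tfrac12)$.
\item The resulting $\gamma'$ acts along a single bi-infinite meandering chain that passes through each block $\lambda$ times. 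Its arc length is therefore about $\lambda$ times the distance it advances along axis $k$.
\item Compressing this chain uniformly in arc length into one layer (ironing) has displacement bounded by the block width. Yet it divides the along-chain spread, which is at most $4L_k$, by $\lambda$. That extra arc length is exactly what a single rescaled layer lacks.
\item Ironing is a conjugation by a swap circuit, which \ref{cor:QCAmodCircuitsIsAbelian} converts into composition with a circuit. The gate widths $10L_k$ and $2L_j$ in the statement come from this construction.
\end{itemize}
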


Note that the spread of \(\beta\) is bounded independent of \(\lambda\).

\begin{proof}
    Without loss of generality, we take \(k = 1\).
    The claim is trivial unless \(\lambda \geq 5\) and \(L_1 > 0\), so we restrict to this case.

    The spread-reduced QCA \(\beta \circ (\alpha \otimes \id)\) is constructed by
    \emph{pleating} and \emph{ironing}~\cite{FHH2019}.
    See~\autoref{fig:PleatCompress}.
    Consider the QCA
    \begin{equation}
        \gamma = \underbrace{\alpha \otimes \alpha^{-1} \otimes \alpha \otimes \alpha^{-1} \otimes  \cdots \otimes \alpha}_{\lambda \text{ factors}},
    \end{equation}
    consisting of an alternating product of \((\lambda+1)/2\) copies of \(\alpha\) and \((\lambda-1)/2\) copies of \(\alpha^{-1}\).
    We can regard any two adjacent factors as being realized as a circuit as in \ref{thm:QCAQCAInverseIsCircuit}.
    Inspecting the proof shows that this circuit consists of two types of local automorphisms:
    \(\{\sigma_s\}_{s\in \ZZ^\dd}\),
    where \(\sigma_s\) is a swap between factor algebras supported on site \(s\);
    and \(\{\mu_s = (\alpha \otimes \id) \circ \sigma_s \circ (\alpha^{-1} \otimes \id)\}_{s\in \ZZ^\dd}\),
    all of which commute and have support contained within \(\{s\}^{+\mathbf{L}}\).
    Upon removing local automorphisms from this circuit,
    the anisotropic spread of the new circuit cannot increase above \(2\mathbf{L}\),
    as this bounds the support of any local automorphism in the circuit.

    \begin{figure}
        \centering
        \begingroup
        \resizebox{0.9\textwidth}{!}{\input{paper1figures/pleating}}
        \endgroup
        \caption{The \emph{pleating and ironing} construction used in \ref{lem:PleatingSpreadReductionOneDimension}.
        Any QCA \(\alpha\) is equivalent to \(\alpha \otimes (\alpha^{-1}\otimes \alpha)^{\otimes n}\) (top panel with \(n=2\) layers).
        By removing local automorphisms in the circuit for \(\alpha \otimes \alpha^{-1}\) or \(\alpha^{-1} \otimes \alpha\), adjacent pairs of layers can be blended to the identity in alternating ways, creating a \emph{pleating} of \(\alpha\)~\cite{FHH2019}.
        The red segments indicate positions where the action of the pleated QCA differs from a tensor product of \(\alpha\) or \(\alpha^{-1}\).
        This pleating can then be flattened and compressed (or \emph{ironed out}), simply by rearranging local algebras in space.
        The result is a QCA \(\beta\circ (\alpha \otimes \id)\),
        where \(\beta\) is a circuit,
        such that \(\beta\circ (\alpha \otimes \id)\)
        has a reduced spread along the direction of the pleat.}
        \label{fig:PleatCompress}
    \end{figure}

    We construct a new QCA \(\gamma'\)
    by removing local automorphisms from \(\gamma\)
    in a periodic pattern along the first axis of \(\ZZ^\dd\).
    Near the cuts \(\{s \in \RR^\dd\mid s_1 \in 10 L_1 \ZZ \}\),
    we interpret \(\gamma\) as
    \begin{equation}\label{eqn:PleatPair1}
        \gamma = \alpha \otimes (\alpha^{-1}\otimes \alpha)^{\otimes (\lambda-1)/2},
    \end{equation}
    and remove any of the local automorphisms in the \(\alpha^{-1} \otimes \alpha\) factors that cross the cut.
    Then, along the cuts \(\{s \in \RR^\dd\mid s_1 \in 10 L_1 (\ZZ + \tfrac{1}{2}) \}\), we interpret \(\gamma\) as
    \begin{equation}\label{eqn:PleatPair2}
        \gamma = (\alpha\otimes \alpha^{-1})^{\otimes (\lambda-1)/2} \otimes \alpha,
    \end{equation}
    and similarly remove any local automorphism in \(\alpha \otimes \alpha^{-1}\) that crosses these cuts.
    The resulting \(\gamma'\) is visualized in \autoref{fig:PleatCompress}, where it appears as a ``pleated'' version of \(\alpha\).
    By the previous paragraph, \(\gamma'\) has anisotropic spread at most \(2 \mathbf{L}\), and is related to \(\gamma\) by a depth-1 circuit of \(\mu_s\), which also has anisotropic spread at most \(2 \mathbf{L}\).

    The pleated structure of \(\gamma'\) means in particular that local operators \(x\) spread under \(\gamma'\) in the meandering connectivity illustrated in \autoref{fig:PleatCompress}.
    Further than \(2L_1\) away from the cuts (the cuts are spaced by \(5 L_1\), so there is a nonempty segment at least \(2L_1\) from any cut), \(\gamma'\) agrees with \(\gamma\), which does not spread operators between layers.
    Within \(2L_1\) of the cuts, \(\gamma'\) only has local automorphisms
    which are supported on the pairs of layers specified
    in~\eqref{eqn:PleatPair1} and~\eqref{eqn:PleatPair2},
    so operators only spread between these disjoint pairs, and only the unpaired layer spreads operators across the cut.
    Thus, we can rearrange the local algebras introduced in the construction in space, unwinding the meander of \autoref{fig:PleatCompress} to form an \emph{ironed flat} lattice \(\Lambda'\) with a spacing along the first axis that is \(1/\lambda\) times as small as \(\Lambda\).
    Specifically, we compress all the additional ancillae within the segment \(s_1 \in (10L_1 j, 10L_1 (j+1))\) so that they are uniformly spaced in a single layer within this segment.
    With this changed geometry, \(\gamma'\) now has spread at most \(4L_1/\lambda\) along the first axis.
    Indeed, an operator supported on one side of a red segment in \autoref{fig:PleatCompress} can in general be mapped anywhere within the segment.
    Each such segment is a double-layer of width at most \(2 L_1\).
    Unfolding this produces a segment of width at most \(4L_1\), so the spread along the first axis of \(\gamma'\) in the unraveled geometry is at most \(4L_1\).
    Compressing the lattice reduces this to \(4L_1/\lambda\).

    We can express the ironed QCA as a conjugation of \(\gamma'\) by a circuit of disjoint swaps,
    \begin{equation}\label{eqn:PleatedIronedQCAConjugation}
        \mathsf{S}_{\Lambda \Lambda'} \gamma' \mathsf{S}_{\Lambda \Lambda'} = \left( \prod_{s' \in \Lambda'} \SWAP_{s s'} \right) \circ \gamma' \circ \left( \prod_{s' \in \Lambda'} \SWAP_{s s'} \right),
    \end{equation}
    where \(s\) is the site in the pleated structure corresponding to the ironed site \(s'\) (we split sites belonging to different pleats).
    This is related to \(\gamma\) by a circuit multiplying on either the right or left by~\ref{cor:QCAmodCircuitsIsAbelian},
    \begin{equation}
        (\mathsf{S}_{\Lambda \Lambda'} \gamma' \mathsf{S}_{\Lambda \Lambda'} \gamma^{\prime-1}) \gamma' = \mathsf{S}_{\Lambda \Lambda'} \gamma' \mathsf{S}_{\Lambda \Lambda'} = \gamma' (\gamma^{\prime-1} \mathsf{S}_{\Lambda \Lambda'} \gamma' \mathsf{S}_{\Lambda \Lambda'} ).
    \end{equation}
    In either case, as \(\gamma\) is itself related to \(\alpha\otimes \id\) by a circuit, we get a circuit \(\beta\) relating \(\alpha\otimes \id\) to \(\mathsf{S}_{\Lambda \Lambda'} \gamma' \mathsf{S}_{\Lambda \Lambda'}\).
    Indeed, the circuit still consists of \(\sigma_s\) and \(\mu_s\) from \ref{thm:QCAQCAInverseIsCircuit}, but now with swaps acting across different sites.

    Finally, we analyze the circuit structure of \(\beta\).
    First, \(\alpha \otimes\id\) is related to \(\gamma\) by a layer of single-site automorphisms
    followed by a single layer of automorphisms of support width \(2\mathbf{L}\)
    (the diagonal of a rectangle that fully contains the support of the local automorphisms). 
    In turn, \(\gamma'\) is related to \(\gamma\) by another layer of width-\(2\mathbf{L}\)
    commuting local automorphisms,
    which also commute with the local automorphisms of the previous layer.
    The rearrangement of ancilla positions from the layered geometry to the compressed geometry
    does not affect the action of the circuit as an algebra automorphisms,
    but it does change the width of the local automorphisms' support through the mapping of sites.
    It does not alter the support of the single-site automorphisms
    or the width of the multi-site automorphisms transverse to the first axis.
    To get the bound on the width in the first direction,
    we consider two cases.
    \begin{enumerate}
        \item Automorphisms which do not cross the cuts \(\{s \in \RR^\dd\mid s_1 \in 10 L_1 \ZZ\}\)
        have a support contained within the compression region
        \(s_1 \in (10 L_1 j, 10 L_1 (j+1))\),
        and so their support is bounded by \(10 L_1\), the width of this region.
        \item For automorphisms crossing one of the cuts \(\{s \in \RR^\dd\mid s_1 \in 10 L_1 \ZZ \}\),
        we observe that while the support is divided between two compression regions,
        it is in the latter half of the meander to the left
        and the early half of the meander to the right (\autoref{fig:PleatCompress}),
        which will both be mapped by the compression to within a distance \(5L_1\) of the cut.
        Thus, these automorphisms also have a width along the first axis bounded by \(10L_1\).
    \end{enumerate}
    This completes the description of the circuit \(\beta\).
\end{proof}

\begin{lemma}\label{rem:IdentityPreservation}
    If \(\alpha\) acts as the identity in a ball \(B\),
    then \(\beta\) from \ref{lem:PleatingSpreadReductionOneDimension}
    can be chosen so that \(\beta \circ (\alpha \otimes \id)\)
    also acts as the identity in \(B^{-(0, ..., 5L_k, ..., 0)}\),
    and \(\Lambda'\) involves \emph{no} new sites in this region.
\end{lemma}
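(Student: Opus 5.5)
The plan is to rerun the pleating-and-ironing construction of \ref{lem:PleatingSpreadReductionOneDimension}, restricted so that it never touches the region where $\alpha$ is trivial. Take $k=1$. Since $\alpha$ acts as the identity on $\fMat(B)$, so does $\alpha^{-1}$. In the circuit realization of $\alpha\otimes\alpha^{-1}$ from \ref{thm:QCAQCAInverseIsCircuit}, the gates are $\sigma_s$ and $\mu_s=(\alpha\otimes\id)\sigma_s(\alpha^{-1}\otimes\id)$. For $s$ with $\{s\}^{+\mathbf L}\subseteq B$ we have $\mu_s=\sigma_s$, so these gates cancel pairwise. This means nothing needs to happen deep inside $B$.

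\emph{Step 1: introduce ancilla layers only outside $B$.} I would modify the pleating: put the $\lambda-1$ ancilla layers (the copies carrying $\alpha^{-1}\otimes\alpha$) only at sites outside $B^{-(5L_1,0,\ldots,0)}$ (more precisely, outside a region slightly smaller than $B$). Formally, $\gamma$ is defined on $\fMat(\Lambda)\otimes\fMat(\Lambda'\setminus\Lambda)$, with the extra factors living only on the complement of the interior. On the extra layers I use $\alpha^{-1}\otimes\alpha$ wherever they exist. Where the layers terminate near $\partial B$ on the interior side, the pair $\alpha^{-1}\otimes\alpha$ acts as the identity there anyway. So terminating these layers amounts to dropping gates $\mu_s$ and $\sigma_s$ that were identity or that cancel in pairs.

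\emph{Step 2: choose the cuts so the pleats avoid the interior.} I would choose the cut positions $10L_1\ZZ$ and $10L_1(\ZZ+\tfrac12)$ freely: any cut intersecting $B^{-(5L_1,0,\ldots,0)}$ is simply not performed there. Inside that region, $\gamma'=\alpha\otimes\id$, which equals the identity. Consequently $\beta\circ(\alpha\otimes\id)$ restricted to the original sites of $B^{-(5L_1,\ldots)}$ is the identity. The ironing swaps $\mathsf S_{\Lambda\Lambda'}$ are taken to be trivial on sites where no ancillas exist, so no new sites appear in that region.

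\emph{Step 3: check that the rest of the construction is unaffected.} In the segments straddling $\partial B$, the ancilla layers exist on one side of a segment but not on the other. I must check that ironing there still yields spread at most $4L_1/\lambda$ away from $B$, and bounded spread (at most $2\mathbf L$, by the same gate-removal argument) in the transitional band of width $5L_1$. The circuit structure claimed in \ref{lem:PleatingSpreadReductionOneDimension} is preserved, because we only deleted gates or placed identity gates.

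\emph{Main obstacle.} The delicate point is the transition band just inside $B$, where the meander terminates and the compression changes density. The statement only asserts the spread bound $\mathbf L'$ globally. If the termination of the layers inflates the spread in that band, I would instead keep the ancilla layers inside the band $B\setminus B^{-5L_1}$ and compress them uniformly there. This is permitted, since the lemma only forbids new sites in $B^{-(5L_1,\ldots)}$. In the band, $\alpha$ is already the identity, so the stacked layers are the identity as well, and compressing them changes nothing. This keeps the spread near $\partial B$ at most $4L_1/\lambda$.
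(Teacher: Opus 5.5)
Your route differs from the paper's in that you truncate up front rather than cancel afterwards. The paper leaves the construction of Lemma~\ref{lem:PleatingSpreadReductionOneDimension} unchanged and argues a posteriori from two facts. First, in \eqref{eqn:PleatedIronedQCAConjugation} the ironing swaps cancel on sites over which $\gamma'$ is trivial. Second, ironing moves every site by at most $5L_k$ along axis $k$: each block's unfolded meander is compressed half by half, so no site leaves its half-block of width $5L_k$. Hence every site of $\Lambda'$ in $B^{-(0,\ldots,5L_k,\ldots,0)}$ comes from a pleated site over $B$, the QCA is trivial there, and the new ancillas there can be deleted. You instead never create the extra layers in that region. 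This is more explicit and can be made to work, but as written two points are missing.

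\emph{The cancellation criterion.} Your condition $\{s\}^{+\mathbf L}\subseteq B$ does not cover the set you truncate. The set $C=B^{-(5L_1,0,\ldots,0)}$ is shrunk only along axis~$1$, so it reaches the transverse faces of $B$. There, gates $\mu_t$ centered just outside $B$ also have nominal support reaching into $C$ on the deleted layers. What you need is that $\alpha$ fixes $\fMat(B)$ pointwise and hence, by Lemma~\ref{lem:supp_by_commutation}, maps $\fMat(\Lambda\setminus B)$ onto itself. It follows that:
\begin{itemize}
\item $\mu_s=\sigma_s$ for \emph{every} $s\in B$;
\item each $\mu_t$ with $t\notin B$ acts trivially on all layers over $B$;
\item $\alpha$ restricts to an automorphism of $\fMat(\Lambda\setminus C)$.
\end{itemize}
The last fact is what makes your truncated layers carrying $\alpha^{-1}\otimes\alpha$ well defined at all.

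\emph{The ironing near the band.} Your ``main obstacle'' is not where the difficulty lies, and you leave it open. Keep each $\mu_s=\sigma_s$ together with its $\sigma_s^{-1}$ over $B$, so there are no folds over $B$. Then $\gamma'$ is the identity on every layer over $B$, and the positions of those sites cannot affect the spread. The only thing to check is that no site outside $B$ is ironed into $C$, and that is exactly the half-block displacement bound. Ironing acts along axis $k$ within half-blocks of width $5L_k$. A half-block meeting $C$ lies over $B$, because $C$ sits $5L_k$ inside $B$ along that axis. So you may:
\begin{itemize}
\item iron every half-block that misses $C$ by the standard map;
\item place the layered sites of the half-blocks meeting $C$ anywhere outside $C$ within those half-blocks;
\item leave $C$ untouched.
\end{itemize}
This is why the margin is $5L_k$. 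Without this argument, your claim that the fallback keeps the spread at $4L_1/\lambda$ near $\partial B$ is unsupported.
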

\begin{proof}
    Indeed, this becomes clear from~\eqref{eqn:PleatedIronedQCAConjugation}%
    ---any swap automorphisms cancel out when acting on two sites in \(\Lambda'\)
    corresponding to sites in the pleated \(\Lambda\)
    where \(\alpha\) (and hence \(\gamma'\)) acts as the identity.
    In the ironing procedure, any site where \(\alpha\) did not act as the identity
    ends up at most \(5L_k\) from where it started,
    so that \(B^{-(0, ..., 5L_k, ..., 0)}\) contains no such sites.
    Thus, any new ancillas from \(\Lambda'\) that lie within this shrunken ball can be removed.
\end{proof}

\begin{corollary}[Global spread reduction]\label{lem:PleatingSpreadReduction}
    Given any spread-\(\ell\) QCA \(\alpha\) on \(\fMat(\Lambda,p)\) and \(\epsilon > 0\),
    there is a finite depth circuit \(\beta\) (depth independent of \(\epsilon\)) of spread \(11\cdot 2^{\dd-1}\ell\)
    on \(\fMat(\Lambda',pp')\) such that
    \(\beta\circ (\alpha\otimes\id)\) has spread \(\epsilon \ell\).

    Further, if \(\alpha\) acts trivially in a ball \(B\), then \(\beta\) can be chosen to involve no new ancillas and act trivially in the ball \(B^{-5\cdot 2^{\dd-1} \ell}\).
\end{corollary}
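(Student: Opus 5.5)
The plan is to apply \ref{lem:PleatingSpreadReductionOneDimension} once along each axis $k = 1, \ldots, \dd$ in turn, with a large odd scaling factor $\lambda_k$ at each step. We compose the resulting circuits and track the anisotropic spread.

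Start from $\mathbf{L}^{(0)} = (\ell, \ldots, \ell)$. After treating axis $1$ with factor $\lambda_1$, the new anisotropic spread is $(4\ell/\lambda_1, 2\ell, \ldots, 2\ell)$. Treating axis $2$ then doubles every other component and divides the second by $\lambda_2/4$. By induction, after all $\dd$ steps, component $k$ has been doubled $\dd - k$ times after its own reduction. Hence
\[
L^{(\dd)}_k = 2^{\dd-k} \cdot \frac{4 \cdot 2^{k-1}\ell}{\lambda_k} = \frac{2^{\dd+1}\ell}{\lambda_k}.
\]
Choosing every $\lambda_k \ge 2^{\dd+1}/\epsilon$ (odd) makes each component at most $\epsilon\ell$. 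Since the $\ell_\infty$ metric is used, the anisotropic spread $\epsilon\ell(1,\ldots,1)$ is exactly isotropic spread $\epsilon\ell$. The lattices $\Lambda \subseteq \Lambda^{(1)} \subseteq \cdots$ and ancilla assignments accumulate into $\Lambda'$ and $p'$.

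Next, the circuit. Let $\beta_k$ be the circuit from step $k$, extended by the identity on ancillas added later; it is still a circuit. Then $\beta = \beta_\dd \circ \cdots \circ \beta_1$, with the identity tensored on later ancillas at each stage. Each $\beta_k$ has depth $2$, so the total depth is $2\dd$, independent of $\epsilon$.

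The spread bound comes from \ref{lem:PleatingSpreadReductionOneDimension}: $\beta_k$ has anisotropic spread $2\mathbf{L}^{(k-1)}$ except $10L^{(k-1)}_k$ on axis $k$. Before step $k$, the components $j \ge k$ equal $2^{k-1}\ell$, and the components $j < k$ are already tiny, at most $\ell$ for $\lambda \ge 5$. So the worst component of $\beta_k$ is $10 \cdot 2^{k-1}\ell$, and
\[
\sum_k 10 \cdot 2^{k-1}\ell \le 10 \cdot 2^{\dd}\ell .
\]
This is not the stated $11\cdot 2^{\dd-1}\ell$, and this bookkeeping is the step I expect to be delicate. My plan is to bound the spread of $\beta$ not as a sum of the spreads of its layers, but directly from the support widths of its individual local automorphisms, which is $\max_k$ rather than a sum. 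I would also use that along axis $k$, step $k$ contributes $10 \cdot 2^{k-1}\ell$ while the later doublings contribute only small amounts along already-compressed axes. Concretely, I would estimate each axis separately: the widths along axis $j$ from steps $k \neq j$ are $2L^{(k-1)}_j$, which sum to about $2^{\dd-1}\ell$ geometrically. That should give roughly $10 \cdot 2^{\dd-1}\ell + 2^{\dd-1}\ell = 11 \cdot 2^{\dd-1}\ell$.

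Finally, the hole statement. Apply \ref{rem:IdentityPreservation} at each step. Step $k$ shrinks the identity region by $5L^{(k-1)}_k = 5 \cdot 2^{k-1}\ell$ along axis $k$ only, and adds no new sites there. Since each axis is shrunk once, the ball shrinks by at most $\max_k 5\cdot 2^{k-1}\ell = 5\cdot 2^{\dd-1}\ell$ in $\ell_\infty$. On this region the swaps cancel, so $\beta$ acts trivially there.

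The one point to check is that the later steps act in the compressed geometry. I expect this to be harmless because no sites are added or moved inside the preserved region.
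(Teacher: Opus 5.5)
Your structure matches the paper's. You apply \ref{lem:PleatingSpreadReductionOneDimension} along axes $1,\dots,\dd$ in turn and get final spread $2^{\dd+1}\ell/\lambda$, which agrees with the paper's $2^{\dd-1}\cdot 4\ell/\lambda$. You note that the depth does not depend on $\lambda$. You shrink the hole by $5\cdot 2^{k-1}\ell$ along axis $k$ using \ref{rem:IdentityPreservation}. Those parts are fine.

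The step that does not go through is the bound on the spread of $\beta$, where you mix two incompatible bookkeepings. Suppose the spread of $\beta$ means the spread of the composite automorphism $\beta_\dd\circ\cdots\circ\beta_1$. That is the paper's meaning, since its proof sums the anisotropic spreads of the $\beta_k$. Then the widest gate does not bound it, because each successive layer can push a support further. So the ``$\max_k$ rather than a sum'' idea fails, and you must add anisotropic spreads axis by axis. If you instead meant the largest gate diameter, the max alone gives $10\cdot 2^{\dd-1}\ell$, and no extra $2^{\dd-1}\ell$ appears.

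Now do the per-axis sum correctly. Along axis $\dd$, step $k<\dd$ contributes $2L^{(k-1)}_\dd=2^k\ell$. These add up to $(2^\dd-2)\ell$, not ``about $2^{\dd-1}\ell$''. Adding $10\cdot 2^{\dd-1}\ell$ from step $\dd$, the composite spread is bounded by $(6\cdot 2^{\dd}-2)\ell$. This exceeds $11\cdot 2^{\dd-1}\ell$ once $\dd\ge 3$; for $\dd=3$ it is $46\ell$ versus $44\ell$.

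The paper's own displayed sum $(2+4+\cdots+2^{\dd-2}+10\cdot 2^{\dd-1})\ell$ drops exactly the term $2^{\dd-1}\ell$ from step $\dd-1$. That omission is where the constant $11$ comes from. So, with the one-axis lemma used as a black box, the stated constant is justified only for $\dd\le 2$.

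Nothing downstream uses this particular constant; the local version, \ref{lem:SpreadReduction}, tracks its own bounds, and only independence from $\epsilon$ matters. The honest fix is therefore to prove spread $(6\cdot 2^\dd-2)\ell$, rather than hunt for an estimate reaching $11\cdot 2^{\dd-1}\ell$. Your crude bound $10(2^\dd-1)\ell$ is also valid, just weaker.
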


\begin{proof}
    The corollary follows by repeated application of \ref{lem:PleatingSpreadReductionOneDimension} along each axis of \(\RR^\dd\).
    Applying \ref{lem:PleatingSpreadReductionOneDimension} starting from the first axis (\(k=1\)) to the last (\(k=\dd\)), 
    the spread becomes \(2^{\dd-1} \cdot 4 \ell/\lambda\). 
    Taking \(\lambda\) such that \(2^{\dd-1} \cdot 4 /\lambda \leq \epsilon\) gives the desired spread.

    The spread of the circuit \(\beta\) is obtained by summing the anisotropic spreads of each of the circuits used in \ref{lem:PleatingSpreadReductionOneDimension}.
    That is,
    \begin{equation}\begin{split}
        \mathbf{L}_\beta 
        &= (10 \ell, 2 \ell, \ldots, 2 \ell) \\
        &\quad +
        (2 \cdot 4 \ell/\lambda, 10 \cdot 2 \ell, \ldots, 4\ell)\\
        &\quad + \cdots \\
        &\quad +
        (2^{\dd-1} \cdot  4 \ell/\lambda, 2^{\dd-1} \cdot  4 \ell/\lambda, \ldots,  10 \cdot 2^{\dd-1} \ell).
    \end{split}\end{equation}
    The \(\dd\)th component never benefits from the \(4/\lambda\) reduction in spread, and bounds all the other components,
    \begin{equation}
        \mathbf L_{\beta, \dd} = (2 + 4 + \cdots + 2^{\dd-2} + 10 \cdot 2^{\dd-1}) \ell \leq 11 \cdot 2^{\dd-1} \ell, 
    \end{equation}
    which gives the claimed bound on the spread of \(\beta\).

    That \(\beta\) can be chosen to involve no new ancillas in \(B^{-5\cdot 2^{\dd-1} \ell}\) similarly follows by applying \ref{rem:IdentityPreservation} repeatedly, with the largest spread inwards coming from the \(\dd\)th axis.
\end{proof}

To prove \ref{thm:QCAWithAHoleIsTrivial}, the pleating and ironing procedure must be adapted to apply locally.
That is, the spread reduction can be done just near the ball \(B\), without blowing up the spread outside of \(B\) very much.
If we were to apply spread reduction globally, then the depth of the total pleating-ironing-hemming circuit will grow as the hole in \(\alpha\) is enlarged.
It is also not sufficient to simply truncate the circuit \(\beta\), as the QCA we obtain from this truncation could have a spread as large as \((11\cdot 2^{\dd-1}+1)\ell\), even larger than the QCA we started with.

In this setting, we need to control an \emph{inner} and \emph{outer spread} of the modified QCA. 
The inner spread is a (larger) spread achieved only inside some ball, while the spread of operators supported outside the ball is the (smaller) outer spread.
As we pleat, iron, and hem \(\alpha\), the outer spread will remain \(\ell\), and we must make sure that the inner spread does not grow too much.

\begin{definition}[Inner and outer spread]
    A QCA \(\alpha\) on \(\fMat(\Lambda, p)\) has \emph{inner spread} \(L\)
    and \emph{outer spread} \(\ell \leq L\)
    relative to a set \(S \subseteq \RR^\dd\) if \(\alpha\) has spread \(L\) but
    \begin{equation}
        \Supp(\alpha(x)) \subseteq \Supp(x)^{+\ell}
        \quad\text{whenever}\quad
        \Supp(x) \subseteq \RR^{\dd} \setminus S.
    \end{equation}
    Similarly define inner and outer anisotropic spread.
\end{definition}

\begin{figure}
    \centering
    \begingroup
    \hspace*{-5ex}
    \resizebox{0.95\textwidth}{!}{\definecolor{cRed}{RGB}{214,62,58}       
\definecolor{cOrange}{RGB}{232,150,30}   
\definecolor{cBlue}{RGB}{58,110,200}     
\colorlet{cFold}{red!85!black}           
\definecolor{cOuter}{RGB}{150,150,150}   

\newlength{\interpolationFigUnit}
\setlength{\interpolationFigUnit}{24.2081pt}

\begin{tikzpicture}[
    x=\interpolationFigUnit, y=\interpolationFigUnit,
    line cap=round,
    line join=round,
    base/.style     = {black,  line width=1.0pt},
    fold/.style     = {cFold,  line width=1.7pt},
    boxsolid/.style = {black,  line width=1.0pt},
    boxouter/.style = {cOuter, line width=1.0pt},
    cut/.style      = {line width=1.2pt, dash pattern=on 5pt off 4pt},
    frame/.style    = {line width=1.3pt, rounded corners=3pt},
    panellab/.style = {anchor=north west, xshift=5pt, yshift=-4pt, black},
    arrow/.style    = {{Latex[length=2.6mm,width=2.0mm]}-{Latex[length=2.6mm,width=2.0mm]},
                       line width=0.8pt, black},
]

\def\Oa{0}\def\Ob{9}          
\def\Ia{2.5}\def\Ib{6.5}      
\def\cutlen{3.4}
\def\ycutR{4.2}               
\def\ycutO{1.25}              
\def\ycutB{0.20}              
\def\yarrow{4.5}              

\def\Pxa{9.7}\def\Pxb{18.5}   
\def\pxa{11.1}\def\pxb{17.9}  
\def\Ph{2.5}                  
\def\ybotP{0}                 
\def\ymidP{3.25}              
\def\ytopP{6.5}               

\def\nfold{4}                 
\def\nbundle{4}               
\def\dyP{0.45}                
\def\bwP{1.00}                
\def\pitch{1.70}              
\def\cOne{12.2}               
\def\foldreach{0.62}          

\def\tTop{1.00}
\def\tMid{0.55}
\def\tBot{0.00}

\def\hideTop{4}
\def\hideMid{0}
\def\hideBot{0}


\newcommand{\mapXY}[3]{
    \pgfmathsetmacro{\Xmap}{(1-\tpar)*(#1) + \tpar*(\pxa + \lam*(#2))}%
    \pgfmathsetmacro{\Ymap}{\yctr + (1-\tpar)*((#3) - \yctr)}%
}

\newcommand{\doflat}[1]{%
    \pgfmathsetmacro{\Lx}{#1}%
    \pgfmathsetmacro{\xnew}{\xcur+\Lx}%
    \pgfmathsetmacro{\snew}{\sacc+abs(\Lx)}%
    \mapXY{\xcur}{\sacc}{\ylev}%
    \pgfmathsetmacro{\XA}{\Xmap}\pgfmathsetmacro{\YA}{\Ymap}%
    \mapXY{\xnew}{\snew}{\ylev}%
    \draw[base] (\XA,\YA) -- (\Xmap,\Ymap);
    \xdef\xcur{\xnew}\xdef\sacc{\snew}%
}

\newcommand{\dofold}[2]{%
    \pgfmathsetmacro{\sg}{#1}%
    \pgfmathsetmacro{\dl}{#2}%
    \pgfmathsetmacro{\arc}{pi*abs(\dl)/2}
    \pgfmathsetmacro{\cr}{\foldreach*abs(\dl)*(1-\tpar)}%
    \pgfmathsetmacro{\ynew}{\ylev+\dl}%
    \pgfmathsetmacro{\snew}{\sacc+\arc}%
    \mapXY{\xcur}{\sacc}{\ylev}%
    \pgfmathsetmacro{\XA}{\Xmap}\pgfmathsetmacro{\YA}{\Ymap}%
    \mapXY{\xcur}{\snew}{\ynew}%
    \xdef\foldidx{\the\numexpr\foldidx+1\relax}%
    \ifnum\foldidx>\foldlast
        \draw[base] (\XA,\YA)
            .. controls ({\XA+\sg*\cr},\YA) and ({\Xmap+\sg*\cr},\Ymap) .. (\Xmap,\Ymap);
    \else
        \draw[fold] (\XA,\YA)
            .. controls ({\XA+\sg*\cr},\YA) and ({\Xmap+\sg*\cr},\Ymap) .. (\Xmap,\Ymap);
    \fi
    \xdef\ylev{\ynew}\xdef\sacc{\snew}%
}

\newcommand{\pleatFig}[3]{%
    \xdef\tpar{#1}%
    \pgfmathsetmacro{\yc}{#2}\xdef\yctr{\yc}%
    \xdef\foldidx{0}%
    \xdef\foldlast{\the\numexpr\nbundle*\nfold-(#3)\relax}%
    \pgfmathsetmacro{\Sarc}{%
        (\cOne+\bwP/2-\pxa)
        + \nbundle*(\nfold*pi*\dyP/2 + (\nfold-1)*\bwP)
        + (\nbundle-1)*(\pitch+\bwP)
        + (\pxb-(\cOne+(\nbundle-1)*\pitch)+\bwP/2)}%
    \pgfmathsetmacro{\lamv}{(\pxb-\pxa)/\Sarc}\xdef\lam{\lamv}%
    \xdef\xcur{\pxa}\xdef\sacc{0}%
    \pgfmathsetmacro{\ytopL}{\yc+\nfold*\dyP/2}\xdef\ylev{\ytopL}%
    \doflat{\cOne+\bwP/2-\pxa}%
    \foreach \i in {1,...,\nbundle}{%
        \ifodd\i \pgfmathsetmacro{\dstep}{-\dyP}\else \pgfmathsetmacro{\dstep}{\dyP}\fi
        \foreach \j in {1,...,\nfold}{%
            \ifodd\j \pgfmathsetmacro{\sgn}{1}\else \pgfmathsetmacro{\sgn}{-1}\fi
            \dofold{\sgn}{\dstep}%
            \ifnum\j<\nfold \doflat{-\sgn*\bwP}\fi
        }%
        \ifnum\i<\nbundle \doflat{\pitch+\bwP}\fi
    }%
    \doflat{\pxb-\xcur}%
}

\draw[boxouter] (\Oa,\Oa) rectangle (\Ob,\Ob);
\draw[boxsolid] (\Ia,\Ia) rectangle (\Ib,\Ib);
\node[panellab] at (\Oa,\Ob) {(a)};
\node[anchor=south west] at ({\Ia+0.15},{\Ia+0.12}) {$B$};

\pgfmathsetmacro{\xmid}{(\Ia+\Ib)/2}
\draw[arrow] (\Oa,\yarrow) -- (\Ia,\yarrow);
\node[above] at ({(\Oa+\Ia)/2},\yarrow) {$5L+3\ell$};
\draw[arrow] (\Ia,\yarrow) -- (\xmid,\yarrow);
\node[above] at ({(\Ia+\xmid)/2},\yarrow) {$r$};

\pgfmathsetmacro{\xRb}{\Ib-0.7}
\draw[cut,cRed]    (\xRb,\ycutR) -- ({\xRb+\cutlen},\ycutR);
\pgfmathsetmacro{\xOb}{2.7}
\draw[cut,cOrange] (\xOb,\ycutO) -- ({\xOb+\cutlen},\ycutO);
\draw[cut,cBlue]   (\xOb,\ycutB) -- ({\xOb+\cutlen},\ycutB);

\draw[frame,cRed]    (\Pxa,\ytopP) rectangle (\Pxb,{\ytopP+\Ph});
\node[panellab] at (\Pxa,{\ytopP+\Ph}) {(b)};
\pleatFig{\tTop}{\ytopP+\Ph/2}{\hideTop}

\draw[frame,cOrange] (\Pxa,\ymidP) rectangle (\Pxb,{\ymidP+\Ph});
\node[panellab] at (\Pxa,{\ymidP+\Ph}) {(c)};
\pleatFig{\tMid}{\ymidP+\Ph/2}{\hideMid}

\draw[frame,cBlue]   (\Pxa,\ybotP) rectangle (\Pxb,{\ybotP+\Ph});
\node[panellab] at (\Pxa,{\ybotP+\Ph}) {(d)};
\pleatFig{\tBot}{\ybotP+\Ph/2}{\hideBot}

\end{tikzpicture}}
    \endgroup
    \caption{
        The geometry for the local pleating-and-ironing spread reduction of a QCA.
        (a)~The spread along axis \(k\) (horizontal) inside a ball \(B\) of radius \(r\) can be reduced by applying a circuit in \(B^{+5 L_k + 2 \ell}\), where \(L_k\) is the inner spread along \(k\) and \(\ell\) is the outer spread.
        Inside \(B\), the new QCA is the pleated-and-ironed QCA from \autoref{fig:PleatCompress}.
        (b)~Across the boundary of \(B\) parallel to \(k\) (red dashed), the pleats are terminated so that there is a direct blend to the original QCA.
        (c) Along the other directions (orange dashed), the ironing in \(B\) is gradually undone.
        (d)~At the edge of \(B^{+5 L_k}\) (blue dashed), the QCA has the un-ironed, pleated form.
        Outside \(B^{+5 L_k + 2 \ell}\), all local automorphisms relating the pleating to the original QCA are removed.
    }
    \label{fig:PleatingInterpolation}
\end{figure}

\begin{lemma}\label{lem:PleatingLocalOneDimension}
    Let \(\alpha\) be a QCA on \(\fMat(\Lambda,p)\)
    with inner anisotropic spread \(\mathbf{L}\) and outer spread \(\ell\)
    relative to a ball \(B\).
    Then, given an axis \(k\in \{1, \ldots, \dd\}\)
    and an odd integer scaling factor \(\lambda \in 2\NN+1\),
    there is a finite depth circuit \(\gamma\) on \(\fMat(\Lambda',pp')\)
    such that \(\gamma\circ (\alpha \otimes \id)\) has inner anisotropic spread 
    \begin{equation}
        \mathbf{L}' = \left(2L_1, \ldots,2L_{k-1}, \frac{4L_k}{\lambda} , 2 L_{k+1},\ldots, 2L_\dd\right)
    \end{equation}
    and outer spread \(8\ell\) relative to the ball \(B\).
    Furthermore, the circuit \(\gamma\) is supported in \(B^{+5L_k+2\ell}\)
    (with spread bounded by \(\mathbf{L}\) and \(\dd\),
    as in~\ref{lem:PleatingSpreadReductionOneDimension}).
    In particular, the spread of \(\gamma\circ(\alpha \otimes \id)\)
    outside of \(B^{+5L_k+3\ell}\) is still~\(\ell\).
\end{lemma}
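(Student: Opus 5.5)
We take $k=1$ and follow the geometry of \autoref{fig:PleatingInterpolation}, localizing the global construction of \ref{lem:PleatingSpreadReductionOneDimension} to $B^{+5L_1+2\ell}$. Ancilla layers $\alpha^{-1},\alpha,\dots$ are added only over this region, so the stabilization $p'$ is trivial far from $B$. The starting object is again $\gamma=\alpha\otimes(\alpha^{-1}\otimes\alpha)^{\otimes(\lambda-1)/2}$, presented through \ref{thm:QCAQCAInverseIsCircuit} as a circuit of swaps $\sigma_s$ and conjugated swaps $\mu_s=(\alpha\otimes\id)\sigma_s(\alpha^{-1}\otimes\id)$. We keep only the $\sigma_s,\mu_s$ with $s\in B^{+5L_1+2\ell}$ and discard the rest. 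Outside $B^{+5L_1+3\ell}$ the resulting QCA is then literally $\alpha\otimes\id$ with trivial ancillas, so its spread there is $\ell$.

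\textbf{Step 1: truncate at the boundary.} I would use the outer spread to control the truncation. For $s$ farther than $\ell$ from $B$, the automorphism $\mu_s$ is supported on $\{s\}^{+\ell}$ in the physical directions, since $\alpha^{\pm1}$ act with spread $\ell$ on operators supported outside $B$. Consequently, removing gates near $\partial B^{+5L_1+2\ell}$ produces only $O(\ell)$-wide blending seams, where a layer pair is cut back to the identity.

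\textbf{Step 2: terminate the pleats.} Inside $B$ we remove gates across the cuts $s_1\in 10L_1\ZZ$ and $s_1\in 10L_1(\ZZ+\tfrac12)$ exactly as in \ref{lem:PleatingSpreadReductionOneDimension}. Where a cut meets the faces of $B$ transverse to the other axes (the red dashed lines), the pleat is terminated: beyond the face the layer pairs are fully cut, so operators there see only $\alpha$ on the bottom layer and blend directly into the original QCA.

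\textbf{Step 3: interpolate the ironing.} Across the faces normal to axis $1$ (orange), the ironing map is interpolated over the shell of width $5L_1$. Each ancilla site is placed at a convex combination, parametrized by its distance to $B$, between its pleated (stacked) position and its ironed (compressed) position. At the outer edge of the shell (blue) the pleated geometry is unironed.

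\textbf{Step 4: spread bookkeeping.} The new QCA is $\mathsf S\gamma'\mathsf S$ for a site-relabelling swap circuit $\mathsf S$, and by \ref{cor:QCAmodCircuitsIsAbelian} and \ref{lem:CircuitGroupIsNormal} it equals $\gamma\circ(\alpha\otimes\id)$ with $\gamma$ supported in $B^{+5L_1+2\ell}$. Inside $B$ the inner spread bound $\mathbf L'$ is inherited verbatim from the global lemma. For operators supported outside $B$, I expect the outer spread to be at most $8\ell$, built from the following contributions: one $\ell$ from $\alpha$ on the bottom layer; a further $\ell$ from a $\mu_s$ gate and another from its inverse in a seam, giving $2\ell$; and the unfolding of a double-layer seam of width $2\ell$, which doubles that. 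Compression along axis $1$ only shrinks distances, so it adds nothing.

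\textbf{Main obstacle.} The delicate point is the interpolation zone of Step 3. A partially ironed geometry must not enlarge transverse distances, and ancillas near the cuts must not be moved far relative to the physical sites they interact with. I would first try an interpolation that is monotone in $s_1$ and $1$-Lipschitz in every coordinate, so that the spread of each local automorphism is not increased. I would then verify, separately for gates crossing and not crossing a cut (the two cases in the proof of \ref{lem:PleatingSpreadReductionOneDimension}), that seam unfolding costs at most a factor of $2$ on top of $4\ell$.
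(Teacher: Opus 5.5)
\textbf{Gap: the two boundary treatments are on the wrong faces.} In Steps 2--3 you cut the pleats sharply at the faces of $B$ normal to the transverse axes, and you interpolate the ironing across the faces normal to axis~$k$. The sharp transverse cut does not give outer spread $8\ell$.

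Inside $B$ the sites are fully ironed. Ironing contracts distance measured along the meander of the pleat, but not lattice distance. An ironed site can sit up to $5L_k$ (along axis $k$) from its physical position. Bottom-layer sites on the two sides of a cut in $10L_k(\ZZ+\tfrac12)$, where the bottom layer folds back, even land at opposite ends of their $10L_k$-block.

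Suppose, as in Step 2, the layer pairs are fully decoupled just beyond such a face. Then outside the face the QCA is $\alpha\otimes\id$ with bottom-layer sites at their physical positions. There $\alpha$ connects a site to physically adjacent sites just inside $B$, which sit at ironed positions. The spread across that face is therefore of order $L_k$, not $O(\ell)$. So ``compression along axis~$1$ only shrinks distances, so it adds nothing'' is exactly the false step, and your Step 4 count rests on it.

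A smaller point: in Step 1, bounding the support of $\mu_s$ by $\{s\}^{+\ell}$ needs only the outer spread of $\alpha$, since $\mu_s$ is $\sigma_s$ transported by $\alpha\otimes\id$. By contrast, $\alpha^{-1}$ can move operators supported just outside $B$ by up to $L$, so the ``$\alpha^{\pm1}$'' justification is wrong as stated.

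\textbf{What the paper does instead.}
\begin{enumerate}
\item Along axis $k$ it only terminates the pleats. It shifts the block grid so the blocks meeting $B$ stay in a slab around $B^{+5L_k+2\ell}$. This is harmless because the meander leaves a block through the bottom layer alone, and ironing never moves a site out of its block.
\item In the transverse directions it undoes the ironing gradually. The site $s^i$ is placed at $z^i(s)=t(s)s^i+(1-t(s))y^i(s)$, with $t(s)=\min\{1,\dist(Ps,PB)/5L_k\}$ and $P$ the projection killing the $k$th coordinate.
\item Only then does it delete every gate not meeting $B^{+5L_k}$. There the geometry is back to the stacked physical one, so the deletion costs only a $3\ell$ shell.
\end{enumerate}

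\textbf{The spread estimate in the interpolating region.} Outside $B$ the pleated QCA spreads operators by at most $4\ell$ in pleat distance; this is the analogue, with $\ell$ in place of $L_k$, of the $4L_k$ bound in the global lemma. The map $z$ is $2$-Lipschitz from pleat distance to lattice distance, using $\|s-y(s)\|\le 5L_k$ and that $t$ is $(5L_k)^{-1}$-Lipschitz. Together these give $8\ell$.

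This also answers your last paragraph. The relevant metric is pleat distance, not lattice distance: the ironing itself is not Lipschitz with an $O(1)$ constant for lattice distance, so no interpolation ending in it can be. The convex interpolation is $2$-Lipschitz, not $1$-Lipschitz, because of the term $(t(s)-t(s'))(s'-y(s'))$. That extra factor of $2$ on top of $4\ell$ is where the $8\ell$ comes from, not a separate ``seam unfolding'' factor.
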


\begin{proof}
    Again, take \(k=1\).
    We aim to construct \(\gamma\) by restricting the circuit \(\beta\)
    from~\ref{lem:PleatingSpreadReductionOneDimension} to~\(B^{+5 L_1 + 2\ell}\) while maintaining a small spread.
    The geometry we use is depicted in \autoref{fig:PleatingInterpolation}.

    First, we restrict \(\beta\) parallel to the first axis.
    This is done simply by terminating the pleats that define
    \(\beta\); see \autoref{fig:PleatingInterpolation}(b).
    Let $S$ be the \((\dd-1)\)-dimensional slab containing \(B^{+5 L_1 + 2 \ell}\)
    normal to the first axis,
    and $\tilde \beta$ be the QCA obtained by terminating pleats in~\(\beta\).
    The blocks defining the pleats are size \(10L_1\),
    so by possibly repositioning the location of the pleats%
    ---balancing them so that a pleat extends by at most \(5L_1\) from either side of~\(S\)%
    ---we can ensure that any block touching \(B\) does not reach outside \(S\).

    To restrict \(\tilde{\beta}\) from \(S\) to \(B^{+5 L_1 + 2 \ell}\), we interpolate the displacement of sites defining the ironing between the un-ironed positions (in \(\lambda\) layers) and the ironed positions (with reduced spread and one layer); see~\autoref{fig:PleatingInterpolation}(c).
    That is, defining the un-ironed position of a site in the pleated QCA to be \(s^i = (s_1, s_2, ..., s_\dd) \in \RR^\dd\) (where \(i\) represents the layer index in the pleat), the ironed position to be \(y^i(s) = (y^i_1, s_2, ..., s_\dd) \in \RR^\dd\), and the projection \(P s = (0, s_2, ..., s_\dd)\), a partial ironing is defined by the repositioning of sites
    \begin{equation}
        s^i \mapsto z^i(s) =  t(s) s^i + (1-t(s^i)) y^i(s), \qquad
        t(s) = \min\{ 1, \dist(Ps, P B)/5L_1\}
    \end{equation}
    The interpolation is based on the relative distance between \(s^i\) and the ball \(B\) compared to the spread \(5L_1\).
    
    Define a QCA \(\tilde{\gamma}\) that has the same action as \(\tilde\beta\) at the level of the algebra, but for which sites previously at \(s^i\) are now positioned at \(z^i(s)\).
    Subsequently remove all local automorphisms from \(\tilde{\gamma}\) that do not have support overlapping with \(B^{+5 L_1}\).
    All remaining automorphisms are supported in \(B^{+5 L_1+2\ell}\), and the resulting circuit is \(\gamma\).

    We must show that \(\gamma\) has the claimed spread. 
    Inside the ball \(B\), \(\gamma\circ(\alpha\otimes \id)\) agrees
    with \(\beta\circ(\alpha\otimes \id)\) and
    has anisotropic spread \((4L_1/\lambda, 2 L_2, \ldots,  2L_\dd)\).
    Outside of \(B^{+5 L_1+2\ell}\), \(\gamma\) has no support, so that outside of the slightly larger ball \(B^{+5 L_1+3\ell}\),
    \(\gamma\circ (\alpha\otimes\id)\) has the outer spread of \(\alpha\), namely \(\ell\).
    In the shell between \(B^{+5 L_1}\) and \(B^{+5 L_1+3\ell}\),
    \(\gamma\circ(\alpha\otimes \id)\) differs from \(\alpha\otimes\id\)
    by a depth-1 circuit of automorphisms with support diameter at most \(2 \ell\),
    and so has spread \(3 \ell\).
    Within the interpolating region \(B^{+5 L_1} \setminus B\),
    an operator \(\gamma\circ(\alpha\otimes \id)(x)\)
    is related to the un-interpolated \(\alpha \otimes \alpha^{-1} \otimes \cdots \otimes \alpha\)
    by the shear \(z^i(s)\)---%
    sites closer to \(B\) get displaced more than sites further from \(B\).
    The rest of the proof consists of analyzing the spread of \(\Supp(\gamma\circ(\alpha\otimes \id)(x)) = z(\Supp(\beta\circ(\alpha\otimes \id)(x)))\) in the interpolating region.

    We first show that \(z^i(s)\) has Lipschitz constant \(2\) when using a distance internal to the pleats, \(\dist_{\mathrm{pleat}}\).
    That is, the distance between points \(s^i\) and \(s^{\prime j}\) from different layers is measured along the meanders of the pleating.
    Equivalently, we measure the distance in terms of \(y^i(s)\) with the first coordinate scaled up by \(\lambda\).
    The function \(s^i \mapsto y^i(s)\) then has Lipschitz constant \(1\) (it is contractive in the first component, and preserves distances in other components).
    The function \(x^i \to t(x^i)\) has Lipschitz constant \(1/5 L_1\) (it is linear interpolation)
    and we have, measured in the usual \(\ell_\infty\) distance on the lattice, that \(\dist(y^i(s),s^i) \leq 5 L_1\) from the block structure of the pleats.
    Dropping layer indices, and using that our lattice distance comes from a norm, we have
    \begin{equation}\begin{split}
        \|z(s) - z(s')\| &= \|t(s) (s - s') + (1-t(s))(y(s)-y(s')) + (t(s) - t(s')) (s'- y(s'))\| \\
        &\leq [t(s) + (1-t(s)) + \|s'-y(s')\|/5L_1] \dist_{\mathrm{pleat}}(s,s') \\
        &\leq 2 \dist_{\mathrm{pleat}}(s,s'),
    \end{split}\end{equation}
    where we added and subtracted \(t(s) s' + (1-t(s)) y(s')\). We also used the triangle inequality, the Lipschitz conditions for \(t\) and \(y\), the bound on \(\dist(y^i(s),s^i) = \|s^i - y^i(s)\|\), and \(\|s-s'\| \leq \dist_{\mathrm{pleat}}(s,s')\).

    Now consider some point \(v\) in the interpolating region, and define its preimage under \(z\),
    \begin{equation}
        z^{-1}(v) = \{ (s,i) \,\mid\, z^i(s) = v\}.
    \end{equation}
    For any \(x \in \fMat(v)\), we have that
    \begin{equation}
        \Supp(\gamma\circ(\alpha\otimes \id)(x)) \subseteq \bigcup_{i=1}^\lambda z^\bullet([z^{-1}(v)]^{+4 \ell}),
    \end{equation}
    where \(z^\bullet(s,i) = z^i(s)\) and the \(4\ell\) growth (from the action of \(\beta(\alpha\otimes \id)\)) can spread between pleated layers.
    However, each \(z^i\) has Lipschitz constant \(2\), so no point at most \(4 \ell\) from \(s \in z^{-1}(v)\) can have \(z^i(s)\) further than \(8 \ell\) from \(v\). Thus
    \begin{equation}
        \Supp((\gamma\circ(\alpha\otimes \id))(x)) \subseteq \Supp(x)^{+8\ell},
    \end{equation}
    as required.
\end{proof}

\begin{lemma}[Local spread reduction]\label{lem:SpreadReduction}
    Let \(\alpha\) be a QCA with inner spread \(L\) and outer spread \(\ell\) relative to the ball \(B\).
    Then there is a local assignment \(p'\) and finite depth circuit \(\beta\)
    supported on \(B^{+5\cdot 2^{\dd} L + 8^{\dd} \ell}\)
    such that \(\gamma = \beta\circ (\alpha \otimes \id)\) has spread \(8^\dd \ell\).

    Further, if \(\alpha\) acts as the identity in a smaller ball \(b \subseteq B\), then \(\beta\) can be chosen to act trivially and involve no new ancillas in \(b^{-5 \cdot 2^{\dd-1}L}\).
\end{lemma}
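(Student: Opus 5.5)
I will prove this by applying \ref{lem:PleatingLocalOneDimension} once along each axis $k = 1, \ldots, \dd$, always relative to the same ball $B$, and composing the resulting circuits. I will set $\beta = \gamma_\dd \circ \cdots \circ \gamma_1$, where $\gamma_k$ is the local pleating-ironing circuit for axis $k$ applied to $\alpha_{k-1} = \gamma_{k-1}\circ\cdots\circ\gamma_1\circ(\alpha\otimes\id)$. Each $\gamma_k$ comes with its own ancillas, and these accumulate into $p'$.

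The bookkeeping runs as follows. Start with inner anisotropic spread $\mathbf{L}^{(0)} = (L,\ldots,L)$ and outer spread $\ell^{(0)} = \ell$. One application along axis $k$ with scaling factor $\lambda_k$ doubles every inner component except the $k$th, which becomes $4L^{(k-1)}_k/\lambda_k$, and multiplies the outer spread by $8$. After all $\dd$ steps the outer spread is therefore $8^\dd \ell$. For the inner spread, the $k$th component is reduced at step $k$ to $4\cdot 2^{k-1}L/\lambda_k$ and is then doubled at each of the remaining $\dd-k$ steps. This gives a final inner component of at most $2^{\dd+1}L/\lambda_k$. I will choose each odd $\lambda_k$ large enough that this is at most $\ell$, so the inner spread falls below the outer spread $8^\dd\ell$. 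The whole composition then has spread $8^\dd\ell$.

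For the support of $\beta$, step $k$ is supported on $B^{+5L^{(k-1)}_k + 2\ell^{(k-1)}}$, where $L^{(k-1)}_k = 2^{k-1}L$ and $\ell^{(k-1)} = 8^{k-1}\ell$. The largest of these supports is $B^{+5\cdot 2^{\dd-1}L + 2\cdot 8^{\dd-1}\ell}$, which lies inside $B^{+5\cdot 2^\dd L + 8^\dd\ell}$. For the second assertion I will carry along \ref{rem:IdentityPreservation}, in its local form, at each step. Step $k$ preserves the identity on $b$ shrunk by $5L^{(k-1)}_k$ along axis $k$ only, so after all $\dd$ steps $\gamma$ acts as the identity on $b^{-5\cdot 2^{\dd-1}L}$, and no new ancillas are placed there. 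Since both $\alpha$ and $\gamma$ act as the identity on $b^{-5\cdot 2^{\dd-1}L}$, $\beta$ does too.

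The main obstacle is that \ref{lem:PleatingLocalOneDimension} must be reapplied to a QCA whose outer spread has grown and whose lattice is no longer the original one. I need to check three things: that the outer spread $8^{k-1}\ell$ is the correct input, that the support estimate $B^{+5L_k+2\ell}$ uses the current outer spread, and that the identity-preservation argument survives the interpolated ironing. I expect each to follow by tracking constants.
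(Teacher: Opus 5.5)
Your proposal is correct and is essentially the paper's proof. Like the paper, you apply \ref{lem:PleatingLocalOneDimension} once per axis relative to the same ball $B$, track the inner spread down to $2^{\dd+1}L/\lambda$ against the outer spread $8^\dd\ell$, and carry the identity preservation of \ref{rem:IdentityPreservation} axis by axis. The only difference is cosmetic: you bound the support of $\beta$ by the largest single-step support $B^{+5\cdot 2^{\dd-1}L+2\cdot 8^{\dd-1}\ell}$, whereas the paper conservatively sums the per-step growths, and both fit inside $B^{+5\cdot 2^{\dd}L+8^{\dd}\ell}$.
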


\begin{proof}
    We apply \ref{lem:PleatingLocalOneDimension} to each axis, similarly to \ref{lem:PleatingSpreadReduction}.
    
    The inner anisotropic spread of \(\alpha\) is initially \(\mathbf{L} = (L, L, ..., L)\) relative to the ball \(B\). If \(L \leq 8^\dd \ell\), the theorem is vacuous.
    Otherwise, after applying \ref{lem:PleatingLocalOneDimension} to the first axis, we get a new QCA whose inner spread is \(\mathbf{L} = ( \epsilon L , 2L, ..., 2L)\) (where \(\epsilon = 4/\lambda\)) and has an outer spread \(8 \ell\).
    Repeating this, we get a QCA \(\gamma\) whose inner spread in \(B\) is bounded by \(2^{\dd-1} \epsilon L \) and whose outer spread is bounded by \(8^\dd \ell\).
    Choose \(\epsilon\) such that \(2^{\dd-1} \epsilon L \leq 8^\dd \ell\), so that the new spread is just \(8^\dd \ell\).

    Further, each QCA obtained from iteratively applying \ref{lem:PleatingLocalOneDimension} differs from the previous by a circuit localized to the \(5 L_k' + 3\ell'\) growth of \(B\), where \(L_k'\) and \(\ell'\) are the \(k\)th-inner and outer spreads from the previous iteration.
    Solving this recursion gives that the final \(\gamma\) differs from \(\alpha \otimes \id\) only in the
    \begin{equation}
        5(1 + 2 + \cdots + 2^{\dd-1} ) L 
        + 3(1 + 8 + \cdots + 8^{\dd-1}) \ell \leq 5\cdot 2^{\dd} L + 8^{\dd} \ell
    \end{equation}
    growth of \(B\).

    The addendum regarding preservation of the identity in the ball \(b^{-5 \cdot 2^{\dd-1}L}\) is identical to \ref{lem:PleatingSpreadReduction}---the local automorphisms of \(\beta\) fully inside \(B\) are identical to those for the global circuit from \ref{lem:PleatingSpreadReduction}, so the same result applies.
\end{proof}

\begin{remark}\label{rem:SpreadReduceOnLeftOrRight}
    In \ref{lem:SpreadReduction}, one can alternatively take  \(\beta\) to multiply \(\alpha \otimes \id\) on the right with no relaxation of the bounds.
    Indeed, all the pleating and deformation of \(\alpha\) can be done by multiplying on either the left or the right.
\end{remark}

\subsection{Induction by hemming}
\label{subsec:Hemming}

The strategy to construct the claimed circuit representation of~\(\alpha\)
in~\ref{thm:QCAWithAHoleIsTrivial}
is essentially to retract \(\alpha\) out from larger and larger balls, towards infinity. 
We call this process \emph{hemming}, continuing a tailoring analogy from~\cite{FHH2019}.

\begin{lemma}[Hemming]\label{thm:Hemming}
    If the spread-\(\ell\) QCA \(\alpha: \fMat(\Lambda, p) \to \fMat(\Lambda, p)\)
    acts as the identity in the interior \(B^\circ\) of a ball \(B\) of radius \(r\),
    then for every integer \(R \geq 0\)
    there is a local assignment \(p'\) and
    a circuit \(\beta: \fMat(\Lambda', pp') \to \fMat(\Lambda', pp')\) supported in \(B^{+R}\setminus B^{\circ}\)
    of depth bounded by \(R/r,\ell,\dd\),
    such that \(\beta\circ(\alpha\otimes \id)\) acts as the identity in the interior of \(B^{+R}\)
    and has spread \(\ell(R+r)/r\).
\end{lemma}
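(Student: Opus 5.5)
The plan is to realize hemming as a ``radial rescaling'' of the hole. Since $\alpha$ is the identity on $B^\circ$, we would like to push the region where $\alpha$ acts nontrivially outward. The natural way is to conjugate by a geometric dilation centered at the center $c$ of~$B$. Let $\lambda = (R+r)/r$ and consider the map $D: s \mapsto c + \lambda (s - c)$ on $\RR^\dd$. We move each site $s \in \Lambda$ to $D(s)$, which defines a new lattice $D(\Lambda)$. On it, $\alpha$ transported along $D$ becomes a QCA $\alpha^D$ of spread $\lambda \ell = \ell(R+r)/r$. This $\alpha^D$ acts as the identity on $D(B^\circ)$, the interior of $B^{+R}$. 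The task is then to show that $\alpha^D \otimes \id$ and $\alpha \otimes \id$, placed on a common lattice $\Lambda'$ that contains both $\Lambda$ and $D(\Lambda)$ as sites (ancillas $p'$ fill the gaps), differ by a circuit supported in $B^{+R} \setminus B^\circ$ whose depth depends only on $R/r$, $\ell$, and~$\dd$.

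We do not attempt the dilation in one step. Instead we factor $\lambda$ into $N$ moderate steps, $\lambda = \prod_{j=1}^N \lambda_j$ with each $\lambda_j \le 2$, so that $N = O(\log(R/r))$, or $O(R/r)$ if we use additive increments $r \to r + r$. Each step moves every site a distance at most about its own radius times $(\lambda_j - 1)$. Sites inside the current hole carry the identity and can be moved for free. Sites outside the final ball $B^{+R}$ are not moved; instead we use a dilation that saturates, namely a radial map $\rho$ equal to $\lambda\cdot$ on the inner ball and to the identity beyond radius $r+R$, interpolated piecewise linearly. Then $\beta$ is required to be supported only in $B^{+R}\setminus B^\circ$, and its Lipschitz constant is at most $\lambda$. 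At each step, relocating the algebra at site $s$ to $\rho_j(s)$ is implemented by $\mathsf S_j \circ (\alpha\otimes\id)\circ \mathsf S_j^{-1}$, where $\mathsf S_j$ is a product of swaps between each site and an ancilla placed at its new position. Exactly as in the ironing step \eqref{eqn:PleatedIronedQCAConjugation}, this equals $(\mathsf S_j (\alpha\otimes\id) \mathsf S_j^{-1} (\alpha\otimes\id)^{-1})\,(\alpha\otimes \id)$. The prefactor is a circuit by \ref{lem:CircuitGroupIsNormal}, since $\mathsf S_j$ is a circuit of swaps.

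The main obstacle is that a single layer of swaps $\mathsf S_j$ moving sites by distance up to $(\lambda_j-1)\cdot\mathrm{radius}$ is not local once the radius is large compared to $\ell$. To fix this, I would break each radial move into elementary moves of distance at most a constant times $\ell$. In each such move, every site in the annulus slides outward by $\le \ell$, and sites in a thin shell are compressed. That compression is allowed because ancilla positions can be stacked, as in the proof of \ref{lem:PleatingSpreadReductionOneDimension}. Each elementary move is a depth-$O(1)$ circuit of swaps of range $O(\ell)$. The total number of moves, and so the depth, is the maximal displacement divided by $\ell$, which is $\approx R/\ell$. This is bounded in terms of $R/r$ and $\ell$ only if $r$ is also controlled, so the depth count needs care. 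I would first try to avoid the dependence on $R$ altogether by scaling the elementary step size with the radius: on a lattice in $\RR^\dd$, as allowed in \S\ref{subsec:PleatingIroning}, we may rescale the geometry so that the relevant moves are local relative to the current spread. The displacement per step is then a fixed fraction of the current spread, giving depth $O(\dd \cdot \lambda)$.

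Finally, we check the claims. $\beta$ is the product of all the prefactor circuits, and it avoids $B^\circ$ because every swap there acts between sites where the conjugated QCA is the identity, so these swaps cancel as in \ref{rem:IdentityPreservation}. Its support lies in $B^{+R}$ because $\rho$ is the identity outside. The resulting QCA $\beta\circ(\alpha\otimes\id) = \alpha^\rho\otimes\id$ is the identity on the interior of $\rho(B)=B^{+R}$. Its spread is at most $\mathrm{Lip}(\rho)\,\ell = \ell(R+r)/r$.
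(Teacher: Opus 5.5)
Your core construction is correct and is the paper's proof. You conjugate $\alpha\otimes\id$ by a layer $\mathsf S$ of swaps that relocates each site of $B^{+R}\setminus B^\circ$ radially outward. Then $\beta=\mathsf S(\alpha\otimes\id)\mathsf S^{-1}(\alpha\otimes\id)^{-1}$ is a circuit by~\ref{lem:CircuitGroupIsNormal}. The spread of $\beta\circ(\alpha\otimes\id)$ is at most $\ell$ times the Lipschitz constant $(R+r)/r$ of the relocation map.

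Two remarks on the geometry:
\begin{itemize}
\item Your saturating map sends radius $r$ to radius $r+R$ and fixes radius $r+R$. The composite map is therefore forced to collapse the whole annulus onto the sphere of radius $r+R$. This is exactly the paper's radial projection $P$, and it is where the nonuniform ancillas come from: each point of that sphere hosts copies of all algebras on its radial preimage.
\item Dilating $B^\circ$ as well is unnecessary, though harmless as you note. The paper leaves those sites in place.
\end{itemize}

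The ``main obstacle'' paragraph should be discarded. A depth-1 circuit in this paper only requires commuting local automorphisms of \emph{uniformly bounded} support diameter; they need not have range $O(\ell)$. So a single layer of swaps of range at most $R$ is legitimate. Your $\beta$ with $N=1$ is then already a depth-2 circuit whose gates have diameter at most $R+2\ell$. This is all the lemma provides and all that~\ref{lem:InductionStep} and the final induction use, since $R=r_0$ is fixed there.

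Factoring $\lambda$ into moderate steps does not reduce the gate range anyway. The last step still moves sites by a distance of order the current radius.

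Your proposed remedy is also unsupported. Rescaling the geometry rescales the spread and the required displacement by the same factor. The innermost annulus sites must move a distance of about $R$. With per-move displacement $O(\lambda\ell)$ that takes on the order of $R/(\lambda\ell)$ layers, which is not bounded in terms of $\lambda$ and $\dd$. So the claimed depth $O(\dd\lambda)$ does not follow.

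With that paragraph deleted and a single step, your argument coincides with the paper's.
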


\begin{proof}
    Define a map \(P: B^{+R}\setminus B^{\circ} \to S_{R} = \{t \in \RR^\dd \,|\, \dist(t,s_B) = R+r\}\) (where \(s_B\) is the center of \(B\))
    by the projection along the (Euclidean) ray
    joining the center of \(B\) to \(s \in B^{+R}\setminus B^{\circ}\),
    produced to \(S_{R}\).
    The projection is Lipschitz with Lipschitz constant \(K = (R+r)/r\):
    if two points \(s_1,s_2 \in B^{+R}\setminus B^{\circ}\) have \(\dist(s_1,s_2) \leq \ell\),
    then \(\dist(P(s_1), P(s_2)) \leq \ell (R+r)/r\).%
    \footnote{
    The inner surface of the projected annulus is at radius \(r\),
    and the outer is at radius \(R+r\).
    }
    
    Introduce ancillas on \(t \in S_{R}\),
    \begin{equation}
        p'(t) = \bigotimes_{s \in P^{-1}(t)} p'_s(t) = \bigotimes_{s \in P^{-1}(t)} p(s),
    \end{equation}
    and define
    \begin{equation}
        \beta = \left[\prod_{t \in S_{R}} \prod_{s \in P^{-1}(t)} \SWAP_{p'_s(t) p(s)}\right] (\alpha\otimes \id) \left[\prod_{t \in S_{R}} \prod_{s \in P^{-1}(t)} \SWAP_{p'_s(t) p(s)}\right] (\alpha^{-1} \otimes \id).
    \end{equation}
    \(\beta\) is a circuit with local automorphisms of diameter at most \(R + 2 \ell\): the maximum distance \(P\) sends a point is \(R\), which is the spread of the swap, and conjugation by \(\alpha\) grows the support by \(\ell\), and the diameter by \(2 \ell\).

    We see that \(\beta \circ (\alpha \otimes \id)\) is \(\alpha \otimes \id\) conjugated by swaps which move all sites in \(B^{+R}\setminus B^{\circ}\) to \(S_{R}\).
    All the nontrivial action of \(\alpha\) has been hemmed to (the closure of) \(\RR^\dd \setminus B^{+R}\), enlarging the hole where it acts as the identity to the interior of \(B^{+R}\). 
    This comes at the cost of increasing the spread of \(\beta \circ (\alpha \otimes \id)\) to \(\ell(R+r)/r\), which follows from the distance bounds on \(P\).
\end{proof}

\begin{remark}\label{rem:HemOnLeftOrRight}
    Analogously, we can find a \(\beta\) such that \((\alpha \otimes \id) \circ \beta\) satisfies the conclusions of \ref{thm:Hemming}. One can hem from the left or right.
\end{remark}

Repeatedly hemming \(\alpha\) is not sufficient to find a finite depth circuit representation, because hemming increases the spread. 
Specifically, operators \(x\) supported in the shell \(S_{R}\) now only have \(\Supp\beta\alpha(x) \subseteq \Supp(x)^{+L}\), where \(L = \ell(R+r)/r\).
Attempting to iterate naively would cause the spread to diverge.
To mend this issue, we need to intersperse the hemming the spread reduction by pleating and ironing from \ref{subsec:PleatingIroning}.

\begin{lemma}\label{lem:InductionStep}
    Let \(\alpha\) be a QCA that acts trivially in the interior \(B^\circ\) of a ball \(B\) of radius \(r_0\), has inner spread \(L\) inside \(B^{+r_1}\), and outer spread \(\ell\).
    For every \(R \geq r_1\), there is a local assignment \(p\) and finite depth circuit \(\beta\) supported in \(B^{+R_0 + R_1}\setminus B^\circ\) such that \(\beta \circ (\alpha \otimes \id)\) acts as the identity in \((B^{+R_0})^\circ\), and has inner spread \(8^\dd \ell\) and outer spread \(\ell\) relative to \(B^{+R_0 + R_1}\), where
    \begin{equation}
        R_0 = R - 5 \cdot 2^{\dd-1} KL, 
        \quad
        R_1 = 5(2^\dd+ 2^{\dd-1}) K L + 8^\dd \ell,
    \end{equation}
    and
    \begin{equation}
        K = 1 + \frac{R}{r_0}.
    \end{equation}
\end{lemma}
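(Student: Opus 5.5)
The plan is to combine one hemming step (\ref{thm:Hemming}) with one local spread reduction (\ref{lem:SpreadReduction}). Hemming alone enlarges the hole but inflates the spread. The spread reduction then repairs this inside a slightly larger ball, and it preserves the hole up to a shrinkage of order $KL$.

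First, hem. The QCA $\alpha$ acts as the identity in $B^\circ$ and has spread at most $L$ (inner spread $L$ dominates the outer spread $\ell$). Apply \ref{thm:Hemming} with radius $r_0$ and hemming distance $R$. This gives a circuit $\beta_1$ supported in $B^{+R}\setminus B^\circ$ such that $\alpha_1=\beta_1\circ(\alpha\otimes\id)$ acts trivially in $(B^{+R})^\circ$.

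The spread must be tracked more carefully than the lemma states. The swaps in the proof of \ref{thm:Hemming} move only sites of $B^{+R}\setminus B^\circ$, and the projection $P$ has Lipschitz constant $K=1+R/r_0$. Because $R\ge r_1$, every operator whose support meets $B^{+R}$ had spread at most $L$, so after hemming it has spread at most $KL$. Operators supported outside $B^{+R}$ are untouched except where $\alpha$ of them lands near $S_R$. I would arrange the swaps (identity far from the sphere) so that these operators have spread $O(\ell)$, or at worst $K\ell$ near the sphere. The cleanest route is to absorb everything into the statement that $\alpha_1$ has inner spread $KL$ relative to a ball $B^{+R+O(\ell)}$ and outer spread $\ell$ outside it. The main obstacle is the region just outside $S_R$, where operators of $\alpha$ land on the ancillas that were swapped in, so this step must be checked carefully. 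Since $L\ge\ell$ and $K\ge1$, any extra $O(K\ell)$ there can be folded into the inner spread $KL$.

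Second, reduce the spread. Apply \ref{lem:SpreadReduction} to $\alpha_1$ with inner spread $KL$ relative to that ball and outer spread $\ell$. This gives a circuit $\beta_2$ supported in $B^{+R+5\cdot2^\dd KL+8^\dd\ell}$ such that $\beta_2\circ\alpha_1$ has spread $8^\dd\ell$. Outside the support of $\beta_2$ (plus $\ell$), the spread is still $\ell$; this matches the requirement of inner spread $8^\dd\ell$ and outer spread $\ell$ relative to $B^{+R_0+R_1}$.

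Now check the radii. We have $R_0+R_1=R+5\cdot2^\dd KL+8^\dd\ell$, which matches the support of $\beta_2$ exactly, so $\beta=\beta_2\beta_1$ is supported in $B^{+R_0+R_1}\setminus B^\circ$. The addendum of \ref{lem:SpreadReduction}, applied with $b=(B^{+R})$, says $\beta_2$ acts trivially and adds no ancillas in $b^{-5\cdot2^{\dd-1}KL}=B^{+R_0}$. Hence $\beta\circ(\alpha\otimes\id)$ is the identity on $(B^{+R_0})^\circ$. Finally, the depth is finite because each of the two lemmas yields a finite depth circuit.
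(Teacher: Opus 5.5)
Your proposal follows the paper's proof: hem with \ref{thm:Hemming} to enlarge the hole to radius $r_0+R$ with Lipschitz constant $K=(r_0+R)/r_0$, then apply \ref{lem:SpreadReduction} and its addendum with $b=B^{+R}$ to get $R_0+R_1=R+5\cdot 2^\dd KL+8^\dd\ell$ and $R_0=R-5\cdot 2^{\dd-1}KL$. The paper simply asserts that the hemmed QCA has inner spread $KL$ relative to $B^{+R}$ and outer spread $\ell$, and your worry about operators landing near $S_R$ is legitimate; but if you enlarge the ball to $B^{+R+O(\ell)}$ to handle it, you must either carry that $O(\ell)$ into the radii or note that the slack in the proof of \ref{lem:SpreadReduction} absorbs it, rather than claiming that the support of $\beta_2$ matches $B^{+R_0+R_1}$ exactly.
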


\begin{figure}
    \centering
    \begingroup
        \hspace*{-5ex}
        \resizebox{0.95\textwidth}{!}{
%
%

\definecolor{cSheet}{RGB}{219,233,250}    
\definecolor{cCollar}{RGB}{130,178,235}   
\definecolor{cCollarM}{RGB}{46,92,168}    
\definecolor{cLine}{RGB}{58,110,200}      

\newlength{\indfigunit}
\setlength{\indfigunit}{21.2682pt}

\begin{tikzpicture}[
    line cap=round,
    line join=round,
    x=\indfigunit, y=\indfigunit,
    sheet/.style   = {fill=cSheet, draw=cLine, line width=1.1pt,
                      rounded corners=7pt},
    wsq/.style   = {fill=white,  draw=cLine, line width=1.1pt},
    ghost/.style   = {draw=black,  line width=0.9pt,
                      dash pattern=on 4pt off 3pt},
    stepar/.style    = {-{Latex[length=3mm,width=2.4mm]}, line width=1.0pt, black},
    dim/.style     = {{Latex[length=1.8mm,width=1.6mm]}-{Latex[length=1.8mm,width=1.6mm]},
                      line width=0.8pt, black},
]

\def\Wh{3.0}                  
\def\Gap{2.0}                 
\def\rZero{0.95}              
\def\Tcol{0.70}               
\def\tMid{0.20}               
\def\wMid{2.00}               
\def\wEnd{1.80}               

\pgfmathsetmacro{\cA}{\Wh}
\pgfmathsetmacro{\cB}{\cA+2*\Wh+\Gap}
\pgfmathsetmacro{\cC}{\cB+2*\Wh+\Gap}

\pgfmathsetmacro{\aA}{\rZero+\Tcol}     
\pgfmathsetmacro{\aB}{\wMid+\tMid}      
\pgfmathsetmacro{\aC}{\wEnd+\Tcol}      

\newcommand{\panel}[4]{%
    \draw[sheet] ({#1-\Wh},-\Wh) rectangle ({#1+\Wh},\Wh);
    \fill[#4]    ({#1-#2},{-#2}) rectangle ({#1+#2},{#2});
    \draw[wsq] ({#1-#3},{-#3}) rectangle ({#1+#3},{#3});
}

\panel{\cA}{\aA}{\rZero}{cCollar}
\draw[dim] (\cA,0) -- (\cA,\rZero);
\node[anchor=west, xshift=2pt] at (\cA,{\rZero/2}) {$r_0$};
\draw[dim] (\cA,\rZero) -- (\cA,\aA);
\node[anchor=west, xshift=2pt] at (\cA,{(\rZero+\aA)/2}) {$r_1$};
\node[anchor=south west, xshift=0pt, yshift=0pt]
     at ({\cA-\rZero},{-\rZero}) {$B$};

\panel{\cB}{\aB}{\wMid}{cCollarM}
\draw[ghost] ({\cB-\rZero},{-\rZero}) rectangle ({\cB+\rZero},{\rZero});
\draw[dim] (\cB,\rZero) -- (\cB,\wMid);
\node[anchor=west, xshift=2pt] at (\cB,{(\rZero+\wMid)/2}) {$R$};

\panel{\cC}{\aC}{\wEnd}{cCollar}
\draw[ghost] ({\cC-\rZero},{-\rZero}) rectangle ({\cC+\rZero},{\rZero});
\draw[dim] (\cC,\rZero) -- (\cC,\wEnd);
\node[anchor=west, xshift=2pt] at (\cC,{(\rZero+\wEnd)/2}) {$R_0$};
\draw[dim] (\cC,\wEnd) -- (\cC,\aC);
\node[anchor=west, xshift=2pt] at (\cC,{(\wEnd+\aC)/2}) {$R_1$};

\pgfmathsetmacro{\arA}{\cA+\Wh+0.30}\pgfmathsetmacro{\arB}{\cB-\Wh-0.30}
\draw[stepar] (\arA,0) -- (\arB,0);
\node[anchor=south, yshift=2pt] at ({(\arA+\arB)/2},0) {hem};

\pgfmathsetmacro{\arC}{\cB+\Wh+0.30}\pgfmathsetmacro{\arD}{\cC-\Wh-0.30}
\draw[stepar] (\arC,0) -- (\arD,0);
\node[anchor=south, yshift=2pt]  at ({(\arC+\arD)/2},0) {pleat};
\node[anchor=north, yshift=-2pt] at ({(\arC+\arD)/2},0) {iron};

\end{tikzpicture}}
    \endgroup
    \caption{A QCA with a hole---a ball \(B\) of radius \(r_0\) in which it acts as the identity---and a collar of width \(r_1\) in which it has a larger spread, is iteratively expanded by hemming (\ref{thm:Hemming}), pleating, and ironing (\ref{lem:SpreadReduction}).
    Hemming expands the size of the hole from \(r_0\) to \(r_0 + R\), at the cost of increasing the spread in the collar.
    Pleating and ironing then reduces the spread, while reducing the size of the expanded hole to \(r_0 + R_0\) and increasing the size of the collar to \(R_1\) (\ref{lem:InductionStep}).
    }
    \label{fig:HemPleatIronInduction}
\end{figure}

\begin{proof}
    First hem \(\alpha\) so that the radius in which it acts as the identity expands to \(r_0 + R\), where \(R \geq r_1\).
    Then, the Lipschitz constant appearing in \ref{thm:Hemming} is 
    \begin{equation}
        \frac{r_0+R}{r_0} = K,
    \end{equation}
    so that the hemmed QCA has inner spread \(KL\) in \(B^{+R}\) and outer spread \(\ell\).
    The hemmed QCA acts as the identity in \((B^{+R})^\circ\) (\autoref{fig:HemPleatIronInduction}).

    Then apply the pleating-and-ironing spread reduction from \ref{lem:SpreadReduction}, and let the circuit relating \(\alpha\) to the result be \(\beta\).
    The QCA \(\beta(\alpha \otimes \id)\) has an inner spread \(8^\dd \ell\) in a ball \(B^{+R_0+R_1}\), where \(R_0+R_1 = R + 5 \cdot 2^\dd K L + 8^\dd \ell\), and outer spread \(\ell\).

    Further, \(\beta(\alpha \otimes \id)\) can be chosen to continue to act as the identity in the interior of the ball \(B^{+R_0}\), where \(R_0 = R - 5 \cdot 2^{\dd-1} KL\) by \ref{lem:SpreadReduction}.
    This gives \(R_1 = 5( 2^\dd+ 2^{\dd-1}) K L + 8^\dd \ell\) (which still has \(R\) dependence through \(K\)).
\end{proof}

We now proceed to the proof of \ref{thm:QCAWithAHoleIsTrivial}.

\begin{proof}[Proof of \ref{thm:QCAWithAHoleIsTrivial}]
    The proof is by induction with \ref{lem:InductionStep}.
    The QCA \(\alpha =: \alpha_0\) has a hole of radius \(r =: r_0\), and \ref{lem:InductionStep} gives us a circuit \(\beta_0\) such that (dropping the explicit \(\otimes \id\) and changing lattice \(\Lambda'\)) \(\beta_0 \alpha_0 =: \alpha_1\) has a larger hole of radius \(r_1\).
    Iterate this enlargement of the hole, alternating whether the hole is enlarged by multiplication on the left \(\beta_{2 i} \alpha_{2i} = \alpha_{2i+1}\) or on the right \(\alpha_{2i+1} \beta_{2i+1} = \alpha_{2(i+1)}\).
    Then, if each \(\beta_j\) does not overlap in support with \(\beta_{j\pm 2}\), we can express \(\alpha\) as a circuit
    \begin{equation}\label{eqn:TailoringCircuit}
        \alpha = \beta_0^{-1} \alpha_1 = \beta_0^{-1} \alpha_2 \beta_1^{-1} = \cdots =
        \left(\prod_{i \in \NN} \beta_{2i}^{-1}\right) \left(\prod_{i \in \NN} \beta_{2i+1}^{-1}\right),
    \end{equation}
    where we used that the action of \(\alpha_j\) on any local \(x\) is the identity for large enough \(j\).
    Each factor in the left or right product is a non-overlapping finite depth circuit, where the circuit depth is bounded by the spread \(\ell_j\) of \(\alpha_j\) and the hemming step size \(R\) from \ref{lem:InductionStep} (and the dimension \(\dd\)).
    Thus, it remains to show that each \(\beta_j\) can be made sufficiently non-overlapping while maintaining a constant step size \(R\) and a constant bound on the spread.

    We set the following (suboptimal) targets for \(\alpha_j\) and the parameters of \ref{lem:InductionStep}:
    \begin{itemize}
        \item Inner spread \(L_j \leq 8^\dd \ell\). This is automatic from the pleating and ironing construction given the outer spread remains \(\ell\).
        \item Constant step size \(R\), independent of \(j\).
        \item A Lipschitz constant \(K \leq 2\). This places a constraint on \(R\) as
        \begin{equation}
            1+\frac{R}{r_j} \leq 2
            \implies
            R \leq r_j,
        \end{equation}
        where \(r_j\) is the radius within which \(\alpha_j\) acts as the identity. 
        The strongest constraint is set by \(R \leq r_0\).
        \item Finally, we need the distance \(R_0 = r_{j+1} -r_j\) to be large enough so that none of the circuits \(\beta_j\) and \(\beta_{j+2}\) overlap. 
        Demanding \(R_0 > R_1\) places the radius in which \(\alpha_j\) acts as the identity past the limits of the circuit from the previous step, which is set at the radius for the outer spread \(R_1\).
        This gives
        \begin{align}\label{eqn:RprimeLowerBound}
            R_0 = R - 5 \cdot 2^{\dd-1} K L &> 5(2^\dd+2^{\dd-1}) K L + 8^\dd \ell \nonumber
            \\
            R &> 5\cdot 2^{\dd+1} K L + 8^\dd \ell
        \end{align}
    \end{itemize}

    Using \(L \leq 8^\dd \ell\) and \(K \leq 2\), 
    to meet the condition~\eqref{eqn:RprimeLowerBound} it is sufficient to have
    \begin{equation}
        R > (1+5\cdot 2^{\dd+2})8^\dd \ell
    \end{equation}
    For this to be compatible with the constraint \(R \leq r_0 -1\), we must have
    \begin{equation}
        r_0 > (1+5\cdot 2^{\dd+2})8^\dd \ell.
    \end{equation}
    If this condition is met, then choosing \(R = r_0\) in every application of \ref{lem:InductionStep} maintains all the conditions above, so that Eq.~\eqref{eqn:TailoringCircuit} is a finite depth circuit.
    Thus, \(r_0 \geq 2^{4\dd + 5} \ell\) suffices to complete the induction,
    even after rounding lattice points, which increases the spread by \(1\) at most.
\end{proof}

Note that our proof of \ref{thm:QCAWithAHoleIsTrivial} uses the nonuniform stabilization essentially.
In particular, hemming uses as many new ancillas as there are local factor algebras previously in the annulus being hemmed.
Thus, at each shell of radius \(n R_0\) (\(n\) an integer)
we introduce enough ancillas to support the entire QCA inside that shell
and even more to do the spread reduction.

\begin{remark}
The nonuniform stabilization is likely essential for \ref{thm:QCAWithAHoleIsTrivial} to hold.
With only uniform ancillas, it seems plausible that the density of shifts across an infinite cut of the lattice will be a circuit-class invariant of \(2\dd\) QCAs.
Any real density is achievable by some product of shifts, so there should be a continuum of circuit classes in \(2\dd\) with only bounded ancillas.
The introduction of nonuniform ancillas trivializes all such products of lower-dimensional QCAs.
\end{remark}

\nocite{apsrev42Control}
\bibliographystyle{apsrev4-2}
\bibliography{refs.bib}
\end{document}